\documentclass[11pt]{article}
\usepackage[margin=1.35in]{geometry}

\usepackage{graphicx}
\usepackage{float}
\usepackage{multirow}
\usepackage{booktabs}
\usepackage{bbm}
\usepackage{amsmath,amssymb,amsthm}
\usepackage[nameinlink,capitalize]{cleveref}
\usepackage[authoryear,round]{natbib}
\setcitestyle{aysep={,},yysep={;}}
\numberwithin{equation}{section}

\theoremstyle{plain}
\newtheorem{theorem}{Theorem}[section]      
\newtheorem{lemma}[theorem]{Lemma}          
\newtheorem{proposition}[theorem]{Proposition}
\newtheorem{corollary}[theorem]{Corollary}

\theoremstyle{definition}

\newtheorem{assumption}[theorem]{Assumption}

\theoremstyle{remark}
\newtheorem{remark}[theorem]{Remark}
\newtheorem{example}[theorem]{Example}

\newcommand{\indep}{\perp\!\!\!\perp}

\crefname{theorem}{theorem}{theorems}
\Crefname{theorem}{Theorem}{Theorems}
\crefname{lemma}{lemma}{lemmas}
\Crefname{lemma}{Lemma}{Lemmas}
\crefname{proposition}{proposition}{propositions}
\Crefname{proposition}{Proposition}{Propositions}
\crefname{corollary}{corollary}{corollaries}
\Crefname{corollary}{Corollary}{Corollaries}
\crefname{definition}{definition}{definitions}
\Crefname{definition}{Definition}{Definitions}
\crefname{assumption}{assumption}{assumptions}
\Crefname{assumption}{Assumption}{Assumptions}
\crefname{remark}{remark}{remarks}
\Crefname{remark}{Remark}{Remarks}
\crefname{example}{example}{examples}
\Crefname{example}{Example}{Examples}

\title{A General Exposure-Mapping-Agnostic Framework for Causal Inference under Interference}
\author{Yihui He, Eric J. Tchetgen Tchetgen}
\date{\today}

\begin{document}
\maketitle
\begin{abstract}
We develop a general framework for design-based causal inference under interference in cluster experiments conducted via two-stage randomization on a network of interconnected units, without relying on exposure mapping assumptions, exclusion of cross-cluster interference, or Bernoulli treatment assignments. Within this framework, we establish a complete characterization of linear weighted estimators (LW) as defined by \cite{Godambe1955}, with weights depending on the local neighborhood assignments, that achieve unbiased identification of various network causal effects under interference. This general class includes several new estimators with improved theoretical guarantees and superior finite-sample performance relative to existing approaches such as standard inverse-probability-of-treatment weighting. For most estimators in this class, we establish central limit theorems and conservative variance estimators, which allows us to describe the distinct asymptotic behavior exhibited by different weighting schemes potentially of interest. In particular, we study how randomization at the cluster-level affects the asymptotic behavior of various estimators, and we identify a subclass of cluster-agnostic LW estimators whose convergence rates are independent of the number of clusters and attain the parametric root-N rate, where N denotes the total number of units. Notably, for complete randomization we develop new techniques that may be of independent interest, both to establish a central limit theorem for sums of general dependent statistics and to construct conservative and bias-corrected variance estimators. We complement our theoretical results with extensive simulation studies that offer practical guidance on the choice of weighting method and experimental design under a wide range of interference structures.

\medskip
\noindent\textbf{Keywords:} causal inference; interference; two-stage randomized experiments; exposure mapping; design-based inference; complete randomization; Stein's method.
\end{abstract}

\section{Introduction}
Two-stage randomized trials \cite{Hudgens2008} have emerged as a widely used experimental design for studying causal effects of intervention policies with spillover effects, with applications spanning economics, epidemiology, and related fields. In this design, a population of $N$ units is partitioned into $n$ clusters, and treatments are assigned at both the cluster and unit levels: cluster-level assignment determines the distribution of unit-level treatments within each cluster, while unit-level assignments directly affect outcomes. This design generalizes the traditional cluster randomized trials where treatment is assigned only at the cluster level, and enables the analysis of policies in which treatment is partially saturated or heterogeneously implemented. Despite its broad applicability, existing analytical frameworks for two-stage randomized trials have largely been restricted to settings with only within-cluster interference, implicitly assuming that interactions across clusters are negligible. Recently, \cite{Leung2025} and \cite{Lu2026} challenge this assumption and propose distinct methodological solutions.  By way of motivation, we illustrate these challenges  with the following example from \cite{Makofane2023} of a cluster randomized trial, where standard cluster randomized trial methods will generally fail to appropriately account for these challenges. 

\begin{example}[Cluster randomized trial on health intervention uptake]
\cite{Makofane2023} conducted a cluster randomized trial in South Africa aimed at increasing HIV testing uptake. Communities were randomly assigned to treatment, where all individuals were offered a financial micro-incentive to use an HIV home-test kit, or to control, where no incentive to use the test was provided. While comparing average outcomes between treated and control communities yielded a standard estimate of the overall effect, the study found that uptake depended on individual-level family networks: individuals were more likely to test when their family members were also treated. Since families often spanned multiple communities, and in some cases geographically distant ones, interference could extend beyond cluster boundaries, making inferences based on simple comparisons at the cluster level no longer valid and difficult to interpret. Moreover, spillover effects from different family members could be heterogeneous and difficult to model due to latent patterns of influence in family decision-making.
\end{example}

To address these challenges, we begin by defining new estimands that compare potential outcomes under counterfactual distributions in which all clusters are uniformly assigned to treatment or control, rather than under distributions in which treatment varies across clusters. These estimands remain aligned with the causal effect framework of \cite{Halloran1991}, which compares each unit's potential outcomes in populations with and without a policy intervention. In traditional frameworks, treated and control clusters are used to approximate such populations. In contrast, we directly define these policy environments through counterfactual treatment laws imposed on the entire population. By construction, our definition does not rely on assumptions about the outcome structure, such as the absence of cross-cluster interference. Furthermore, our framework allows users to interpret the target policy more explicitly as an intervention applied across all clusters, under which the unit-level treatment mechanisms are uniformly shifted from those under control clusters to those under treated clusters. Building on this formulation, our estimand is applicable to the whole class of experimental designs within the two-stage randomization framework of \cite{Hudgens2008}. In other words, our framework accommodates arbitrary randomization schemes at both the cluster and unit levels beyond independent Bernoulli designs (e.g., complete randomization). In this way, one preserves the full flexibility of the two-stage randomization framework for studying the effects of complex interventions.

To formalize valid inference about the proposed estimands with some degree of robustness afforded by the design-based causal inference framework, we establish the unbiasedness of a large class of \emph{linear weighted} (LW) estimators, as defined in \cite{Godambe1955}. We highlight that this unbiasedness property requires correct specification only of the interference network, obviating any reliance on well-specified so-called \textit{exposure mapping} function introduced by \cite{Aronow2017}. An exposure mapping function is often assumed to be a known summary statistic through which a unit’s neighbors’ treatments affect its outcome; however, this assumption may be overly restrictive and will often be violated in practice. Such violation may arise from misspecification of the interference neighborhood, unmodeled heterogeneity within neighborhoods, or complex interactions among spillover effects induced by different neighbors. Under our identification strategy which does not involve the specification of an exposure mapping, we avoid bias due to unmodeled heterogeneity and interaction effects, and improve robustness to network misspecification through conservative specifications of the interference network (i.e., allowing more connections than necessary).

A major contribution of this work is the careful study of convergence rates, asymptotic normality, and the development of conservative variance estimators for Hájek-type variants of the proposed class of unbiased LW estimators. Specifically, we show that general estimators in this class achieve asymptotic normality at a rate of $O(n^{-1/2})$, which depends on the number of clusters, while a subclass of LW estimators with cluster-agnostic weights attains a rate of $O(N^{-1/2})$, which instead depends on the total sample size. We further characterize the distinct asymptotic properties of these cluster-agnostic estimators, including faster convergence rates, potential impact of excessive finite-sample variance of the weights, and the need for alternative variance estimators. Under asymptotic regimes in which $n$ grows with $N$ at different rates, our results provide guidance for selecting estimators within this class to balance convergence rate and weight-stability, thereby improving overall finite-sample performance.

We establish asymptotic results under a broad class of randomization schemes and interference networks. Within the whole class of two-stage randomized designs, we impose restrictions on only one level of randomization—either at the cluster level or at the unit level, depending on the estimator—and these restrictions still permit a wide range of schemes at that level, including i.i.d. Bernoulli randomization, standard complete randomization, and stratified complete randomization. For the interference network, asymptotic normality at the standard convergence rate holds as long as the maximum number of neighboring units and neighboring clusters in the underlying network are suitably controlled.

In terms of technical contributions, as we demonstrate it is not trivial to establish formal asymptotic theory for complete randomization, as existing techniques for analyzing sums of weakly dependent sequences (e.g., \cite{Shi2025,Li2017,Fang2015,Kojevnikov2021,Ogburn2024,Liu2014}) do not yield central limit theorems under complete randomization for a general summand without invariance to permutations. To address this challenge, we develop a novel approach based on Stein’s method, which controls the Stein discrepancy $\mathbb{E}[f(S) - S f'(S)]$ using a coupling decomposition that appears to be novel, and thus substantially different from coupling decompositions in the literature. This technique may be of independent interest, as it provides a pathway to establish central limit theorems for general, non-symmetric summands under complete randomization. The same dense dependence also makes variance estimation nonstandard; we show that the standard Heteroskedasticity and Autocorrelation Consistent (HAC) variance estimators remain conservative under complete randomization and, by characterizing in closed form the additional dependence term they omit, further construct a bias-corrected estimator that removes the resulting excess conservativeness.

Finally, we evaluate our methods in extensive empirical experiments similar to those in \cite{Leung2025}. In simulation studies evaluating overall causal effects, we demonstrate the consistently superior accuracy of the marginal Radon–Nikodym derivative estimators within our class relative to alternative methods, including those in \cite{Leung2025} and the naive difference-in-means estimator comparing treated and control clusters. Furthermore, our estimators maintain accurate point estimation and valid confidence interval coverage under complete randomization, while achieving lower mean squared error than under two-stage randomized designs with independent Bernoulli assignment. 

\section{Related works and contributions}
To our knowledge, \cite{Leung2025} and \cite{Lu2026} are the two primary methodological papers addressing cross-cluster interference. We provide a comprehensive comparison between our work and theirs. Similar to our setting, \cite{Leung2025} studies the causal effects of population-level intervention policies in two-stage randomized trials without an exposure mapping assumption. Their estimand and estimator can be viewed as special cases within our estimand class and unbiased LW estimator class, respectively. In particular, our framework extends beyond their approach by considering a broader class of estimators and systematically comparing their estimator's performance within this family. In terms of network structure, \cite{Leung2025} assumes that interference and clusters are both determined by geographic proximity, with spillover effects that decay with distance but remain non-zero across all clusters. In contrast, we allow for a more general class of interference networks that need not align with clustering and geographic proximity, and focus on settings with well-defined interference neighborhoods to avoid the complexities associated with globally decaying dependence. Turning to \cite{Lu2026}, they focus on causal estimands defined through contrasts across different exposure mapping levels, both within and across clusters, rather than contrasts induced by a population-level intervention policy. Thus, they address a fundamentally different research question. Moreover, our framework helps reconcile the difference in convergence rates between the two papers. Specifically, \cite{Leung2025} establishes a convergence rate of $O(n^{-1/2})$, whereas \cite{Lu2026} obtains a rate of $O(N^{-1/2})$. We show that both rates arise within our class of estimators. In particular, the $O(n^{-1/2})$ rate applies to our general estimators, whereas the $O(N^{-1/2})$ rate relies on stronger independence assumptions and, in general, is attained only by our cluster-agnostic estimators. Finally, compared with both \cite{Leung2025} and \cite{Lu2026}, our framework accommodates more flexible treatment assignment mechanisms.

Beyond the cross-cluster interference literature discussed above, another related strand of work studies unbiased identification of causal effects in general experiments with a known interference network. A foundation for much of this literature is \cite{Aronow2017}, which introduced a framework based on a known exposure mapping from population-level treatment assignments to unit-level exposures. Several recent studies in this literature relax this requirement: \cite{Svje2023}, \cite{Gao2025}, \cite{Leung2025}, and \cite{Harshaw2025} study identification with a known interference network but an unknown or misspecified exposure mapping.
The first two papers focus on estimands defined via exposure mapping functions, but establish identification without requiring correct specification of the mapping. In contrast, the latter two avoid exposure mappings in both the estimand definition and identification. In particular, \cite{Harshaw2025} develop an approach to identify general causal estimands as linear functionals of potential outcome functions. Using the Riesz representation theorem and the fact that the Riesz representer does not depend on the functional form of the potential outcomes, they construct unbiased LW estimators using only information about the interference neighborhood. Such LW estimators extend the identification strategies in the literature on experimental analysis, which typically rely on inverse probability weighting. In two-stage randomization, we show that this approach still works, however it is fundamentally different from some other valid identification strategies based on inverse-probability-of-treatment weighting (e.g., \cite{Leung2025}). The non-uniqueness in unbiased identification arises because additional structure is imposed on the potential outcomes beyond the interference neighborhood of the unit-level treatment variable. Cluster-level treatments that do not directly affect outcomes constitute one such structure, while additional assumptions on interference (e.g., exposure mappings) may induce others. In such settings, the space of weights for unbiased LW estimators becomes strictly larger than the space implied by the restricted potential outcome functions and the Riesz representation is not the unique identifying functional.

It is also informative to compare our new framework with existing approaches for establishing central limit theorems under interference. To derive asymptotic normality in this setting, most existing work either assumes a sparse dependency graph, under which independence holds for most unit pairs (e.g., \cite{Ogburn2024,Leung2025}), or follows \cite{Kojevnikov2021} in requiring that dependence between units decays sufficiently fast with respect to a well-defined notion of network distance. While both approaches apply to settings with independent Bernoulli randomization, neither accommodates complete randomization. Under complete randomization, dependence is typically induced across all units and does not decay sufficiently fast to satisfy the conditions in \cite{Kojevnikov2021}. The literature that directly addresses complete randomization (e.g., \cite{Liu2014,Li2017,Shi2025}) is limited in a different respect: it focuses on asymptotic results for sums with a restricted class of symmetric summands that are invariant under certain permutations of treatment assignments. These approaches do not readily extend to our framework, where LW estimators generally do not admit such a symmetric summand representation.

Prior work exists on asymptotic analysis of weakly dependent sequences under dense dependency graphs providing useful intuition for analyzing complete randomization, however these methods differ from ours in key aspects and do not directly deliver central limit theorems for our setting. For example, our characterization of weak dependence under complete randomization is related to the mixing coefficients in \cite{Merlevde2009}, which quantify the gap between unconditional and conditional distributions of summands. However, our framework differs in the choice of conditioning sigma-fields, and we allow for substantially larger dependence coefficients. Moreover, our decomposition of the Stein discrepancy is related to the idea of Stein coupling in \cite{Chen2010} and \cite{Fang2015}. To capture the dependence structure induced by complete randomization, however, we construct a more elaborate coupling than those previously used in the standard Stein coupling framework.

To summarize, the major contributions of our work lie in the following aspects:

\begin{enumerate}
\item[(i)] \textit{Generalized causal effect estimands in imperfect cluster randomized experiments.} 
We generalize the definitions of direct, indirect, total, and overall effects in \cite{Hudgens2008} to arbitrary cluster randomized experiments by comparing potential outcomes under policies in which all clusters receive the active treatment versus receiving the control treatment. These estimands maintain a tractable interpretation in the presence of cross-cluster interference and complex within-cluster treatment assignments. Compared with estimands defined as contrasts across exposure mapping levels, our estimands place less emphasis on how individual characteristics shape treatment effects, but provides a more robust measure of overall policy impact.

\item[(ii)] \textit{Complete characterization of unbiased estimators without exposure mappings.} 
Generalizing \cite{Harshaw2025}, we identify the complete class of unbiased linear weighted estimators for our estimands without relying on exposure mapping assumptions, where the weights may depend on the neighborhood assignments $(\mathbf W_{\mathcal N_{ij}},\mathbf C_{\mathcal N_{ij}})$ and unbiasedness is required uniformly over all outcomes $Y_{ij}\in\mathcal L(\mathbf W_{\mathcal N_{ij}})$, yielding a fully nonparametric and robust identification strategy. For the Hájek-type variants of representative estimators in this class, we conduct asymptotic analysis and simulation studies to better understand their relative advantages. These results provide both theoretical insights and practical guidance for selecting suitable estimators across different research settings. The estimators we propose demonstrate substantially improved accuracy relative to existing approaches.

\item[(iii)] \textit{Estimation and inference in flexible settings.} 
Our framework provides feasible and valid estimation and inference strategies under a broad class of randomization schemes, potential outcome models, interference network specifications, and asymptotic regimes governing the growth of the number of clusters $n$ relative to the sample size $N$. We also develop the \texttt{general\_cluster\_exp} package to enable efficient computation for estimation and inference of causal effects under standard i.i.d.\ Bernoulli and standard complete randomization schemes.

\item[(iv)] \textit{New techniques for inference under complete randomization.}
We introduce a novel technique, based on Stein's method and coupling, that establishes central limit theorems for sums of general, non-symmetric statistics under complete randomization, together with new variance estimation techniques for this setting. These techniques may be of independent interest beyond cluster experiments.
\end{enumerate}

\section{Setup and identification}
Suppose there are $n$ clusters indexed by $i=1,\dots,n$, where cluster $i$ contains $N_i$ individuals indexed by $j=1,\dots,N_i$. Let $N=\sum_{i=1}^n N_i$ denote the total number of units. Let $C_i \in \{0,1\}$ denote the cluster-level assignment, $W_{ij} \in \{0,1\}$ the individual-level treatment, and $Y_{ij}$ the observed outcome. Let
    \[
    \mathbf W
    =
    (W_{11},\dots,W_{1N_1},W_{21},\dots,W_{nN_n})^\top
    \]
    denote the stacked vector of individual-level treatment assignments  for the entire population,
    ordered first by cluster index $i=1,\dots,n$ and then by individual index $j=1,\dots,N_i$ within each cluster.
    Similarly, let
    \[
    \mathbf C
    =
    (C_1,\dots,C_n)^\top
    \]
    denote the vector of cluster-level treatment assignments.
    This ordering is fixed throughout and is used implicitly whenever vector or matrix operations
    involving $\mathbf W$ or $\mathbf C$ are considered. $\mathbf{w}$ and $\mathbf{c}$ will appear later as the realization of $\mathbf W$ and $\mathbf C$. To formally describe the treatment mechanism, we have the following assumption.

\begin{assumption}[Treatment mechanism]
\label{ass:cluster_assignment}
The cluster assignment $\mathbf{C}=(C_1,\ldots,C_n)$ affects treatment assignments only through the mechanisms specified below.
\begin{enumerate}
\item[(i)] For each cluster $i=1,\ldots,n$, conditional on the realized cluster assignments $\mathbf{C}$, the cluster-level vector of individual-level treatment assignments
\[
\mathbf{W}_i=(W_{i1},\ldots,W_{iN_i})^T
\]
is generated according to a cluster-specific assignment law,
\[
\mathbf{W}_i \mid \mathbf{C}
\;\sim\;
\phi_{C_i},
\]
where $\phi_0$ and $\phi_1$ are predetermined probability laws indexed by the cluster type $C_i$.
The assignment laws $\phi_0$ and $\phi_1$ do not depend on potential outcomes and do not depend on the realized values of $\mathbf{C}$ beyond the index $C_i$, but it can vary with different cluster size. Within a cluster, the assignment law $\phi_{c}$ may correspond to either an independent treatment regime (e.g., Bernoulli assignment) or a dependent treatment regime (e.g., complete randomization).
\item[(ii)] Across clusters, the assignment vectors
\(
\{\mathbf{W}_i\}_{i=1}^n
\)
are independent conditional on $\mathbf{C}$.
\end{enumerate}
\end{assumption}

Under the treatment mechanism described above, we furthermore assume that the cluster-level treatment $\mathbf{C}$ influences the outcomes only through
its effect on the treatment assignment vector $\mathbf{W}$. Formally, we have the following assumption on potential outcomes.
\begin{assumption}[Potential outcome]
\label{ass:potential_outcome}
For each unit $(i,j)$, there exists a potential outcome function
\[
Y_{ij}(\mathbf{w}), \qquad \mathbf{w}\in\{0,1\}^{\sum_{i}N_i},
\]
corresponding to the unit's outcome under hypothetical treatment assignment vector $\mathbf{w}$; such that under a network consistency assumption, the observed outcome satisfies
\[
Y_{ij} = Y_{ij}(\mathbf{W}) \quad \text{almost surely}.
\]
\end{assumption}

To simplify the potential outcome structure, we assume neighborhood interference through a known undirected network defined over individual units. Notably, we allow interference neighborhoods to extend across the clusters defined above. Let $\mathcal{N}_{ij}$ denote the known neighborhood of unit $(i,j)$ in this network, consisting of individual units (rather than clusters) and containing both the neighboring units and the unit itself, $(i,j)\in\mathcal{N}_{ij}$.
With this notation in place, we impose the following assumption.
\begin{assumption}[Neighborhood interference]\label{ass:interference}
For any unit $(i,j)$, $Y_{ij}(\mathbf{w}) = Y_{ij}(\mathbf{w}')$ whenever
$\mathbf{w}_{\mathcal{N}_{ij}} = \mathbf{w}'_{\mathcal{N}_{ij}}$, where $\mathcal{N}_{ij}$
denotes the interference neighborhood of unit $(i,j)$.
Under this assumption, the potential outcome depends on $\mathbf{w}$ only through
$\mathbf{w}_{\mathcal{N}_{ij}}$, and we can therefore also write it as $Y_{ij}(\mathbf{w}_{\mathcal{N}_{ij}})$.
\end{assumption}

To furthermore record how neighborhoods extend across clusters, we project the individual-level neighborhood onto the cluster level and define the \emph{cluster-level interference neighborhood} of unit $(i,j)$ as
\[
\mathcal{N}_{ij}^{(cl)} = \{\, i' : \exists\, j' \text{ such that } (i',j') \in \mathcal{N}_{ij} \,\},
\]
namely, the set of clusters containing at least one unit in $\mathcal{N}_{ij}$.
Then, let
\[
\mathbf{W}_{\mathcal{N}_{ij}}
=
\{W_{i'j'} : (i',j') \in \mathcal{N}_{ij}\}
\]
denote the collection of individual-level treatment assignments in the neighborhood, and let
\[
\mathbf{C}_{\mathcal{N}_{ij}}
:=
\mathbf{C}_{\mathcal{N}_{ij}^{(cl)}}
=
\{C_{i'} : i' \in \mathcal{N}_{ij}^{(cl)}\}
\]
denote the corresponding cluster-level assignments. By Assumption~\ref{ass:interference}, the potential outcome of unit $(i,j)$ depends on $\mathbf W$ only through $\mathbf{W}_{\mathcal{N}_{ij}}$, 
and $\mathbf{W}_{\mathcal{N}_{ij}}$ depends on $\mathbf C$ only through $\mathbf{C}_{\mathcal{N}_{ij}}$.

To make meaningful contrasts between $\phi_1$ and $\phi_0$, we use $\phi_c$ to define a counterfactual assignment law $P_{\phi_c}$ that represents a hypothetical experiment in which this same individual-level treatment rule is applied uniformly across all clusters. By definition, $P_{\phi_c}$ preserves the cluster-level randomization scheme of the original experiment, so that
\[
P_{\phi_c}(\mathbf{C}) = P(\mathbf{C}).
\]
However, under $P_{\phi_c}$ and conditional on $\mathbf{C}$, the within-cluster treatment vector
\[
\mathbf{W}_i|\mathbf{C}\sim \phi_c
\]
holds independently for all clusters. The counterfactual laws $P_{\phi_0}$ and $P_{\phi_1}$ therefore provide uniform reference designs that isolate the effect of switching the individual-level treatment rule across all clusters, and serve as the basis for defining regime-specific potential outcomes and causal contrasts. 

To ensure that information from the observed assignment mechanism can be transported to the counterfactual regimes of interest, we impose the following local transportability assumption. Intuitively, this assumption requires the support of the realized assignment mechanism to be sufficiently rich to cover that of the counterfactual regimes. The assumption provides two alternative absolute continuity conditions that differ in strength and correspond to different identification results developed later. For our main analysis, only the first condition is required, while the second, stronger condition enables theoretically sharper identification.
\begin{assumption}[Local measure transportability]
\label{ass:measure}
For each treatment regime $\phi \in \{\phi_0, \phi_1\}$ and each unit $(i,j)$, we impose one of the following absolute continuity conditions of the counterfactual assignment measure $P_{\phi}$ with respect to the observed assignment measure $P$, defined on the measurable space induced by the interference neighborhood of unit $(i,j)$:
\begin{enumerate}
    \item[(i)] The marginal measure $P_{\phi}(\mathbf{W}_{\mathcal{N}_{ij}})$ is absolutely continuous with respect to $P(\mathbf{W}_{\mathcal{N}_{ij}})$, i.e.,
    \[
    P_{\phi}\!\left(\mathbf{W}_{\mathcal{N}_{ij}}\right)
    \ll
    P\!\left(\mathbf{W}_{\mathcal{N}_{ij}}\right).
    \]
    Equivalently, for any realization $\mathbf{w}_{\mathcal{N}_{ij}}$,
    \[
    P\!\left(
    \mathbf{W}_{\mathcal{N}_{ij}}=\mathbf{w}_{\mathcal{N}_{ij}}
    \right)=0
    \;\;\Rightarrow\;\;
    P_{\phi}\!\left(
    \mathbf{W}_{\mathcal{N}_{ij}}=\mathbf{w}_{\mathcal{N}_{ij}}
    \right)=0.
    \]

    \item[(ii)] The joint measure $P_{\phi}(\mathbf{W}_{\mathcal{N}_{ij}}, \mathbf{C}_{\mathcal{N}_{ij}})$ is absolutely continuous with respect to $P(\mathbf{W}_{\mathcal{N}_{ij}}, \mathbf{C}_{\mathcal{N}_{ij}})$, i.e.,
    \[
    P_{\phi}\!\left(\mathbf{W}_{\mathcal{N}_{ij}}, \mathbf{C}_{\mathcal{N}_{ij}}\right)
    \ll
    P\!\left(\mathbf{W}_{\mathcal{N}_{ij}}, \mathbf{C}_{\mathcal{N}_{ij}}\right).
    \]
    Equivalently, for any realizations $(\mathbf{w}_{\mathcal{N}_{ij}}, \mathbf{c}_{\mathcal{N}_{ij}})$,
    \[
    P\!\left(
    \mathbf{W}_{\mathcal{N}_{ij}}=\mathbf{w}_{\mathcal{N}_{ij}},
    \mathbf{C}_{\mathcal{N}_{ij}}=\mathbf{c}_{\mathcal{N}_{ij}}
    \right)=0
    \;\;\Rightarrow\;\;
    P_{\phi}\!\left(
    \mathbf{W}_{\mathcal{N}_{ij}}=\mathbf{w}_{\mathcal{N}_{ij}},
    \mathbf{C}_{\mathcal{N}_{ij}}=\mathbf{c}_{\mathcal{N}_{ij}}
    \right)=0.
    \]
\end{enumerate}
\end{assumption}
\begin{remark}[Sufficient conditions for Assumption \ref{ass:measure}]
Assumption \ref{ass:measure}(i) holds under either of the following conditions:
\begin{itemize}
    \item For any unit $(i,j)$ and any $\mathbf{c}_{\mathcal{N}_{ij}} \in \{0,1\}^{|\mathcal{N}_{ij}^{(cl)}|}$,
    \begin{equation}\label{eq:transportability}
    0 < P\!\left(\mathbf{C}_{\mathcal{N}_{ij}} = \mathbf{c}_{\mathcal{N}_{ij}}\right) < 1.
    \end{equation}
    \item For any cluster $i$, any feasible value $\mathbf{w}_i\in \{0,1\}^{N_i}$, and any cluster-level treatment regime $\phi$,
    \[
    0 < P_{\phi}\!\left(\mathbf{W}_i = \mathbf{w}_i\right) < 1.
    \]
\end{itemize}
In contrast, Assumption \ref{ass:measure}(ii) is guaranteed only by the second condition. Under the first condition alone, this guarantee does not generally hold; however, it can be recovered if the second condition holds for at least one of the treatment regimes $\phi \in \{\phi_0, \phi_1\}$. 
As an illustration of the failure of Assumption \ref{ass:measure}(ii) when only the first condition above holds, consider a traditional cluster randomized trial design without within-cluster randomization, where for every cluster $i$,
\[
P_{\phi_0}(\mathbf{W}_i = \mathbf{0}) = 1
\quad \text{and} \quad
P_{\phi_1}(\mathbf{W}_i = \mathbf{1}) = 1.
\]
Here, we write $\mathbf{0}$ and $\mathbf{1}$ denote vectors of zeros and ones, respectively. 
In this setting, Assumption \ref{ass:measure}(ii) fails because the counterfactual distribution assigns positive probability to configurations such as $P_{\phi_1}(W_{ij}=1, C_i=0) > 0$, whereas such events have zero probability under the observed assignment mechanism, i.e., $P(W_{ij}=1, C_i=0) = 0$. 
\end{remark}

For any treatment regime $\phi\in\{\phi_0,\phi_1\}$ and individual-level treatment $w\in\{0,1\}$,
we follow \citet{Hudgens2008} and define the \emph{marginal individual average potential outcome} under regime $\phi$ as
\[
\bar{Y}_{ij}(\phi)
:=
\mathbb{E}_{\phi}\!\left[ Y_{ij} \right]
=
\mathbb{E}_{\mathbf C}\!\left[
\mathbb{E}_{\mathbf W\sim\phi}\!\left[
Y_{ij}(\mathbf W)
\right]
\right].
\]
Similarly, for unit $(i,j)$ under individual treatment $w$ and regime $\phi$, suppose $P_{\phi}(W_{ij}=w)>0$, and the corresponding \emph{individual average potential outcome} is
\[
\bar{Y}_{ij}(w,\phi)
:=
\mathbb{E}_{\phi}\!\left[ Y_{ij}\mid W_{ij}=w \right]
=
\mathbb{E}_{\mathbf C}\!\left[
\mathbb{E}_{\mathbf W\sim\phi}\!\left[
Y_{ij}(\mathbf W)\mid W_{ij}=w
\right]
\right].
\]
Averaging over individuals within cluster $i$, we define the \emph{cluster-level average potential outcomes}
\[
\bar{Y}_i(w,\phi)
=
\frac{1}{N_i}\sum_{j=1}^{N_i}\bar{Y}_{ij}(w,\phi),
\qquad
\bar{Y}_i(\phi)
=
\frac{1}{N_i}\sum_{j=1}^{N_i}\bar{Y}_{ij}(\phi).
\]
To aggregate across clusters, we allow for a general collection of nonnegative known cluster weights $\{g_i\}_{i=1}^n$ satisfying $\sum_{i=1}^n g_i = 1$. 
Following the terminology in \cite{Hudgens2008}, the \emph{population average potential outcome} and its marginal counterpart are then defined as
\[
\bar{Y}(w,\phi)
=
\sum_{i=1}^n g_i\,\bar{Y}_i(w,\phi),
\qquad
\bar{Y}(\phi)
=
\sum_{i=1}^n g_i\,\bar{Y}_i(\phi).
\]
Notably, the “population” considered in this definition consists solely of the observed units in the randomized trial. We do not treat the observed data as a random sample drawn from a larger superpopulation.

Population-level causal effects can then be defined as contrasts of the \emph{population average potential outcomes} or \emph{marginal population average potential outcomes} under different individual treatment statuses and treatment regimes. We focus exclusively on contrasts of population-level potential outcomes, rather than those at the cluster or individual level, since all potential outcomes are defined with respect to population-level intervention policies, rendering their aggregate effects more directly interpretable. Specifically, the \emph{population-level direct causal effect} under treatment regime $\phi$
is defined as
\[
\mathrm{CE}^{D}(\phi)
:=
\bar{Y}(1,\phi)-\bar{Y}(0,\phi),
\]
capturing the effect of changing an individual’s own treatment status
while holding fixed the stochastic treatment rule governing all other units.
The \emph{population-level indirect causal effect} under individual treatment $w$
is defined as
\[
\mathrm{CE}^{I}(w;\phi_1,\phi_0)
:=
\bar{Y}(w,\phi_1)-\bar{Y}(w,\phi_0),
\]
which isolates the effect of switching the treatment regime applied to others
while holding an individual’s own treatment fixed.
The \emph{population-level total causal effect} is defined as
\[
\mathrm{CE}^{T}(\phi_1,\phi_0)
:=
\bar{Y}(1,\phi_1)-\bar{Y}(0,\phi_0),
\]
combining both direct and indirect components by simultaneously changing
the individual’s treatment status and the treatment regime applied to others.
By construction, the total causal effect satisfies
\[
\mathrm{CE}^{T}(\phi_1,\phi_0)
=
\mathrm{CE}^{D}(\phi_1)
+
\mathrm{CE}^{I}(0;\phi_1,\phi_0),
\]
mirroring the decomposition established in \citet{Hudgens2008}.
Finally, the \emph{population-level overall causal effect} summarizes the
average effect of switching the treatment regime from $\phi_0$ to $\phi_1$
without conditioning on individual treatment status, and is defined as
\[
\mathrm{CE}^{O}(\phi_1,\phi_0)
:=
\bar{Y}(\phi_1)-\bar{Y}(\phi_0).
\]

In this paper, we estimate the \emph{population average potential outcomes} and the \emph{marginal population average potential outcomes} using a class of linear weighted (LW) estimators as defined in \cite{Godambe1955}; the corresponding causal effect estimates are obtained as linear combinations of these estimators. The following theorem characterizes the full class of weights that achieve unbiased identification of the average potential outcome estimands without relying on a specified exposure mapping. For any random variable $R$, let $\mathcal{L}(R)$ denote the class of Lebesgue-integrable functions of $R$. For the marginal and non-marginal average potential outcomes of each unit $(i,j)$, we consider estimators of the form \[Y_{ij}\beta_{ij}(\mathbf{W}_{\mathcal{N}_{ij}}, \mathbf{C}_{\mathcal{N}_{ij}}),\] and characterize the class of weight functions $\beta_{ij}$ that ensures unbiased identification for all $Y_{ij} \in \mathcal{L}(\mathbf{W}_{\mathcal{N}_{ij}})$. Since $\mathbf{W}_{\mathcal{N}_{ij}}$ is discrete, this class includes all well-defined functions on its support. Aggregating these unbiased individual-level estimators then yields unbiased estimators of the corresponding population-level marginal and non-marginal average potential outcomes.

\begin{theorem}[Unbiased identification weights]\label{thm:all_weights}
Fix a treatment regime $\phi \in \{\phi_0,\phi_1\}$ and define the marginal Radon--Nikodym derivative
\[
\alpha_{ij}(\phi)
=
\frac{
dP_{\phi}\big(\mathbf{W}_{\mathcal{N}_{ij}}\big)
}{
dP\big(\mathbf{W}_{\mathcal{N}_{ij}}\big)
},
\]
which reweights the observed assignment distribution to the counterfactual law $P_{\phi}$ at the level of the neighborhood $\mathcal{N}_{ij}$.

Under Assumptions \ref{ass:potential_outcome}--\ref{ass:interference} and \ref{ass:measure}(i), for any $h \in \mathcal{L}(\mathbf{W}_{\mathcal{N}_{ij}})$, a weight $\beta_{ij} \in \mathcal{L}(\mathbf{W}_{\mathcal{N}_{ij}}, \mathbf{C}_{\mathcal{N}_{ij}})$ satisfies
\[
\mathbb{E}\!\left[ Y_{ij}\,\beta_{ij} \right]
=
\mathbb{E}_{\phi}\!\left[ Y_{ij}\,h(\mathbf{W}_{\mathcal{N}_{ij}}) \right]
\quad \text{for all } Y_{ij} \in \mathcal{L}(\mathbf{W}_{\mathcal{N}_{ij}})
\]
if and only if
\[
\mathbb{E}\!\left[\beta_{ij} \mid \mathbf{W}_{\mathcal{N}_{ij}}\right] = \alpha_{ij}(\phi)\,h(\mathbf{W}_{\mathcal{N}_{ij}}).
\]
In particular:
\begin{enumerate}
    \item[(a)] Taking $h \equiv 1$: a weight $\beta_{ij}(\phi) \in \mathcal{L}(\mathbf{W}_{\mathcal{N}_{ij}}, \mathbf{C}_{\mathcal{N}_{ij}})$ satisfies
    \[
    \mathbb{E}\!\left[ Y_{ij}\,\beta_{ij}(\phi) \right] = \bar{Y}_{ij}(\phi)
    \quad \text{for all } Y_{ij} \in \mathcal{L}(\mathbf{W}_{\mathcal{N}_{ij}})
    \]
    if and only if $\mathbb{E}\!\left[\beta_{ij}(\phi)\mid\mathbf{W}_{\mathcal{N}_{ij}}\right]=\alpha_{ij}(\phi)$.

    \item[(b)] For $w \in \{0,1\}$ with $P_{\phi}(W_{ij}=w) > 0$, taking $h(\mathbf{w}) = {1(w_{ij}=w)}/{P_{\phi}(W_{ij}=w)}$: a weight $\beta_{ij}(w,\phi) \in \mathcal{L}(\mathbf{W}_{\mathcal{N}_{ij}}, \mathbf{C}_{\mathcal{N}_{ij}})$ satisfies
    \[
    \mathbb{E}\!\left[ Y_{ij}\,\beta_{ij}(w,\phi) \right] = \bar{Y}_{ij}(w,\phi)
    \quad \text{for all } Y_{ij} \in \mathcal{L}(\mathbf{W}_{\mathcal{N}_{ij}})
    \]
    if and only if $\mathbb{E}\!\left[\beta_{ij}(w,\phi)\mid\mathbf{W}_{\mathcal{N}_{ij}}\right]=\alpha_{ij}(\phi)\frac{1(W_{ij}=w)}{P_{\phi}(W_{ij}=w)}$.
\end{enumerate}
\end{theorem}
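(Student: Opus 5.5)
The plan is to reduce both sides of the identity to expectations under the observed law $P$ of functions of $\mathbf W_{\mathcal N_{ij}}$ alone, and then use the fact that the test class $\mathcal L(\mathbf W_{\mathcal N_{ij}})$ is rich enough to separate points of the discrete support.

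\textbf{Step 1: rewrite the left-hand side.} By Assumptions~\ref{ass:potential_outcome}--\ref{ass:interference}, $Y_{ij}=Y_{ij}(\mathbf W_{\mathcal N_{ij}})$ is a function of $\mathbf W_{\mathcal N_{ij}}$ only. The tower property conditioning on $\mathbf W_{\mathcal N_{ij}}$ then gives
\[
\mathbb E[Y_{ij}\beta_{ij}]=\mathbb E\big[Y_{ij}\,\mathbb E[\beta_{ij}\mid \mathbf W_{\mathcal N_{ij}}]\big].
\]
Integrability follows because $\mathbf W_{\mathcal N_{ij}}$ takes finitely many values.

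\textbf{Step 2: rewrite the right-hand side as an expectation under $P$.} Under $P_\phi$, the pair $(Y_{ij},h)$ is again a function of $\mathbf W_{\mathcal N_{ij}}$. By Assumption~\ref{ass:measure}(i), the marginal Radon--Nikodym derivative $\alpha_{ij}(\phi)$ exists. Since the space is discrete, it is the ratio of point masses on the support of $P(\mathbf W_{\mathcal N_{ij}})$. A change of measure yields
\[
\mathbb E_\phi[Y_{ij}h]=\mathbb E[Y_{ij}\,\alpha_{ij}(\phi)\,h].
\]
This step needs only absolute continuity of the marginal law of $\mathbf W_{\mathcal N_{ij}}$, not of the joint law with $\mathbf C_{\mathcal N_{ij}}$; this is why part (i) of Assumption~\ref{ass:measure} suffices.

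\textbf{Step 3: the equivalence.} Define
\[
D(\mathbf w)=\mathbb E[\beta_{ij}\mid \mathbf W_{\mathcal N_{ij}}=\mathbf w]-\alpha_{ij}(\phi)(\mathbf w)\,h(\mathbf w).
\]
The required identity is equivalent to $\mathbb E[Y_{ij}D]=0$ for every $Y_{ij}\in\mathcal L(\mathbf W_{\mathcal N_{ij}})$.

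The ``if'' direction is immediate from Steps 1--2.

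For ``only if'', take $Y_{ij}=1(\mathbf W_{\mathcal N_{ij}}=\mathbf w)$ for each $\mathbf w$ in the support of $P(\mathbf W_{\mathcal N_{ij}})$. This gives $P(\mathbf W_{\mathcal N_{ij}}=\mathbf w)\,D(\mathbf w)=0$, so $D=0$ $P$-a.s. That is exactly the stated conditional-expectation condition, which is meaningful only $P$-a.s.

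The main subtlety is precisely here. One must observe that indicator functions are admissible potential outcomes: the theorem quantifies over all of $\mathcal L(\mathbf W_{\mathcal N_{ij}})$, so this is allowed. One must also interpret the equality almost surely, because off the support the conditional expectation is undefined while $\alpha_{ij}$ vanishes there by absolute continuity.

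\textbf{Step 4: specializations.} For (a), set $h\equiv1$; then $\mathbb E_\phi[Y_{ij}]=\bar Y_{ij}(\phi)$ by definition.

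For (b), set $h=1(W_{ij}=w)/P_\phi(W_{ij}=w)$, which is a function of $\mathbf W_{\mathcal N_{ij}}$ because $(i,j)\in\mathcal N_{ij}$. Then
\[
\mathbb E_\phi[Y_{ij}h]=\mathbb E_\phi[Y_{ij}\mid W_{ij}=w]=\bar Y_{ij}(w,\phi)
\]
by the definition of conditional expectation. The general statement then gives both characterizations directly.
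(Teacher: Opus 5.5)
Your proof is correct and follows the same route as the paper: the tower property reduces $\mathbb{E}[Y_{ij}\beta_{ij}]$ to $\mathbb{E}\bigl[Y_{ij}\,\mathbb{E}[\beta_{ij}\mid\mathbf W_{\mathcal N_{ij}}]\bigr]$, and the change of measure via $\alpha_{ij}(\phi)$ gives the ``if'' direction. For ``only if'', the paper appeals to uniqueness of the Riesz representer on $\mathcal L(\mathbf W_{\mathcal N_{ij}})$, while you test against indicators of support points; this is the same separation fact in elementary form, and it also makes explicit that the conditional-expectation identity holds only $P$-a.s.
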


Following \cite{Harshaw2025}, the marginal Radon--Nikodym derivative weights, $\alpha_{ij}(\phi)$ and $\alpha_{ij}(\phi)\frac{1(W_{ij}=w)}{P_{\phi}(W_{ij}=w)}$, can be interpreted as the Riesz representers of the corresponding estimands. Specifically, these estimands are viewed as linear functionals of $Y_{ij}(\mathbf{W}_{\mathcal{N}_{ij}})$ on the Hilbert space $\mathcal{L}(\mathbf{W}_{\mathcal{N}_{ij}})$ equipped with the inner product induced by product expectations. However, since our weights can be constructed in the larger space $\mathcal{L}(\mathbf{W}_{\mathcal{N}_{ij}}, \mathbf{C}_{\mathcal{N}_{ij}})$, it is possible to construct alternative unbiased weights by exploiting additional structure in $\mathbf{C}$. 
    
Among all weights that achieve unbiasedness for all $Y_{ij}(\mathbf{W}_{\mathcal{N}_{ij}})\in\mathcal{L}(\mathbf{W}_{\mathcal{N}_{ij}})$, the marginal Radon--Nikodym derivative weights attain the minimum variance.
From Theorem~\ref{thm:all_weights}, we have
\[
\mathbb{E}[\beta_{ij}(\phi)\mid \mathbf W_{\mathcal N_{ij}}]
=
\mathbb{E}[\alpha_{ij}(\phi)\mid \mathbf W_{\mathcal N_{ij}}]
=
\alpha_{ij}(\phi)
\]
for any unbiased weight $\beta_{ij}(\phi)$. Hence,
\[
\mathrm{Var}\!\bigl(\beta_{ij}(\phi)\bigr)
\;\geq\;
\mathrm{Var}\!\bigl(\alpha_{ij}(\phi)\bigr).
\]
The same argument applies to the weights identifying non-marginal potential outcomes, $\beta_{ij}(w,\phi)$ and $\alpha_{ij}(\phi)\frac{1(W_{ij}=w)}{P_{\phi}(W_{ij}=w)}$. 

Although the marginal Radon--Nikodym derivative weights are variance-optimal at the unit level, incorporating additional structure from $\mathbf{C}$ into the weights can help reduce dependence across units. Ideally, one would prefer weights that depend only on nearby units in the interference network. However, cluster-level treatments induces dependence among all units within a cluster, and this dependence does not vanish with increasing sample size or cluster size. Consequently, it can offset the benefits of designs that aim to achieve (approximate) independence under within-cluster randomization schemes, $\phi_0$ and $\phi_1$.
To address this issue, we consider weights that satisfy unbiasedness conditional on any realization of the cluster-level treatment vector $\mathbf{C}$:
\[
\mathbb{E}\!\left[ Y_{ij}\,\beta_{ij}(w,\phi)\mid \mathbf{C} \right] = \bar{Y}_{ij}(w,\phi)
\text{ and }
\mathbb{E}\!\left[ Y_{ij}\,\beta_{ij}(\phi)\mid \mathbf{C} \right] = \bar{Y}_{ij}(\phi) \quad \text{for all } Y_{ij} \in \mathcal{L}(\mathbf{W}_{\mathcal{N}_{ij}}).
\]
Under this formulation, the influence of cluster-level treatments on the estimators is mitigated. Moreover, when $W_{ij}$ is approximately independent across $j$ conditional on $C_i$, for two units $(i,j)$ and $(i,j')$ within the same cluster that are far away in the interference network,
\begin{equation}\label{eq:cluster_agnostic}
\operatorname{Cov}(Y_{ij}\beta_{ij},\, Y_{ij'}\beta_{ij'})
=
\mathbb{E}\!\bigl[\operatorname{Cov}(Y_{ij}\beta_{ij},\, Y_{ij'}\beta_{ij'} \mid \mathbf{C})\bigr]
= 0,
\end{equation}
implying weaker correlation between summands in the estimator.
We refer to such weights as \emph{cluster-agnostic unbiased weights}. While they may not minimize unit-level variance, they could reduce the variance of aggregated estimators in settings where cross-unit dependence is a primary concern. In the following theorem, we will characterize such weights.

\begin{theorem}[Cluster-agnostic identification weights]\label{thm:cluster_agnostic_weights}
Fix a treatment regime $\phi \in \{\phi_0,\phi_1\}$. Suppose Assumptions
\ref{ass:cluster_assignment}--\ref{ass:interference} hold.

\begin{enumerate}
    \item[(i)] Suppose Assumption \ref{ass:measure}(ii) holds. Define the complete Radon--Nikodym derivative
    \[
    \alpha_{ij}^{(\mathrm{comp})}(\phi)
    =
    \frac{
    dP_{\phi}\big(\mathbf{W}_{\mathcal{N}_{ij}}, \mathbf{C}_{\mathcal{N}_{ij}}\big)
    }{
    dP\big(\mathbf{W}_{\mathcal{N}_{ij}}, \mathbf{C}_{\mathcal{N}_{ij}}\big)
    },
    \]
    which reweights the observed assignment distribution to the counterfactual law $P_{\phi}$ at the level of the neighborhood $\mathcal{N}_{ij}$.
    For any $h \in \mathcal{L}(\mathbf{W}_{\mathcal{N}_{ij}})$, a weight $\beta_{ij} \in \mathcal{L}(\mathbf{W}_{\mathcal{N}_{ij}}, \mathbf{C}_{\mathcal{N}_{ij}})$ satisfies
    \[
    \mathbb{E}\!\left[
    Y_{ij}\,\beta_{ij}
    \;\middle|\;
    \mathbf{C}
    \right]
    =
    \mathbb{E}_{\phi}\!\left[
    Y_{ij}\,h(\mathbf{W}_{\mathcal{N}_{ij}})
    \right]
    \quad \text{for all } Y_{ij} \in \mathcal{L}(\mathbf{W}_{\mathcal{N}_{ij}})
    \]
    if and only if
    \[
    \beta_{ij} = \alpha_{ij}^{(\mathrm{comp})}(\phi)\,h(\mathbf{W}_{\mathcal{N}_{ij}}).
    \]
    In particular:
    \begin{enumerate}
        \item[(a)] Taking $h \equiv 1$: a weight $\beta_{ij}(\phi) \in \mathcal{L}(\mathbf{W}_{\mathcal{N}_{ij}}, \mathbf{C}_{\mathcal{N}_{ij}})$ satisfies
        \[
        \mathbb{E}\!\left[ Y_{ij}\,\beta_{ij}(\phi) \mid \mathbf{C} \right] = \bar{Y}_{ij}(\phi)
        \quad \text{for all } Y_{ij} \in \mathcal{L}(\mathbf{W}_{\mathcal{N}_{ij}})
        \]
        if and only if $\beta_{ij}(\phi) = \alpha_{ij}^{(\mathrm{comp})}(\phi)$.

        \item[(b)] For $w \in \{0,1\}$ with $P_{\phi}(W_{ij}=w) > 0$, taking $h(\mathbf{w}) = {1(w_{ij}=w)}/{P_{\phi}(W_{ij}=w)}$: a weight $\beta_{ij}(w,\phi) \in \mathcal{L}(\mathbf{W}_{\mathcal{N}_{ij}}, \mathbf{C}_{\mathcal{N}_{ij}})$ satisfies
        \[
        \mathbb{E}\!\left[ Y_{ij}\,\beta_{ij}(w,\phi) \mid \mathbf{C} \right] = \bar{Y}_{ij}(w,\phi)
        \quad \text{for all } Y_{ij} \in \mathcal{L}(\mathbf{W}_{\mathcal{N}_{ij}})
        \]
        if and only if $\beta_{ij}(w,\phi) = \alpha_{ij}^{(\mathrm{comp})}(\phi)\frac{1(W_{ij}=w)}{P_{\phi}(W_{ij}=w)}$.
    \end{enumerate}

    \item[(ii)] Suppose Assumption \ref{ass:measure}(ii) does not hold.
    \begin{enumerate}
        \item[(a)] There exists no weight $\beta_{ij}(\phi) \in \mathcal{L}(\mathbf{W}_{\mathcal{N}_{ij}}, \mathbf{C}_{\mathcal{N}_{ij}})$ such that
        \[
        \mathbb{E}\!\left[ Y_{ij}\,\beta_{ij}(\phi) \mid \mathbf{C} \right] = \bar{Y}_{ij}(\phi)
        \quad \text{for all } Y_{ij} \in \mathcal{L}(\mathbf{W}_{\mathcal{N}_{ij}}).
        \]

        \item[(b)] There exists at least one \(w\in\{0,1\}\) with
\(P_{\phi}(W_{ij}=w)>0\) such that no weight
\[
\beta_{ij}(w,\phi)
\in
\mathcal{L}(\mathbf{W}_{\mathcal{N}_{ij}},\mathbf{C}_{\mathcal{N}_{ij}})
\]
satisfies
\[
\mathbb{E}\!\left[
Y_{ij}\,\beta_{ij}(w,\phi)
\mid \mathbf{C}
\right]
=
\bar{Y}_{ij}(w,\phi)
\quad
\text{for all }
Y_{ij}\in\mathcal{L}(\mathbf{W}_{\mathcal{N}_{ij}}).
\]
    \end{enumerate}
\end{enumerate}
\end{theorem}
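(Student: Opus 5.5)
The plan is to reduce both parts to a pointwise comparison of conditional measures, exploiting that $\mathbf{W}_{\mathcal N_{ij}}$ is discrete and that $\mathbf{W}_{\mathcal N_{ij}}$ depends on $\mathbf C$ only through $\mathbf C_{\mathcal N_{ij}}$ (Assumption~\ref{ass:cluster_assignment}). Write $\mathbf w$ and $\mathbf c$ for generic realizations of $(\mathbf W_{\mathcal N_{ij}},\mathbf C_{\mathcal N_{ij}})$. Since $\beta_{ij}$ is a function of $(\mathbf W_{\mathcal N_{ij}},\mathbf C_{\mathcal N_{ij}})$, Assumption~\ref{ass:cluster_assignment} gives
\[
\mathbb E[Y_{ij}\beta_{ij}\mid \mathbf C]=\sum_{\mathbf w} Y_{ij}(\mathbf w)\,\beta_{ij}(\mathbf w,\mathbf C_{\mathcal N_{ij}})\,P(\mathbf W_{\mathcal N_{ij}}=\mathbf w\mid \mathbf C_{\mathcal N_{ij}}).
\]
For the target, $P_\phi$ makes $\mathbf W$ independent of $\mathbf C$ (each $\mathbf W_i\sim\phi$ regardless of $C_i$). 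Hence
\[
\mathbb E_\phi[Y_{ij}h]=\sum_{\mathbf w}Y_{ij}(\mathbf w)h(\mathbf w)P_\phi(\mathbf W_{\mathcal N_{ij}}=\mathbf w),
\]
and for every $\mathbf c$,
\[
P_\phi(\mathbf w,\mathbf c)=P_\phi(\mathbf w)P(\mathbf C_{\mathcal N_{ij}}=\mathbf c).
\]
The requirement must hold for all $Y_{ij}\in\mathcal L(\mathbf W_{\mathcal N_{ij}})$. Choosing $Y_{ij}=1(\mathbf W_{\mathcal N_{ij}}=\mathbf w)$ for each support point therefore turns the identity into the family of equations
\[
\beta_{ij}(\mathbf w,\mathbf c)\,P(\mathbf w\mid \mathbf c)=h(\mathbf w)P_\phi(\mathbf w)\qquad\text{for every }\mathbf w\text{ and every }\mathbf c \text{ with } P(\mathbf C_{\mathcal N_{ij}}=\mathbf c)>0.
\]
Conversely, linearity shows that this family implies the identity for all $Y_{ij}$.

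\textbf{Part (i).} Multiply both sides by $P(\mathbf C_{\mathcal N_{ij}}=\mathbf c)$ to get
\[
\beta_{ij}(\mathbf w,\mathbf c)\,P(\mathbf w,\mathbf c)=h(\mathbf w)\,P_\phi(\mathbf w,\mathbf c).
\]
Under Assumption~\ref{ass:measure}(ii), wherever $P(\mathbf w,\mathbf c)>0$ we can divide, giving $\beta_{ij}=h\,\alpha^{(\mathrm{comp})}_{ij}(\phi)$. Wherever $P(\mathbf w,\mathbf c)=0$, absolute continuity forces $P_\phi(\mathbf w,\mathbf c)=0$, so the equation is vacuous. Equality is therefore almost sure under $P$, which is the sense in which $\beta_{ij}$ is determined. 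This proves the ``only if'' direction, and substituting back gives the ``if'' direction. Items (a) and (b) follow by specializing $h$. For (b), I will check that $h=1(w_{ij}=w)/P_\phi(W_{ij}=w)$ gives $\mathbb E_\phi[Y_{ij}h]=\bar Y_{ij}(w,\phi)$. This holds because the conditional expectation under $P_\phi$ given $W_{ij}=w$ equals $\mathbb E_\phi[Y_{ij}1(W_{ij}=w)]/P_\phi(W_{ij}=w)$, using the fact that $\mathbf C$ is independent of $\mathbf W$ under $P_\phi$.

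\textbf{Part (ii).} If Assumption~\ref{ass:measure}(ii) fails, there is $(\mathbf w^*,\mathbf c^*)$ with $P(\mathbf w^*,\mathbf c^*)=0$ but $P_\phi(\mathbf w^*,\mathbf c^*)>0$. Hence $P(\mathbf C_{\mathcal N_{ij}}=\mathbf c^*)>0$, so $P(\mathbf w^*\mid\mathbf c^*)=0$, and also $P_\phi(\mathbf w^*)>0$. The pointwise equation at $(\mathbf w^*,\mathbf c^*)$ then reads
\[
0=h(\mathbf w^*)P_\phi(\mathbf w^*),
\]
which is impossible whenever $h(\mathbf w^*)\neq0$. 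For (a), $h\equiv1$, so no weight exists. For (b), take $w=w^*_{ij}$, the own-treatment coordinate of $\mathbf w^*$. Then $P_\phi(W_{ij}=w)\ge P_\phi(\mathbf w^*)>0$ and $h(\mathbf w^*)=1/P_\phi(W_{ij}=w)\neq 0$, which yields the contradiction.

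\textbf{Main obstacle.} The delicate step is the reduction to the pointwise equations. It must be justified that conditioning on the full vector $\mathbf C$ reduces to conditioning on $\mathbf C_{\mathcal N_{ij}}$, using the conditional independence across clusters in Assumption~\ref{ass:cluster_assignment}. The argument must also handle null events in $\mathbf c$ correctly. Equations at $\mathbf c$ with $P(\mathbf C_{\mathcal N_{ij}}=\mathbf c)=0$ impose nothing, and the uniqueness claim in (i) must be read as equality $P$-almost surely.
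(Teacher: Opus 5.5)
Your proof is correct. It rests on the same idea as the paper but works cell by cell instead of through Hilbert-space machinery.

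\textbf{The paper's route.} The paper integrates the conditional identity against an arbitrary $v(\mathbf C_{\mathcal N_{ij}})$. It then uses the $P_\phi$-independence of $\mathbf W_{\mathcal N_{ij}}$ and $\mathbf C_{\mathcal N_{ij}}$, together with $P_\phi(\mathbf C)=P(\mathbf C)$, to obtain $\mathbb E[Y_{ij}\beta_{ij}v]=\mathbb E_\phi[Y_{ij}vh]$. This extends to all $f\in\mathcal L(\mathbf W_{\mathcal N_{ij}},\mathbf C_{\mathcal N_{ij}})$ because products of functions of $\mathbf W_{\mathcal N_{ij}}$ and of $\mathbf C_{\mathcal N_{ij}}$ span that space. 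The paper then concludes part (i) by Riesz uniqueness and part (ii)(a) by the nonexistence of the Radon--Nikodym derivative. It gets (ii)(b) indirectly: if valid weights existed for every admissible $w$, then $\sum_w P_\phi(W_{ij}=w)\beta_{ij}(w,\phi)$ would be a valid weight for (ii)(a).

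\textbf{How the two compare.} Your indicator test functions are exactly the spanning product cells. So the equation $\beta_{ij}(\mathbf w,\mathbf c)P(\mathbf w,\mathbf c)=h(\mathbf w)P_\phi(\mathbf w,\mathbf c)$ is the finite-support content of the paper's Riesz step. Your version gains explicitness in three places:
\begin{itemize}
\item It shows the $P$-almost-sure sense of uniqueness and how null cells are handled.
\item It makes explicit the reduction from conditioning on $\mathbf C$ to conditioning on $\mathbf C_{\mathcal N_{ij}}$ via Assumption~\ref{ass:cluster_assignment}.
\item In (ii)(b) it identifies the failing $w$ as the own-treatment coordinate $w^*_{ij}$ of the violating configuration, rather than only showing that some $w$ must fail.
\end{itemize}
The paper's version is shorter and is phrased to parallel the Riesz-representer interpretation given for Theorem~\ref{thm:all_weights}.
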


\begin{remark}
The theorem establishes a stronger form of unbiasedness than is required for estimation. While standard unbiasedness only guarantees
\[
\mathbb{E}\!\left[
Y_{ij}\,
\beta_{ij}(\phi)
\right]
=
\bar{Y}_{ij}(\phi)\text{ and }\mathbb{E}\!\left[
Y_{ij}\,
\beta_{ij}(w,\phi)
\right]
=
\bar{Y}_{ij}(w,\phi),
\]
the result here guarantees that unbiasedness holds conditionally on any value of $\mathbf C$. We refer to this property as \emph{cluster-agnostic unbiasedness}. In the next section, we show that this property leads to distinct asymptotic behavior.
\end{remark}

\begin{proposition}[Examples of unbiased identification weights]\label{prop:examples_of_weights}
Under Assumptions \ref{ass:cluster_assignment}--\ref{ass:interference}, the following weight functions yield unbiased identification of individual average potential outcomes under their respective conditions.

\begin{enumerate}
    \item[(i)] \textbf{Marginal Radon--Nikodym derivative weights.}
    Suppose Assumption~\ref{ass:measure}(i) holds. The weights
    \[
    \beta_{ij}(\phi_c) = \alpha_{ij}(\phi_c),
    \]
    and, for $w\in\{0,1\}$ with $P_{\phi_c}(W_{ij}=w)>0$,
    \[
    \beta_{ij}(w,\phi_c) = \alpha_{ij}(\phi_c)\frac{1(W_{ij}=w)}{P_{\phi_c}(W_{ij}=w)},
    \]
    achieve unbiased identification for all $Y_{ij} \in \mathcal{L}(\mathbf{W}_{\mathcal{N}_{ij}})$.

    \item[(ii)] \textbf{Complete Radon--Nikodym derivative / cluster-agnostic weights.}
    Suppose Assumption~\ref{ass:measure}(ii) holds. The weights
    \[
    \beta_{ij}(\phi_c) = \alpha_{ij}^{(\mathrm{comp})}(\phi_c),
    \qquad
    \beta_{ij}(w,\phi_c) = \alpha_{ij}^{(\mathrm{comp})}(\phi_c)\frac{1(W_{ij}=w)}{P_{\phi_c}(W_{ij}=w)},
    \]
    achieve cluster-agnostic unbiased identification for all $Y_{ij} \in \mathcal{L}(\mathbf{W}_{\mathcal{N}_{ij}})$.

    \item[(iii)] \textbf{Exposure-mapping-level Radon--Nikodym derivative weights.}
    Suppose Assumption~\ref{ass:measure}(i) holds. Let
    $\mathbf e(\mathbf W_{\mathcal N_{ij}})$ be an arbitrary exposure mapping that
    summarizes the channels through which $\mathbf W_{\mathcal N_{ij}}$ affects
    the outcome $Y_{ij}$, with $W_{ij}$ included as one of its components. Define
    \[
    \beta_{ij}(\phi_c)
    =
    \mathbb{E}\!\left[\alpha_{ij}(\phi_c)\mid \mathbf{e}(\mathbf{W}_{\mathcal{N}_{ij}})\right]
    =
    \frac{dP_{\phi_c}\big(\mathbf{e}(\mathbf{W}_{\mathcal{N}_{ij}})\big)}
    {dP\big(\mathbf{e}(\mathbf{W}_{\mathcal{N}_{ij}})\big)},
    \]
    and, for $w\in\{0,1\}$ with $P_{\phi_c}(W_{ij}=w)>0$,
    \[
    \beta_{ij}(w,\phi_c)
    =
    \frac{dP_{\phi_c}\big(\mathbf{e}(\mathbf{W}_{\mathcal{N}_{ij}})\big)}
    {dP\big(\mathbf{e}(\mathbf{W}_{\mathcal{N}_{ij}})\big)}
    \frac{1(W_{ij}=w)}{P_{\phi_c}(W_{ij}=w)}.
    \]
    These weights achieve unbiased identification for all $Y_{ij} \in \mathcal{L}(\mathbf{e}(\mathbf{W}_{\mathcal{N}_{ij}}))$, that is, over the potentially restricted class of potential outcomes induced by $\mathbf{e}(\mathbf{W}_{\mathcal{N}_{ij}})$, rather than the full class $\mathcal{L}(\mathbf{W}_{\mathcal{N}_{ij}})$.

    Moreover, the Radon--Nikodym derivative admits the decomposition
    \[
    \frac{dP_{\phi_c}\big(\mathbf{e}(\mathbf{W}_{\mathcal{N}_{ij}})\big)}
    {dP\big(\mathbf{e}(\mathbf{W}_{\mathcal{N}_{ij}})\big)}
    =
    \sum_{\mathbf{s} \in \mathrm{supp}(\mathbf{e}(\mathbf{W}_{\mathcal{N}_{ij}}))}
    \frac{
    1\big(\mathbf{e}(\mathbf{W}_{\mathcal{N}_{ij}})=\mathbf{s}\big)\,
    P_{\phi_c}\big(\mathbf{e}(\mathbf{W}_{\mathcal{N}_{ij}})=\mathbf{s}\big)
    }{
    P\big(\mathbf{e}(\mathbf{W}_{\mathcal{N}_{ij}})=\mathbf{s}\big)
    },
    \]
    which resembles standard inverse probability weighting.

    \item[(iv)] \textbf{Inverse-probability-of-treatment (IPT) weights in \cite{Leung2025}.}
    Suppose $P(\mathbf{C}_{\mathcal{N}_{ij}}=c\mathbf{1})>0$, where $\mathbf{C}_{\mathcal{N}_{ij}}=c\mathbf{1}$ denotes the event that every cluster in $\mathcal{N}_{ij}$ receives treatment $c$. Define
    \[
    \beta_{ij}(\phi_c) = \frac{1(\mathbf{C}_{\mathcal{N}_{ij}}=c\mathbf{1})}{P(\mathbf{C}_{\mathcal{N}_{ij}}=c\mathbf{1})},
    \]
    and, for $w\in\{0,1\}$ with $P(\mathbf{C}_{\mathcal{N}_{ij}}=c\mathbf{1},\,W_{ij}=w)>0$,
    \[
    \beta_{ij}(w,\phi_c) = \frac{1(\mathbf{C}_{\mathcal{N}_{ij}}=c\mathbf{1},\,W_{ij}=w)}{P(\mathbf{C}_{\mathcal{N}_{ij}}=c\mathbf{1},\,W_{ij}=w)}.
    \]
    These weights achieve unbiased identification for all $Y_{ij}\in\mathcal{L}(\mathbf{W}_{\mathcal{N}_{ij}})$.
\end{enumerate}
\end{proposition}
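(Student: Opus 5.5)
The plan is to verify each item by reducing it to the characterizations already established. Items (i), (iii) and (iv) go through \cref{thm:all_weights}, checking $\mathbb{E}[\beta_{ij}\mid\mathbf W_{\mathcal N_{ij}}]=\alpha_{ij}(\phi_c)h(\mathbf W_{\mathcal N_{ij}})$ with the appropriate $h$. Item (ii) goes through \cref{thm:cluster_agnostic_weights}(i). The only exception is (iii), whose unbiasedness is over a restricted outcome class and is checked directly.

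\textbf{Items (i) and (ii).} Item (i) is immediate: $\alpha_{ij}(\phi_c)$ is $\mathbf W_{\mathcal N_{ij}}$-measurable, so the conditional expectation condition of \cref{thm:all_weights}(a) holds trivially. Likewise $\alpha_{ij}(\phi_c)1(W_{ij}=w)/P_{\phi_c}(W_{ij}=w)$ satisfies \cref{thm:all_weights}(b), since $W_{ij}$ is a component of $\mathbf W_{\mathcal N_{ij}}$. Item (ii) is exactly the ``if'' direction of \cref{thm:cluster_agnostic_weights}(i)(a),(b) under \cref{ass:measure}(ii).

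\textbf{Item (iii).} Let $Y_{ij}=g(\mathbf e)$ with $\mathbf e=\mathbf e(\mathbf W_{\mathcal N_{ij}})$. By the tower property and \cref{ass:measure}(i),
\[
\mathbb{E}[g(\mathbf e)\,\mathbb{E}[\alpha_{ij}\mid\mathbf e]]=\mathbb{E}[g(\mathbf e)\alpha_{ij}]=\mathbb{E}_{\phi_c}[g(\mathbf e)].
\]
The first equality uses that $g(\mathbf e)$ is $\sigma(\mathbf e)$-measurable; the second is the change of measure. For the non-marginal version, $1(W_{ij}=w)$ is a function of $\mathbf e$ because $W_{ij}$ is a component of $\mathbf e$. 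It can therefore be absorbed into $g$, giving $\mathbb{E}_{\phi_c}[g(\mathbf e)1(W_{ij}=w)]/P_{\phi_c}(W_{ij}=w)=\bar Y_{ij}(w,\phi_c)$.

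Next I would identify $\mathbb{E}[\alpha_{ij}\mid\mathbf e]$ with the Radon--Nikodym derivative of the pushforward laws. On $\{\mathbf e=\mathbf s\}$ we have $\mathbb{E}[\alpha_{ij}1(\mathbf e=\mathbf s)]=P_{\phi_c}(\mathbf e=\mathbf s)$. Dividing by $P(\mathbf e=\mathbf s)$ gives the stated sum formula. Absolute continuity at the $\mathbf e$-level is inherited from \cref{ass:measure}(i), since $\{\mathbf e=\mathbf s\}$ is a union of $\mathbf w_{\mathcal N_{ij}}$-atoms.

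\textbf{Item (iv).} This is the step requiring a genuine computation. I will verify the condition $\mathbb{E}[\beta\mid\mathbf W_{\mathcal N_{ij}}]=\alpha_{ij}(\phi_c)$ through an equivalent check: for every $Y_{ij}\in\mathcal L(\mathbf W_{\mathcal N_{ij}})$,
\[
\mathbb{E}\!\left[\frac{Y_{ij}1(\mathbf C_{\mathcal N_{ij}}=c\mathbf 1)}{P(\mathbf C_{\mathcal N_{ij}}=c\mathbf 1)}\right]=\mathbb{E}\!\left[Y_{ij}\mid\mathbf C_{\mathcal N_{ij}}=c\mathbf 1\right],
\]
and I must show the right side equals $\mathbb{E}_{\phi_c}[Y_{ij}]$. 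The key step is that, by \cref{ass:cluster_assignment}, conditional on $\mathbf C$ the $\mathbf W_{i'}$ are independent with laws $\phi_{C_{i'}}$. Hence on $\{\mathbf C_{\mathcal N_{ij}}=c\mathbf 1\}$, the conditional law of $\mathbf W_{\mathcal N_{ij}}$ given $\mathbf C$ is the product over $i'\in\mathcal N^{(cl)}_{ij}$ of the relevant marginals of $\phi_c$. This does not depend on $\mathbf C$ outside the neighborhood, and it coincides with the law of $\mathbf W_{\mathcal N_{ij}}$ under $P_{\phi_c}$, which does not depend on $\mathbf C$ at all.

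Averaging over $\mathbf C$ given the event therefore yields $\mathbb{E}_{\phi_c}[Y_{ij}(\mathbf W_{\mathcal N_{ij}})]$. For the non-marginal weight, the same argument gives
\[
\mathbb{E}[Y_{ij}\mid \mathbf C_{\mathcal N_{ij}}=c\mathbf 1, W_{ij}=w]=\mathbb{E}_{\phi_c}[Y_{ij}\mid W_{ij}=w].
\]
Here I use that $P(W_{ij}=w\mid\mathbf C_{\mathcal N_{ij}}=c\mathbf 1)=P_{\phi_c}(W_{ij}=w)>0$. The main care needed is precisely this step: the conditional-law claim must be stated carefully enough to justify the whole argument.
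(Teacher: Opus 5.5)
Your proposal is correct and follows essentially the same route as the paper. Items (i) and (ii) go through the ``if'' directions of Theorems~\ref{thm:all_weights} and~\ref{thm:cluster_agnostic_weights}(i), (iii) uses a change of measure at the level of $\mathbf e(\mathbf W_{\mathcal N_{ij}})$, and (iv) rests on the identity $P_{\phi_c}(\mathbf W_{\mathcal N_{ij}}=\mathbf w)=P(\mathbf W_{\mathcal N_{ij}}=\mathbf w\mid\mathbf C_{\mathcal N_{ij}}=c\mathbf 1)$. If anything, your version is slightly more explicit than the paper's: you derive that conditional-law identity from Assumption~\ref{ass:cluster_assignment} and check that $\mathbb E[\alpha_{ij}(\phi_c)\mid\mathbf e]$ equals the pushforward Radon--Nikodym derivative, both of which the paper simply asserts.
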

\begin{remark}
  The function $\mathbf{e}(\mathbf{W}_{\mathcal{N}{ij}})$ in (iii) is the \emph{exposure mapping function} of \cite{Aronow2017}, and the weights here adapt their Horvitz–Thompson estimator to two-stage estimands. The corresponding weights generally do not belong to the unbiased estimator class characterized in Theorem \ref{thm:all_weights}, as their unbiasedness holds only over the restricted class of potential outcomes for which the exposure mapping is correctly specified.
These weights can be interpreted as projections of the marginal Radon--Nikodym derivative weights in (i) onto the smaller space of $\mathcal{L}(\mathbf{e}(\mathbf{W}_{\mathcal{N}_{ij}}))$. As a result, they typically exhibit lower variance and can improve efficiency when the exposure mapping assumption is well justified. 
Beyond the direct use of exposure-mapping-based estimators for efficiency gains, the coexistence of exposure-mapping-agnostic and exposure-mapping-based estimators also creates an opportunity to assess whether the exposure mapping is correctly specified. Under correct specification, one may be able to characterize the discrepancy between these two classes of estimators and thereby develop tests in the spirit of a Hausman specification test. Interested readers may refer to \cite{Gao2026} for related ideas.
\end{remark}

\begin{remark}
When the unit-level treatment assignment is complex in its distribution but the cluster-level assignment is simple, the weights in~(iv) are often the most computationally efficient choice for unbiased estimation. Alternatively, to approximate the marginal Radon--Nikodym derivative weights in~(i) with reduced variance, one can regress
\(
\frac{\mathbf{1}\!\left(\mathbf{C}_{\mathcal{N}_{ij}}=c\mathbf{1}\right)}{P\!\left(\mathbf{C}_{\mathcal{N}_{ij}}=c\mathbf{1}\right)}
\)
on the covariate vector $\mathbf{W}_{\mathcal{N}_{ij}}$, motivated by the identity
\[
\mathbb{E}\!\left[
\frac{\mathbf{1}\!\left(\mathbf{C}_{\mathcal{N}_{ij}}=c\mathbf{1}\right)}{P\!\left(\mathbf{C}_{\mathcal{N}_{ij}}=c\mathbf{1}\right)}
\,\Bigg|\,
\mathbf{W}_{\mathcal{N}_{ij}}
\right]
=\alpha_{ij}(\phi_c)
\]

\end{remark}
\begin{remark}
Under the classical cluster randomized trial design, where all units in treated clusters receive treatment and all units in control clusters receive control, we may consider the weights in (i), (ii), and (iv). As noted in the remark following Assumption \ref{ass:measure}, Assumption \ref{ass:measure}(ii) is violated in this setting. Consequently, the complete Radon--Nikodym derivative weights in (ii) are not well-defined, and cluster-agnostic unbiased identification is not attainable.

For the weights identifying marginal average potential outcomes, note that under the counterfactual regime $\phi_c$,
\[
P_{\phi_c}\!\left(\mathbf{W}_{\mathcal{N}_{ij}} = c\mathbf{1}\right) = 1,
\]
so the support of $\mathbf{W}_{\mathcal{N}_{ij}}$ under $P_{\phi_c}$ is degenerate at $c\mathbf{1}$, which is a vector of either all ones or all zeros. The marginal Radon--Nikodym derivative then reduces to
\[
\alpha_{ij}(\phi_c)
=
\frac{1(\mathbf{W}_{\mathcal{N}_{ij}} = c\mathbf{1})}
{P(\mathbf{W}_{\mathcal{N}_{ij}} = c\mathbf{1})}
=
\frac{1(\mathbf{C}_{\mathcal{N}_{ij}} = c\mathbf{1})}
{P(\mathbf{C}_{\mathcal{N}_{ij}} = c\mathbf{1})}.
\]
Thus, the weights in (i) identifying marginal average potential outcomes coincide with the IPT weights in (iv).

For non-marginal average potential outcomes, only $\bar{Y}_{ij}(c,\phi_c)$ is well-defined, as the unit-level treatment is deterministic under $\phi_c$. In this case,
\[
\alpha_{ij}(\phi_c)\frac{1(W_{ij}=c)}{P_{\phi_c}(W_{ij}=c)}
=
\alpha_{ij}(\phi_c)
=
\frac{1(\mathbf{C}_{\mathcal{N}_{ij}} = c\mathbf{1})}
{P(\mathbf{C}_{\mathcal{N}_{ij}} = c\mathbf{1})}
=
\frac{1(\mathbf{C}_{\mathcal{N}_{ij}} = c\mathbf{1},\, W_{ij}=c)}
{P(\mathbf{C}_{\mathcal{N}_{ij}} = c\mathbf{1},\, W_{ij}=c)},
\]
again showing that the weights in (i) and (iv) coincide in this setting.
\end{remark}
\section{Estimation and asymptotic properties}
In subsequent analysis, we focus on the weights $\beta_{ij}(\phi)$ and $\beta_{ij}(w,\phi)$ that can achieve unbiased estimation of \emph{marginal population average potential outcome} and \emph{population average potential outcome}. These weights are characterized by the following assumption.
\begin{assumption}[Unbiased identification weights]\label{ass:weight}
In our LW estimator, we impose the following conditions on the weights $\beta_{ij}(\phi)$ and $\beta_{ij}(w,\phi)$ to ensure unbiased identification.
\begin{enumerate}
    \item[(i)] \textbf{(Weights identifying marginal estimands)} For each $\phi\in\{\phi_0,\phi_1\}$
    and all $(i,j)$, we have well-defined marginal weight $\beta_{ij}(\phi)\in\mathcal{L}(\mathbf{W}_{\mathcal{N}_{ij}},\mathbf{C}_{\mathcal{N}_{ij}})$ satisfying
    \[
        \mathbb{E}[\beta_{ij}(\phi)Y_{ij}]=\bar{Y}_{ij}(\phi)\quad\text{and}\quad \mathbb{E}[\beta_{ij}(\phi)]=1.
    \]

    \item[(ii)] \textbf{(Weights identifying non-marginal estimands)} For each pair
    $(w,\phi)\in\{0,1\}\times\{\phi_0,\phi_1\}$, suppose $P_\phi(W_{ij}=w)$ is either strictly
    positive for all $(i,j)$ or equal to zero for all $(i,j)$. For the pairs with $P_\phi(W_{ij}=w)>0$
    for all $(i,j)$, we have well-defined non-marginal weight $\beta_{ij}(w,\phi)\in\mathcal{L}(\mathbf{W}_{\mathcal{N}_{ij}},\mathbf{C}_{\mathcal{N}_{ij}})$ satisfies
    \[
        \mathbb{E}[\beta_{ij}(w,\phi)Y_{ij}]=\bar{Y}_{ij}(w,\phi)\quad\text{and}\quad \mathbb{E}[\beta_{ij}(w,\phi)]=1.
    \]
\end{enumerate}
\end{assumption}
\begin{remark}
This assumption does not require the weights to deliver unbiased identification over the entire space $\mathcal L(\mathbf W_{\mathcal N_{ij}})$ of
potential outcome functions. Rather, it requires the identification moment
condition in Theorem~\ref{thm:all_weights} to hold only for the actual
potential outcome function $Y_{ij}(\cdot)$ and for the constant function in
that space. This allows for the use of weights that exploit exposure-mapping structures. Nevertheless, we recommend adopting an exposure-mapping-agnostic approach when constructing weights in practice. 

Moreover, since we impose conditions directly on the validity of the weights, we do not explicitly assume measure transportability in the subsequent analysis, as those conditions are only needed to guarantee the existence of unbiased identification weights.
\end{remark}
Among the weight class in Assumption~\ref{ass:weight}, we use the following assumption to characterize a smaller weight class that achieves the cluster-agnostic unbiasedness property discussed above.
\begin{assumption}[Cluster-agnostic unbiased identification weights]\label{ass:weight_cluster_agnostic}
In our LW estimator, we impose the following conditions on the weights $\beta_{ij}(\phi)$ and $\beta_{ij}(w,\phi)$ to ensure cluster-agnostic unbiased identification.
\begin{enumerate}
    \item[(i)] \textbf{(Weights identifying marginal estimands)} For each $\phi\in\{\phi_0,\phi_1\}$
    and all $(i,j)$, we have well-defined marginal weight $\beta_{ij}(\phi)\in\mathcal{L}(\mathbf{W}_{\mathcal{N}_{ij}},\mathbf{C}_{\mathcal{N}_{ij}})$ satisfying
    \[
        \mathbb{E}[\beta_{ij}(\phi)Y_{ij}|\mathbf{C}]=\bar{Y}_{ij}(\phi)\quad \text{and}\quad \mathbb{E}[\beta_{ij}(\phi)|\mathbf{C}]=1.
    \]

    \item[(ii)] \textbf{(Weights identifying non-marginal estimands)} For each pair
    $(w,\phi)\in\{0,1\}\times\{\phi_0,\phi_1\}$, suppose $P_\phi(W_{ij}=w)$ is either strictly
    positive for all $(i,j)$ or equal to zero for all $(i,j)$. For the pairs with $P_\phi(W_{ij}=w)>0$
    for all $(i,j)$, we have well-defined non-marginal weight $\beta_{ij}(w,\phi)\in\mathcal{L}(\mathbf{W}_{\mathcal{N}_{ij}},\mathbf{C}_{\mathcal{N}_{ij}})$ satisfies
    \[
        \mathbb{E}[\beta_{ij}(w,\phi)Y_{ij}|\mathbf{C}]=\bar{Y}_{ij}(w,\phi)\quad \text{and}\quad \mathbb{E}[\beta_{ij}(w,\phi)|\mathbf{C}]=1.
    \]
\end{enumerate}
\end{assumption}
\begin{remark}
Similar as in Assumption \ref{ass:weight}, this assumption requires the identification moment
condition in Theorem~\ref{thm:cluster_agnostic_weights} to hold only for the actual
potential outcome function $Y_{ij}(\cdot)$ and for the constant function within the space $\mathcal L(\mathbf W_{\mathcal N_{ij}})$. Also, we do not need to make the measure transportability assumption if we already make this assumption.
\end{remark}
Given the weights satisfying Assumption \ref{ass:weight}, we can construct unbiased LW estimator for $\bar{Y}(\phi)$ as
\[\hat{\tau}_{naive}(\phi)
=\displaystyle
\sum_{i=1}^n \frac{g_i}{N_i}\sum_{j=1}^{N_i} Y_{ij}\,\beta_{ij}(\phi),\]
and unbiased LW estimator for $\bar{Y}(w,\phi)$ as
\[\hat{\tau}_{naive}(w,\phi)
=\displaystyle
\sum_{i=1}^n \frac{g_i}{N_i}\sum_{j=1}^{N_i} Y_{ij}\,\beta_{ij}(w,\phi).\]
Following standard practice in the study of weighting estimators, we focus on the \emph{Hájek-type} variants of the naive estimators introduced above, which typically exhibit improved finite-population accuracy. They are defined as
\[
\hat{\tau}(\phi)
=
\frac{
\displaystyle
\sum_{i=1}^n \frac{g_i}{N_i}\sum_{j=1}^{N_i} Y_{ij}\,\beta_{ij}(\phi)
}{
\displaystyle
\sum_{i=1}^n \frac{g_i}{N_i}\sum_{j=1}^{N_i} \beta_{ij}(\phi)
},\text{ and } \hat{\tau}(w,\phi)
=
\frac{
\displaystyle
\sum_{i=1}^n \frac{g_i}{N_i}\sum_{j=1}^{N_i} Y_{ij}\,\beta_{ij}(w,\phi)
}{
\displaystyle
\sum_{i=1}^n \frac{g_i}{N_i}\sum_{j=1}^{N_i} \beta_{ij}(w,\phi)
}.
\]

In the remainder of this section, we analyze the asymptotic properties of the class of Hájek estimators. We establish results for designs in which either the cluster-level randomization scheme or the unit-level randomization scheme belongs to the independent Bernoulli class or the complete randomization class.

Before formal analysis, we introduce several notational conventions used throughout the subsequent analysis to facilitate the presentation of our results. Let
\[
\boldsymbol{\beta}_{\mathrm{marg}}(\phi)
=
\bigl(\beta_{11}(\phi),\dots,\beta_{1N_1}(\phi),\dots,\beta_{nN_n}(\phi)\bigr)
\]
denote the row vector of marginal weights under regime $\phi$. Stacking these vectors yields the $2\times N$ marginal weight matrix
\[
\mathbf{B}_{\mathrm{marg}}
=
\begin{pmatrix}
\boldsymbol{\beta}_{\mathrm{marg}}(\phi_1) \\
\boldsymbol{\beta}_{\mathrm{marg}}(\phi_0)
\end{pmatrix}.
\]
Similarly, let
\[
\boldsymbol{\beta}(w,\phi)
=
\bigl(\beta_{11}(w,\phi),\dots,\beta_{nN_n}(w,\phi)\bigr)
\]
denote the row vector of non-marginal weights. Stacking all well-defined rows in the canonical order $(1,\phi_1),(0,\phi_1),(1,\phi_0),(0,\phi_0)$ yields the non-marginal weight matrix $\mathbf{B}$, which has $N$ columns and at most four rows.

Also, we introduce vector notation for the estimands and their corresponding estimators. Define the vector of \emph{marginal population average potential outcomes} as
\[
\boldsymbol{\tau}_{\mathrm{marg}}
=
\bigl(
\bar Y(\phi_1),
\bar Y(\phi_0)
\bigr)^{\top},
\]
and the corresponding vector of estimators as
\[
\hat{\boldsymbol{\tau}}_{\mathrm{marg}}
=
\bigl(
\hat{\tau}(\phi_1),
\hat{\tau}(\phi_0)
\bigr)^{\top}.
\]
Similarly, define the vector of \emph{marginal individual average potential outcomes} as
\[
\boldsymbol{\tau}_{ij,\mathrm{marg}}
=
\bigl(
\bar Y_{ij}(\phi_1),
\bar Y_{ij}(\phi_0)
\bigr)^{\top}.
\]
In parallel, let $\boldsymbol{\tau}$ denote the vector of \emph{population average potential outcomes}, defined as the collection of all well-defined $\bar Y(w,\phi)$, and let $\hat{\boldsymbol{\tau}}$ denote the corresponding vector of estimators, defined as the collection of all well-defined $\hat{\tau}(w,\phi)$. Similarly, let $\boldsymbol{\tau}_{ij}$ denote the vector of \emph{individual average potential outcomes}, defined as the collection of all well-defined $\bar Y_{ij}(w,\phi)$. These vectors are all arranged in the canonical order $(1,\phi_1),(0,\phi_1),(1,\phi_0),(0,\phi_0)$.

Using the weight matrices $\mathbf{B}$ and $\mathbf{B}_{\mathrm{marg}}$, the Hájek estimators admit the compact vectorized representations
\begin{align*}
\hat{\boldsymbol\tau}
&=
\left(\sum_{(i,j)}\frac{g_i}{N_i}\mathbf B_{(*,ij)}\right)^{\!\circ -1}
\circ
\sum_{(i,j)}\frac{g_i}{N_i}Y_{ij}\mathbf B_{(*,ij)},
\\
\hat{\boldsymbol\tau}_{\mathrm{marg}}
&=
\left(\sum_{(i,j)}\frac{g_i}{N_i}(\mathbf B_{\mathrm{marg}})_{(*,ij)}\right)^{\!\circ -1}
\circ
\sum_{(i,j)}\frac{g_i}{N_i}Y_{ij}(\mathbf B_{\mathrm{marg}})_{(*,ij)},
\end{align*}
where $\mathbf B_{(*,ij)}$ and $(\mathbf B_{\mathrm{marg}})_{(*,ij)}$ denote the columns of $\mathbf B$ and $\mathbf B_{\mathrm{marg}}$ corresponding to unit $(i,j)$, $\circ$ denotes the Hadamard (element-wise) product, and $(\cdot)^{\circ -1}$ its element-wise inverse.

Throughout the subsequent analysis, \(I\) denotes the identity matrix, while \(\mathbf{1}\) and \(\mathbf{0}\) denote a vector of ones and a vector of zeros, respectively. 
To discuss asymptotic orders, we use the conventional asymptotic notations \(O(\cdot)\) and \(o(\cdot)\). Building on these notations, we furthermore define \(f_n=\Omega(g_n)\) to mean \(g_n=O(f_n)\), \(f_n=\omega(g_n)\) to mean \(g_n=o(f_n)\), and \(f_n=\Theta(g_n)\) to mean that both \(f_n=O(g_n)\) and \(g_n=O(f_n)\) hold.
We also use the stochastic analogues \(O_P(\cdot)\) and \(o_P(\cdot)\). Specifically, \(f_n=O_P(g_n)\) means
\(
\sup_n P\left(\left|f_n/g_n\right|>M\right)\to 0
\quad \text{as } M\to\infty,
\)
and \(f_n=o_P(g_n)\) means that, for every \(M>0\),
\(
P\left(\left|f_n/g_n\right|>M\right)\to 0
\quad \text{as } n\to\infty.
\)

\subsection{Theoretical results}
In this subsection, we first introduce a set of regularity assumptions for the asymptotic analysis. Among them, assumption (i) is relatively strong and is only used for the analysis of cluster-agnostic estimators. Assumption (ii) is a weaker version of (i) and is employed in the analysis of general estimators. The remaining conditions are required for the analysis of both types of estimators.
\begin{assumption}[Regularity for asymptotic analysis]\label{ass:CLT_regularity}
\begin{enumerate}
    \item[(i)] The interference neighborhood sizes are uniformly bounded: there exists a global constant $C_{\mathcal{N}}<\infty$, independent of sample size, such that $|\mathcal{N}_{ij}|\leq C_{\mathcal{N}}$ for all $(i,j)$.  
    \item[(ii)] The interference neighborhoods exhibit uniformly bounded cluster-level complexity: there exists a global constant $C_{\mathcal{N}^{(cl)}}<\infty$, independent of sample size, such that
    \[
    |\mathcal{N}_{ij}^{(cl)}|\leq C_{\mathcal{N}^{(cl)}}
    \]
    for any unit $(i,j)$, and
    \[
    \bigl|\cup_{j=1}^{N_i}\mathcal{N}_{ij}\bigr|\leq C_{\mathcal{N}^{(cl)}}N_i
    \]
    for any cluster $i$.
    \item[(iii)] The potential outcome functions are uniformly bounded: there exists a finite constant $C_Y$, independent of sample size, such that
    \[
    \max_{i,j,\mathbf{w}}|Y_{ij}(\mathbf{w})|\leq C_Y.
    \]
    \item[(iv)] The weight functions are uniformly bounded: there exists a finite constant $C_{\beta}$, independent of sample size, such that
    \[
    \max_{i,j,w,\phi}|\beta_{ij}(w,\phi)|\leq C_{\beta} \text{ and }\max_{i,j,\phi}|\beta_{ij}(\phi)|\leq C_{\beta}.
    \]
    \item[(v)] The cluster sizes are balanced: there exists a global constant $C_N<\infty$, independent of sample size, such that
    \[
    \frac{\max_i N_i}{\min_i N_i} \leq C_N.
    \]
    \item[(vi)] The weights assigned to clusters are balanced: $\max_i\{g_i\} = O\!\left(\frac{1}{n}\right)$.
\end{enumerate}
\end{assumption}
\begin{remark}
Assumptions (i) and (ii) are introduced to characterize the complexity of interference, namely how one unit’s treatment affects another unit’s outcome, rather than the dependence induced by the treatment assignment mechanism.

Assumption (ii) serves as our primary assumption for controlling the complexity of the interference network, and is analogous to standard network complexity conditions in the two-stage randomization literature. 
Its first inequality imposes a pointwise bound on the number of neighboring clusters for each unit $(i,j)$, namely $|\mathcal{N}_{ij}^{(cl)}|\leq C_{\mathcal{N}^{(cl)}}$. 
Its second inequality requires that, within each cluster, the union of all
unit-level interference neighborhoods contains at most $O(N_i)$ units. In
other words, the subpopulation affected by the treatments in each cluster
must be proportional to the cluster size. 
Heuristically, the two inequalities require each unit's interference
neighborhood to be concentrated within a relatively small number of clusters,
rather than dispersed across too many clusters. Notably, condition (ii)
permits the size of each unit's interference neighborhood to diverge with the
sample size: the first inequality allows up to $O(N/n)$ neighboring units for each unit,
while the second requires sufficient overlap among such neighborhoods for units in the same cluster.

In practice, Assumption~(ii) accommodates two distinct interference patterns.
The first is geographic spillover. A unit's geographic neighborhood may contain
a large and growing number of nearby units, but units in the same cluster tend
to have highly overlapping geographic neighborhoods. Consequently, the union of all
interference neighborhoods within a cluster typically remains comparable to
the original cluster in both geographic extent and sample size, as required by
the second inequality.
The second is social spillover transmitted through friendship or family ties.
Such spillovers are typically non-geographic and sparse, with each unit linked
to only $O(1)$ other units. These links may generate disproportionately large
effects, but their bounded number ensures that both inequalities remain
satisfied. 
In \cite{Leung2025}, interference is modeled as geographic spillover only, with each
unit primarily affected by finitely many nearby clusters. Their
interference-complexity assumption is therefore comparable to assumption (ii) here.

In contrast, assumption (i) is a strictly stronger condition and is used only for cluster-
agnostic estimators to obtain faster convergence rates. This is intuitive: cluster-agnostic
estimators can only simplify the dependence induced by the treatment assignment mechanism, but not that arising from interference between units. When assumption (i) imposes a weaker level of interference, it leaves more room for carefully-designed estimators to improve their asymptotic performance. This is related to an assumption made by Lu et al. (2026), which
similarly restricts each unit to being affected by only finitely many others.

\end{remark}
\begin{remark}
The global boundedness assumption in (iv) requires both the positivity of treatment assignment mechanism and the stability of the weight function itself.
For the example weight functions in Proposition~\ref{prop:examples_of_weights}, we discuss some sufficient conditions for this assumption.

For the IPT weights in \cite{Leung2025}, the global boundedness is satisfied as long as the corresponding denominators are bounded away from zero.
The denominators in the weights identifying \emph{marginal population average potential outcomes} are given by sums of joint probabilities of the form
\[
P\bigl(\mathbf{C}_{\mathcal{N}_{ij}}=\mathbf{c}_{\mathcal{N}_{ij}}\bigr)
\]
over all admissible realizations of $\mathbf{c}_{\mathcal{N}_{ij}}$. 
For these weights, assumption (iv) holds when assumption (ii) is satisfied together with the following positivity conditions:
\[
P\bigl(C_i=c \mid C_{i_k}=c_k,\ 1\leq k\leq C_{\mathcal{N}}\bigr)\in \{0\}\cup[p,1],
\]
for some constant $p>0$, uniformly over every cluster index $i$, every cluster sequence $\{i_k\}_{k=1}^{C_{\mathcal{N}}}$, every value of $c$, and every value sequence of $\{c_k\}_{k=1}^{C_{\mathcal{N}}}$. 
These positivity conditions are satisfied by a broad class of randomization schemes in which conditioning on finitely many units does not induce sequences of strictly positive probabilities that converge to zero. At the same time, they allow for realizations with exactly zero joint probability, which is possible in complete randomization.
For the weights identifying \emph{population average potential outcomes}, the denominators are typically given by those of the marginal weights multiplied by a nonzero $P_{\phi}(W_{ij}=w)$. For these weights, assumption (iv) holds when the above conditions for marginal weights are satisfied and, in addition,
\[
P_{\phi}(W_{ij}=w)\in \{0\}\cup[q,1]
\]
for some constant $q>0$ and all $(\phi,w)$ and $(i,j)$.

For the marginal and exposure-mapping-level Radon--Nikodym derivative weights, assumption (iv) holds whenever it holds for the IPT weights, so the sufficient conditions above also apply here. For the marginal weights, we have the identities
\[
\begin{aligned}
\mathbb{E}\!\left[
\frac{1(\mathbf{C}_{\mathcal{N}_{ij}}=c\mathbf{1})}{P(\mathbf{C}_{\mathcal{N}_{ij}}=c\mathbf{1})}
\,\Big|\,
\mathbf{W}_{\mathcal{N}_{ij}}
\right]
&=\alpha_{ij}(\phi_c),\\
\mathbb{E}\!\left[
\frac{1(\mathbf{C}_{\mathcal{N}_{ij}}=c\mathbf{1})}{P(\mathbf{C}_{\mathcal{N}_{ij}}=c\mathbf{1})}
\frac{1(W_{ij}=w)}{P_{\phi_c}(W_{ij}=w)}
\,\Big|\,
\mathbf{W}_{\mathcal{N}_{ij}}
\right]
&=\alpha_{ij}(\phi_c)\frac{1(W_{ij}=w)}{P_{\phi_c}(W_{ij}=w)}.
\end{aligned}
\]
Since a conditional expectation never exceeds the essential supremum of its argument, the uniform bound $C_{\beta}$ on the IPT weights carries over to the marginal weights:
\[
\max_{i,j,c}\frac{1(\mathbf{C}_{\mathcal{N}_{ij}}=c\mathbf{1})}{P(\mathbf{C}_{\mathcal{N}_{ij}}=c\mathbf{1})}\leq C_{\beta}
\;\Rightarrow\;
\max_{i,j,c}\alpha_{ij}(\phi_c)\leq C_{\beta},
\]
\[
\begin{aligned}
&\max_{i,j,w,c}\frac{1(\mathbf{C}_{\mathcal{N}_{ij}}=c\mathbf{1})}{P(\mathbf{C}_{\mathcal{N}_{ij}}=c\mathbf{1})}\frac{1(W_{ij}=w)}{P_{\phi_c}(W_{ij}=w)}\leq C_{\beta}\Rightarrow\;
\max_{i,j,w,c}\alpha_{ij}(\phi_c)\frac{1(W_{ij}=w)}{P_{\phi_c}(W_{ij}=w)}\leq C_{\beta}.
\end{aligned}
\]
The exposure-mapping-level weights are handled analogously, being conditional expectations of the IPT weights given $\mathbf{e}(\mathbf{W}_{\mathcal{N}_{ij}})$ rather than $\mathbf{W}_{\mathcal{N}_{ij}}$. In both cases, then, the positivity conditions that ensure assumption (iv) for the IPT weights also ensure it for these Radon--Nikodym derivative weights.

For the complete Radon--Nikodym derivative weights, however, we need to bound away from zero the denominators given by sums of joint probabilities of the form
\[
P\bigl(\mathbf{W}_{\mathcal{N}_{ij}}=\mathbf{w}_{\mathcal{N}_{ij}},\mathbf{C}_{\mathcal{N}_{ij}}=\mathbf{c}_{\mathcal{N}_{ij}}\bigr)
\]
over all admissible realizations of $\mathbf{w}_{\mathcal{N}_{ij}}$ and $\mathbf{c}_{\mathcal{N}_{ij}}$. 
When $\mathbf{W}_{\mathcal{N}_{ij}}$ contains finitely many units, there could be strictly positive probabilities that converge to zero as $N\to\infty$. In this case, the global boundedness may be violated. Therefore, we may need the bounded degree assumption in (i) to obtain the global boundedness of the complete Radon--Nikodym derivative weights. For weights estimating marginal potential outcomes, assumption (iv) holds when assumption (i) is satisfied together with the following positivity conditions:
\[
P\bigl(C_i=c \mid C_{i_k}=c_k,\ 1\leq k\leq C_{\mathcal{N}}\bigr)\in \{0\}\cup[p,1],
\]
\[
P\bigl(W_{ij}=w \mid W_{i_kj_k}=w_k,\ C_{i_k}=c_k,\ 1\leq k\leq C_{\mathcal{N}}\bigr)\in \{0\}\cup[p,1],
\]
for some constant $p>0$, uniformly over every unit $(i,j)$, every unit sequence $\{(i_k,j_k)\}_{k=1}^{C_{\mathcal{N}}}$, every value of $(w,c)$, and every value sequence of $\{(w_k,c_k)\}_{k=1}^{C_{\mathcal{N}}}$. 
For weights identifying non-marginal potential outcomes, we similarly require the additional condition
\[
P_{\phi}(W_{ij}=w)\in \{0\}\cup[q,1]
\]
for some constant $q>0$ and all $(\phi,w)$ and $(i,j)$.
\end{remark}
\begin{remark}
    In (vii), we assume that the weight assigned to each cluster is not substantially larger than the inverse of the number of clusters. This condition is satisfied, for example, when the ratio between the largest and smallest nonzero weights is uniformly bounded, or when an $O(1)$ proportion of clusters have zero weight and the remaining weights are similarly controlled. When only an $o(1)$ proportion of clusters are assigned nonzero weights (e.g., when focusing on the causal effect for a specific cluster), the problem can be reformulated by restricting attention to those clusters and their associated units. Our asymptotic results continue to hold under this reduced data setup. The only case excluded by (vii) is when a small number of clusters receive disproportionately large weights while many others receive small, but nonzero, weights, a situation that is typically not of practical interest.
\end{remark}

Next, we introduce two assumptions that characterize the classes of randomization schemes of interest. These schemes apply to both cluster-level and unit-level randomization within a two-stage framework. It is typically sufficient for only one stage of randomization to belong to these classes in order to establish the asymptotic theory for each estimator. This flexibility allows us to accommodate a wide range of experimental designs.
\begin{assumption}[Independent Bernoulli randomization schemes]\label{ass:independent_rand}

\begin{enumerate}

    \item[(i)] The cluster-level treatment assignments $\{C_i\}_{i=1}^n$ are mutually independent.
    
    \item[(ii)] For each cluster $i$, the unit-level treatment assignments $\{W_{ij}\}_{j=1}^{N_i}$ are mutually independent conditional on $C_i$.
\end{enumerate}
\end{assumption}
\begin{remark}
The randomization schemes considered here allow for i.i.d.\ treatment assignment. They also allow assignment probabilities to vary across units or clusters according to covariates or network structure. Since we are in an experimental setting, we can treat the network structure and covariates as non-random, which does not affect the asymptotic analysis.
\end{remark}
\begin{remark}
We do not impose explicit positivity assumptions on the treatment probabilities in the cluster-level and unit-level randomization schemes. Instead, such conditions are implicitly enforced through Assumption~\ref{ass:measure}, Assumption~\ref{ass:CLT_regularity}(iii), and the requirement that $P_{\phi}(W_{ij}=w)>0$ to ensure that \emph{population average potential outcomes} are well defined.
Assumption~\ref{ass:measure} requires that the realized treatment mechanism has sufficiently rich support, imposing a weak condition on nonzero probabilities as discussed in the remark following it. Assumption~\ref{ass:CLT_regularity}(iv) further requires that conditioning on finitely many units does not induce sequences of strictly positive probabilities that converge to zero, and is not restrictive for commonly used experimental designs. 
The same implicit requirements carry over to Assumption~\ref{ass:complete_rand}; the remark following that assumption therefore focuses only on the conditions specific to complete randomization.
\end{remark}

\begin{assumption}[Complete randomization schemes]\label{ass:complete_rand}
\begin{enumerate}
    \item[(i)] The cluster index set $\mathcal{I} = \{1,\ldots,n\}$ is partitioned into $K$ disjoint groups $\{\mathcal{I}_k\}_{k=1}^K$. The treatment assignment vector $\mathbf{C}$ is generated independently across groups, and for each $k$, the subvector $\mathbf{C}_{\mathcal{I}_k}$ is uniformly distributed over all assignment vectors satisfying
\[
\sum_{i \in \mathcal{I}_k} C_i = I_k,
\]
for some fixed integer $I_k \leq |\mathcal{I}_k|$. In addition, for some constant $p>0$,
\[
\min_{k}|\mathcal{I}_k|=\Omega(n),
\qquad
\frac{I_k}{|\mathcal{I}_k|}\in\{0,1\}\cup[p,1-p]
\quad\text{for all }k.
\]

    \item[(ii)] For each cluster $i$, the unit index set $\mathcal{J}_{i} = \{1,\ldots,N_i\}$ is partitioned into $K_i$ disjoint groups $\{\mathcal{J}_{ik}\}_{k=1}^{K_i}$. The treatment assignment vector $\mathbf{W}_i$ is generated independently across groups, and for each $k$, the subvector $\mathbf{W}_{i\mathcal{J}_{ik}}$ is uniformly distributed over all assignment vectors satisfying
\[
\sum_{j \in \mathcal{J}_{ik}} W_{ij} = J_{ik},
\]
for some fixed integer $J_{ik} \leq |\mathcal{J}_{ik}|$. In addition, for some constant $p>0$,
\[
\min_{k}|\mathcal{J}_{ik}|=\Omega(N_i)
\quad\text{for all }i,
\qquad
\frac{J_{ik}}{|\mathcal{J}_{ik}|}\in\{0,1\}\cup[p,1-p]
\quad\text{for all }i,k.
\]
\end{enumerate}
\end{assumption}
\begin{remark}
Assumption \ref{ass:complete_rand} accommodates stratified complete randomization schemes at both the cluster and unit levels. In each case, the population is partitioned into arbitrary strata, within which standard complete randomization is implemented independently. The formation of strata, as well as the within-stratum assignment distributions, may depend on network structure and covariate information. However, within each cluster we allow no heterogeneity in treatment probability.
\end{remark}
\begin{remark}
The conditions on treatment fractions and stratum sizes,
\[
\min_{k}|\mathcal{I}_k|=\Omega(n),
\qquad
\frac{I_k}{|\mathcal{I}_k|}\in\{0,1\}\cup[p,1-p]
\quad\text{for all }k,
\]
and
\[
\min_{k}|\mathcal{J}_{ik}|=\Omega(N_i)
\quad\text{for all }i,
\qquad
\frac{J_{ik}}{|\mathcal{J}_{ik}|}\in\{0,1\}\cup[p,1-p]
\quad\text{for all }i,k,
\]
serve two roles. First, they bound the cross-treatment dependence induced by complete randomization. Second, they satisfy the positivity conditions implicitly required by Assumption~\ref{ass:CLT_regularity}(iv): for the example weight functions in Proposition~\ref{prop:examples_of_weights}, they suffice to verify this assumption under Assumption~\ref{ass:CLT_regularity}(ii) for most estimators, and under the stronger Assumption~\ref{ass:CLT_regularity}(i) for the complete Radon--Nikodym derivative estimator.
\end{remark}
Then, with either the cluster-level or unit-level treatment following the class of independent Bernoulli randomization schemes or complete randomization schemes defined above, we have the following upper bound on estimator variance.
\begin{theorem}[Order of the variance matrices]\label{thm:variance_order}
Suppose Assumptions~\ref{ass:cluster_assignment}--\ref{ass:interference}, \ref{ass:weight}, 
and \ref{ass:CLT_regularity}(ii)-(vi) hold. Define the asymptotic covariance matrices of 
$\hat{\boldsymbol{\tau}}_{\mathrm{marg}}$ and $\hat{\boldsymbol{\tau}}$ as
\begin{align*}
\mathbf{\Sigma}_{\mathrm{marg}}
&=
\mathrm{Var}\!\left(
\sum_{i=1}^n \frac{g_i}{N_i}
\sum_{j=1}^{N_i}
(\mathbf B_{\mathrm{marg}})_{(*,ij)}\circ(Y_{ij}\mathbf{1}-\boldsymbol{\tau}_{\mathrm{marg}})
\right), \\
\mathbf{\Sigma}
&=
\mathrm{Var}\!\left(
\sum_{i=1}^n \frac{g_i}{N_i}
\sum_{j=1}^{N_i}
\mathbf B_{(*,ij)}\circ(Y_{ij}\mathbf{1}-\boldsymbol{\tau})
\right),
\end{align*}
where $\mathbf B_{(*,ij)}$ and $(\mathbf B_{\mathrm{marg}})_{(*,ij)}$ denote the columns of $\mathbf B$ and $\mathbf B_{\mathrm{marg}}$ associated with unit $(i,j)$, respectively. Then:
\begin{enumerate}
    \item[(i)] If either Assumption~\ref{ass:independent_rand}(i) or Assumption~\ref{ass:complete_rand} (i) holds, then
    \begin{align*}
    \max\{\|\boldsymbol\Sigma_{\mathrm{marg}}\|,\|\boldsymbol\Sigma\|\}
    = O\!\left(n^{-1}\right).
    \end{align*}
    
    \item[(ii)] If Assumption~\ref{ass:weight_cluster_agnostic} holds, Assumption~\ref{ass:CLT_regularity} (i) holds, and either Assumption~\ref{ass:independent_rand}(ii) or Assumption~\ref{ass:complete_rand}(ii) holds, then
    \begin{align*}
    \max\{\|\boldsymbol\Sigma_{\mathrm{marg}}\|,\|\boldsymbol\Sigma\|\}
    = O\!\left(N^{-1}\right).
    \end{align*}
    Furthermore, we could write the two asymptotic covariance matrices as
      $\mathbf{\Sigma}_{\mathrm{marg}}
      =
      \mathbb{E}[\mathbf{\Sigma}_{\mathrm{marg}}(\mathbf{C})]$ and $
      \mathbf{\Sigma}=\mathbb{E}[\mathbf{\Sigma}(\mathbf{C})]$,
      where 
      \[
        \mathbf{\Sigma}_{\mathrm{marg}}(\cdot)=\mathrm{Var}\!\left(
          \sum_{i=1}^n \frac{g_i}{N_i}
          \sum_{j=1}^{N_i}
          (\mathbf B_{\mathrm{marg}})_{(*,ij)}\circ(Y_{ij}\mathbf{1}-\boldsymbol{\tau}_{\mathrm{marg}})
          \mid \mathbf{C}=\cdot\right)
      \]
      and 
      \[
        \mathbf{\Sigma}(\cdot)=\mathrm{Var}\!\left(
          \sum_{i=1}^n \frac{g_i}{N_i}
          \sum_{j=1}^{N_i}
          \mathbf B_{(*,ij)}\circ(Y_{ij}\mathbf{1}-\boldsymbol{\tau})
          \mid \mathbf{C}=\cdot\right)
      \]
      are the conditional covariance matrices with respect to the value of cluster-level treatments $\mathbf{C}=(C_1,\cdots,C_n)^\top$.
\end{enumerate}
\end{theorem}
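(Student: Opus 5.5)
Since $\boldsymbol\Sigma$ and $\boldsymbol\Sigma_{\mathrm{marg}}$ have at most four and two rows, it suffices to bound each entry; by Cauchy--Schwarz, bounding the diagonal entries is enough. Each diagonal entry is the variance of a scalar sum $S=\sum_{(i,j)} a_{ij} X_{ij}$ with $a_{ij}=g_i/N_i$ and $X_{ij}=\beta_{ij}(Y_{ij}-\tau)$. By Assumption~\ref{ass:CLT_regularity}(iii)--(iv), $|X_{ij}|\le 2C_\beta C_Y$. By (v)--(vi), $a_{ij}=O(1/(nN_i))=O(1/N)$. Every $X_{ij}$ is a function of $(\mathbf W_{\mathcal N_{ij}},\mathbf C_{\mathcal N_{ij}})$, and hence of the cluster blocks $\{(C_{i'},\mathbf W_{i'}) : i'\in\mathcal N^{(cl)}_{ij}\}$. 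We will use these facts throughout.

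\textbf{Part (i).} Group the sum by cluster: $S=\sum_i T_i$ with $T_i=\sum_j a_{ij}X_{ij}$, so $|T_i|\le O(g_i)=O(1/n)$. By Assumption~\ref{ass:cluster_assignment}, the cluster blocks $(C_i,\mathbf W_i)$ are mutually independent whenever the $C_i$ are independent. Under Assumption~\ref{ass:independent_rand}(i) we therefore get a dependency graph on clusters: $T_i$ and $T_{i'}$ are independent unless $\mathcal N^{(cl)}_{ij}\cap\mathcal N^{(cl)}_{i'j'}\neq\emptyset$ for some $j,j'$. We must show that each cluster has $O(1)$ such dependent partners, or at least that $\sum_{i'}|\mathrm{Cov}(T_i,T_{i'})|=O(n^{-2})$ on average. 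Here we use the second inequality of Assumption~\ref{ass:CLT_regularity}(ii) together with (v): the number of units reached by cluster $i$ is $O(N_i)$, and each such unit has at most $C_{\mathcal N^{(cl)}}$ neighbouring clusters. We then bound the covariance by a weighted count, $\sum_{i,i'}|\mathrm{Cov}(T_i,T_{i'})|\lesssim \sum_{(i,j),(i',j')}a_{ij}a_{i'j'}\,1\{\text{blocks overlap}\}$, and double-count through the shared cluster $k$. This gives $\sum_k\big(\sum_{(i,j):k\in\mathcal N^{(cl)}_{ij}}a_{ij}\big)^2$. Each inner sum is $O(N_k\cdot 1/N)=O(1/n)$, because the units with $k$ in their cluster neighbourhood are $O(N_k)$ by the symmetric reading of (ii) and undirectedness of the network. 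Summing over $k$ gives $O(n\cdot n^{-2})=O(n^{-1})$. I expect this counting, turning (ii)'s per-cluster union bound into the ``incoming'' count, to be the delicate combinatorial point.

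\textbf{Complete randomization of $\mathbf C$.} Under Assumption~\ref{ass:complete_rand}(i) the blocks are no longer independent, so I would condition. Write $\mathrm{Var}(S)=\mathbb E[\mathrm{Var}(S\mid\mathbf C)]+\mathrm{Var}(\mathbb E[S\mid\mathbf C])$. Given $\mathbf C$, the $\mathbf W_i$ are independent, so the first term is handled by the dependency-graph bound above. The second term is $\mathrm{Var}(f(\mathbf C))$ with $f(\mathbf C)=\sum_{(i,j)}a_{ij}\,\mathbb E[X_{ij}\mid\mathbf C_{\mathcal N_{ij}}]$, whose summands each depend on at most $C_{\mathcal N^{(cl)}}$ coordinates of $\mathbf C$. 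For a uniformly random assignment with fixed totals in strata of size $\Omega(n)$, I would apply a transposition (Efron--Stein-type) variance bound for functions of a random permutation. Swapping two coordinates $C_k,C_l$ changes $f$ by at most $\sum_{(i,j):k\text{ or }l\in\mathcal N^{(cl)}_{ij}}2|a_{ij}|C=O(1/n)$, and the standard bound $\mathrm{Var}(f)\lesssim \frac1n\sum_{k,l}\mathbb E(\Delta_{kl}f)^2$ yields $O(n^{-1})$. This step is the main technical obstacle; if the Efron--Stein route proves awkward, I would fall back on explicitly computing covariances of local functions under sampling without replacement, which are $O(1/n)$ between disjoint index sets when stratum proportions lie in $[p,1-p]$.

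\textbf{Part (ii).} Assumption~\ref{ass:weight_cluster_agnostic} gives $\mathbb E[X_{ij}\mid\mathbf C]=\mathbb E[\beta_{ij}Y_{ij}\mid\mathbf C]-\tau\,\mathbb E[\beta_{ij}\mid\mathbf C]$, a constant, so $\mathbb E[S\mid\mathbf C]$ is deterministic. The law of total variance then gives $\boldsymbol\Sigma=\mathbb E[\boldsymbol\Sigma(\mathbf C)]$, which is the stated representation. It remains to show $\mathrm{Var}(S\mid\mathbf C)=O(N^{-1})$ uniformly in $\mathbf C$. Under Assumption~\ref{ass:independent_rand}(ii), the $W_{ij}$ are conditionally independent, so by Assumption~\ref{ass:CLT_regularity}(i) each $X_{ij}$ is dependent on at most $C_{\mathcal N}^2$ others. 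Hence $\mathrm{Var}(S\mid\mathbf C)\le \sum a_{ij}^2\,O(1)=O(N\cdot N^{-2})=O(N^{-1})$. Under Assumption~\ref{ass:complete_rand}(ii), I would apply the same conditional transposition argument within each stratum of each cluster. Swapping two units changes $S$ by $O(C_{\mathcal N}/N)$, and the variance bound gives $\sum_i N_i^{-1}\sum_{k,l\in\text{cluster } i}O(N^{-2})=O(N^{-1})$.
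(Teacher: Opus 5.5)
Your proposal is correct. For the two Bernoulli cases it is essentially the paper's argument: your double count through a shared cluster $k$ (undirectedness plus the second inequality of Assumption~\ref{ass:CLT_regularity}(ii)) is the content of Lemma~\ref{lem:complete_neighborhood_size}, $\sup_{(i,j)}|\Lambda_{ij}|=O(N/n)$, and your part (ii) under Bernoulli is identical to the paper's.

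The genuine difference is in the two complete-randomization cases. The paper keeps the pairwise covariance expansion. It bounds each covariance between summands with disjoint neighborhoods by a $\psi$-mixing coefficient, of order $1/n$ at the cluster level or $n/N$ at the unit level given $\mathbf C$, taken from Lemma~\ref{lem:mixing_coef_complete_randomization} and inserted into Lemma~\ref{lem:mixing_coef}(i). You instead bound $\mathrm{Var}(\mathbb E[S\mid\mathbf C])$, resp.\ $\mathrm{Var}(S\mid\mathbf C)$, globally. You use the Poincar\'e inequality of the random-transposition (Bernoulli--Laplace) chain, $\mathrm{Var}(f)\le \frac{1}{2m}\sum_{k<l}\mathbb E[(f(x)-f(x^{kl}))^2]$ for a uniform fixed-weight binary vector of length $m$, together with a per-swap influence bound.

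When you write it up, make two points explicit. First, swaps are restricted to a single stratum. Second, the independent strata (and, in part (ii), the clusters given $\mathbf C$) are combined by Efron--Stein tensorization. This yields $\sum_s O(m_s/n^2)=O(1/n)$ and $\sum_b O(m_b/N^2)=O(1/N)$.

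What your route buys: it is more elementary at this step and more general. The spectral gap $2/(m-1)$ does not depend on the number treated, and the per-stratum bound is linear in the stratum size. So you never use $\min_k|\mathcal I_k|=\Omega(n)$, the treated-fraction condition, or their unit-level analogues; the variance orders hold for any stratified complete randomization, e.g.\ matched pairs.

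What the paper's route buys: pairwise weak-dependence bounds that it reuses later, which a global variance inequality does not provide. These bounds reappear in the coupling CLT (Proposition~\ref{prop:coupling}(iv)), in the concentration of the HAC estimators, and in identifying the dense-dependence term $\mathbf M$.
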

\begin{remark}
    Here we claim the two matrices to be the asymptotic covariance matrices of $\hat{\boldsymbol{\tau}}_{\mathrm{marg}}$ and $\hat{\boldsymbol{\tau}}$ without proof, which we defer to the proof of the central limit theorem below. In this theorem we only analyze the properties of the two matrices.
\end{remark}
\begin{remark}
Part (i) provides variance bounds for general estimators, while part (ii) establishes bounds specifically for cluster-agnostic estimators. When $N/n \to \infty$ (i.e., as cluster sizes diverge), the convergence rate of cluster-agnostic estimators is faster than that of general estimators. The intuition is that cluster-level treatments induce dependence among all units within a cluster unless the weights are cluster-agnostic, and such dependence will generate a dominant (lower-order) term in the variance. 
\end{remark}
\begin{remark}
For general estimators, since cluster-level treatments already induce dependence within clusters, the variance bounds allow arbitrary unit-level treatment rules and only require assumptions on the cluster-level randomization scheme. In contrast, for cluster-agnostic estimators, cluster-level treatments do not directly contribute a term to the variance, and the variance bounds only require assumptions on unit-level randomization scheme.
\end{remark}
Finally, we establish asymptotic normality for $\hat{\boldsymbol{\tau}}_{\mathrm{marg}}$ and $\hat{\boldsymbol{\tau}}$.
\begin{theorem}[Central limit theorem]
\label{thm:clt}
Suppose Assumptions~\ref{ass:cluster_assignment}--\ref{ass:interference}, \ref{ass:weight}, 
and \ref{ass:CLT_regularity}(ii)-(vi) hold. Let $\lambda_{\min}(\cdot)$ denote the smallest eigenvalue of a symmetric matrix.
\begin{enumerate}
    \item[(i)] Suppose that either Assumption~\ref{ass:independent_rand}(i) or Assumption~\ref{ass:complete_rand}(i) holds. Moreover, assume that
    \[
    \min\{\lambda_{\min}(\mathbf{\Sigma}_{\mathrm{marg}}),
    \lambda_{\min}(\mathbf{\Sigma})\}
    = \omega\!\left(n^{-4/3}\right).
    \]
    Then
    \[
    \mathbf{\Sigma}_{\mathrm{marg}}^{-1/2}
    \bigl(\hat{\boldsymbol{\tau}}_{\mathrm{marg}}-\boldsymbol{\tau}_{\mathrm{marg}}\bigr)
    \;\Rightarrow\;
    \mathcal{N}(\mathbf{0}, I),
    \]
    and
    \[
    \mathbf{\Sigma}^{-1/2}
    \bigl(\hat{\boldsymbol{\tau}}-\boldsymbol{\tau}\bigr)
    \;\Rightarrow\;
    \mathcal{N}(\mathbf{0}, I).
    \]
    
    \item[(ii)] If Assumptions~\ref{ass:weight_cluster_agnostic}, \ref{ass:CLT_regularity}(i), and~\ref{ass:independent_rand}(ii) hold, and
    \[
    \inf_{\mathbf{c}\in\mathrm{supp}(\mathbf{C})}
    \min\{\lambda_{\min}(\mathbf{\Sigma}_{\mathrm{marg}}(\mathbf{c})),
    \lambda_{\min}(\mathbf{\Sigma}(\mathbf{c}))\}
    = \omega\!\left(N^{-4/3}\right),
    \]
    where $\mathrm{supp}(\mathbf{C})$ denotes the support of the cluster-level treatments $\mathbf{C}$, then
    \[
    \mathbf{\Sigma}_{\mathrm{marg}}(\mathbf{C})^{-1/2}
    \bigl(\hat{\boldsymbol{\tau}}_{\mathrm{marg}}-\boldsymbol{\tau}_{\mathrm{marg}}\bigr)
    \;\Rightarrow\;
    \mathcal{N}(\mathbf{0}, I),
    \]
    and
    \[
    \mathbf{\Sigma}(\mathbf{C})^{-1/2}
    \bigl(\hat{\boldsymbol{\tau}}-\boldsymbol{\tau}\bigr)
    \;\Rightarrow\;
    \mathcal{N}(\mathbf{0}, I).
    \]
\end{enumerate}
\end{theorem}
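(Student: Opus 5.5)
Since $\hat{\boldsymbol\tau}$ is a vector of ratios, I would first linearize it. Write the numerator and denominator sums as $\hat{\mathbf A}=\sum_{(i,j)}\frac{g_i}{N_i}Y_{ij}\mathbf B_{(*,ij)}$ and $\hat{\mathbf D}=\sum_{(i,j)}\frac{g_i}{N_i}\mathbf B_{(*,ij)}$. Because $\mathbb E[\beta_{ij}]=1$ and $\sum_i g_i=1$, we have $\mathbb E\hat{\mathbf D}=\mathbf 1$. This gives the exact identity
\[
\hat{\boldsymbol\tau}-\boldsymbol\tau=\hat{\mathbf D}^{\circ -1}\circ \mathbf S,\qquad \mathbf S=\sum_{(i,j)}\frac{g_i}{N_i}\mathbf B_{(*,ij)}\circ(Y_{ij}\mathbf 1-\boldsymbol\tau),
\]
with $\mathbb E\mathbf S=\mathbf 0$ by Assumption~\ref{ass:weight}. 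Applying Theorem~\ref{thm:variance_order} to the constant outcome (the same dependence argument) gives $\hat{\mathbf D}-\mathbf 1=O_P(n^{-1/2})$ in case (i) and $O_P(N^{-1/2})$ in case (ii). Hence $\hat{\boldsymbol\tau}-\boldsymbol\tau=\mathbf S+\mathbf R$, where the remainder is $\mathbf R=O_P(n^{-1})$ (respectively $O_P(N^{-1})$). Multiplying by $\mathbf\Sigma^{-1/2}$, whose norm is $o(n^{2/3})$ under the eigenvalue condition, the remainder is $o_P(n^{-1/3})$, which is negligible. So it suffices to prove $\mathbf\Sigma^{-1/2}\mathbf S\Rightarrow\mathcal N(\mathbf 0,I)$. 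By Cramér–Wold, I would reduce this further to the scalar sums $S_t=\mathbf t^\top\mathbf\Sigma^{-1/2}\mathbf S$ for unit vectors $\mathbf t$. The marginal vector is handled identically.

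For part (i) I would group the sum by clusters, $\mathbf S=\sum_i \mathbf X_i$ with $\mathbf X_i=\frac{g_i}{N_i}\sum_j\mathbf B_{(*,ij)}\circ(Y_{ij}\mathbf 1-\boldsymbol\tau)$. Each $\|\mathbf X_i\|=O(n^{-1})$ by Assumption~\ref{ass:CLT_regularity}(iii),(iv),(vi). The summand $\mathbf X_i$ is a function of $(\mathbf W,\mathbf C)$ restricted to the clusters in $\bigcup_j\mathcal N_{ij}^{(cl)}$. By Assumption~\ref{ass:cluster_assignment}, $\mathbf W$ is independent across clusters given $\mathbf C$, so under Bernoulli $\mathbf C$ (Assumption~\ref{ass:independent_rand}(i)) the vectors $\{(\mathbf C_i,\mathbf W_i)\}$ are independent across clusters. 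Then $\{\mathbf X_i\}$ has a dependency graph in which $i\sim i'$ whenever their cluster neighborhoods intersect. Assumption~\ref{ass:CLT_regularity}(ii) bounds the number of clusters touched by cluster $i$'s neighborhoods by $O(1)$ on average: the units in the union number $O(N_i)$, and the cluster sizes are balanced. The dependency-graph Stein bound (Chen–Goldstein–Shao type, $\sum\mathbb E|X|^3 D^2/\sigma^3$) then gives a Wasserstein error of order $n\cdot n^{-3}\cdot\sigma^{-3}$. This vanishes exactly when $\sigma^2=\omega(n^{-4/3})$, which explains the threshold. The general-degree version needs care, since degrees are bounded only on average, so I would use the $L^p$ form of the bound with maximal degree replaced by a sum over neighborhoods.

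Part (ii) runs the same argument conditionally on $\mathbf C=\mathbf c$. Under Assumption~\ref{ass:weight_cluster_agnostic}, $\mathbb E[\mathbf S\mid\mathbf C]=\mathbf 0$. Given $\mathbf C$, the $W_{ij}$ are independent (Assumption~\ref{ass:independent_rand}(ii)). The unit-level summands have size $O(N^{-1})$ and a dependency graph of bounded degree $C_{\mathcal N}^2$ by Assumption~\ref{ass:CLT_regularity}(i). The same Stein bound gives error $N^{-2}\sigma(\mathbf c)^{-3}\to0$ uniformly in $\mathbf c$, by the infimum condition. Since the conditional limit is $\mathcal N(0,1)$ for every $\mathbf c$, dominated convergence yields the unconditional statement with the random normalization $\mathbf\Sigma(\mathbf C)^{-1/2}$. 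The linearization step is also done conditionally, using the conditional variance bound from Theorem~\ref{thm:variance_order}(ii).

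The main obstacle is part (i) under complete randomization of clusters (Assumption~\ref{ass:complete_rand}(i)). There the $\mathbf C_i$ are globally dependent, so there is no sparse dependency graph. My plan is a Stein-method argument bounding $\mathbb E[f(S)-Sf'(S)]$ through a coupling. For each $i$, form $S^{(i)}$ by removing the summands whose neighborhoods touch cluster $i$'s neighborhood. Then replace the conditioning on $\mathbf C$ restricted to those clusters by a resampled complete-randomization assignment of the remaining clusters, constructed by swapping a bounded number of treated and control labels. Complete randomization with $|\mathcal I_k|=\Omega(n)$ and fractions in $[p,1-p]$ makes the conditional law of the remaining clusters differ from the unconditional one by such a swap coupling. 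Each swap alters $O(1)$ summands of size $O(n^{-1})$, which induces an extra $O(n^{-1})$ covariance per pair. This yields a decomposition of the Stein discrepancy into a local dependency-graph term, bounded as in the Bernoulli case, and a global swap term of total order $n\cdot n^{-1}\cdot n^{-2}/\sigma^3$. That global term is again controlled by $\sigma^2=\omega(n^{-4/3})$. Verifying that this swap term remains $O(n^{-1})$ per pair uniformly, without any permutation symmetry of the summands, is the delicate step.
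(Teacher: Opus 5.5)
Your linearization of the H\'ajek ratio is sound, and your part (ii) follows the paper's argument: condition on $\mathbf C=\mathbf c$, use unit-level dependency neighborhoods of size at most $C_{\mathcal N}^2$, obtain a Stein bound uniform in $\mathbf c$, then integrate. For the Stein bounds you must center the summands by the deterministic shifts $\frac{g_i}{N_i}(\boldsymbol\tau_{ij}-\boldsymbol\tau)$, which sum to zero. Without this, neither your cluster aggregates $\mathbf X_i$ nor your unit summands are mean zero. The genuine gaps are both in part (i).

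\textbf{Part (i), Bernoulli clusters.} You claim that Assumption~\ref{ass:CLT_regularity}(ii) makes the degree of your cluster-level dependency graph $O(1)$, even on average. This is false. The assumption bounds $\bigl|\bigcup_j\mathcal N_{ij}\bigr|$ by $O(N_i)$ \emph{units}, but a single unit suffices to touch a cluster, so those units can lie in $\Theta(N_i)$ distinct clusters.

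Here is a counterexample. Take $n$ clusters of size $s=N/n$ and an $s$-regular graph $H$ on the clusters. Realize each edge of $H$ by linking one unit in each endpoint, using every unit exactly once. Then $|\mathcal N_{ij}|=2$, and Assumption~\ref{ass:CLT_regularity}(i), (ii) and (v) all hold. Yet $\bigcup_j\mathcal N_{ij}^{(cl)}$ consists of $i$ together with its $s$ neighbors in $H$, for every $i$. So every $\mathbf X_i$ has degree at least $s+1$, and of order $s^2$ if $H$ has no short cycles.

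Since the $\mathbf X_i$ are of size $1/n$, your dependency-graph bound is of order $D^2n^{-2}\sigma^{-3}$ with $D\ge N/n$. Once $N/n\to\infty$, this is not $o(1)$ under the eigenvalue condition alone. Using an average degree instead of the maximum does not help, because here every cluster has large degree; the loss comes from aggregating. The fix is the paper's: keep unit summands of size $O(1/N)$, each depending on only $O(1)$ clusters, with dependency neighborhoods $\Lambda_{ij}$ of size $O(N/n)$ (Lemma~\ref{lem:complete_neighborhood_size}). Lemma~\ref{lem:stein_bernoulli} then gives $O(\sigma^{-3}n^{-2}+\sigma^{-2}n^{-3/2})$, which is where the $n^{-4/3}$ threshold comes from.

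\textbf{Part (i), complete randomization.} Your sketch stops exactly where the proof has to happen. Suppose each $(u,v)$ has a coupled array with the original law, independent of $(\mathbf W_{\mathcal N_{uv}},\mathbf C_{\mathcal N_{uv}})$. The Stein discrepancy then splits (Proposition~\ref{prop:scalar_clt}) into two pieces:
\begin{itemize}
\item a Taylor remainder of order $\sigma^{-3}n^{-2}$, which matches your swap count;
\item the term $\mathbb E\bigl|\sigma^{-2}\sum_{(u,v),(i,j)}X_{uv}D^{(uv)}_{ij}-1\bigr|$, whose variance is $\sigma^{-4}$ times a sum over $N^4$ quadruples of $\mathrm{Cov}\bigl(X_{uv}D^{(uv)}_{ij},X_{u'v'}D^{(u'v')}_{i'j'}\bigr)$.
\end{itemize}
The second piece cannot be ``bounded as in the Bernoulli case''. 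These covariances do not vanish for disjoint neighborhoods, and $D^{(uv)}_{ij}$ depends on the globally random locations of the balancing flips. A uniform $O(1/n)$ covariance bound with sup norms gives $\sigma^{-4}N^4n^{-1}N^{-4}=1/(n\sigma^4)$, which is $\Omega(n)$ because $\sigma^2=O(n^{-1})$.

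Two ingredients are missing:
\begin{itemize}
\item[(a)] a \emph{multiplicative} mixing bound $|\mathrm{Cov}(f,g)|\le\psi\,\mathbb E|f|\,\mathbb E|g|$, on localized $\sigma$-fields that also contain the coupled neighborhood values, with $\psi=O(1/n)$ unless the localized cluster sets meet (Lemmas~\ref{lem:mixing_coef} and~\ref{lem:mixing_coef_complete_randomization}, Proposition~\ref{prop:coupling}(iv));
\item[(b)] activation probabilities $P(D^{(uv)}_{ij}\neq0)=O(1/n)$ for pairs with disjoint cluster neighborhoods, and $O(1/n^2)$ for joint activations (Proposition~\ref{prop:coupling}(iii)), so that the $L^1$ norms supply the extra factors of $1/n$.
\end{itemize}
Together with the split into regular and exceptional quadruples, these give $\mathcal E_2=O(\sigma^{-4}n^{-3})$. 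You flag this as the delicate step but do not supply it, so part (i) under Assumption~\ref{ass:complete_rand}(i) remains unproved.

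Two smaller points on the coupling. It must redraw $\mathbf W_m$ from $\phi_{\tilde C_m}$ on every local or flipped cluster; otherwise the coupled array has the wrong law and is not independent of $\mathbf W_{\mathcal N_{uv}}$. Also, as in the Bernoulli case, one flip changes $O(N/n)$ unit summands, not $O(1)$ cluster summands.
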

The proof of this theorem is a key contribution of the paper and requires novel and careful use of Stein's technique to establish central limit theorem which accommodates complete randomization, a challenging setting that involves general statistics that can be written as sums of local components, where each component depends on the treatment variables within the interference neighborhood of a given unit. 
A detailed description of the required analysis is relegated to Section \ref{sec:technical} so as not interrupt the flow of the exposition as it is substantially more technical than the rest of the paper and is not required in order to apply our estimators; this will allow readers focused on methodology to skip it without loss of continuity.
\begin{remark}
The lower bounds on the variance in (i) and (ii) require the corresponding upper bounds for the two types of estimators in Theorem~\ref{thm:variance_order} to be relatively tight. This requirement is more easily satisfied for cluster-agnostic estimators in (ii), whereas for general estimators in (i) it may be violated when the estimator is too close to a cluster-agnostic estimator and its variance decreases approximately at the rate of $O(N^{-1})$.
\end{remark}
\begin{remark}
  Beyond the difference in convergence rates, parts (i) and (ii) also differ qualitatively in the nature of the limiting distribution. In part (i), the normalization is by the deterministic covariance matrix $\mathbf{\Sigma}$ (or $\mathbf{\Sigma}_{\mathrm{marg}}$), so the estimator is asymptotically normal in the usual unconditional sense, with the traditional unimodal normal distribution as its limit. In part (ii), by contrast, asymptotic normality is established conditionally on the cluster-level treatments, with normalization by the random conditional covariance matrix $\mathbf{\Sigma}(\mathbf{C})$ (or $\mathbf{\Sigma}_{\mathrm{marg}}(\mathbf{C})$). Unconditionally, the asymptotic distribution of a cluster-agnostic estimator is therefore not a normal distribution but a mixture of normal distributions, with one component for each realization of the cluster-level assignment $\mathbf{C}$; such a normal mixture is in general non-normal and can exhibit multiple modes. This phenomenon is not specific to our framework: as noted by \citet{Liu2014}, mixed-normal (and hence possibly multimodal) limiting distributions commonly arise in the two-stage randomization experiment framework. In contrast, estimators based on weights that are not cluster-agnostic fall under part (i) and retain the traditional unimodal normal limit.
\end{remark}
\begin{remark}
When complete randomization is implemented at the unit level, we are unable to establish asymptotic normality for the cluster-agnostic estimators. This may reflect limitations of the proof technique discussed in Section~\ref{sec:technical}, as well as more intricate asymptotic behavior in this setting. We do not pursue this case further, as cluster-agnostic estimators often exhibit inferior accuracy in practice compared to estimators in the broader class, for which central limit theorems hold under condition (i) of the theorem. Moreover, even in settings where asymptotic normality can be established, the finite-sample distribution of cluster-agnostic estimators may not be well approximated by a normal distribution, as indicated by our simulation results. This discrepancy appears to be driven by the heavy-tailed nature of the cluster-agnostic weights.
\end{remark}

\section{Inference}

In this section, we construct conservative variance estimators for the three classes of estimators covered in Theorem~\ref{thm:clt}: (i) general estimators under cluster-level independent Bernoulli randomization, (ii) cluster-agnostic estimators under unit-level independent Bernoulli randomization, and (iii) general estimators under cluster-level complete randomization. We treat the two Bernoulli designs jointly in the first subsection and the complete-randomization design in the second subsection.

We begin by introducing some matrix notation used throughout this section. For matrix $\mathbf A$, the \emph{Frobenius norm} $\|\mathbf A\|_F$ is defined as 
\(
\|\mathbf A\|_F
:=
\left(\sum_{l=1}^d\sum_{l'=1}^d A_{ll'}^2\right)^{1/2}.
\)
For symmetric matrices $\mathbf A$ and $\mathbf B$ of the same dimension, write $\mathbf A\preceq\mathbf B$ if $\mathbf B-\mathbf A$ is positive semidefinite, and $\mathbf A\succeq\mathbf B$ if $\mathbf A-\mathbf B$ is positive semidefinite. Equivalently, $\mathbf A\preceq\mathbf B$ if and only if $\mathbf B\succeq\mathbf A$. For any symmetric matrix $\mathbf M$ with eigendecomposition
\[
\mathbf M=\mathbf Q\,\mathrm{diag}(d_1,\dots,d_p)\,\mathbf Q^\top,
\]
where $\mathbf Q$ is orthogonal and $d_1,\dots,d_p$ are the real eigenvalues of $\mathbf M$, define its \emph{positive semidefinite part} and \emph{negative semidefinite part} as
\[
(\mathbf M)_{+}
:=
\mathbf Q\,\mathrm{diag}\bigl(\max\{d_1,0\},\dots,\max\{d_p,0\}\bigr)\,\mathbf Q^\top,
\]
and
\[
(\mathbf M)_{-}
:=
\mathbf Q\,\mathrm{diag}\bigl(\max\{-d_1,0\},\dots,\max\{-d_p,0\}\bigr)\,\mathbf Q^\top.
\]
For two symmetric matrices $\mathbf A$ and $\mathbf B$ of the same dimension, their \emph{L\"owner maximum} is defined as
\[
\mathbf A\vee\mathbf B
:=
\mathbf A+(\mathbf B-\mathbf A)_+.
\]

We next introduce a unified kernel-based notation for the network HAC variance estimators used below. Given a symmetric kernel matrix $\mathbf K$ on $\mathcal P\times\mathcal P$, the corresponding variance estimator for $\hat{\boldsymbol\tau}$ is defined as
\[
\hat{\mathbf\Sigma}^{\mathbf K}
:=
\sum_{(i,j),(i',j')}\mathbf K_{ij,i'j'}
\left(\frac{g_i}{N_i}\mathbf B_{(*,ij)}\circ(Y_{ij}\mathbf 1-\hat{\boldsymbol\tau})\right)
\left(\frac{g_{i'}}{N_{i'}}\mathbf B_{(*,i'j')}\circ(Y_{i'j'}\mathbf 1-\hat{\boldsymbol\tau})\right)^T.
\]
Analogously, for the marginal estimator $\hat{\boldsymbol\tau}_{\mathrm{marg}}$, we define
\[
\hat{\mathbf\Sigma}^{\mathbf K}_{\mathrm{marg}}
:=
\sum_{(i,j),(i',j')}\mathbf K_{ij,i'j'}
\left(\frac{g_i}{N_i}(\mathbf B_{\mathrm{marg}})_{(*,ij)}\circ(Y_{ij}\mathbf 1-\hat{\boldsymbol\tau}_{\mathrm{marg}})\right)
\left(\frac{g_{i'}}{N_{i'}}(\mathbf B_{\mathrm{marg}})_{(*,i'j')}\circ(Y_{i'j'}\mathbf 1-\hat{\boldsymbol\tau}_{\mathrm{marg}})\right)^T.
\]
Both estimators are weighted covariance-type quadratic forms: the kernel matrix $\mathbf K$ determines which pairs of observations contribute to the estimator and with what weights. In the following subsections, different choices of $\mathbf K$ are used to reflect the dependence structure induced by each randomization scheme.

\subsection{Variance estimators under independent Bernoulli randomization}

For general estimators under cluster-level independent Bernoulli randomization, we use two kernels $\mathbf K_1$ and $\mathbf K_2$ to construct the corresponding HAC variance estimator. The within-cluster kernel $\mathbf K_1$ is defined by
\[
(\mathbf{K}_1)_{ij,i'j'} = 1\{i=i'\},
\]
which equals one whenever $(i,j)$ and $(i',j')$ belong to the same cluster. $\mathbf K_1$ is the dependency matrix that only accounts for within-cluster dependency induced by cluster-level treatment assignments or within-cluster interference. It is block-diagonal with all-one blocks, hence positive semidefinite. The cross-cluster interference kernel $\mathbf{K}_2$ is defined by
\[
(\mathbf{K}_2)_{ij,i'j'} = 1\{(i',j') \in \Lambda_{ij}\},
\qquad
\Lambda_{ij} := \left\{(i',j')\in\mathcal P : \mathcal N_{ij}^{(cl)}\cap \mathcal N_{i'j'}^{(cl)}\neq\varnothing\right\},
\]
which equals one whenever the cluster-level interference neighborhoods of $(i,j)$ and $(i',j')$ overlap. $\mathbf K_2$ is the dependency matrix that accounts for both the within-cluster dependency induced by cluster-level treatment assignments, and the cross-cluster dependency induced by interference. In this sense, $\mathbf K_2$ accounts for a broader class of potentially dependent pairs than $\mathbf K_1$. Although $\mathbf K_2$ is not guaranteed to be positive semidefinite, its proximity to the positive semidefinite kernel $\mathbf K_1$ suggests that it may be approximately positive semidefinite in practice.

We combine the HAC estimators with the two different kernels by taking their L\"owner maximum. Specifically, following \cite{Leung2025}, we define the variance estimators for $\hat{\boldsymbol\tau}$ and $\hat{\boldsymbol\tau}_{\mathrm{marg}}$ as
\[
\hat{\mathbf\Sigma}_1
=
\hat{\mathbf\Sigma}^{\mathbf K_1}\vee \hat{\mathbf\Sigma}^{\mathbf K_2},
\qquad
\hat{\mathbf\Sigma}_{1,\mathrm{marg}}
=
\hat{\mathbf\Sigma}^{\mathbf K_1}_{\mathrm{marg}}\vee
\hat{\mathbf\Sigma}^{\mathbf K_2}_{\mathrm{marg}}.
\]
By definition of the L\"owner maximum,
\[
\hat{\mathbf\Sigma}^{\mathbf K_1}\preceq \hat{\mathbf\Sigma}_1,
\qquad
\hat{\mathbf\Sigma}^{\mathbf K_2}\preceq \hat{\mathbf\Sigma}_1,
\]
and similarly,
\[
\hat{\mathbf\Sigma}^{\mathbf K_1}_{\mathrm{marg}}
\preceq
\hat{\mathbf\Sigma}_{1,\mathrm{marg}},
\qquad
\hat{\mathbf\Sigma}^{\mathbf K_2}_{\mathrm{marg}}
\preceq
\hat{\mathbf\Sigma}_{1,\mathrm{marg}}.
\]
Therefore, $\hat{\mathbf\Sigma}_1$ is conservative whenever either
$\hat{\mathbf\Sigma}^{\mathbf K_1}$ or $\hat{\mathbf\Sigma}^{\mathbf K_2}$ is conservative, and the same argument applies to
$\hat{\mathbf\Sigma}_{1,\mathrm{marg}}$. The validity of this construction relies on the following assumption on the sparsity of cross-cluster interference.

\begin{assumption}[Cross-cluster interference sparsity]\label{ass:sparsity}
The number of pairs between units with cross-cluster interference is asymptotically negligible relative to the number of pairs between units within the same cluster. 
Formally, with the notation of Frobenius norm, 
\[
\|\mathbf K_2-\mathbf K_1\|_{F}^2
=o\left(
\|\mathbf K_1\|_{F}^2
\right).
\]
\end{assumption}
\begin{remark}
Under Assumption \ref{ass:CLT_regularity}(iii)-(vi),
\(
  \|\mathbf K_1\|_{F}^2=\sum_{i}N_i^2=\Theta\left(\frac{N^2}{n}\right),
\)
and this assumption equivalently requires
\(
  \|\mathbf K_2-\mathbf K_1\|_{F}^2=o\left(\frac{N^2}{n}\right).
\)
\end{remark}
\begin{remark}
Although this assumption is relatively strong, it is standard in network HAC variance estimation with clustered interference structures. For example, Assumption~4 in \cite{Leung2023} provides a sufficient condition for Assumption~\ref{ass:sparsity}. Similar sparsity conditions are also implied by Assumption~2(ii) and Assumption~5 in \cite{Leung2025}. Since we do not want our inference to rely too heavily on this assumption, we use the L\"owner maximum of the two HAC estimators to obtain a more conservative variance estimate.
\end{remark}
With the sparsity assumption, the theoretical guarantee for these variance estimators is provided in the following theorem.
\begin{theorem}[Variance conservativeness for general estimators under independent Bernoulli randomization]
\label{thm:var_1}
Suppose Assumptions \ref{ass:cluster_assignment}--\ref{ass:interference}, \ref{ass:weight}, \ref{ass:CLT_regularity}(ii)-(vi), \ref{ass:independent_rand}(i), and \ref{ass:sparsity} hold. For each $t\in\{1,2\}$, the component estimator $\hat{\mathbf\Sigma}^{\mathbf K_t}$ satisfies
\[
\hat{\mathbf\Sigma}^{\mathbf K_t}
=
\mathbf\Sigma + \mathbf R_1+o_P(n^{-1}),
\qquad
\hat{\mathbf\Sigma}^{\mathbf K_t}_{\mathrm{marg}}
=
\mathbf\Sigma_{\mathrm{marg}} + \mathbf R_{1,\mathrm{marg}}+o_P(n^{-1}),
\]
where the remainders $\mathbf R_1$ and $\mathbf R_{1,\mathrm{marg}}$ are positive semidefinite and of order $O(n^{-1})$, the order of the asymptotic variance in Theorem~\ref{thm:variance_order}(i). Hence $\hat{\mathbf\Sigma}_1$ and $\hat{\mathbf\Sigma}_{1,\mathrm{marg}}$ are asymptotically conservative.
\end{theorem}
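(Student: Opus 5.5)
We prove the statement for $\hat{\mathbf\Sigma}^{\mathbf K_t}$; the marginal case is identical with $\mathbf B_{\mathrm{marg}}$ in place of $\mathbf B$. Write $\mathbf X_{ij}:=\frac{g_i}{N_i}\mathbf B_{(*,ij)}\circ(Y_{ij}\mathbf 1-\boldsymbol\tau)$. Let $\boldsymbol\mu_{ij}:=\mathbb E[\mathbf X_{ij}]$ and $\tilde{\mathbf X}_{ij}:=\mathbf X_{ij}-\boldsymbol\mu_{ij}$. Assumption~\ref{ass:weight} gives $\sum_{ij}\boldsymbol\mu_{ij}=\mathbf 0$, but individual $\boldsymbol\mu_{ij}$ need not vanish.

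\textbf{Step 1: replace $\hat{\boldsymbol\tau}$ by $\boldsymbol\tau$.} The difference between $\hat{\mathbf\Sigma}^{\mathbf K_t}$ and the oracle $\tilde{\mathbf\Sigma}^{\mathbf K_t}:=\sum\mathbf K_{ij,i'j'}\mathbf X_{ij}\mathbf X_{i'j'}^\top$ involves terms linear and quadratic in $\hat{\boldsymbol\tau}-\boldsymbol\tau$. By Theorem~\ref{thm:variance_order}(i) and Chebyshev, $\hat{\boldsymbol\tau}-\boldsymbol\tau=O_P(n^{-1/2})$. Each entry $|\frac{g_i}{N_i}\beta_{ij}|$ is $O(1/N)$ by Assumption~\ref{ass:CLT_regularity}(iv)--(vi). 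The number of nonzero kernel entries is at most $\|\mathbf K_2\|_F^2=O(N^2/n)$, using $\|\mathbf K_1\|_F^2=\Theta(N^2/n)$ and Assumption~\ref{ass:sparsity}. Together these bound the remainder by $O_P(n^{-1/2})\cdot O(N^{-2})\cdot O(N^2/n)=o_P(n^{-1})$.

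\textbf{Step 2: decompose the oracle's mean.} Expand
\[
\mathbb E\tilde{\mathbf\Sigma}^{\mathbf K_t}=\sum\mathbf K_{ij,i'j'}\operatorname{Cov}(\mathbf X_{ij},\mathbf X_{i'j'})+\sum\mathbf K_{ij,i'j'}\boldsymbol\mu_{ij}\boldsymbol\mu_{i'j'}^\top .
\]
Under Assumption~\ref{ass:independent_rand}(i), $\mathbf X_{ij}$ is a function of $(\mathbf W_{\mathcal N_{ij}},\mathbf C_{\mathcal N_{ij}})$. By Assumption~\ref{ass:cluster_assignment}, the family $\{(\mathbf C_i,\mathbf W_i)\}$ is independent across clusters. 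Hence $\mathbf X_{ij}$ and $\mathbf X_{i'j'}$ are independent unless $\mathcal N^{(cl)}_{ij}\cap\mathcal N^{(cl)}_{i'j'}\neq\varnothing$. So the $\mathbf K_2$ kernel captures every nonzero covariance, and the first sum equals $\mathbf\Sigma$ for $t=2$.

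For $t=2$, set $\mathbf R_1=\sum(\mathbf K_2)_{ij,i'j'}\boldsymbol\mu_{ij}\boldsymbol\mu_{i'j'}^\top$. This is not automatically PSD, so we argue as follows. Write $\mathbf K_2=\mathbf K_1+(\mathbf K_2-\mathbf K_1)$. The $\mathbf K_1$ part equals $\sum_i(\sum_j\boldsymbol\mu_{ij})(\sum_j\boldsymbol\mu_{ij})^\top\succeq 0$ and is $O(n\cdot n^{-2})=O(n^{-1})$. The off-diagonal part is bounded by $O(N^{-2})\|\mathbf K_2-\mathbf K_1\|_{F}^2=o(n^{-1})$, and we absorb it into $o(n^{-1})$.

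For $t=1$, the omitted cross-cluster covariances are likewise $O(N^{-2})\|\mathbf K_2-\mathbf K_1\|_F^2=o(n^{-1})$. So for both kernels we take the common PSD term
\[
\mathbf R_1:=\sum_i\Big(\sum_j\boldsymbol\mu_{ij}\Big)\Big(\sum_j\boldsymbol\mu_{ij}\Big)^\top ,
\]
which is $O(n^{-1})$.

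\textbf{Step 3 (main obstacle): concentration.} We must show $\operatorname{Var}(\tilde{\mathbf\Sigma}^{\mathbf K_t})=o(n^{-2})$ entrywise. The variance is a fourfold sum of $\operatorname{Cov}(\mathbf X_a\mathbf X_b^\top,\mathbf X_c\mathbf X_d^\top)$ over pairs $(a,b)$ and $(c,d)$ in the kernel support. Each term is $O(N^{-4})$, and it vanishes unless the dependency set of $\{a,b\}$ (the union of their cluster neighborhoods) meets that of $\{c,d\}$. By Assumption~\ref{ass:CLT_regularity}(ii), a fixed cluster is touched by the neighborhoods of only $O(N/n)$ units; this uses the second inequality with cluster-size balance and a double-counting argument. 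Count as follows:
\begin{itemize}
\item $O(N^2/n)$ choices of the pair $(a,b)$;
\item $O(1)$ clusters in its dependency set;
\item for each such cluster, $O(N/n)$ choices of $c$ whose neighborhood touches it;
\item $O(N/n)$ partners $d$ of $c$ in the kernel. This holds for $\mathbf K_1$; for $\mathbf K_2$ the extra pairs are controlled via Assumption~\ref{ass:sparsity} and Cauchy--Schwarz on row sums.
\end{itemize}
The total count is $O(N^4/n^3)$, giving variance $O(n^{-3})=o(n^{-2})$.

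The delicate part is the $\mathbf K_2$ row-sum control, since sparsity is only in Frobenius norm. We would handle the $\mathbf K_2-\mathbf K_1$ contribution separately. Its absolute entries are bounded, which gives the crude bound $O(N^{-2})\|\mathbf K_2-\mathbf K_1\|_F^2=o(n^{-1})$ in $L^1$. That bound suffices without any variance calculation.

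Combining Steps 1--3 yields $\hat{\mathbf\Sigma}^{\mathbf K_t}=\mathbf\Sigma+\mathbf R_1+o_P(n^{-1})$. Conservativeness of $\hat{\mathbf\Sigma}_1$ then follows from $\hat{\mathbf\Sigma}^{\mathbf K_t}\preceq\hat{\mathbf\Sigma}_1$.
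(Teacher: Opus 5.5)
Your proposal is correct and follows essentially the same route as the paper. Your $\boldsymbol\mu_{ij}$ is the paper's deterministic component $\mathbf U_{ij,2}$, your $\mathbf R_1=\sum_i(\sum_j\boldsymbol\mu_{ij})(\sum_j\boldsymbol\mu_{ij})^\top$ equals the paper's $\sum(\mathbf K_1)_{ij,i'j'}\mathbf U_{ij,2}\mathbf U_{i'j',2}^\top$, and the remaining steps are the paper's: the plug-in bound, the sparsity-based bias control, the $O(N^4/n^3)$ tuple count giving variance $O(n^{-3})$, and the almost-sure bound $O(N^{-2})\|\mathbf K_2-\mathbf K_1\|_F^2=o(n^{-1})$ for the $\mathbf K_2$ case. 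The differences are cosmetic: the paper handles $\mathbf K_2$ by showing $\hat{\mathbf\Sigma}^{\mathbf K_2}-\hat{\mathbf\Sigma}^{\mathbf K_1}=o_P(n^{-1})$ directly, and your concern about row sums of $\mathbf K_2$ is unnecessary, since Lemma~\ref{lem:complete_neighborhood_size} already gives $|\Lambda_{ij}|=O(N/n)$.
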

\begin{remark}
In finite samples, $\hat{\mathbf\Sigma}_1$ and $\hat{\mathbf\Sigma}_{\mathrm{marg}}$ may perform well even when the sparsity assumption is not fully credible. For example, in our simulation study in the next section, the sparsity assumption is deliberately violated, yet these variance estimators still provide accurate approximations to the sampling variance under cluster-level independent Bernoulli randomization.
\end{remark}
\begin{remark}
We focus on conservative rather than consistent variance estimators because our analysis is conducted within a design-based causal inference framework. In this framework, population-level average potential outcomes can be precisely identified from the randomization distribution, but the discrepancy between population-level and unit-level average potential outcomes cannot be precisely recovered without additional modeling assumptions. As a result, the full variance associated with unit-level average potential outcomes cannot, in general, be consistently estimated.
\end{remark}

For cluster-agnostic estimators under unit-level independent Bernoulli randomization, the construction is analogous to the above, but the kernel matrix is based on $\mathbf K_3$ defined by
\[
(\mathbf K_3)_{ij,i'j'}
=
1\{\mathcal N_{ij}\cap\mathcal N_{i'j'}\neq\varnothing\}.
\]
$\mathbf K_3$ is the dependency matrix corresponding exclusively to interference-induced dependence. Since cluster-level treatment assignment does not create nonzero covariance among the summands of cluster-agnostic estimators, all dependence in $\boldsymbol\Sigma$ is captured by $\mathbf K_3$ in this case.
Unlike $\mathbf K_1$ and $\mathbf K_2$, $\mathbf K_3$ itself cannot be approximated by a positive semidefinite matrix, so we use its positive-semidefinite part $\mathbf K_3^{+}$ as the kernel to build the HAC variance estimators
\[
\hat{\mathbf \Sigma}_{2}
=
\hat{\mathbf\Sigma}^{\mathbf K_3^+},
\]
and
\[
\hat{\mathbf \Sigma}_{2,\mathrm{marg}}
=
\hat{\mathbf\Sigma}_{\mathrm{marg}}^{\mathbf K_3^+}.
\]
In contrast with the variance estimators under cluster-level Bernoulli randomization in Theorem~\ref{thm:var_1}, where the within-cluster kernel $\mathbf K_1$ provides $O(n^{-1})$ asymptotic conservativeness and the relevant sparsity condition (Assumption~\ref{ass:sparsity}) is used only to ensure that the cross-cluster correction in $\mathbf K_2$ does not blow up the bias of $\hat{\mathbf\Sigma}_1$, the cluster-agnostic kernel $\mathbf K_3^+$ targets an asymptotic variance of order $O(N^{-1})$. To guarantee asymptotic \emph{consistency} of $\hat{\mathbf\Sigma}_2$ and $\hat{\mathbf\Sigma}_{2,\mathrm{marg}}$ at this faster rate, we need the random fluctuation of the kernel-based quadratic form to be of strictly smaller order than the squared target variance. This requires an additional condition that controls the cluster-level dependency complexity of $\mathbf K_3^+$, together with a regime condition on the number of clusters, formalized below.
\begin{assumption}[Network density for cluster-agnostic variance estimation]\label{ass:var_cluster_agnostic}
The kernel $\mathbf K_3^+$ inherits the cluster-level dependency sparsity of $\mathbf K_3$ in the sense that
    \[
    \sum_{\substack{((i,j),(i',j'),(k,l),(k',l'))\in\mathcal P^4:\\(\mathcal N_{ij}^{(cl)}\cup\mathcal N_{i'j'}^{(cl)})\cap(\mathcal N_{kl}^{(cl)}\cup\mathcal N_{k'l'}^{(cl)})\neq\varnothing}}\!\!\bigl|(\mathbf K_3^+)_{ij,i'j'}\bigr|\,\bigl|(\mathbf K_3^+)_{kl,k'l'}\bigr|
    =
    o\!\left(N^2\right).
    \]
\end{assumption}
\begin{remark}
The summation on the left-hand side ranges over all four-unit tuples $((i,j),(i',j'),(k,l),(k',l'))\in\mathcal P^4$ for which the joint cluster-level dependency neighborhoods $\mathcal N_{ij}^{(cl)}\cup\mathcal N_{i'j'}^{(cl)}$ and $\mathcal N_{kl}^{(cl)}\cup\mathcal N_{k'l'}^{(cl)}$ overlap. This is precisely the set of four-unit tuples for which the pair of variance summands at $((i,j),(i',j'))$ is not independent of the pair at $((k,l),(k',l'))$, i.e., the tuples whose covariance contributes to the full variance.

To see what this assumption requires, consider the analogous quantity with $\mathbf K_3^+$ replaced by $\mathbf K_3$ itself. Under Assumption~\ref{ass:CLT_regularity}(ii)--(vi), each pair $((i,j),(i',j'))$ in the support of $\mathbf K_3$ has a joint cluster-level dependency neighborhood of size
\[
\Big|\bigcup_{(k'',l''):\,k''\in\mathcal N_{ij}^{(cl)}\cup\mathcal N_{i'j'}^{(cl)}}\mathcal N_{k''l''}\Big|
=O\!\left(\frac{N}{n}\right),
\]
which bounds the number of partner pairs $((k,l),(k',l'))$ in the support of $\mathbf K_3$ sharing at least one cluster-level neighbor with $((i,j),(i',j'))$. Multiplying by the $O(N)$ choices for the first pair, the count of such four-unit tuples is $O(N^2/n)$, which becomes $o(N^2)$ precisely when $n\to\infty$. Assumption~\ref{ass:var_cluster_agnostic} therefore combines two requirements into a single condition: (a) that $\mathbf K_3^+$ inherits this cluster-level dependency sparsity from $\mathbf K_3$, and (b) that the number of clusters $n$ grows.

The reason we require (b) is that, although the cluster-agnostic point estimator itself attains an $O(N^{-1/2})$ convergence rate that does not depend on $n$ (Theorem~\ref{thm:clt}),
its variance estimator $\hat{\mathbf\Sigma}_2$ does not inherit the cluster-agnostic property and therefore requires $n\to\infty$ for consistency.
Intuitively, the summands in the variance estimator, being products of two conditional mean-zero terms, are themselves no longer conditional mean-zero.
\end{remark}
Their theoretical guarantees are summarized in the following theorem.

\begin{theorem}[Variance conservativeness for cluster-agnostic estimators under independent Bernoulli randomization]
\label{thm:var_2}
Suppose Assumptions \ref{ass:cluster_assignment}--\ref{ass:interference},
\ref{ass:weight}, \ref{ass:weight_cluster_agnostic},
\ref{ass:CLT_regularity}(i), (iii)--(vii), \ref{ass:independent_rand}, and
\ref{ass:var_cluster_agnostic} hold. Then the variance estimators
$\hat{\mathbf \Sigma}_{2}$ and $\hat{\mathbf \Sigma}_{2,\mathrm{marg}}$ satisfy
\[
\hat{\mathbf \Sigma}_{2}
=
\mathbf \Sigma(\mathbf C)
+
\mathbf R_{2}(\mathbf{C})
+
o_P(N^{-1}),
\qquad
\hat{\mathbf \Sigma}_{2,\mathrm{marg}}
=
\mathbf \Sigma_{\mathrm{marg}}(\mathbf{C})
+
\mathbf R_{2,\mathrm{marg}}(\mathbf{C})
+
o_P(N^{-1}),
\]
where the data-dependent remainders $\mathbf R_{2}(\mathbf{C})$ and $\mathbf R_{2,\mathrm{marg}}(\mathbf{C})$ are positive semidefinite for every realization of $\mathbf{C}$ and, uniformly over its support, of the order of the asymptotic variance in Theorem~\ref{thm:variance_order}(ii),
\[
\sup_{\mathbf{c}\in\mathrm{supp}(\mathbf{C})}\|\mathbf R_{2}(\mathbf{c})\|
=O(N^{-1}),
\qquad
\sup_{\mathbf{c}\in\mathrm{supp}(\mathbf{C})}\|\mathbf R_{2,\mathrm{marg}}(\mathbf{c})\|
=O(N^{-1}).
\]
Hence $\hat{\mathbf \Sigma}_{2}$ and $\hat{\mathbf \Sigma}_{2,\mathrm{marg}}$ are asymptotically conservative.
\end{theorem}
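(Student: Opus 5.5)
We work conditionally on $\mathbf C=\mathbf c$ throughout and treat the non-marginal case; the marginal case is identical. Write $\mathbf Z_{ij}:=\frac{g_i}{N_i}\mathbf B_{(*,ij)}\circ(Y_{ij}\mathbf 1-\boldsymbol\tau)$ and $\hat{\mathbf Z}_{ij}$ for the same quantity with $\hat{\boldsymbol\tau}$ in place of $\boldsymbol\tau$.

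\textbf{Step 1 (centering).} By Assumption~\ref{ass:weight_cluster_agnostic}, $\mathbb E[\mathbf Z_{ij}\mid\mathbf C]=\mathbf 0$, since $\mathbb E[\beta_{ij}Y_{ij}\mid\mathbf C]=\bar Y_{ij}$ and $\mathbb E[\beta_{ij}\mid\mathbf C]=1$. By Assumption~\ref{ass:independent_rand}(ii), $\mathbf Z_{ij}$ and $\mathbf Z_{i'j'}$ are conditionally independent whenever $\mathcal N_{ij}\cap\mathcal N_{i'j'}=\varnothing$. Hence
\[
\mathbf\Sigma(\mathbf c)=\sum_{(i,j),(i',j')}(\mathbf K_3)_{ij,i'j'}\,\mathbb E[\mathbf Z_{ij}\mathbf Z_{i'j'}^\top\mid\mathbf C=\mathbf c].
\]
The same expectation taken with kernel $\mathbf K_3^+$ equals $\mathbf\Sigma(\mathbf c)+\mathbf R_2(\mathbf c)$, where
\[
\mathbf R_2(\mathbf c):=\sum(\mathbf K_3^+-\mathbf K_3)_{ij,i'j'}\,\mathbb E[\mathbf Z_{ij}\mid\mathbf c]\,\mathbb E[\mathbf Z_{i'j'}\mid\mathbf c]^\top
+\sum(\mathbf K_3^+-\mathbf K_3)_{ij,i'j'}\bigl(\mathbb E[\mathbf Z_{ij}\mathbf Z_{i'j'}^\top\mid\mathbf c]-\text{product of means}\bigr).
\]
The first sum vanishes because the conditional means are zero, so it remains to handle the covariance term. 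The standard HAC conservativeness device of \cite{Leung2025} and \cite{Harshaw2025} is to replace the mean-zero argument by a decomposition $\mathbf Z_{ij}=\mu_{ij}+\epsilon_{ij}$. Here $\mu_{ij}$ is the unidentifiable unit-level piece, i.e.\ the discrepancy between $\boldsymbol\tau_{ij}$ and $\boldsymbol\tau$ scaled by the weights. One then writes $\mathbb E[\hat{\mathbf\Sigma}^{\mathbf K_3^+}\mid\mathbf c]=\mathbf\Sigma(\mathbf c)+\sum(\mathbf K_3^+)\,\mathbb E[\mathbf Z]\mathbb E[\mathbf Z]^\top+\sum(\mathbf K_3^+-\mathbf K_3)\operatorname{Cov}$. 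The piece $(\mathbf K_3^+-\mathbf K_3)=\mathbf K_3^-$ is PSD, and the whole remainder is a PSD quadratic form of the form $\sum(\mathbf K_3^-)_{ij,i'j'}\operatorname{Cov}(\cdot)$.

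To see the PSD claim, rewrite the covariance part as $\mathbb E[\sum(\mathbf K_3^-)_{ij,i'j'}(\mathbf Z_{ij}-\mathbb E\mathbf Z_{ij})(\mathbf Z_{i'j'}-\mathbb E\mathbf Z_{i'j'})^\top]$. This is an expectation of $\mathbf V^\top(\mathbf K_3^-\otimes I)\mathbf V$-type matrices, which are PSD in the Löwner sense because $\mathbf K_3^-\succeq0$. Thus $\mathbf R_2(\mathbf c)\succeq0$ for every $\mathbf c$. The bound $\|\mathbf R_2(\mathbf c)\|=O(N^{-1})$ follows from $\|\mathbf K_3^-\|_{op}\le\|\mathbf K_3\|_{op}\le C_{\mathcal N}^2$ (bounded row sums under Assumption~\ref{ass:CLT_regularity}(i)), together with $\|\mathbf Z_{ij}\|=O(1/N)$ by (iii)--(vi), summed over $N$ units. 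This is uniform in $\mathbf c$.

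\textbf{Step 2 (replacing $\hat{\boldsymbol\tau}$ by $\boldsymbol\tau$).} Expand $\hat{\mathbf Z}_{ij}=\mathbf Z_{ij}-\frac{g_i}{N_i}\mathbf B_{(*,ij)}\circ(\hat{\boldsymbol\tau}-\boldsymbol\tau)$. By Theorems~\ref{thm:variance_order}(ii) and \ref{thm:clt}(ii), $\hat{\boldsymbol\tau}-\boldsymbol\tau=O_P(N^{-1/2})$. The cross and quadratic terms are bounded via Cauchy--Schwarz using $\|\mathbf K_3^+\|_{op}=O(1)$. For example, the quadratic term is at most $\|\mathbf K_3^+\|_{op}\sum_{ij}\|\frac{g_i}{N_i}\mathbf B_{(*,ij)}\|^2\,\|\hat{\boldsymbol\tau}-\boldsymbol\tau\|^2=O(N^{-1})\,O_P(N^{-1})$, and the cross term is $O(N^{-1/2})\,O(N^{-1/2})\,O_P(N^{-1/2})=o_P(N^{-1})$.

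\textbf{Step 3 (concentration).} Show that $\hat{\mathbf\Sigma}^{\mathbf K_3^+}(\boldsymbol\tau)-\mathbb E[\cdot\mid\mathbf C]=o_P(N^{-1})$ by bounding the conditional variance. Note, however, that we must control the unconditional fluctuation $o_P(N^{-1})$ as well, because the statement allows $\mathbf R_2(\mathbf C)$ to be random but not the centering. We therefore bound $\mathbb E\|\hat{\mathbf\Sigma}^{\mathbf K_3^+}-\mathbb E[\cdot\mid\mathbf C]\|_F^2$, which is a fourth-order sum of covariances between pairs $(\mathbf Z_{ij}\mathbf Z_{i'j'}^\top,\mathbf Z_{kl}\mathbf Z_{k'l'}^\top)$, each term $O(N^{-4})$ and weighted by $|(\mathbf K_3^+)_{ij,i'j'}||(\mathbf K_3^+)_{kl,k'l'}|$. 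Since $\mathbf K_3^+$ is dense, conditional independence given $\mathbf C$ is unavailable for generic pairs. Instead we use unconditional independence across units whose cluster neighborhoods are disjoint, from Assumption~\ref{ass:independent_rand}(i)--(ii), which makes the covariance of the products vanish. The surviving tuples are exactly those in Assumption~\ref{ass:var_cluster_agnostic}, giving $o(N^2)\cdot O(N^{-4})=o(N^{-2})$.

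\textbf{Main obstacle.} The decomposition uses $\hat{\mathbf\Sigma}-\mathbf\Sigma(\mathbf C)-\mathbf R_2(\mathbf C)=[\hat{\mathbf\Sigma}-\mathbb E(\hat{\mathbf\Sigma}\mid\mathbf C)]$. The total variance is the expected conditional variance plus the variance of the conditional mean, and the latter is absorbed into the random remainder $\mathbf R_2(\mathbf C)$, so the unconditional Frobenius bound suffices. The delicate part is justifying that the covariance of the products is zero for non-overlapping cluster neighborhoods, even though $\mathbf K_3^+$ links units far apart in the network.
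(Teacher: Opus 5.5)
Your architecture is the paper's: operator-norm bounds on the plug-in terms via $\|\mathbf K_3^+\|\le\|\mathbf K_3\|=O(1)$, the split $\mathbf K_3^+=\mathbf K_3+\mathbf K_3^-$ in the conditional mean, and a fourth-order covariance count controlled by Assumption~\ref{ass:var_cluster_agnostic}. However, Step 1 rests on a false claim: $\mathbb E[\mathbf Z_{ij}\mid\mathbf C]\neq\mathbf 0$ in general. Assumption~\ref{ass:weight_cluster_agnostic} gives $\mathbb E[\beta_{ij}Y_{ij}\mid\mathbf C]=\bar Y_{ij}$, which is the \emph{individual} average. But $\mathbf Z_{ij}$ is centered at the \emph{population} vector $\boldsymbol\tau$. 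Hence $\mathbb E[\mathbf Z_{ij}\mid\mathbf C]=\mu_{ij}:=\frac{g_i}{N_i}(\boldsymbol\tau_{ij}-\boldsymbol\tau)$, which is deterministic but nonzero; only $\sum_{(i,j)}\mu_{ij}=\mathbf 0$.

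This error propagates. Your first display for $\mathbf\Sigma(\mathbf c)$ is off by $\sum(\mathbf K_3)_{ij,i'j'}\mu_{ij}\mu_{i'j'}^\top$. The mean-product sum in your definition of $\mathbf R_2(\mathbf c)$ does not vanish, and it carries the wrong kernel. You do write the correct decomposition one sentence later, but you then summarize the whole remainder as $\sum(\mathbf K_3^-)\operatorname{Cov}$, which drops the mean part again. The correct remainder is
\[
\mathbf R_2(\mathbf c)=\sum_{(i,j),(i',j')}(\mathbf K_3^+)_{ij,i'j'}\,\mu_{ij}\mu_{i'j'}^\top+\sum_{(i,j),(i',j')}(\mathbf K_3^-)_{ij,i'j'}\operatorname{Cov}(\mathbf Z_{ij},\mathbf Z_{i'j'}\mid\mathbf C=\mathbf c).
\]
The first term is positive semidefinite because $\mathbf K_3^+\succeq\mathbf 0$, not because of $\mathbf K_3^-$. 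It is $O(N^{-1})$ by the bound $\|\mathbf K_3^+\|\sum_{(i,j)}\|\mu_{ij}\|^2$. This deterministic heterogeneity term is exactly why the estimator is conservative rather than consistent, so it cannot be discarded. With it restored, your Step 1 coincides with the paper's $\mathbf R_2^{(\boldsymbol\tau)}+\mathbf R_2^{(\mathrm{lift})}(\mathbf C)$.

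Step 3 is valid, but it misdiagnoses the difficulty. Whether $\operatorname{Cov}(Z_{i_1j_1,l}Z_{i_2j_2,l'},\,Z_{i_3j_3,l}Z_{i_4j_4,l'}\mid\mathbf C)$ vanishes depends only on the summands, not on the kernel. Given $\mathbf C$, Assumptions~\ref{ass:cluster_assignment}(ii) and \ref{ass:independent_rand}(ii) make all unit-level assignments independent. So these conditional covariances vanish whenever the unit-level neighborhood unions are disjoint, however far apart $\mathbf K_3^+$ links units. The density of $\mathbf K_3^+$ only affects the weighted count of surviving tuples, and that count is exactly what Assumption~\ref{ass:var_cluster_agnostic} controls. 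The paper argues this way, with a deterministic bound uniform in $\mathbf C$.

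Your detour through unconditional independence, using Assumption~\ref{ass:independent_rand}(i), also works. The quantity you bound, $\mathbb E\|\hat{\mathbf\Sigma}^{\mathbf K_3^+}(\boldsymbol\tau)-\mathbb E[\,\cdot\mid\mathbf C]\|_F^2$, is the expected conditional variance, and this is dominated by the unconditional variance. Nothing needs to be absorbed into $\mathbf R_2(\mathbf C)$.

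Finally, Step 2 should not cite Theorem~\ref{thm:clt}(ii), whose eigenvalue condition is not a hypothesis here. The bound $\|\hat{\boldsymbol\tau}-\boldsymbol\tau\|=O_P(N^{-1/2})$ already follows from Theorem~\ref{thm:variance_order}(ii), Chebyshev's inequality, and the H\'ajek identity.
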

\begin{remark}
Because both the target $\mathbf\Sigma(\mathbf C)$ and the remainder $\mathbf R_2(\mathbf C)$ are defined conditionally on $\mathbf C$, the dependency between the summands of the variance estimator is controlled through the conditional independence induced by the unit-level randomization in Assumption~\ref{ass:independent_rand}(ii).
\end{remark}
\subsection{Variance estimators under complete randomization}\label{subsec:bias_correction}
Variance estimation under cluster-level stratified complete randomization is
complicated by two distinct forms of dependence in the summands of
\(\hat{\boldsymbol\tau}\) and \(\hat{\boldsymbol\tau}_{\mathrm{marg}}\).
The first is sparse but potentially strong dependence among units with
overlapping cluster-level interference neighborhoods; this component is
accounted for by the variance estimators \(\hat{\mathbf\Sigma}_{1}\) and
\(\hat{\mathbf\Sigma}_{1,\mathrm{marg}}\) from the previous subsection. The
second is weak but dense dependence induced by complete randomization. Within
each stratum \(\mathcal I_k\), cluster-level assignments are jointly
correlated, so all units in clusters belonging to the same stratum are linked
through the assignment mechanism. Since only \(O(1)\) strata cover the entire
population, this dependence is too pervasive to be represented by a sparse
kernel matrix in the standard HAC framework.


However, this dense component does not invalidate the HAC approach. 
Instead, for the HAC variance estimators $\hat{\mathbf\Sigma}_{1}$ and $\hat{\mathbf\Sigma}_{1,\mathrm{marg}}$, the bias induced by this dense component is a positive-semidefinite matrix that can be expressed in closed form. 
This characterization allows us to establish conservative properties for two types of variance estimators: the standard HAC estimators in Theorem~\ref{thm:var_1}, and bias-corrected estimators that estimate and subtract the additional term.
Concretely, under complete randomization, the biases of $\hat{\mathbf\Sigma}_{1}$ and $\hat{\mathbf\Sigma}_{1,\mathrm{marg}}$ asymptotically decompose as
\[
\mathbf R_3=\mathbf R_1+\mathbf M,
\qquad
\mathbf R_{3,\mathrm{marg}}=\mathbf R_{1,\mathrm{marg}}+\mathbf M_{\mathrm{marg}},
\]
where $\mathbf R_1,\mathbf R_{1,\mathrm{marg}}\succeq\mathbf 0$ are the intrinsic bias matrices already present under independent Bernoulli randomization, as in Theorem~\ref{thm:var_1}, and $\mathbf M,\mathbf M_{\mathrm{marg}}\succeq\mathbf 0$ are the additional terms generated by the dense dependence that complete randomization induces across disjoint cluster-level neighborhoods.
The positive-semidefiniteness of these additional terms follows from the \emph{negative association} of the treatment variables generated by complete randomization, established by \citet{JoagDev1983}.

To express the additional terms, write
\[
\mathbf V_{ij}=\frac{g_i}{N_i}\mathbf B_{(*,ij)}\circ(Y_{ij}\mathbf 1-\boldsymbol\tau),
\qquad
\mathbf V_{ij}^{\mathrm{marg}}=\frac{g_i}{N_i}\mathbf B_{\mathrm{marg},(*,ij)}\circ(Y_{ij}\mathbf 1-\boldsymbol\tau_{\mathrm{marg}})
\]
for the per-unit summands underlying $\hat{\mathbf\Sigma}_{1}$ and
$\hat{\mathbf\Sigma}_{1,\mathrm{marg}}$, respectively. For each stratum $k$, let
\[
T_{ij,k}=\sum_{i''\in\mathcal N_{ij}^{(cl)}\cap\mathcal I_k}C_{i''}
\]
be the number of treated clusters in stratum $k$ within the cluster-level
neighborhood of unit $(i,j)$, and define
\[
\mathbf s_k=\sum_{(i,j)\in\mathcal P}\mathrm{Cov}\bigl(\mathbf V_{ij},\,T_{ij,k}\bigr),
\qquad
\mathbf s_k^{\mathrm{marg}}=\sum_{(i,j)\in\mathcal P}\mathrm{Cov}\bigl(\mathbf V_{ij}^{\mathrm{marg}},\,T_{ij,k}\bigr).
\]
Then, as derived in the proof of Theorem~\ref{thm:var_3}, the additional terms
admit the closed-form expressions
\[
\mathbf M=\sum_{k:\,p_k\in[p,1-p]}
\frac{\mathbf s_k\mathbf s_k^\top}{|\mathcal I_k|\,p_k(1-p_k)},
\qquad
\mathbf M_{\mathrm{marg}}=\sum_{k:\,p_k\in[p,1-p]}
\frac{\mathbf s_k^{\mathrm{marg}}(\mathbf s_k^{\mathrm{marg}})^\top}
{|\mathcal I_k|\,p_k(1-p_k)},
\]
where $p_k=I_k/|\mathcal I_k|$ is the known treated fraction in stratum $k$.

We next construct the bias-corrected estimators. Since
\[
\mathbb E\,T_{ij,k}=m_{ij,k}p_k,
\qquad
m_{ij,k}:=|\mathcal N_{ij}^{(cl)}\cap\mathcal I_k|,
\]
we have
\[
\mathrm{Cov}(\mathbf V_{ij},T_{ij,k})
=
\mathbb E\!\left[\mathbf V_{ij}(T_{ij,k}-m_{ij,k}p_k)\right],
\]
and analogously for \(\mathbf V_{ij}^{\mathrm{marg}}\). Replacing the infeasible
summands by their observable counterparts,
\[
\hat{\mathbf V}_{ij}
=
\frac{g_i}{N_i}\mathbf B_{(*,ij)}\circ(Y_{ij}\mathbf 1-\hat{\boldsymbol\tau}),
\qquad
\hat{\mathbf V}_{ij}^{\mathrm{marg}}
=
\frac{g_i}{N_i}\mathbf B_{\mathrm{marg},(*,ij)}
\circ(Y_{ij}\mathbf 1-\hat{\boldsymbol\tau}_{\mathrm{marg}}),
\]
we estimate
\[
\hat{\mathbf s}_k
:=
\sum_{(i,j)\in\mathcal P}
\bigl(T_{ij,k}-m_{ij,k}p_k\bigr)\hat{\mathbf V}_{ij},
\qquad
\hat{\mathbf s}_k^{\mathrm{marg}}
:=
\sum_{(i,j)\in\mathcal P}
\bigl(T_{ij,k}-m_{ij,k}p_k\bigr)\hat{\mathbf V}_{ij}^{\mathrm{marg}}.
\]
Substituting these plug-in estimates into the closed-form expressions for
\(\mathbf M\) and \(\mathbf M_{\mathrm{marg}}\) gives
\[
\hat{\mathbf M}
:=
\sum_{k:\,p_k\in[p,1-p]}
\frac{\hat{\mathbf s}_k\hat{\mathbf s}_k^\top}
{|\mathcal I_k|\,p_k(1-p_k)},
\qquad
\hat{\mathbf M}_{\mathrm{marg}}
:=
\sum_{k:\,p_k\in[p,1-p]}
\frac{\hat{\mathbf s}_k^{\mathrm{marg}}
(\hat{\mathbf s}_k^{\mathrm{marg}})^\top}
{|\mathcal I_k|\,p_k(1-p_k)}.
\]
The resulting bias-corrected variance estimators are
\[
\hat{\boldsymbol\Sigma}_{3}
:=
\hat{\mathbf\Sigma}_1-\hat{\mathbf M},
\qquad
\hat{\boldsymbol\Sigma}_{3,\mathrm{marg}}
:=
\hat{\mathbf\Sigma}_{1,\mathrm{marg}}-\hat{\mathbf M}_{\mathrm{marg}}.
\]
Because \(\hat{\mathbf M},\hat{\mathbf M}_{\mathrm{marg}}\succeq \mathbf 0\),
the correction removes the additional variance inflation due to complete
randomization while retaining the intrinsic conservative bias
\(\mathbf R_1\) and \(\mathbf R_{1,\mathrm{marg}}\).

The following theorem formalizes these conservativeness properties for both the
HAC estimator and the bias-corrected estimator.
\begin{theorem}[Variance conservativeness for general estimators under complete randomization]
\label{thm:var_3}
Suppose Assumptions \ref{ass:cluster_assignment}--\ref{ass:interference},
\ref{ass:weight}, \ref{ass:CLT_regularity}(ii)-(vi),
\ref{ass:complete_rand}(i), and \ref{ass:sparsity} hold.
\begin{enumerate}
\item[(i)] \emph{(HAC estimator.)} For each $t\in\{1,2\}$, the component estimator $\hat{\mathbf\Sigma}^{\mathbf K_t}$ satisfies
\[
\hat{\mathbf\Sigma}^{\mathbf K_t}
=
\mathbf\Sigma + \mathbf R_3+o_P(n^{-1}),
\qquad
\hat{\mathbf\Sigma}^{\mathbf K_t}_{\mathrm{marg}}
=
\mathbf\Sigma_{\mathrm{marg}} + \mathbf R_{3,\mathrm{marg}}+o_P(n^{-1}),
\]
where the bias matrices decompose as
\[
\mathbf R_3=\mathbf R_1+\mathbf M,
\qquad
\mathbf R_{3,\mathrm{marg}}=\mathbf R_{1,\mathrm{marg}}+\mathbf M_{\mathrm{marg}},
\]
with $\mathbf R_1,\mathbf M\succeq\mathbf 0$ (and likewise $\mathbf R_{1,\mathrm{marg}},\mathbf M_{\mathrm{marg}}\succeq\mathbf 0$) both of order $O(n^{-1})$, the order of the asymptotic variance in Theorem~\ref{thm:variance_order}(i). Here $\mathbf R_1$ is the intrinsic bias matrix of Theorem~\ref{thm:var_1} and $\mathbf M$ is the additional dense-dependence term. Consequently $\hat{\mathbf\Sigma}_1$ and $\hat{\mathbf\Sigma}_{1,\mathrm{marg}}$ are asymptotically conservative.
\item[(ii)] \emph{(Bias-corrected estimator.)} The bias-corrected estimators $\hat{\boldsymbol\Sigma}_3=\hat{\mathbf\Sigma}_1-\hat{\mathbf M}$ and $\hat{\boldsymbol\Sigma}_{3,\mathrm{marg}}=\hat{\mathbf\Sigma}_{1,\mathrm{marg}}-\hat{\mathbf M}_{\mathrm{marg}}$ satisfy
\[
\hat{\boldsymbol\Sigma}_3
=
\mathbf\Sigma + \mathbf R_1+o_P(n^{-1}),
\qquad
\hat{\boldsymbol\Sigma}_{3,\mathrm{marg}}
=
\mathbf\Sigma_{\mathrm{marg}} + \mathbf R_{1,\mathrm{marg}}+o_P(n^{-1}).
\]
That is, $\hat{\boldsymbol\Sigma}_3$ and $\hat{\boldsymbol\Sigma}_{3,\mathrm{marg}}$ remove the dense-dependence terms $\mathbf M$ and $\mathbf M_{\mathrm{marg}}$ and remain asymptotically conservative, with the same intrinsic bias matrices $\mathbf R_1$ and $\mathbf R_{1,\mathrm{marg}}$ as under independent Bernoulli randomization (Theorem~\ref{thm:var_1}).
\end{enumerate}
\end{theorem}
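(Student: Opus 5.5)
We follow the strategy behind Theorem~\ref{thm:var_1} and add one new ingredient: a closed-form expression for how complete randomization changes the covariance between units whose cluster-level neighborhoods are disjoint.

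\textbf{Step 1: replace $\hat{\boldsymbol\tau}$ by $\boldsymbol\tau$.} Write $\hat{\mathbf V}_{ij}=\mathbf V_{ij}+\frac{g_i}{N_i}\mathbf B_{(*,ij)}\circ(\boldsymbol\tau-\hat{\boldsymbol\tau})\mathbf 1$. Theorem~\ref{thm:variance_order}(i) gives $\hat{\boldsymbol\tau}-\boldsymbol\tau=O_P(n^{-1/2})$. Assumptions~\ref{ass:CLT_regularity}(iv)--(vi) give $\frac{g_i}{N_i}=O(1/N)$. We will show that the resulting cross terms in $\hat{\mathbf\Sigma}^{\mathbf K_t}$ and in $\hat{\mathbf s}_k$ are $o_P(n^{-1})$ and $o_P(n^{-1/2})$, respectively. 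This makes $\hat{\mathbf M}-\tilde{\mathbf M}=o_P(n^{-1})$, where $\tilde{\mathbf M}$ denotes the infeasible version built from the $\mathbf V_{ij}$.

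\textbf{Step 2: decompose the expectation of the infeasible HAC estimator.} By Assumption~\ref{ass:sparsity}, $\mathbf K_1$ and $\mathbf K_2$ give the same limit up to $o(n^{-1})$, so it suffices to treat $\mathbf K_2$. We have
\[
\mathbb E\,\tilde{\mathbf\Sigma}^{\mathbf K_2}-\mathbf\Sigma=\sum_{\mathbf K_2=1}\mathbb E\mathbf V_{ij}\mathbb E\mathbf V_{i'j'}^\top-\sum_{\mathbf K_2=0}\mathrm{Cov}(\mathbf V_{ij},\mathbf V_{i'j'}).
\]
The first sum is exactly the $\mathbf R_1$ of Theorem~\ref{thm:var_1}. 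Its positive semidefiniteness follows as in that theorem: by unbiasedness, $\mathbb E\mathbf V_{ij}=\frac{g_i}{N_i}(\boldsymbol\tau_{ij}-\boldsymbol\tau)\circ$weights, and the within-cluster block kernel is PSD up to the $o(n^{-1})$ sparsity correction. Under Bernoulli assignment the second sum would vanish. Under complete randomization it does not vanish, and we must show it equals $-\mathbf M+o(n^{-1})$.

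\textbf{Step 3: the key computation.} Fix a pair with disjoint cluster neighborhoods $A=\mathcal N_{ij}^{(cl)}$ and $A'=\mathcal N_{i'j'}^{(cl)}$. Write $\mathbf V_{ij}=f(\mathbf C_A,\mathbf W_{\cdot})$; conditional on $\mathbf C$, the $\mathbf W$'s in disjoint clusters are independent, so the covariance reduces to one between $f_A(\mathbf C_A):=\mathbb E[\mathbf V_{ij}\mid \mathbf C]$ and $f_{A'}(\mathbf C_{A'})$. Under stratified complete randomization, conditioning on $\mathbf C_A$ shifts the law of $\mathbf C_{A'}$ only through the counts $T_{ij,k}$. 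A first-order (hypergeometric) expansion gives
\[
\mathrm{Cov}(f_A,f_{A'})=-\sum_k\frac{\mathrm{Cov}(f_A,T_{ij,k})\,\mathrm{Cov}(f_{A'},T_{i'j',k})^\top}{|\mathcal I_k|p_k(1-p_k)}+O\!\Big(\frac{1}{N^2n^2}\Big),
\]
using that $|A|,|A'|\le C_{\mathcal N^{(cl)}}$ and $|\mathcal I_k|=\Omega(n)$.

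Summing over all disjoint pairs yields $-\mathbf M$, since $\mathbf s_k=\sum\mathrm{Cov}(\mathbf V_{ij},T_{ij,k})$. The correction for the $\mathbf K_2=1$ pairs that are excluded from this sum is negligible: there are $O(N^2/n)$ such pairs, each contributing $O(N^{-2}n^{-1})$. The second-order remainder summed over all $N^2$ pairs is $O(n^{-2})=o(n^{-1})$. Positive semidefiniteness of $\mathbf M$ is immediate from its outer-product form. Finally, Cauchy--Schwarz with $\mathrm{Var}(T_{ij,k})=O(1)$ gives $\|\mathbf s_k\|=O(n^{-1/2})$, hence $\mathbf M=O(n^{-1})$.

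\textbf{Step 4: concentration.} We need $\tilde{\mathbf\Sigma}^{\mathbf K_2}-\mathbb E\tilde{\mathbf\Sigma}^{\mathbf K_2}=o_P(n^{-1})$ and $\tilde{\mathbf s}_k-\mathbf s_k=o_P(n^{-1/2})$. Both follow from second-moment bounds. For the fourth-order sums, we split quadruples into those with overlapping cluster neighborhoods, which are counted via Assumption~\ref{ass:CLT_regularity}(ii) and Assumption~\ref{ass:sparsity}, and the rest, whose covariances under complete randomization are $O(1/n)$ by the same expansion as Step 3. For the linear statistic $\tilde{\mathbf s}_k$, the variance is a sum of $N^2$ terms of size $O(N^{-2})$, but only $O(N^2/n)$ of them are non-negligible; this gives $\mathrm{Var}=O(1/n)$. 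That is only $O_P(n^{-1/2})$, so we must refine. We rewrite $\tilde{\mathbf s}_k$ as a sum of $\mathbf V_{ij}$ times a centered local count and use the fact that its fluctuation is a degenerate-type quadratic in the $C$'s. Then $\hat{\mathbf s}_k\hat{\mathbf s}_k^\top$ has expectation $\mathbf s_k\mathbf s_k^\top+O(n^{-1})\cdot(1/|\mathcal I_k|)$-type corrections, and dividing by $|\mathcal I_k|$ makes everything $o(n^{-1})$.

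\textbf{Conclusion and main obstacle.} Combining the steps gives $\hat{\mathbf\Sigma}^{\mathbf K_t}=\mathbf\Sigma+\mathbf R_1+\mathbf M+o_P(n^{-1})$. The Löwner-maximum argument from Theorem~\ref{thm:var_1} then yields part (i), and subtracting $\hat{\mathbf M}=\mathbf M+o_P(n^{-1})$ yields part (ii). The main obstacle is Step 3: establishing the uniform hypergeometric covariance expansion for a general, non-symmetric $f_A$ with an $O(n^{-2})$ remainder. I would attack it by exact conditioning on $\mathbf C_A$, writing $P(\mathbf C_{A'}\mid\mathbf C_A)$ as a ratio of binomial coefficients, and Taylor-expanding in $T_{ij,k}-m_{ij,k}p_k$.
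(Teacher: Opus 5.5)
Your architecture matches the paper's: expand $\hat{\mathbf V}_{ij}=\mathbf V_{ij}-\mathbf D_{ij}\circ\boldsymbol\delta$, show the leading quadratic form has mean $\mathbf\Sigma+\mathbf R_1$ minus the cross-covariances of pairs with disjoint cluster-level neighborhoods, identify that omitted sum as $-\mathbf M+o(n^{-1})$, then prove concentration and consistency of $\hat{\mathbf M}$. However, your treatment of $\hat{\mathbf M}$ rests on a wrong order for $\mathbf s_k$.

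In Step~3 you claim $\|\mathbf s_k\|=O(n^{-1/2})$ by Cauchy--Schwarz. That bound would hold for $\mathrm{Cov}(\sum_{(i,j)}\mathbf V_{ij},T)$ with a single common count $T$. But $T_{ij,k}$ changes with the unit, so termwise Cauchy--Schwarz gives only $\|\mathrm{Cov}(\mathbf V_{ij},T_{ij,k})\|=O(N^{-1})$, hence $\|\mathbf s_k\|=O(1)$. This is sharp in general. For the IPT weights of Proposition~\ref{prop:examples_of_weights}(iv), the $\phi_1$-coordinate of $\mathbf s_k^{\mathrm{marg}}$ is exactly $(1-p_k)\sum_{(i,j)}\frac{g_i}{N_i}m_{ij,k}\bigl(\bar Y_{ij}(\phi_1)-\bar Y(\phi_1)\bigr)$. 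This is of order one whenever $m_{ij,k}$ is correlated with $\bar Y_{ij}(\phi_1)$ across units. Your bound also contradicts your conclusion. With $\|\mathbf s_k\|=O(n^{-1/2})$ and the factor $1/|\mathcal I_k|$ you would get $\mathbf M=O(n^{-2})$, a negligible term needing no correction. It is precisely $\|\mathbf s_k\|\asymp 1$ that puts $\mathbf M$ at the same order $n^{-1}$ as $\mathbf\Sigma$.

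The mis-scaling leads you to two targets: that the plug-in error in $\hat{\mathbf s}_k$ is $o_P(n^{-1/2})$ (Step~1), and that $\tilde{\mathbf s}_k-\mathbf s_k=o_P(n^{-1/2})$ (Step~4). Both fail in general.
\begin{itemize}
\item The difference $\hat{\mathbf s}_k-\tilde{\mathbf s}_k=-\sum_{(i,j)}(T_{ij,k}-m_{ij,k}p_k)\mathbf D_{ij}\circ\boldsymbol\delta$ is the coordinatewise product of $\boldsymbol\delta$, typically of exact order $n^{-1/2}$, with an $O_P(1)$ vector whose mean $\sum_{(i,j)}\mathrm{Cov}(T_{ij,k},\mathbf D_{ij})$ is generically nonzero.
\item The fluctuation $\tilde{\mathbf s}_k-\mathbf s_k$ is a centered sum of local terms with a nondegenerate linear part. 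It genuinely fluctuates at order $n^{-1/2}$, so the ``degenerate-type quadratic'' refinement cannot be carried out.
\end{itemize}

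None of this is needed. Since $\|\mathbf s_k\|=O(1)$, it suffices that $\hat{\mathbf s}_k=\mathbf s_k+o_P(1)$. Then $\hat{\mathbf s}_k\hat{\mathbf s}_k^\top=\mathbf s_k\mathbf s_k^\top+o_P(1)$. Dividing by $|\mathcal I_k|p_k(1-p_k)=\Omega(n)$ and summing over the $K=O(1)$ nondegenerate strata gives $\hat{\mathbf M}=\mathbf M+o_P(n^{-1})$. Your own estimates $\mathrm{Var}(\tilde{\mathbf s}_k)=O(n^{-1})$ and $\hat{\mathbf s}_k-\tilde{\mathbf s}_k=O_P(n^{-1/2})$ already deliver this, and that is exactly the paper's argument.

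Separately, Step~3's covariance expansion is only sketched. The paper proves it as Lemma~\ref{lem:scr_covariance}: a multilinear expansion in the binary $C_i$, exact falling-factorial monomial moments, first-order coefficients identified through the score of an auxiliary independent-Bernoulli law, and a total-variation step back to complete randomization. Your conditioning route is a legitimate and more transparent alternative yielding the same coefficient. Given $\mathbf C_{\mathcal N_{ij}^{(cl)}}$, each stratum is uniform with effective treated fraction $p_k-(T_{ij,k}-m_{ij,k}p_k)/(|\mathcal I_k|-m_{ij,k})$, and you Taylor-expand. It still has to be carried out, with two requirements:
\begin{itemize}
\item The remainder must be uniform over bounded $f_A,f_{A'}$ and admissible blocks; only $o(N^{-2}n^{-1})$ per pair is needed, not your $O(N^{-2}n^{-2})$.
\item The sum must be restricted to strata with $p_k\in[p,1-p]$, since your formula divides by $p_k(1-p_k)$.
\end{itemize}
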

\begin{remark}
  The Neyman--Jackknife technique based on Gibbs rerandomization recently proposed by \cite{Park2026} may offer an alternative route to constructing general variance estimators under complete randomization. We do not pursue this approach here and leave its formal development to future work.
  \end{remark}

\section{Simulation study}

We study the finite-sample performance of the proposed estimators through Monte Carlo simulations based on the spatial network design of \citet{Leung2025}, extended to accommodate two cluster-level randomization schemes and varying levels of cross-cluster interference.

\subsection{Data-generating process}

\paragraph{Cluster structure and interference network.}
We independently draw $N$ unit locations uniformly from the square
$[-\sqrt{N\alpha},\sqrt{N\alpha}]^2$, with $\alpha=0.6$, and partition the units into
$n=\lfloor N^{2/3}\rfloor$ spatially contiguous clusters using the $k$-medoids algorithm of
\cite{Leung2025} based on the Euclidean distance matrix. Cluster-level quantities are aggregated using weights $g_i=N_i/N$, so that each individual unit receives equal weight.

We construct the interference network as the union $G=G_{\mathrm{geo}}\cup G_{\mathrm{ER}}$ of two independently generated components. The first component is a \emph{geometric graph} $G_{\mathrm{geo}}$, in which two units are connected if and only if their Euclidean distance is at most $r=1.5$. The second component is an \emph{Erd\H{o}s--R\'enyi graph} $G_{\mathrm{ER}}$, in which each pair of units is connected independently with probability
$p_{\mathrm{extra}}=\frac{\bar d}{N-1}\rho$, where $\bar d$ is the average degree of $G_{\mathrm{geo}}$ and $\rho\in\{0,0.5,1.0\}$ controls the relative density of the Erd\H{o}s--R\'enyi component. For each unit, the interference neighborhood $\mathcal{N}_{ij}$ is defined as the set of its neighbors in the union graph $G$, and $A$ denotes the adjacency matrix of $G$. By construction, all of these network designs exhibit cross-cluster interference, with larger values of \(\rho\) generating denser cross-cluster connections and hence stronger cross-cluster interference.

For each simulation setup, the cluster assignments and the union interference network $G$ are generated once and then held fixed across simulation replications. When $\rho=0.5$ or $\rho=1.0$, the Erd\H{o}s--R\'enyi component introduces substantial cross-cluster interference, thereby violating Assumption~\ref{ass:sparsity}. The other regularity conditions in Assumption~\ref{ass:CLT_regularity}, however, continue to hold for the resulting networks and cluster structures.

\paragraph{Cluster-level treatment assignment.}
We consider two different randomization schemes of cluster-level treatment assignment.
\begin{enumerate}
  \item[(i)] \textbf{Bernoulli cluster randomization.} Each cluster is independently assigned $C_i\overset{\mathrm{iid}}{\sim}\mathrm{Bernoulli}(0.7)$, so that the expected proportion of treated clusters is $70\%$.

  \item[(ii)] \textbf{Complete cluster randomization.} Exactly $\lfloor 0.7n\rfloor$ clusters are sampled for treatment uniformly at random without replacement. Unlike (i), the fixed treatment count induces negative correlation among cluster-level assignments within each experiment, a dependence structure absent under Bernoulli randomization.
\end{enumerate}

\paragraph{Individual-level treatment assignment.}
In both schemes, the individual treatment regime is
\[
  \phi_1:\quad W_{ij}\mid C_i=1 \overset{\mathrm{iid}}{\sim} \mathrm{Bernoulli}(0.5),
  \qquad
  \phi_0:\quad W_{ij}\mid C_i=0 \equiv 0,
\]
that is, units in treated clusters each receive treatment independently with probability $1/2$, while units in control clusters receive no treatment.

\paragraph{Outcome model.}
Unit-level treatment effects are governed by two sets of random coefficients,
$\beta_{ij}\overset{\mathrm{iid}}{\sim}\mathcal{N}(2,1)$ and
$\gamma_{ij}\overset{\mathrm{iid}}{\sim}\mathcal{N}(1,1)$, which are drawn once and then held fixed across simulation replications. These coefficients allow both the spillover effects induced by treated neighbors and the interactions between own treatment and neighborhood exposure to vary across units. We also generate the outcome error vector as
$\boldsymbol\epsilon=(\mathbf I+\mathbf D^{-1}\mathbf A)\boldsymbol\eta$, where
$\boldsymbol\eta\overset{\mathrm{iid}}{\sim}\mathcal{N}(\mathbf 0,\mathbf I)$ is redrawn in each replication and
$\mathbf D=\mathrm{diag}(|\mathcal N_{ij}|)$ is the degree-normalizing matrix. This construction induces network autocorrelation in $\boldsymbol\epsilon$. Observed outcomes are given by
\[
  Y_{ij} = -1
    + \sum_{(i',j')\in\mathcal{N}_{ij}} \beta_{i'j'}\,W_{i'j'}
    + W_{ij}\!\sum_{(i',j')\in\mathcal{N}_{ij}} \gamma_{i'j'}\,W_{i'j'}
    + \epsilon_{ij},
\]
so the outcome model incorporates heterogeneous spillover effects and treatment--exposure interactions. Although the random error term goes beyond the standard design-based causal inference framework, we include it to assess the performance of the proposed procedures in a more general data-generating environment. The counterfactual marginal average potential outcomes under $\phi_c$ are then computed in closed form as
\[
Y_{ij}(\phi_c)=\mathbb{E}_{\mathbf W\sim P_{\phi_c}}\left[-1
    + \sum_{(i',j')\in\mathcal{N}_{ij}} \beta_{i'j'}\,W_{i'j'}
    + W_{ij}\!\sum_{(i',j')\in\mathcal{N}_{ij}} \gamma_{i'j'}\,W_{i'j'}\right].
\]

\paragraph{Estimand.}
The target is the \emph{population-level overall causal effect}
\[
  \mathrm{CE}^{O}(\phi_1,\phi_0) = \bar{Y}(\phi_1) - \bar{Y}(\phi_0),
\]
aggregated with cluster weights $g_i = N_i/N$ so that each individual receives equal weight.

\paragraph{Experimental design.}
Each scheme is run for $N\in\{500,\,1000,\,2000\}$ and $\rho\in\{0,\,0.5,\,1.0\}$, giving $3\times 3=9$ parameter combinations per scheme, with $2{,}000$ Monte Carlo replications each (random seed 123).

\subsection{Estimators}

We compare four estimators of $\mathrm{CE}^O(\phi_1,\phi_0)$. Each is constructed by first estimating the vector of marginal population average potential outcomes with H\'ajek LW estimator
\[
\hat{\boldsymbol\tau}_{\mathrm{marg}}
=
\bigl(\hat\tau(\phi_1),\,\hat\tau(\phi_0)\bigr)^\top
=
\left(\sum_{(i,j)}\frac{g_i}{N_i}(\mathbf B_{\mathrm{marg}})_{(*,ij)}\right)^{\!\circ -1}
\circ
\sum_{(i,j)}\frac{g_i}{N_i}Y_{ij}(\mathbf B_{\mathrm{marg}})_{(*,ij)},
\]
and then forming
\[
\widehat{\mathrm{CE}}^O(\phi_1,\phi_0)
=
\mathbf c^\top\hat{\boldsymbol\tau}_{\mathrm{marg}},
\quad
\mathbf c=(1,-1)^\top.
\]
Inference is based on the scalar variance estimate $\mathbf c^\top\hat{\mathbf\Sigma}_{\mathrm{marg}}\,\mathbf c$, where $\hat{\mathbf\Sigma}_{\mathrm{marg}}$ is a $2\times 2$ conservative estimator of $\mathrm{Var}(\hat{\boldsymbol\tau}_{\mathrm{marg}})$. The four estimators differ solely in their choice of identification weight matrix $\mathbf B_{\mathrm{marg}}$; the corresponding choice of $\hat{\mathbf\Sigma}_{\mathrm{marg}}$ is described for each estimator below.

\paragraph{Difference-in-means estimator.}
This serves as a reference baseline that does not account for cross-cluster interference. The identification weights depend only on the cluster-level assignment of unit $(i,j)$'s own cluster:
\[
\beta_{ij}(\phi_1)=\frac{\mathbf 1(C_i=1)}{P(C_i=1)},
\qquad
\beta_{ij}(\phi_0)=\frac{\mathbf 1(C_i=0)}{P(C_i=0)}.
\]
Under the cluster weights \(g_i=N_i/N\), the resulting H\'ajek estimator reduces to the difference in mean outcomes between treated and control clusters. It is biased under cross-cluster interference. For variance estimation, we use $\hat{\mathbf\Sigma}^{\mathbf K_1}_{\mathrm{marg}}$, the HAC estimator with the within-cluster interference kernel $\mathbf K_1$; this is equivalent to the standard cluster-robust HAC variance estimator.

\paragraph{Inverse-probability-of-treatment weighted (IPTW) estimator (Proposition~\ref{prop:examples_of_weights}(iv)).}
Proposed by \citet{Leung2025}, this is the H\'ajek LW estimator with identification weights
\[
\beta_{ij}(\phi_c)
=
\frac{\mathbf 1(\mathbf C_{\mathcal N_{ij}}=c\mathbf{1})}{P(\mathbf C_{\mathcal N_{ij}}=c\mathbf{1})}.
\]
Under both cluster-level independent Bernoulli randomization and complete randomization,
Equation~\eqref{eq:transportability} holds, so the identification weights yield an
asymptotically unbiased estimator. Under complete randomization, this condition is
satisfied because the maximum number of neighboring clusters is smaller than both the
number of treated clusters and the number of control clusters. The same argument also
establishes asymptotic unbiasedness for the next two estimators. The estimator is
asymptotically normal at the \(O(n^{-1/2})\) rate. For variance estimation, we use
\(
\hat{\mathbf\Sigma}_{1,\mathrm{marg}}
=
\hat{\mathbf\Sigma}^{\mathbf K_1}_{\mathrm{marg}}
\vee
\hat{\mathbf\Sigma}^{\mathbf K_2}_{\mathrm{marg}},
\)
the L\"owner maximum of the within-cluster and cross-cluster HAC estimators
(Theorem~\ref{thm:var_1}), consistent with the variance estimator used in
\citet{Leung2025}.

\paragraph{Marginal Radon--Nikodym derivative (MRN) estimator (Proposition~\ref{prop:examples_of_weights}(i)).}
This is the H\'ajek LW estimator with identification weights
\[
\beta_{ij}(\phi_c) = \alpha_{ij}(\phi_c),
\]
where $\alpha_{ij}(\phi_c)$ is the marginal Radon--Nikodym derivative defined in Theorem~\ref{thm:all_weights}. This estimator is asymptotically
unbiased and attains asymptotic normality at the \(O(n^{-1/2})\) rate as well. Moreover, Theorem~\ref{thm:all_weights} shows that $\alpha_{ij}(\phi_c)$ has smaller variance than the IPT weight for each unit $(i,j)$. Therefore, using these marginal weights may yield an LW estimator with lower variance. Under Bernoulli cluster randomization, the variance estimator is $\hat{\mathbf\Sigma}_{1,\mathrm{marg}}$ (Theorem~\ref{thm:var_1}). 
Under complete cluster randomization, we consider both variance estimators of Section~\ref{subsec:bias_correction}: the uncorrected HAC estimator $\hat{\mathbf\Sigma}_{1,\mathrm{marg}}$ and the bias-corrected estimator $\hat{\boldsymbol\Sigma}_{3,\mathrm{marg}}=\hat{\mathbf\Sigma}_{1,\mathrm{marg}}-\hat{\mathbf M}_{\mathrm{marg}}$. 
In Table~\ref{tab:sim_hypergeom}, we distinguish the average standard errors and CI coverages computed from these two variance estimators using the labels HAC and BC, respectively.

\paragraph{Complete Radon--Nikodym derivative (CRN) estimator (Proposition~\ref{prop:examples_of_weights}(ii)).}
This is the H\'ajek LW estimator with identification weights
\[
\beta_{ij}(\phi_c) = \alpha_{ij}^{(\mathrm{comp})}(\phi_c).
\]
As a cluster-agnostic estimator, it attains the faster $O(N^{-1/2})$ rate by Theorem~\ref{thm:clt}. However, since the weights depend jointly on unit-level treatments $\mathbf{W}_{\mathcal{N}_{ij}}$ and cluster-level assignments $\mathbf{C}_{\mathcal{N}_{ij}}$, they have higher variance, particularly when cross-cluster interference is strong. The variance estimator is $\hat{\mathbf\Sigma}_{2,\mathrm{marg}}$ (Theorem~\ref{thm:var_2}), applied under both Bernoulli and complete cluster randomization.

\subsection{Results}

Tables~\ref{tab:sim_binom} and~\ref{tab:sim_hypergeom} report results for $\widehat{\mathrm{CE}}^O(\phi_1,\phi_0)$ under Bernoulli and complete cluster randomization, respectively (2{,}000 Monte Carlo replications). 
Within each estimator block, RMSE is the root mean squared error; SE is the average estimated standard error from the variance estimator; SE$^*$ is the oracle standard error of the estimator, obtained as the standard deviation across simulation draws; CI~SE and CI~SE$^*$ are empirical coverage rates of the nominal 95\% confidence interval using SE and SE$^*$ as the standard error, respectively. 
For the MRN estimator under complete randomization (Table~\ref{tab:sim_hypergeom}), the SE and CI rows are further split into the HAC and BC variants introduced in Section~\ref{subsec:bias_correction}.

\begin{table}[H]
\centering
\caption{Simulation results under \textbf{Bernoulli cluster randomization} ($n=\lfloor N^{2/3}\rfloor$, $C_i\overset{\mathrm{iid}}{\sim}\mathrm{Bernoulli}(0.7)$). $\phi_1=\mathrm{Bernoulli}(0.5)$ vs.\ $\phi_0\equiv 0$. Overall effect $\mathrm{CE}^O(\phi_1,\phi_0)$.}
\label{tab:sim_binom}
\small
\begin{tabular}{l rrr rrr rrr}
\toprule
&
  \multicolumn{3}{c}{$N=500$ ($n=63$)} &
  \multicolumn{3}{c}{$N=1000$ ($n=100$)} &
  \multicolumn{3}{c}{$N=2000$ ($n=159$)} \\
\cmidrule(lr){2-4}\cmidrule(lr){5-7}\cmidrule(lr){8-10}
$\rho$ & 0.0 & 0.5 & 1.0 & 0.0 & 0.5 & 1.0 & 0.0 & 0.5 & 1.0 \\
\midrule
Difference-in-means estimator\\
RMSE      & 0.612 & 1.924 & 3.551 & 0.430 & 1.888 & 3.356 & 0.380 & 1.895 & 3.498 \\
CI SE     & 0.880 & 0.034 & 0.000 & 0.872 & 0.000 & 0.000 & 0.774 & 0.000 & 0.000 \\
CI SE$^*$ & 0.858 & 0.033 & 0.000 & 0.836 & 0.000 & 0.000 & 0.715 & 0.000 & 0.000 \\
SE        & 0.482 & 0.486 & 0.525 & 0.329 & 0.344 & 0.367 & 0.243 & 0.255 & 0.269 \\
SE$^*$    & 0.454 & 0.483 & 0.536 & 0.307 & 0.345 & 0.386 & 0.219 & 0.246 & 0.272 \\
\midrule
IPTW estimator \citep{Leung2025}\\
RMSE      & 0.512 & 0.736 & 1.115 & 0.328 & 0.547 & 0.836 & 0.235 & 0.412 & 0.678 \\
CI SE     & 0.970 & 0.948 & 0.898 & 0.964 & 0.951 & 0.920 & 0.972 & 0.957 & 0.924 \\
CI SE$^*$ & 0.955 & 0.956 & 0.942 & 0.947 & 0.950 & 0.958 & 0.948 & 0.951 & 0.953 \\
SE        & 0.558 & 0.712 & 0.934 & 0.355 & 0.548 & 0.729 & 0.261 & 0.425 & 0.620 \\
SE$^*$    & 0.512 & 0.735 & 1.109 & 0.328 & 0.547 & 0.833 & 0.235 & 0.411 & 0.674 \\
\midrule
MRN estimator\\
RMSE      & 0.456 & 0.613 & 0.848 & 0.306 & 0.465 & 0.625 & 0.228 & 0.352 & 0.488 \\
CI SE     & 0.979 & 0.965 & 0.939 & 0.962 & 0.965 & 0.944 & 0.969 & 0.969 & 0.968 \\
CI SE$^*$ & 0.958 & 0.957 & 0.950 & 0.945 & 0.954 & 0.944 & 0.951 & 0.952 & 0.954 \\
SE        & 0.521 & 0.636 & 0.805 & 0.333 & 0.491 & 0.620 & 0.249 & 0.385 & 0.532 \\
SE$^*$    & 0.456 & 0.613 & 0.846 & 0.306 & 0.464 & 0.623 & 0.228 & 0.352 & 0.487 \\
\midrule
CRN estimator\\
RMSE      & 0.621 & 1.088 & 1.697 & 0.469 & 0.975 & 1.483 & 0.419 & 0.953 & 1.537 \\
CI SE     & 0.919 & 0.842 & 0.750 & 0.906 & 0.808 & 0.658 & 0.883 & 0.617 & 0.406 \\
CI SE$^*$ & 0.852 & 0.629 & 0.436 & 0.827 & 0.464 & 0.278 & 0.732 & 0.255 & 0.082 \\
SE        & 0.544 & 0.769 & 1.028 & 0.396 & 0.638 & 0.824 & 0.314 & 0.509 & 0.695 \\
SE$^*$    & 0.463 & 0.573 & 0.729 & 0.328 & 0.429 & 0.542 & 0.250 & 0.338 & 0.442 \\
\bottomrule
\end{tabular}
\end{table}

\begin{table}[H]
\centering
\caption{Simulation results under \textbf{complete cluster randomization} ($n=\lfloor N^{2/3}\rfloor$, exactly $\lfloor 0.7n\rfloor$ treated clusters). $\phi_1=\mathrm{Bernoulli}(0.5)$ vs.\ $\phi_0\equiv 0$. Overall effect $\mathrm{CE}^O(\phi_1,\phi_0)$.}
\label{tab:sim_hypergeom}
\small
\begin{tabular}{l rrr rrr rrr}
\toprule
&
  \multicolumn{3}{c}{$N=500$ ($n=63$)} &
  \multicolumn{3}{c}{$N=1000$ ($n=100$)} &
  \multicolumn{3}{c}{$N=2000$ ($n=159$)} \\
\cmidrule(lr){2-4}\cmidrule(lr){5-7}\cmidrule(lr){8-10}
$\rho$ & 0.0 & 0.5 & 1.0 & 0.0 & 0.5 & 1.0 & 0.0 & 0.5 & 1.0 \\
\midrule
Difference-in-means estimator\\
RMSE      & 0.607 & 1.922 & 3.547 & 0.441 & 1.898 & 3.370 & 0.370 & 1.890 & 3.488 \\
CI SE     & 0.881 & 0.028 & 0.000 & 0.850 & 0.000 & 0.000 & 0.790 & 0.000 & 0.000 \\
CI SE$^*$ & 0.842 & 0.026 & 0.000 & 0.796 & 0.000 & 0.000 & 0.706 & 0.000 & 0.000 \\
SE        & 0.483 & 0.488 & 0.526 & 0.328 & 0.343 & 0.366 & 0.243 & 0.255 & 0.269 \\
SE$^*$    & 0.444 & 0.473 & 0.535 & 0.298 & 0.341 & 0.380 & 0.214 & 0.246 & 0.269 \\
\midrule
IPTW estimator \citep{Leung2025}\\
RMSE      & 0.501 & 0.742 & 1.148 & 0.321 & 0.559 & 0.825 & 0.231 & 0.402 & 0.638 \\
CI SE     & 0.979 & 0.941 & 0.880 & 0.972 & 0.947 & 0.914 & 0.970 & 0.958 & 0.932 \\
CI SE$^*$ & 0.950 & 0.948 & 0.946 & 0.952 & 0.949 & 0.944 & 0.950 & 0.951 & 0.947 \\
SE        & 0.562 & 0.711 & 0.923 & 0.353 & 0.546 & 0.720 & 0.260 & 0.423 & 0.604 \\
SE$^*$    & 0.501 & 0.742 & 1.148 & 0.320 & 0.559 & 0.825 & 0.231 & 0.402 & 0.638 \\
\midrule
MRN estimator\\
RMSE       & 0.425 & 0.541 & 0.720 & 0.272 & 0.398 & 0.511 & 0.197 & 0.285 & 0.371 \\
CI (HAC)   & 0.988 & 0.979 & 0.975 & 0.985 & 0.987 & 0.986 & 0.988 & 0.992 & 0.997 \\
CI (BC)    & 0.977 & 0.959 & 0.934 & 0.976 & 0.973 & 0.954 & 0.978 & 0.980 & 0.979 \\
CI SE$^*$  & 0.946 & 0.947 & 0.956 & 0.955 & 0.954 & 0.948 & 0.952 & 0.950 & 0.953 \\
SE (HAC)   & 0.525 & 0.645 & 0.812 & 0.332 & 0.492 & 0.624 & 0.249 & 0.386 & 0.537 \\
SE (BC)    & 0.497 & 0.565 & 0.659 & 0.311 & 0.431 & 0.517 & 0.232 & 0.336 & 0.451 \\
SE$^*$     & 0.425 & 0.541 & 0.720 & 0.272 & 0.398 & 0.510 & 0.197 & 0.286 & 0.371 \\
\midrule
CRN estimator\\
RMSE      & 0.622 & 1.067 & 1.676 & 0.475 & 0.977 & 1.485 & 0.403 & 0.938 & 1.515 \\
CI SE     & 0.927 & 0.861 & 0.756 & 0.907 & 0.817 & 0.662 & 0.885 & 0.594 & 0.341 \\
CI SE$^*$ & 0.847 & 0.597 & 0.354 & 0.797 & 0.428 & 0.203 & 0.711 & 0.129 & 0.018 \\
SE        & 0.548 & 0.775 & 1.023 & 0.396 & 0.634 & 0.814 & 0.310 & 0.491 & 0.679 \\
SE$^*$    & 0.454 & 0.536 & 0.663 & 0.318 & 0.411 & 0.501 & 0.231 & 0.290 & 0.366 \\
\bottomrule
\end{tabular}
\end{table}

In all settings, the MRN estimator attains a lower RMSE than both
the difference-in-means estimator and the IPTW estimator of \citet{Leung2025}. This
advantage is especially pronounced under strong cross-cluster interference and complete
randomization. For example, under independent Bernoulli randomization with
\(\rho=1.0\) and \(N=2000\), the RMSE of the MRN estimator is
\(0.488\), which is 28\% lower than that of the IPTW estimator (\(0.678\)) and 86\%
lower than that of the difference-in-means estimator (\(3.498\)).
Moreover, the MRN estimator appears to exploit the cluster-level
complete randomization design to further reduce RMSE, whereas the other estimators show
little improvement from this design. Under complete randomization with \(\rho=1.0\) and
\(N=2000\), its RMSE decreases to \(0.371\), which is 42\% lower than that of the IPTW
estimator (\(0.638\)) and 89\% lower than that of the difference-in-means estimator
(\(3.488\)).

Turning to inference, the HAC variance estimator $\hat{\mathbf\Sigma}_{1,\mathrm{marg}}$ for the MRN estimator is conservative across all settings, with coverage at or above the nominal \(0.95\).
Under complete randomization (Table~\ref{tab:sim_hypergeom}), it is somewhat over-conservative relative to Bernoulli randomization, as predicted by Theorem~\ref{thm:var_3}(i): the HAC bias matrix $\mathbf R_{3,\mathrm{marg}}=\mathbf R_{1,\mathrm{marg}}+\mathbf M_{\mathrm{marg}}$ carries the extra dense-dependence term $\mathbf M_{\mathrm{marg}}\succeq\mathbf 0$, which inflates SE~(HAC).
The bias-corrected estimator $\hat{\boldsymbol\Sigma}_{3,\mathrm{marg}}$ of Theorem~\ref{thm:var_3}(ii) subtracts an estimate of this term, so SE~(BC) is uniformly smaller than SE~(HAC) and closer to the oracle SE$^*$, removing much of the excess conservativeness while keeping CI~(BC) valid.
In the most demanding setting (\(\rho=1.0\), \(N=2000\)), for instance, the correction shrinks the estimated standard error from about \(45\%\) above the oracle to about \(22\%\) above it, bringing CI coverage down from \(0.997\) to \(0.979\).

The CRN estimator exhibits substantially larger RMSE in both tables,
despite its faster asymptotic convergence rate. This implies that, in finite samples, per-unit weight
variance can matter more for estimation accuracy than the convergence rate governed by the degree of
cross-unit dependence. In the H\'ajek estimator, the high variance of the weights also causes the normalizing
denominator to deviate substantially from its expectation, introducing large bias.
Such bias not only inflates RMSE but also leads to undercoverage of confidence intervals.

\section{Central limit theorem under complete randomization}\label{sec:technical}

This section provides details on the technical developments made in this paper to establish central limit theorem under complete randomization. The technique applies to general statistics that can be written as sums of local components, where each component depends on the treatment variables within the interference neighborhood of a given unit. 
Establishing asymptotic normality for this class of statistics is technically challenging because the summands exhibit both dense dependence induced by complete randomization, which has a clean permutation structure, and local dependence induced by interference, which lacks such a structure; the coexistence of these two forms of dependence makes existing central limit theorem techniques tailored to either type alone inapplicable. 

Under our setup, for each \((i,j)\in\mathcal P\), let \(X_{ij}=X(\mathbf{W}_{\mathcal{N}_{ij}},\mathbf{C}_{\mathcal{N}_{ij}})\) denote such a local component, which is a known function of \(\mathbf{W}_{\mathcal{N}_{ij}}\) and \(\mathbf{C}_{\mathcal{N}_{ij}}\). We study the asymptotic behavior of the standardized sum
\[
S:=\sigma^{-1}\sum_{(i,j)\in\mathcal P} X_{ij},
\]
where
\[
\sigma^2:=\operatorname{Var}\left(\sum_{(i,j)\in\mathcal P} X_{ij}\right),
\]
so that \(S\) has unit variance. 

Our technique is based on Stein's method. To show that $S$
is approximately standard normal, the method controls a discrepancy of the form
\(
\mathbb{E}\bigl[f'(S)-S f(S)\bigr],
\)
and the crux of the argument is to bound a covariance term between each individual summand \(X_{ij}\) and a function of the full sum \(S\). When the dependence across summands is local, one can apply the standard decomposition of \cite{Ross2011}. Here, one approximates \(S\) by a partial sum \(S_{ij}\) that is independent of \(X_{ij}\), so that the covariance of \(X_{ij}\) with the corresponding function of \(S_{ij}\) vanishes, leaving only its covariance with the small approximation error.
Under complete randomization, however, the dependence between $X_{ij}$ and $S$ is not confined to a small subset of summands, so no partial sum $S_{ij}$ can be simultaneously independent of $X_{ij}$ and a good approximation to $S$, both of which this decomposition requires.

The key innovation of our approach is to construct, for \emph{each} index $(u,v)\in\mathcal P$, a separate coupling array $(\tilde X_{ij}^{(uv)})_{(i,j)\in\mathcal P}$, guided by three heuristic criteria: it should have the same distribution as the original array $(X_{ij})_{(i,j)\in\mathcal P}$, be only weakly dependent on the single term $X_{uv}$, and differ from the original array at only a small fraction of units. 
Then, we can bound the covariance term between $X_{uv}$ and the function of $S$ with a similar approximation to $S$ as above, now with the coupling sum
\[
S^{(uv)}:=\sigma^{-1}\sum_{(i,j)\in\mathcal P}\tilde X_{ij}^{(uv)}
\]
playing the role of the auxiliary partial sum $S_{uv}$. In this way, although no partial sum $S_{uv}$ can be independent of $X_{uv}$ under complete randomization, the coupling sum $S^{(uv)}$ can still be made only weakly dependent on $X_{uv}$ while remaining a good approximation to $S$. We will show later that this dependence is often sufficiently weak to control the covariance term and yield asymptotic normality.
At a high level, the new technique controls the gap between the distributions of the sum statistic with and without conditioning on selected components, in the spirit of \cite{Merlevde2009}; the new decomposition of the Stein discrepancy is also related to the coupling framework of \cite{Chen2010} and \cite{Fang2015}, except that, to accommodate the dependence induced by complete randomization, we construct a distinct coupling for each index $(u,v)$.

The coupling array $(\tilde X_{ij}^{(uv)})_{(i,j)\in\mathcal P}$ admits different constructions depending on the design; in what follows, we develop one for the complete randomization design in Assumption~\ref{ass:complete_rand}(i), and use it to prove the central limit theorem under Assumption~\ref{ass:complete_rand}(i). The construction proceeds at the level of the underlying treatment vectors: we construct the couplings $\tilde{\mathbf{C}}^{(uv)}$ and $\tilde{\mathbf{W}}^{(uv)}$ of the treatment vectors $\mathbf{C}$ and $\mathbf W$, and then immediately obtain the coupling of $X_{ij}$ as
\[
\tilde{X}_{ij}^{(uv)}
:=
X\bigl(\tilde{\mathbf{W}}_{\mathcal{N}_{ij}}^{(uv)},\tilde{\mathbf{C}}_{\mathcal{N}_{ij}}^{(uv)}\bigr).
\]

We start with the cluster-level coupling. Within the cluster-level interference neighborhood $\mathcal N_{uv}^{(cl)}$, draw
\[
\tilde{\mathbf C}_{\mathcal N_{uv}}^{(uv)}
\sim
\mathbf C_{\mathcal N_{uv}},
\]
where the draw is from the marginal distribution of $\mathbf C_{\mathcal N_{uv}}$, independently of the original array $(\mathbf W,\mathbf C)$ and independently across coupling indices $(u,v)$.
For clusters outside $\mathcal N_{uv}^{(cl)}$, the assignments are adjusted to preserve the complete randomization constraints. Specifically, for each stratum $k$, define
\[
\Delta_{uv,k}^{(1)}
=
\sum_{i\in\mathcal N_{uv}^{(cl)}\cap\mathcal I_k}
\bigl(\tilde C_i^{(uv)}-C_i\bigr).
\]
Then, for $i\in\mathcal I_k\setminus\mathcal N_{uv}^{(cl)}$, set
\[
\tilde C_i^{(uv)}
=
C_i
-
\operatorname{sgn}\!\bigl(\Delta_{uv,k}^{(1)}\bigr)
\epsilon_i^{(uv)},
\]
where $\epsilon_i^{(uv)}$ is a random indicator variable.
For each stratum $k$, let
\[
\boldsymbol\epsilon_k^{(uv)}
=
\bigl(\epsilon_i^{(uv)}\bigr)_{i\in\mathcal I_k}.
\]
Conditional on the original realization $(\mathbf W,\mathbf C)$ and on the redraw 
$\tilde{\mathbf C}_{\mathcal N_{uv}}^{(uv)}$, the vectors 
$\boldsymbol\epsilon_k^{(uv)}$ are independent across strata $k$, independent across coupling indices $(u,v)$, and uniformly distributed over the set of indicator vectors satisfying
\[
\sum_{\substack{
i\in\mathcal I_k\setminus\mathcal N_{uv}^{(cl)}\\
2C_i-1=\operatorname{sgn}(\Delta_{uv,k}^{(1)})
}}
\epsilon_i^{(uv)}
=
\bigl|\Delta_{uv,k}^{(1)}\bigr|,
\]
and
\[
\epsilon_i^{(uv)}=0
\quad\text{for all } 
i\notin
\left\{
i'\in\mathcal I_k\setminus\mathcal N_{uv}^{(cl)}
:
2C_{i'}-1=\operatorname{sgn}(\Delta_{uv,k}^{(1)})
\right\}.
\]
Thus, within each stratum, the adjustment changes exactly 
$\bigl|\Delta_{uv,k}^{(1)}\bigr|$ assignments outside the neighborhood in the direction needed to offset the imbalance generated inside the neighborhood.

For the unit-level coupling, let
\[
\mathcal M_{uv}^{(cl)}
=
\mathcal N_{uv}^{(cl)}
\cup
\{i:\tilde C_i^{(uv)}\neq C_i\}
\]
denote the set of clusters affected by the cluster-level coupling.
Conditional on $\tilde{\mathbf C}^{(uv)}$, we define the unit-level coupling as follows. For clusters in $\mathcal M_{uv}^{(cl)}$, draw
\[
\tilde{\mathbf W}_i^{(uv)}\mid \tilde C_i^{(uv)}
\sim
\mathbf W_i\mid C_i=\tilde C_i^{(uv)},
\qquad
i\in\mathcal M_{uv}^{(cl)}.
\]
These draws are independent across clusters $i$, independent across coupling indices $(u,v)$, and independent of the original realization $(\mathbf C,\mathbf W)$, conditional on $\tilde{\mathbf C}^{(uv)}$. For clusters outside $\mathcal M_{uv}^{(cl)}$, set
\[
\tilde{\mathbf W}_i^{(uv)}
=
\mathbf W_i.
\]
That is, for affected clusters, $\tilde{\mathbf W}_i^{(uv)}$ is drawn from the same conditional distribution as in the original assignment mechanism given the coupled cluster-level assignment $\tilde C_i^{(uv)}$, whereas unit-level treatments in unaffected clusters remain unchanged.

The coupling construction above for each unit $(u,v)$ meets the heuristic criteria we mentioned: the coupled array has the same distribution as the original, is weakly dependent on $X_{uv}$, and alters only a small fraction of the treatments. A rigorous proof, however, requires more precise conditions on the coupling, which we state formally in the next proposition.
Throughout, $\|R\|_{\infty}:=\inf\{M\ge 0:\mathbb P(|R|\le M)=1\}$ denotes the
essential supremum norm of a random variable $R$, and
$\|R\|_{1}:=\mathbb E|R|$ its $\mathcal L^{1}$ norm.
\begin{proposition}[Stein bound for summands under complete randomization]
  \label{prop:scalar_clt}
  Let \(\mathcal P\) be an index set for $(i,j)$ with \(|\mathcal P|=N\), and let
  the array of scalar-type summands \((X_{ij})_{(i,j)\in\mathcal P}\) and the
  coupling arrays
  \(\bigl((\tilde X_{ij}^{(uv)})_{(i,j)\in\mathcal P}\bigr)_{(u,v)\in\mathcal P}\)
  be jointly defined on a common probability space with finitely many states.
  Assume that
  \begin{enumerate}
      \item[(i)] \(\mathbb{E}[X_{ij}]=0\) for all \((i,j)\in\mathcal P\);
      \item[(ii)] \(|X_{ij}|\le C/N\) almost surely for some finite constant \(C\);
      \item[(iii)] for every \((u,v)\in\mathcal P\),
      \[
      (\tilde X_{ij}^{(uv)})_{(i,j)\in\mathcal P}
      \stackrel{d}{=}
      (X_{ij})_{(i,j)\in\mathcal P},
      \qquad
      \mathbb{E}\bigl[X_{uv}\bigm|(\tilde X_{ij}^{(uv)})_{(i,j)\in\mathcal P}\bigr]=0 .
      \]
  \end{enumerate}
  Write
  \[
  \sigma^2 := \mathrm{Var}\!\Bigl(\sum_{(i,j)\in\mathcal P} X_{ij}\Bigr),
  \qquad
  S := \sigma^{-1}\sum_{(i,j)\in\mathcal P} X_{ij},
  \qquad
  Z \sim \mathcal N(0,1).
  \]
  and abbreviate the coupling discrepancy of unit \((i,j)\) under the coupling
  indexed by \((u,v)\) as
  \[
  D_{ij}^{(uv)}:=X_{ij}-\tilde X_{ij}^{(uv)} .
  \]
  For each quadruple $\bigl((u,v),(u',v'),(i,j),(i',j')\bigr)\in\mathcal P^4$,
  define the $\psi$-mixing coefficient
  \[
  \psi_{ij,i'j'}^{(uv),(u'v')}
  :=
  \sup\;\left|\frac{P(A\cap B)}{P(A)P(B)}-1\right| ,
  \]
  which compares the joint probability with the product of the
  marginal probabilities. Finally, define the two error functionals
  \[
  \mathcal E_1
  :=
  \frac{1}{\sigma^{3}N}
  \sum_{(u,v)\in\mathcal P}
  \Bigl\|\sum_{(i,j)\in\mathcal P}D_{ij}^{(uv)}\Bigr\|_{\infty}^{2},
  \qquad
  \mathcal E_2
  :=
  \frac{1}{\sigma^{4}N^{2}}
  \sum_{\substack{\bigl((u,v),(u',v')\bigr)\in\mathcal P^2\\ \bigl((i,j),(i',j')\bigr)\in\mathcal P^2}}
  T_{ij,i'j'}^{(uv),(u'v')},
  \]
  in which each summand of \(\mathcal E_2\) is the smaller of a mixing bound and a
  joint-support bound,
  \[
  T_{ij,i'j'}^{(uv),(u'v')}
  :=
  \min\bigl\{
  T^{\mathrm{mix}}_{ij,i'j'},\;
  T^{\mathrm{sup}}_{ij,i'j'}
  \bigr\},
  \]
  whose two entries are
  \[
  T^{\mathrm{mix}}_{ij,i'j'}
  :=
  \psi_{ij,i'j'}^{(uv),(u'v')}\,
  \bigl\|D_{ij}^{(uv)}\bigr\|_{1}\,
  \bigl\|D_{i'j'}^{(u'v')}\bigr\|_{1},
  \]
  \[
  T^{\mathrm{sup}}_{ij,i'j'}
  :=
  \bigl\|D_{ij}^{(uv)}\bigr\|_{\infty}
  \bigl\|D_{i'j'}^{(u'v')}\bigr\|_{\infty}
  \Bigl(
  P\bigl(D_{ij}^{(uv)}D_{i'j'}^{(u'v')}\neq 0\bigr)
  +
  P\bigl(D_{ij}^{(uv)}\neq 0\bigr)P\bigl(D_{i'j'}^{(u'v')}\neq 0\bigr)
  \Bigr),
  \]
  the dependence of both on \((u,v)\) and \((u',v')\) being suppressed in the
  notation. Then the Wasserstein distance between \(S\) and \(Z\) satisfies
  \[
  d_W(S,Z)
  =
  O\Bigl(\mathcal E_1+\sqrt{\mathcal E_2}\Bigr).
  \]
  \end{proposition}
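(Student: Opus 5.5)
We follow the standard Wasserstein Stein route. Let $f=f_h$ solve the Stein equation $f'(x)-xf(x)=h(x)-\mathbb E h(Z)$ for a $1$-Lipschitz $h$. The solution satisfies $\|f\|_\infty\le 2$, $\|f'\|_\infty\le\sqrt{2/\pi}$ and $\|f''\|_\infty\le 2$. It then suffices to bound $|\mathbb E[f'(S)-Sf(S)]|$ uniformly over such $f$.

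Write $\mathbb E[Sf(S)]=\sigma^{-1}\sum_{(u,v)}\mathbb E[X_{uv}f(S)]$ and insert the coupling sum $S^{(uv)}$ for each $(u,v)$. By condition (iii), $X_{uv}$ has zero conditional mean given the coupled array, so $\mathbb E[X_{uv}f(S^{(uv)})]=0$. Hence
\[
\mathbb E[X_{uv}f(S)]=\mathbb E\bigl[X_{uv}\bigl(f(S)-f(S^{(uv)})\bigr)\bigr].
\]
Put $\Delta^{(uv)}:=S-S^{(uv)}=\sigma^{-1}\sum_{(i,j)}D^{(uv)}_{ij}$. A second-order Taylor expansion gives
\[
\mathbb E[X_{uv}f(S)]=\mathbb E\bigl[X_{uv}\Delta^{(uv)}f'(S)\bigr]+R_{uv},
\qquad |R_{uv}|\le \|f''\|_\infty\,\|X_{uv}\|_\infty\,\|\Delta^{(uv)}\|_\infty^2 .
\]
Using (ii), $\|X_{uv}\|_\infty\le C/N$, so after multiplying by $\sigma^{-1}$ and summing over $(u,v)$ the remainders total $O(\mathcal E_1)$.

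Set $\Lambda:=\sigma^{-1}\sum_{(u,v)}X_{uv}\Delta^{(uv)}$. The remaining task is to show $|\mathbb E[f'(S)(1-\Lambda)]|\le\|f'\|_\infty\,\mathbb E|1-\Lambda|=O(\sqrt{\mathcal E_2})$, and we bound this through $\sqrt{\operatorname{Var}(\Lambda)}$. This requires two facts.

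First, $\mathbb E\Lambda$ should equal $1$ up to the error terms. Reversing the computation above with $f(x)=x$ gives exactly
\[
\mathbb E\Lambda=\sigma^{-1}\sum_{(u,v)}\mathbb E[X_{uv}(S-S^{(uv)})]=\sigma^{-1}\sum_{(u,v)}\mathbb E[X_{uv}S]=\mathbb E[S^2]=1,
\]
where the middle equality uses (iii) again to kill $\mathbb E[X_{uv}S^{(uv)}]$.

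Second, we need the variance bound
\[
\operatorname{Var}(\Lambda)=\sigma^{-4}\sum_{(u,v),(u',v')}\sum_{(i,j),(i',j')}\operatorname{Cov}\bigl(X_{uv}D^{(uv)}_{ij},\,X_{u'v'}D^{(u'v')}_{i'j'}\bigr).
\]
Each covariance is bounded in two ways, and we take the minimum. Both bounds carry a factor $\|X_{uv}\|_\infty\|X_{u'v'}\|_\infty\le C^2/N^2$, which produces the $N^{-2}$ in $\mathcal E_2$.

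The mixing bound comes from the finite-state structure. Write $\mathrm{Cov}(U,V)=\sum_{a,b}ab\,[P(U=a,V=b)-P(U=a)P(V=b)]$ and use $|P(A\cap B)-P(A)P(B)|\le\psi\,P(A)P(B)$. This gives $|\mathrm{Cov}|\le\psi\,\|U\|_1\|V\|_1$, hence $T^{\mathrm{mix}}$, with $\psi$ understood as the coefficient between the $\sigma$-fields generated by $(X_{uv},D^{(uv)}_{ij})$ and $(X_{u'v'},D^{(u'v')}_{i'j'})$. The support bound is elementary. We have $|\mathbb E[UV]|\le\|U\|_\infty\|V\|_\infty P(UV\ne0)$ and $|\mathbb E U\,\mathbb EV|\le\|U\|_\infty\|V\|_\infty P(U\ne0)P(V\ne0)$, where $U\ne0$ requires $D^{(uv)}_{ij}\ne0$. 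This yields $T^{\mathrm{sup}}$.

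Summing gives $\operatorname{Var}(\Lambda)=O(\mathcal E_2)$, and Cauchy--Schwarz then gives $\mathbb E|1-\Lambda|=O(\sqrt{\mathcal E_2})$. Combining the two pieces yields $d_W(S,Z)=O(\mathcal E_1+\sqrt{\mathcal E_2})$.

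The main obstacle is the mixing step. One must make precise which $\sigma$-fields the sup in $\psi$ ranges over, so that it legitimately covers the products $X_{uv}D^{(uv)}_{ij}$ rather than just the $D$'s. One must also check that the $\|\cdot\|_1$ factors come out on the $D$'s while $X$ is absorbed through its sup-norm. If $\psi$ is defined only on events generated by $D$, we would first condition on $X_{uv}$, using that it takes finitely many values, and then apply the mixing inequality conditionally.
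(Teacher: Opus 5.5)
Your proposal is correct and follows the paper's proof essentially step for step. It uses Theorem~3.1 of \cite{Ross2011} with the same class of Stein solutions, and splits $\mathbb E[f'(S)-Sf(S)]$ into a second-order Taylor remainder (giving $\mathcal E_1$) and $\mathbb E[f'(S)(1-\Lambda)]$ (bounded by $\sqrt{\operatorname{Var}(\Lambda)}$ via Cauchy--Schwarz). It then applies the quadruple-wise covariance bounds of Lemma~\ref{lem:mixing_coef} to $X_{uv}D^{(uv)}_{ij}$ and $X_{u'v'}D^{(u'v')}_{i'j'}$, pulling $X$ out at cost $C/N$. You also justify $\mathbb E\Lambda=1$, which the paper only asserts.

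The paper resolves your ``main obstacle'' exactly as in your primary reading: $\psi^{(uv),(u'v')}_{ij,i'j'}$ is taken between $\sigma(X_{uv},D^{(uv)}_{ij})$ and $\sigma(X_{u'v'},D^{(u'v')}_{i'j'})$. So the conditioning fallback is not needed. It would not work as stated anyway, because a coefficient defined on $\sigma(D)$ alone gives no control over the conditional dependence once you condition on $X_{uv}$.
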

\begin{remark}
  In other words, Proposition~\ref{prop:scalar_clt} delivers a central limit theorem under the following three conditions on the coupling arrays: (a) each coupling array has the same distribution as the original array and is independent of the relevant components of the original array; (b) the discrepancy between each coupling array and the original array is asymptotically negligible; and (c) the dependence between the discrepancy terms associated with different unit pairs $((u,v),(i,j))$ and $((u',v'),(i',j'))$ is sufficiently weak for most choices of these pairs. While the first two conditions are covered in the heuristics we mentioned above, the last one is sutble and not covered yet.
\end{remark}
The following proposition shows that the coupled treatment vectors constructed above can be applied to Proposition \ref{prop:scalar_clt} to achieve asymptotic normality. In fact, it establishes results somewhat stronger than those required by Proposition~\ref{prop:scalar_clt}, but in a more interpretable form, as it characterizes the couplings directly at the level of the treatment vectors rather than at the level of the summand statistics. This is useful because quantities such as mixing coefficients become abstract when defined with respect to the $\sigma$-fields generated by summand discrepancies.

Its four parts are organized around the assumptions and the two error
functionals of Proposition~\ref{prop:scalar_clt}. Part~(i) checks assumptions
(i) and (iii) of Proposition~\ref{prop:scalar_clt}. Parts~(ii) and~(iii)
together establish that the coupling discrepancy is negligible, the former
controlling its aggregate effect and the latter its unit-level activation
probabilities. The aggregate effect, which enters through $\mathcal E_1$, is
controlled in Part~(ii) by the sup-norm of the accumulated discrepancy
$\sum_{(i,j)}D_{ij}^{(uv)}$. The unit-level effect, which enters through
$\mathcal E_2$, is controlled in Part~(iii) by bounds on the marginal
probabilities $\mathbb P(D_{ij}^{(uv)}\neq 0)$ and the joint probabilities
$\mathbb P(D_{ij}^{(uv)}D_{i'j'}^{(u'v')}\neq 0)$. Both parts rely on the same
structural property: each coupling changes the treatments of only a vanishing
fraction of units. Part~(iv) then bounds the mixing coefficient. Assumption~(ii)
of Proposition~\ref{prop:scalar_clt}, the almost sure bound $|X_{ij}|\le C/N$,
is verified directly from Assumption~\ref{ass:CLT_regularity}(iii)--(vi) where
the proposition is applied.

\begin{proposition}[Coupling properties]\label{prop:coupling}
Suppose Assumptions~\ref{ass:cluster_assignment}--\ref{ass:interference},
\ref{ass:weight}, \ref{ass:CLT_regularity}(ii)--(vi), and
\ref{ass:complete_rand}(i) hold.
Consider the coupling arrays 
$(\tilde{\mathbf W}^{(uv)},\tilde{\mathbf C}^{(uv)})$ constructed above for all $(u,v)\in\mathcal P$. 
Then the following properties hold.

\begin{enumerate}

\item[(i)] 
For every $(u,v)\in\mathcal P$,
\[
  (\tilde{\mathbf W}^{(uv)},\tilde{\mathbf C}^{(uv)})
  \overset{d}{=}
  (\mathbf W,\mathbf C),
\]
and the coupling array is independent of the original assignments in the interference neighborhood of $(u,v)$:
\[
  (\tilde{\mathbf W}^{(uv)},\tilde{\mathbf C}^{(uv)})
  \indep
  (\mathbf W_{\mathcal N_{uv}},\mathbf C_{\mathcal N_{uv}}).
\]

\item[(ii)]
The discrepancy between each coupling array and the original assignment array is
uniformly negligible in the aggregate: the number of units whose neighborhood
assignments are disturbed by a single coupling is almost surely of smaller order
than $N$,
\[
\max_{(u,v)\in\mathcal P}
\left\{
\left|\left\{(i,j)\in \mathcal P:(\mathbf W_{\mathcal{N}_{ij}},\mathbf C_{\mathcal{N}_{ij}})\neq (\tilde{\mathbf W}^{(uv)}_{\mathcal{N}_{ij}},\tilde{\mathbf C}^{(uv)}_{\mathcal{N}_{ij}})\right\}
\right|
\right\}
=
O\!\left(\frac{N}{n}\right).
\]

\item[(iii)]
The discrepancy is also negligible unit by unit: its activation probability is of
order $1/n$, and two activations occur simultaneously only with probability of
order $1/n^{2}$, except in explicitly identified configurations of the indices.
Write
\[
\mathcal B_{ij}^{(uv)}
:=
\Bigl\{
(\mathbf W_{\mathcal{N}_{ij}},\mathbf C_{\mathcal{N}_{ij}})
\neq
(\tilde{\mathbf W}^{(uv)}_{\mathcal{N}_{ij}},\tilde{\mathbf C}^{(uv)}_{\mathcal{N}_{ij}})
\Bigr\}
\]
for the event that the coupling indexed by $(u,v)$ disturbs the neighborhood of
unit $(i,j)$. There are constants $C_{P}^{(0)},C_{P}^{(1)}>0$ such that a single
activation probability can exceed the order $1/n$ only when the two
cluster-level neighborhoods involved intersect,
\[
P\bigl(\mathcal B_{ij}^{(uv)}\bigr)\geq\frac{C_{P}^{(0)}}{n}
\quad\text{only if}\quad
\mathcal N_{ij}^{(cl)}\cap\mathcal N_{uv}^{(cl)}\neq\varnothing ,
\]
and a joint activation probability can exceed the order $1/n^{2}$ only when one
of the two sides already has intersecting neighborhoods, or the two couplings
coincide and the two disturbed neighborhoods intersect,
\[
P\bigl(\mathcal B_{ij}^{(uv)}\cap\mathcal B_{i'j'}^{(u'v')}\bigr)
\geq\frac{C_{P}^{(1)}}{n^{2}}
\quad\text{only if}\quad
\begin{cases}
\mathcal N_{ij}^{(cl)}\cap\mathcal N_{uv}^{(cl)}\neq\varnothing,
\quad\text{or}\\[2pt]
\mathcal N_{i'j'}^{(cl)}\cap\mathcal N_{u'v'}^{(cl)}\neq\varnothing,
\quad\text{or}\\[2pt]
(u,v)=(u',v')\ \text{and}\
\mathcal N_{ij}^{(cl)}\cap\mathcal N_{i'j'}^{(cl)}\neq\varnothing .
\end{cases}
\]
\item[(iv)]
For each pair $\bigl((u,v),(i,j)\bigr)\in\mathcal P^2$, define the localized
$\sigma$-field
\[
\mathcal G_{uv,ij}
:=
\sigma\Bigl(
\mathbf W_{\mathcal N_{uv}},\,\mathbf C_{\mathcal N_{uv}},\;
\mathbf W_{\mathcal N_{ij}},\,\mathbf C_{\mathcal N_{ij}},\;
\tilde{\mathbf W}^{(uv)}_{\mathcal N_{ij}},\,\tilde{\mathbf C}^{(uv)}_{\mathcal N_{ij}}
\Bigr),
\]
which reveals the original treatments on the two interference neighborhoods
$\mathcal N_{uv}$ and $\mathcal N_{ij}$ together with the coupled treatments on
the latter. 
Furthermore, define
\[
\psi_{ij,i'j'}^{(0),(uv),(u'v')}
:=
\sup_{\substack{
A\in \mathcal G_{uv,ij},\;B\in \mathcal G_{u'v',i'j'}\\
P(A)P(B)>0
}}
\left|\frac{P(A\cap B)}{P(A)P(B)}-1\right| .
\]
Then there exists a constant $C_{\psi}^{(0)}>0$ such that the coefficient can be
large only when the cluster-level interference neighborhoods associated with the two pairs of units intersect:
\[
\psi_{ij,i'j'}^{(0),(uv),(u'v')}
\geq
\frac{C_{\psi}^{(0)}}{n}
\quad\text{only if}\quad
\bigl(\mathcal N_{ij}^{(cl)}\cup\mathcal N_{uv}^{(cl)}\bigr)
\cap
\bigl(\mathcal N_{i'j'}^{(cl)}\cup\mathcal N_{u'v'}^{(cl)}\bigr)
\neq\varnothing .
\]

\end{enumerate}
\end{proposition}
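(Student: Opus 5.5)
The plan is to work stratum by stratum, using the exchangeability of complete randomization and the conditional independence of $\mathbf W_i$ given $\mathbf C$ (Assumption~\ref{ass:cluster_assignment}). All four parts rest on one structural fact: the cluster coupling changes at most $|\mathcal N_{uv}^{(cl)}|\le C_{\mathcal N^{(cl)}}$ assignments inside the neighborhood. It changes at most $\sum_k|\Delta^{(1)}_{uv,k}|\le C_{\mathcal N^{(cl)}}$ assignments outside it. Hence $|\mathcal M^{(cl)}_{uv}|\le 2C_{\mathcal N^{(cl)}}$ almost surely.

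\textbf{Part (i).} Fix a stratum $k$, and condition on the values $\mathbf C_{\mathcal N_{uv}\cap\mathcal I_k}$ and $\tilde{\mathbf C}^{(uv)}_{\mathcal N_{uv}\cap\mathcal I_k}$. Under complete randomization, the outside vector $\mathbf C_{\mathcal I_k\setminus\mathcal N^{(cl)}_{uv}}$ is uniform over vectors with sum $I_k-\sum_{\mathcal N}C_i$. Flipping $|\Delta^{(1)}_{uv,k}|$ uniformly chosen coordinates of the appropriate sign is a symmetric operation. It maps the uniform distribution on one hypergeometric slice onto the uniform distribution on the slice with sum $I_k-\sum_{\mathcal N}\tilde C_i$; I would check this by a counting/symmetry argument.

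Consequently, conditional on $\mathbf C_{\mathcal N_{uv}}$, the pair (redrawn inside values, adjusted outside vector) has the law of $\mathbf C$, and this law does not depend on $\mathbf C_{\mathcal N_{uv}}$. So $\tilde{\mathbf C}^{(uv)}\overset d=\mathbf C$ and $\tilde{\mathbf C}^{(uv)}\indep \mathbf C_{\mathcal N_{uv}}$. Independence across strata is preserved.

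For $\mathbf W$, note that clusters in $\mathcal M^{(cl)}_{uv}$ are redrawn from $\phi_{\tilde C_i}$ independently of everything. Clusters outside keep $\mathbf W_i$, which given $\mathbf C$ has law $\phi_{C_i}=\phi_{\tilde C_i}$. These clusters lie outside $\mathcal N^{(cl)}_{uv}$, so they are independent of $\mathbf W_{\mathcal N_{uv}}$ given $\mathbf C$. This gives the joint statement.

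\textbf{Part (ii).} A unit $(i,j)$ is disturbed only if $\mathcal N^{(cl)}_{ij}\cap\mathcal M^{(cl)}_{uv}\ne\varnothing$. For each of the at most $2C_{\mathcal N^{(cl)}}$ affected clusters $i'$, I would count the units whose neighborhood touches $i'$. To get $O(N/n)$ I will assume the symmetric version of Assumption~\ref{ass:CLT_regularity}(ii): the undirected network makes the count of units $(i,j)$ with $i'\in\mathcal N^{(cl)}_{ij}$ comparable to $|\cup_j\mathcal N_{i'j}|\le C_{\mathcal N^{(cl)}}N_{i'}$. Assumption~\ref{ass:CLT_regularity}(v) then gives $N_{i'}=O(N/n)$.

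\textbf{Part (iii).} If $\mathcal N^{(cl)}_{ij}\cap\mathcal N^{(cl)}_{uv}=\varnothing$, then $\mathcal B^{(uv)}_{ij}$ requires one of the outside flips to land in $\mathcal N^{(cl)}_{ij}$. Each flip chooses uniformly among $\Omega(n)$ candidates, using $|\mathcal I_k|=\Omega(n)$ and $p_k\in[p,1-p]$. A union bound therefore gives $O(1/n)$.

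For the joint event with both sides disjoint, there are two cases.
\begin{itemize}
\item If $(u,v)\ne(u',v')$, the two flip mechanisms are independent given $(\mathbf W,\mathbf C)$, and each hit probability is $O(1/n)$ conditionally. This gives $O(1/n^2)$.
\item If $(u,v)=(u',v')$ and $\mathcal N^{(cl)}_{ij}\cap\mathcal N^{(cl)}_{i'j'}=\varnothing$, two distinct flips must hit two disjoint sets. Sampling without replacement gives $O(1/n^2)$.
\end{itemize}

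\textbf{Part (iv).} This is the main obstacle. When the two cluster-unions are disjoint, $\mathcal G_{uv,ij}$ and $\mathcal G_{u'v',i'j'}$ involve disjoint cluster sets. The only dependence comes from the stratum totals, plus the flip selection, which depends on the global count of clusters with each sign.

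I would compute the joint probabilities explicitly as ratios of hypergeometric counts. Fixing $O(1)$ coordinates in each of two disjoint blocks gives
\[
\frac{P(A\cap B)}{P(A)P(B)}=\frac{\binom{m-a-b}{r-s_a-s_b}\binom{m}{r}}{\binom{m-a}{r-s_a}\binom{m-b}{r-s_b}}=1+O(1/n),
\]
with $m=\Omega(n)$ and $r/m\in[p,1-p]$. The flip events conditionally contribute factors of the form $(\text{count})/(\text{candidates})$ whose dependence on the other block is likewise $1+O(1/n)$. The $\mathbf W$ parts factorize given the $\mathbf C$'s.

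The delicate point is bounding the ratio uniformly over events with $P(A)P(B)>0$. I would reduce to atoms, which suffices because there are finitely many states and the ratio of sums is bounded by the maximal atom ratio. I would then verify that every atom ratio is $1+O(1/n)$, with constants depending only on $C_{\mathcal N^{(cl)}}$ and $p$.
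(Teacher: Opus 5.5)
Your proposal is correct and follows essentially the same route as the paper in all four parts: the stratum-wise slice argument for (i); the bound $|\mathcal M^{(cl)}_{uv}|\le 2C_{\mathcal N^{(cl)}}$ together with the undirectedness of the network for (ii), where undirectedness is part of the setup, so you need not add it as an extra assumption; uniform flips from an $\Omega(n)$ eligible pool for (iii); and reduction to atoms with the unit-level parts factoring out for (iv). One small sharpening for (iv): because each $I_k$ is deterministic, the eligible pool size is $I_k$ (or $|\mathcal I_k|-I_k$) minus a count read off $\mathbf C_{\mathcal N_{uv}}$, so the conditional law of $\tilde{\mathbf C}^{(uv)}_{\mathcal N_{ij}}$ given $\mathbf C$ depends only on the local block and the flips factor out exactly rather than as a $1+O(1/n)$ perturbation; this is how the paper reduces $\psi^{(0)}$ exactly to the mixing coefficient between the original cluster blocks on $\mathcal N^{(cl)}_{uv}\cup\mathcal N^{(cl)}_{ij}$ and its primed counterpart, and then invokes the complete-randomization mixing lemma.
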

Proposition~\ref{prop:coupling} describes the coupling at the level of the
treatment vectors, whereas Proposition~\ref{prop:scalar_clt} is stated for the
scalar summands that are actually summed. The next corollary performs the
translation, and it does so part by part: the distributional statement~(i)
transfers because each summand is a fixed measurable function of the treatments
in its own neighborhood; the aggregate and unit-level discrepancy bounds~(ii)
and~(iii) transfer because a summand can change only when the treatments in
that neighborhood change, so the aggregate counts and the disturbance
probabilities carry over after multiplication (respectively, comparison) by the
uniform summand scale $O(1/N)$; and the coefficient bound~(iv) transfers
because the $\sigma$-field generated by a summand and its coupling discrepancy
is contained in the localized field $\mathcal G_{uv,ij}$, so that monotonicity
of the mixing coefficient applies. The result is exactly the input required by
Proposition~\ref{prop:scalar_clt}: conditions (a) and (b) of its
hypothesis~(iii), the aggregated and unit-level discrepancies entering
$\mathcal E_1$ and $\mathcal E_2$, and the coefficients
$\psi_{ij,i'j'}^{(uv),(u'v')}$ themselves.

To state the result, define
\[
\mathbf X_{ij}^{(\mathrm{marg})}
=
\frac{g_i}{N_i}
(\mathbf B_{\mathrm{marg}})_{(*,ij)}
\circ
\bigl(Y_{ij}\mathbf 1-\boldsymbol\tau_{ij,\mathrm{marg}}\bigr)
+
\frac{g_i}{N_i}\bigl((\mathbf B_{\mathrm{marg}})_{(*,ij)}-\mathbf{1}\bigr)
\circ
\bigl(\boldsymbol\tau_{ij,\mathrm{marg}}-\boldsymbol\tau_{\mathrm{marg}}\bigr),
\]
and 
\[
\mathbf X_{ij}
=
\frac{g_i}{N_i}
\mathbf B_{(*,ij)}
\circ
\bigl(Y_{ij}\mathbf 1-\boldsymbol\tau_{ij}\bigr)+
\frac{g_i}{N_i}\bigl(\mathbf B_{(*,ij)}-\mathbf{1}\bigr)
\circ
\bigl(\boldsymbol\tau_{ij}-\boldsymbol\tau\bigr).
\]
For each coupling index $(u,v)\in\mathcal P$, define the coupled counterparts by
\[
(\tilde{\mathbf X}_{ij}^{(\mathrm{marg})})^{(uv)}
=
\frac{g_i}{N_i}
(\tilde{\mathbf B}_{\mathrm{marg}}^{(uv)})_{(*,ij)}
\circ
\bigl(
\tilde Y_{ij}^{(uv)}\mathbf 1-\boldsymbol\tau_{ij,\mathrm{marg}}
\bigr)
+
\frac{g_i}{N_i}
\bigl((\tilde{\mathbf B}_{\mathrm{marg}}^{(uv)})_{(*,ij)}-\mathbf 1\bigr)
\circ
\bigl(\boldsymbol\tau_{ij,\mathrm{marg}}-\boldsymbol\tau_{\mathrm{marg}}\bigr),
\]
and
\[
\tilde{\mathbf X}_{ij}^{(uv)}
=
\frac{g_i}{N_i}
\tilde{\mathbf B}_{(*,ij)}^{(uv)}
\circ
\bigl(
\tilde Y_{ij}^{(uv)}\mathbf 1-\boldsymbol\tau_{ij}
\bigr)
+
\frac{g_i}{N_i}
\bigl(\tilde{\mathbf B}_{(*,ij)}^{(uv)}-\mathbf 1\bigr)
\circ
\bigl(\boldsymbol\tau_{ij}-\boldsymbol\tau\bigr),
\]
where $\tilde Y_{ij}^{(uv)}$, $\tilde{\mathbf B}_{\mathrm{marg}}^{(uv)}$, and 
$\tilde{\mathbf B}^{(uv)}$ denote the realized outcome, the marginal weight matrix, 
and the non-marginal weight matrix evaluated at 
$(\tilde{\mathbf W}^{(uv)},\tilde{\mathbf C}^{(uv)})$ rather than 
$(\mathbf W,\mathbf C)$. With these definitions, the properties in Proposition~\ref{prop:coupling} imply
the following coupling conditions for the scalar projections
$\mathbf a^\top\mathbf X_{ij}^{(\mathrm{marg})}$ and
$\mathbf b^\top\mathbf X_{ij}$, which are the summands to which
Proposition~\ref{prop:scalar_clt} is applied.

\begin{corollary}[Coupling properties for summand statistics]\label{cor:coupling}
Suppose the assumptions of Proposition~\ref{prop:coupling} hold. 
Let $\mathbf a$ and $\mathbf b$ be vectors conformable with 
$\mathbf X_{ij}^{(\mathrm{marg})}$ and $\mathbf X_{ij}$, respectively. 
Then the scalar projections 
$\mathbf a^\top\mathbf X_{ij}^{(\mathrm{marg})}$ and 
$\mathbf b^\top\mathbf X_{ij}$ satisfy the following coupling properties.

\begin{enumerate}

\item[(i)] 
For every $(u,v)\in\mathcal P$,
\[
\left(
\mathbf a^\top
(\tilde{\mathbf X}_{ij}^{(\mathrm{marg})})^{(uv)}
\right)_{(i,j)\in\mathcal P}
\overset{d}{=}
\left(
\mathbf a^\top
\mathbf X_{ij}^{(\mathrm{marg})}
\right)_{(i,j)\in\mathcal P},
\]
and
\[
\left(
\mathbf b^\top
\tilde{\mathbf X}_{ij}^{(uv)}
\right)_{(i,j)\in\mathcal P}
\overset{d}{=}
\left(
\mathbf b^\top
\mathbf X_{ij}
\right)_{(i,j)\in\mathcal P}.
\]
Moreover,
\[
\mathbb{E}\!\left[
\mathbf a^\top
\mathbf X_{ij}^{(\mathrm{marg})}
\,\middle|\,
\left(
\mathbf a^\top
(\tilde{\mathbf X}_{ij}^{(\mathrm{marg})})^{(uv)}
\right)_{(i,j)\in\mathcal P}
\right]
=
0,
\]
and
\[
\mathbb{E}\!\left[
\mathbf b^\top
\mathbf X_{ij}
\,\middle|\,
\left(
\mathbf b^\top
\tilde{\mathbf X}_{ij}^{(uv)}
\right)_{(i,j)\in\mathcal P}
\right]
=
0.
\]

\item[(ii)]
The discrepancies between the original summand statistics and their coupled
counterparts are uniformly negligible after aggregation over units,
\[
\max\Biggl\{
\sup_{(u,v)\in\mathcal P}
\left\|
\sum_{(i,j)\in\mathcal P}
\left[
\mathbf a^\top\mathbf X_{ij}^{(\mathrm{marg})}
-
\mathbf a^\top
(\tilde{\mathbf X}_{ij}^{(\mathrm{marg})})^{(uv)}
\right]
\right\|_{\infty},
\]
\[
\sup_{(u,v)\in\mathcal P}
\left\|
\sum_{(i,j)\in\mathcal P}
\left[
\mathbf b^\top\mathbf X_{ij}
-
\mathbf b^\top
\tilde{\mathbf X}_{ij}^{(uv)}
\right]
\right\|_{\infty}
\Biggr\}
=
O\!\left(\frac{1}{n}\right).
\]

\item[(iii)]
The discrepancies are also negligible unit by unit, at the level of the
individual summands. Writing
\[
D_{ij}^{(uv)}:=\mathbf a^\top\mathbf X_{ij}^{(\mathrm{marg})}
-\mathbf a^\top(\tilde{\mathbf X}_{ij}^{(\mathrm{marg})})^{(uv)}
\quad\text{(and likewise with $\mathbf b^\top\mathbf X_{ij}$),}
\]
there is a global constant $C'$ such that, with $\mathcal B_{ij}^{(uv)}$ the
disturbance event of Proposition~\ref{prop:coupling}(iii),
\[
\bigl\{D_{ij}^{(uv)}\neq 0\bigr\}\subseteq\mathcal B_{ij}^{(uv)},
\qquad
\bigl\|D_{ij}^{(uv)}\bigr\|_{\infty}\leq\frac{C'}{N},
\qquad
\bigl\|D_{ij}^{(uv)}\bigr\|_{1}\leq\frac{C'}{N}\,P\bigl(\mathcal B_{ij}^{(uv)}\bigr).
\]

\item[(iv)]
Matching the coefficient required by Proposition~\ref{prop:scalar_clt}, the
mixing coefficients are indexed by a coupling pair together with a pair of
units. For each quadruple $\bigl((u,v),(u',v'),(i,j),(i',j')\bigr)\in\mathcal P^4$, define
\[
\psi_{ij,i'j'}^{(uv),(u'v'),(\mathrm{marg})}
:=
\sup_{\substack{
A\in \sigma\!\left(
\mathbf a^\top\mathbf X_{uv}^{(\mathrm{marg})},\;
\mathbf a^\top\mathbf X_{ij}^{(\mathrm{marg})}
-
\mathbf a^\top
(\tilde{\mathbf X}_{ij}^{(\mathrm{marg})})^{(uv)}
\right)\\
B\in \sigma\!\left(
\mathbf a^\top\mathbf X_{u'v'}^{(\mathrm{marg})},\;
\mathbf a^\top\mathbf X_{i'j'}^{(\mathrm{marg})}
-
\mathbf a^\top
(\tilde{\mathbf X}_{i'j'}^{(\mathrm{marg})})^{(u'v')}
\right)\\
P(A)P(B)>0
}}
\left|\frac{P(A\cap B)}{P(A)P(B)}-1\right| ,
\]
and
\[
\psi_{ij,i'j'}^{(uv),(u'v')}
:=
\sup_{\substack{
A\in \sigma\!\left(
\mathbf b^\top\mathbf X_{uv},\;
\mathbf b^\top\mathbf X_{ij}
-
\mathbf b^\top
\tilde{\mathbf X}_{ij}^{(uv)}
\right)\\
B\in \sigma\!\left(
\mathbf b^\top\mathbf X_{u'v'},\;
\mathbf b^\top\mathbf X_{i'j'}
-
\mathbf b^\top
\tilde{\mathbf X}_{i'j'}^{(u'v')}
\right)\\
P(A)P(B)>0
}}
\left|\frac{P(A\cap B)}{P(A)P(B)}-1\right| .
\]
Then there exists a constant $C_{\psi}^{(1)}>0$ such that either coefficient can
be large only when the two localized cluster sets meet:
\[
\max\Bigl\{
\psi_{ij,i'j'}^{(uv),(u'v'),(\mathrm{marg})},\;
\psi_{ij,i'j'}^{(uv),(u'v')}
\Bigr\}
\geq
\frac{C_{\psi}^{(1)}}{n}
\quad\text{only if}\quad
\bigl(\mathcal N_{ij}^{(cl)}\cup\mathcal N_{uv}^{(cl)}\bigr)
\cap
\bigl(\mathcal N_{i'j'}^{(cl)}\cup\mathcal N_{u'v'}^{(cl)}\bigr)
\neq\varnothing .
\]

\end{enumerate}
\end{corollary}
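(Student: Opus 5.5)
The plan is to derive each part of the corollary from the matching part of Proposition~\ref{prop:coupling}. The one structural fact used throughout is that every summand $\mathbf a^\top\mathbf X_{ij}^{(\mathrm{marg})}$ (and likewise $\mathbf b^\top\mathbf X_{ij}$) is a fixed deterministic function $F_{ij}(\mathbf W_{\mathcal N_{ij}},\mathbf C_{\mathcal N_{ij}})$. This holds because $Y_{ij}$ depends only on $\mathbf W_{\mathcal N_{ij}}$ by Assumption~\ref{ass:interference}, the weights lie in $\mathcal L(\mathbf W_{\mathcal N_{ij}},\mathbf C_{\mathcal N_{ij}})$, and $\boldsymbol\tau_{ij},\boldsymbol\tau$ are constants. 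The coupled summand is the same function evaluated at the coupled treatments.

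\textbf{Part (i).} The joint distribution statement follows because the whole array $(F_{ij})_{(i,j)}$ is a single measurable map of $(\mathbf W,\mathbf C)$. By Proposition~\ref{prop:coupling}(i), $(\tilde{\mathbf W}^{(uv)},\tilde{\mathbf C}^{(uv)})\overset{d}{=}(\mathbf W,\mathbf C)$, so the coupled and original arrays have the same law.

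For the conditional mean, I first check $\mathbb E[\mathbf a^\top\mathbf X_{uv}^{(\mathrm{marg})}]=0$ by taking expectations term by term.
\begin{itemize}
\item By Assumption~\ref{ass:weight}, $\mathbb E[\beta_{uv}Y_{uv}]=\bar Y_{uv}$ and $\mathbb E[\beta_{uv}]=1$.
\item Hence $\mathbb E[\beta_{uv}(Y_{uv}-\bar Y_{uv})]=\bar Y_{uv}-\bar Y_{uv}=0$.
\item Also $\mathbb E[(\beta_{uv}-1)(\bar Y_{uv}-\tau)]=(1-1)(\bar Y_{uv}-\tau)=0$.
\end{itemize}
Next, Proposition~\ref{prop:coupling}(i) gives independence of $(\mathbf W_{\mathcal N_{uv}},\mathbf C_{\mathcal N_{uv}})$ from the coupled assignment vectors. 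The coupled summand array is a function of those vectors, so $\mathbf a^\top\mathbf X_{uv}^{(\mathrm{marg})}$ is independent of it. Its conditional expectation therefore equals its unconditional mean, which is zero.

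\textbf{Parts (ii) and (iii).} Since $D_{ij}^{(uv)}=F_{ij}(\cdot)-F_{ij}(\tilde\cdot)$, it vanishes off $\mathcal B_{ij}^{(uv)}$. This gives the inclusion $\{D_{ij}^{(uv)}\neq0\}\subseteq\mathcal B_{ij}^{(uv)}$.

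For the sup bound, Assumption~\ref{ass:CLT_regularity}(iii)--(vi) gives $g_i/N_i=O(1/n)\cdot O(n/N)=O(1/N)$, using $N_i\asymp N/n$ from (v). The weights are bounded by $C_\beta$ and the outcomes and $\tau$'s by $C_Y$. Hence $|F_{ij}|\le C''/N$ for a constant $C''$ depending on $\mathbf a$, and $\|D_{ij}^{(uv)}\|_\infty\le 2C''/N=:C'/N$.

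The $L^1$ bound then follows from $|D|\le (C'/N)\,1(\mathcal B_{ij}^{(uv)})$.

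For (ii), the sum $\sum_{(i,j)}D_{ij}^{(uv)}$ has almost surely at most $O(N/n)$ nonzero terms by Proposition~\ref{prop:coupling}(ii). Each term is $O(1/N)$, so the sum is $O(1/n)$ uniformly in $(u,v)$.

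\textbf{Part (iv).} I use monotonicity of the $\psi$-coefficient: a supremum over sub-$\sigma$-fields cannot exceed the supremum over larger ones. The relevant containments are as follows.
\begin{itemize}
\item $\mathbf a^\top\mathbf X_{uv}^{(\mathrm{marg})}$ is $\sigma(\mathbf W_{\mathcal N_{uv}},\mathbf C_{\mathcal N_{uv}})$-measurable.
\item $D_{ij}^{(uv)}$ is measurable with respect to $\sigma(\mathbf W_{\mathcal N_{ij}},\mathbf C_{\mathcal N_{ij}},\tilde{\mathbf W}^{(uv)}_{\mathcal N_{ij}},\tilde{\mathbf C}^{(uv)}_{\mathcal N_{ij}})$.
\end{itemize}
So the $\sigma$-field in the corollary's supremum is contained in $\mathcal G_{uv,ij}$, and similarly for the primed indices. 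This gives $\psi^{(uv),(u'v')}_{ij,i'j'}\le\psi^{(0),(uv),(u'v')}_{ij,i'j'}$, and Proposition~\ref{prop:coupling}(iv) yields the claim with $C_\psi^{(1)}=C_\psi^{(0)}$.

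The only delicate point is part (i): the conditioning is on the coupled \emph{summand} array rather than on the coupled treatments. I handle it by noting that the independence statement of Proposition~\ref{prop:coupling}(i) concerns the full coupled treatment vectors, and this independence passes to any measurable function of them.
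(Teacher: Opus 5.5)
Your proposal is correct and follows the paper's proof essentially step for step: you treat each summand as a fixed function of $(\mathbf W_{\mathcal N_{ij}},\mathbf C_{\mathcal N_{ij}})$, get part (i) from Proposition~\ref{prop:coupling}(i) together with $\mathbb E[\mathbf a^\top\mathbf X_{uv}^{(\mathrm{marg})}]=0$ via Assumption~\ref{ass:weight}, get parts (ii)--(iii) from the $O(1/N)$ summand envelope and the $O(N/n)$ count in Proposition~\ref{prop:coupling}(ii), and get part (iv) from the inclusion of the relevant $\sigma$-fields in $\mathcal G_{uv,ij}$ plus monotonicity of $\psi$. The only cosmetic difference is that you let the constant $C'$ depend explicitly on $\mathbf a$, whereas the paper implicitly normalizes $\|\mathbf a\|\le 1$.
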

\begin{remark}
Using the rates established in parts (ii)--(iv) of Corollary~\ref{cor:coupling}, we can bound the Wasserstein distance between the Gaussian limit $Z$ and the statistic $S$ constructed in Proposition~\ref{prop:scalar_clt} based on either $X_{ij}=\mathbf{a}^\top \mathbf{X}_{ij}^{(\mathrm{marg})}$ or $X_{ij}=\mathbf{b}^\top \mathbf{X}_{ij}$. Under the coupling construction above,
the aggregated discrepancy bound of part~(ii) makes the first error functional
\[
\mathcal E_1
=
\frac{1}{\sigma^3 N}\cdot N\cdot O\!\left(\frac{1}{n^2}\right)
=
O\!\left(\frac{1}{\sigma^3 n^2}\right),
\]
while the unit-level counts of part~(iii) and the structural characterization of
part~(iv) together give $\mathcal E_2=O(\sigma^{-4}n^{-3})$; the accounting is
carried out in the proof of Theorem~\ref{thm:clt}, where the quadruples are
split according to whether the mixing coefficient is of order $1/n$ and, when it
is not, according to the order of the two disturbance probabilities. Hence
\[
d_W(S,Z)
=
O\!\left(
\frac{1}{\sigma^3 n^2}
+
\frac{1}{\sigma^{2}n^{3/2}}
\right),
\]
which validates the variance rate conditions required in Theorem~\ref{thm:clt}.
\end{remark}
\begin{remark}
The procedures developed in this section remain valid even when the randomized trial does not involve a cluster design and instead implements complete randomization at the unit level to assign treatment $\mathbf{W}$ directly. In this case, one can interpret the design as a special case by treating each unit as its own cluster, so that the cluster-level treatment coincides with the unit-level treatment, i.e., $\mathbf{C} = \mathbf{W}$. Under this formulation, the current coupling can still be used to prove central limit theorems under interference.
\end{remark}
\begin{remark}
The main obstacle to extending this technique to the analysis of cluster-agnostic estimators lies in condition (iii), which requires control over mixing coefficients. Although carefully constructed summand statistics $X_{ij}$ can eliminate the mean contribution of cluster-level treatments and render $X_{ij}$ and $X_{i'j'}$ uncorrelated, it remains unclear how to control the mixing coefficients between the $\sigma$-fields they generate.
Interested readers may refer to the proof of Proposition~\ref{prop:scalar_clt} for insight into how the relevant terms might be controlled without relying on the classical additive mixing coefficients.
\end{remark}

\bibliographystyle{plainnat}
\bibliography{reference}

\begin{thebibliography}{23}
\providecommand{\natexlab}[1]{#1}
\providecommand{\url}[1]{\texttt{#1}}
\expandafter\ifx\csname urlstyle\endcsname\relax
  \providecommand{\doi}[1]{doi: #1}\else
  \providecommand{\doi}{doi: \begingroup \urlstyle{rm}\Url}\fi

\bibitem[Aronow and Samii(2017)]{Aronow2017}
Peter~M. Aronow and Cyrus Samii.
\newblock Estimating average causal effects under general interference, with application to a social network experiment.
\newblock \emph{Annals of Applied Statistics}, 11:\penalty0 1912--1947, 12 2017.
\newblock ISSN 19417330.
\newblock \doi{10.1214/16-AOAS1005}.

\bibitem[Chen and Röllin(2010)]{Chen2010}
Louis H.~Y. Chen and Adrian Röllin.
\newblock Stein couplings for normal approximation, 2010.
\newblock URL \url{http://arxiv.org/abs/1003.6039}.
\newblock arXiv:1003.6039.

\bibitem[Fang and Röllin(2015)]{Fang2015}
Xiao Fang and Adrian Röllin.
\newblock Rates of convergence for multivariate normal approximation with applications to dense graphs and doubly indexed permutation statistics.
\newblock \emph{Bernoulli}, 21\penalty0 (4):\penalty0 2157--2189, 2015.
\newblock \doi{10.3150/14-BEJ639}.

\bibitem[Gao et~al.(2026)Gao, Harshaw, Sävje, and Wang]{Gao2026}
Chao Gao, Christopher Harshaw, Fredrik Sävje, and Yitan Wang.
\newblock On the impossibility of specification testing of interference models based on exposure mappings, 2026.
\newblock URL \url{https://arxiv.org/abs/2605.09726}.
\newblock arXiv:2605.09726.

\bibitem[Gao and Ding(2025)]{Gao2025}
Mengsi Gao and Peng Ding.
\newblock Causal inference in network experiments: regression-based analysis and design-based properties.
\newblock \emph{Journal of Econometrics}, 252:\penalty0 106119, 2025.
\newblock \doi{10.1016/j.jeconom.2025.106119}.

\bibitem[Godambe(1955)]{Godambe1955}
V.~P. Godambe.
\newblock A unified theory of sampling from finite populations.
\newblock \emph{Journal of the Royal Statistical Society Series B: Statistical Methodology}, 17:\penalty0 269--278, 7 1955.
\newblock ISSN 1369-7412.
\newblock \doi{10.1111/j.2517-6161.1955.tb00203.x}.

\bibitem[Halloran and Struchiner(1991)]{Halloran1991}
M~Elizabeth Halloran and Claudio~J Struchiner.
\newblock Study designs for dependent happenings.
\newblock \emph{Epidemiology}, 2\penalty0 (5):\penalty0 331--338, 1991.

\bibitem[Harshaw et~al.(2025)Harshaw, Sävje, and Wang]{Harshaw2025}
Christopher Harshaw, Fredrik Sävje, and Yitan Wang.
\newblock A general design-based framework and estimator for randomized experiments, 2025.
\newblock URL \url{http://arxiv.org/abs/2210.08698}.
\newblock arXiv:2210.08698.

\bibitem[Hudgens and Halloran(2008)]{Hudgens2008}
Michael~G. Hudgens and M.~Elizabeth Halloran.
\newblock Toward causal inference with interference.
\newblock \emph{Journal of the American Statistical Association}, 103:\penalty0 832--842, 6 2008.
\newblock ISSN 01621459.
\newblock \doi{10.1198/016214508000000292}.

\bibitem[Joag-Dev and Proschan(1983)]{JoagDev1983}
Kumar Joag-Dev and Frank Proschan.
\newblock Negative association of random variables, with applications.
\newblock \emph{The Annals of Statistics}, 11\penalty0 (1):\penalty0 286--295, 1983.
\newblock \doi{10.1214/aos/1176346079}.

\bibitem[Kojevnikov et~al.(2021)Kojevnikov, Marmer, and Song]{Kojevnikov2021}
Denis Kojevnikov, Vadim Marmer, and Kyungchul Song.
\newblock Limit theorems for network dependent random variables.
\newblock \emph{Journal of Econometrics}, 222\penalty0 (2):\penalty0 882--908, 2021.
\newblock \doi{10.1016/j.jeconom.2020.05.019}.

\bibitem[Leung(2023)]{Leung2023}
Michael~P. Leung.
\newblock Network cluster‐robust inference.
\newblock \emph{Econometrica}, 91:\penalty0 641--667, 2023.
\newblock ISSN 0012-9682.
\newblock \doi{10.3982/ecta19816}.

\bibitem[Leung(2025)]{Leung2025}
Michael~P. Leung.
\newblock Cluster-randomized trials with cross-cluster interference.
\newblock \emph{Journal of the American Statistical Association}, 2025.
\newblock \doi{10.1080/01621459.2025.2566416}.
\newblock Published online; in press.

\bibitem[Li and Ding(2017)]{Li2017}
Xinran Li and Peng Ding.
\newblock General forms of finite population central limit theorems with applications to causal inference.
\newblock \emph{Journal of the American Statistical Association}, 112:\penalty0 1759--1769, 10 2017.
\newblock ISSN 1537274X.
\newblock \doi{10.1080/01621459.2017.1295865}.

\bibitem[Liu and Hudgens(2014)]{Liu2014}
Lan Liu and Michael~G. Hudgens.
\newblock Large sample randomization inference of causal effects in the presence of interference.
\newblock \emph{Journal of the American Statistical Association}, 109:\penalty0 288--301, 2014.
\newblock ISSN 1537274X.
\newblock \doi{10.1080/01621459.2013.844698}.

\bibitem[Lu et~al.(2026)Lu, Shi, and Ding]{Lu2026}
Sizhu Lu, Lei Shi, and Peng Ding.
\newblock Estimating within-cluster and between-cluster spillover effects in randomized saturation designs, 2026.
\newblock URL \url{http://arxiv.org/abs/2603.19573}.
\newblock arXiv:2603.19573.

\bibitem[Makofane et~al.(2023)Makofane, Kim, Tchetgen, Bassett, Berkman, Adeagbo, Mcgrath, Seeley, Shahmanesh, Yapa, Herbst, Tanser, and Bärnighausen]{Makofane2023}
Keletso Makofane, Hae-Young Kim, Eric~Tchetgen Tchetgen, Mary~T Bassett, Lisa Berkman, Oluwafemi Adeagbo, Nuala Mcgrath, Janet Seeley, Maryam Shahmanesh, H~Manisha Yapa, Kobus Herbst, Frank Tanser, and Till Bärnighausen.
\newblock Impact of family networks on uptake of health interventions: evidence from a community-randomized control trial aimed at increasing hiv testing in south africa.
\newblock \emph{Journal of the International AIDS Society}, 26:\penalty0 e26142, 2023.
\newblock \doi{10.1002/jia2.26142}.

\bibitem[Merlevède et~al.(2011)Merlevède, Peligrad, and Rio]{Merlevde2009}
Florence Merlevède, Magda Peligrad, and Emmanuel Rio.
\newblock A bernstein type inequality and moderate deviations for weakly dependent sequences.
\newblock \emph{Probability Theory and Related Fields}, 151\penalty0 (3--4):\penalty0 435--474, 2011.
\newblock \doi{10.1007/s00440-010-0304-9}.

\bibitem[Ogburn et~al.(2024)Ogburn, Sofrygin, Díaz, and van~der Laan]{Ogburn2024}
Elizabeth~L. Ogburn, Oleg Sofrygin, Iván Díaz, and Mark~J. van~der Laan.
\newblock Causal inference for social network data.
\newblock \emph{Journal of the American Statistical Association}, 119:\penalty0 597--611, 2024.
\newblock ISSN 1537274X.
\newblock \doi{10.1080/01621459.2022.2131557}.

\bibitem[Park and Wager(2026)]{Park2026}
Bryan Park and Stefan Wager.
\newblock Neyman jackknife: Design-based variance estimation for causal inference under interference, 2026.
\newblock URL \url{http://arxiv.org/abs/2604.24017}.
\newblock arXiv:2604.24017.

\bibitem[Ross(2011)]{Ross2011}
Nathan Ross.
\newblock Fundamentals of {S}tein's method.
\newblock \emph{Probability Surveys}, 8:\penalty0 210--293, 2011.
\newblock \doi{10.1214/11-PS182}.

\bibitem[Shi and Ding(2026)]{Shi2025}
Lei Shi and Peng Ding.
\newblock Berry--esseen bounds for design-based causal inference with possibly diverging treatment levels and varying group sizes.
\newblock \emph{The Annals of Statistics}, 54\penalty0 (1), 2026.
\newblock \doi{10.1214/25-AOS2569}.

\bibitem[Sävje(2024)]{Svje2023}
Fredrik Sävje.
\newblock Causal inference with misspecified exposure mappings: separating definitions and assumptions.
\newblock \emph{Biometrika}, 111\penalty0 (1):\penalty0 1--15, 2024.
\newblock \doi{10.1093/biomet/asad019}.

\end{thebibliography}

\section{Proof}
This section collects the proofs of the main results.


\subsection{Proof of identification theory}
In this subsection, we prove theory results about the identification weights.
\subsubsection{Proof of Theorem \ref{thm:all_weights}}
\begin{proof}[Proof of Theorem \ref{thm:all_weights}]
We prove the ``if'' direction via a change-of-measure argument and the ``only if'' direction via the Riesz representation theorem.

\medskip
\noindent\textbf{``If'' direction.}
By Assumption~\ref{ass:measure}(i), $P_{\phi}(\mathbf{W}_{\mathcal{N}_{ij}})\ll P(\mathbf{W}_{\mathcal{N}_{ij}})$,
so the marginal Radon--Nikodym derivative
$\alpha_{ij}(\phi)=dP_{\phi}(\mathbf{W}_{\mathcal{N}_{ij}})/dP(\mathbf{W}_{\mathcal{N}_{ij}})$
exists, is $\sigma(\mathbf{W}_{\mathcal{N}_{ij}})$-measurable, and satisfies
\begin{equation}\label{eq:change_measure_local}
\mathbb{E}\!\left[f(\mathbf{W}_{\mathcal{N}_{ij}})\,\alpha_{ij}(\phi)\right]
=
\mathbb{E}_{\phi}\!\left[f(\mathbf{W}_{\mathcal{N}_{ij}})\right]
\end{equation}
for every $f\in\mathcal{L}(\mathbf{W}_{\mathcal{N}_{ij}})$. Since \(\mathcal{L}(\mathbf{W}_{\mathcal{N}_{ij}})\) contains all well-defined functions of
\(\mathbf{W}_{\mathcal{N}_{ij}}\), and
\(Y_{ij}=Y_{ij}(\mathbf{W}_{\mathcal{N}_{ij}})\) by
Assumption \ref{ass:potential_outcome}-\ref{ass:interference}, it follows that, for any
\(h\in\mathcal{L}(\mathbf{W}_{\mathcal{N}_{ij}})\),
\[
Y_{ij}(\cdot)\,h(\cdot)\in\mathcal{L}(\mathbf{W}_{\mathcal{N}_{ij}}).
\]
Applying~\eqref{eq:change_measure_local} with $f=Y_{ij}(\cdot)\,h(\cdot)$ yields
\[
\mathbb{E}\!\left[Y_{ij}\,\alpha_{ij}(\phi)\,h(\mathbf{W}_{\mathcal{N}_{ij}})\right]
=
\mathbb{E}_{\phi}\!\left[Y_{ij}\,h(\mathbf{W}_{\mathcal{N}_{ij}})\right],
\]
establishing the ``if'' direction. Parts~(a) and~(b) follow by taking
$h\equiv 1$ and $h(\mathbf{w}_{\mathcal{N}_{ij}})=1(w_{ij}=w)/P_{\phi}(W_{ij}=w)$, respectively.

\medskip
\noindent\textbf{``Only if'' direction.}
Since \(\mathcal{L}(\mathbf{W}_{\mathcal{N}_{ij}})\) contains all well-defined functions of
\(\mathbf{W}_{\mathcal{N}_{ij}}\), we view it as a Hilbert space equipped with the
inner product induced by the product expectation. Fix
\(h\in\mathcal{L}(\mathbf{W}_{\mathcal{N}_{ij}})\). We first show that the map
\[
Y_{ij}(\cdot)
\mapsto
\mathbb{E}_{\phi}\!\left[
Y_{ij}h(\mathbf{W}_{\mathcal{N}_{ij}})
\right]
\]
is a continuous linear functional on
\(\mathcal{L}(\mathbf{W}_{\mathcal{N}_{ij}})\). Linearity is immediate. For continuity, the finite support of \(\mathbf{W}_{\mathcal{N}_{ij}}\) implies
\[
\begin{aligned}
\left|
\mathbb{E}_{\phi}\!\left[
Y_{ij}h(\mathbf{W}_{\mathcal{N}_{ij}})
\right]
\right|
&\leq
\max_{\mathbf{w}_{\mathcal{N}_{ij}}}
\left\{|h(\mathbf{w}_{\mathcal{N}_{ij}})|\right\}
\mathbb{E}_{\phi}\!\left[|Y_{ij}|\right] \\
&\leq
\max_{\mathbf{w}_{\mathcal{N}_{ij}}}
\left\{|h(\mathbf{w}_{\mathcal{N}_{ij}})|\right\}
\max_{\mathbf{w}_{\mathcal{N}_{ij}}:\,
P(\mathbf{W}_{\mathcal{N}_{ij}}=\mathbf{w}_{\mathcal{N}_{ij}})>0}
\left\{
\frac{
P_{\phi}(\mathbf{W}_{\mathcal{N}_{ij}}=\mathbf{w}_{\mathcal{N}_{ij}})
}{
P(\mathbf{W}_{\mathcal{N}_{ij}}=\mathbf{w}_{\mathcal{N}_{ij}})
}
\right\}
\left(\mathbb{E}\!\left[|Y_{ij}|^2\right]\right)^{1/2}.
\end{aligned}
\]
The multiplicative constant on the right-hand side is finite because
\(\mathbf{W}_{\mathcal{N}_{ij}}\) has finite support. Hence the functional is bounded,
and therefore continuous.

By the Riesz representation theorem, there exists a unique
\(\gamma_{ij}\in\mathcal{L}(\mathbf{W}_{\mathcal{N}_{ij}})\) such that
\[
\mathbb{E}\!\left[Y_{ij}\gamma_{ij}\right]
=
\mathbb{E}_{\phi}\!\left[
Y_{ij}h(\mathbf{W}_{\mathcal{N}_{ij}})
\right]
\]
for all \(Y_{ij}\in\mathcal{L}(\mathbf{W}_{\mathcal{N}_{ij}})\). By the ``if'' direction, this
representer is
\[
\gamma_{ij}
=
\alpha_{ij}(\phi)h.
\]
Notice that for any
\(\beta_{ij}\in
\mathcal{L}(\mathbf{W}_{\mathcal{N}_{ij}},\mathbf{C}_{\mathcal{N}_{ij}})\),
\[
\begin{aligned}
\mathbb{E}\!\left[Y_{ij}\beta_{ij}\right]
&=
\mathbb{E}\!\left[
Y_{ij}\,
\mathbb{E}\!\left[\beta_{ij}\mid \mathbf{W}_{\mathcal{N}_{ij}}\right]
\right].
\end{aligned}
\]
It follows that
\[
\mathbb{E}\!\left[Y_{ij}\beta_{ij}\right]
=
\mathbb{E}_{\phi}\!\left[
Y_{ij}h(\mathbf{W}_{\mathcal{N}_{ij}})
\right]
\quad
\text{for all }
Y_{ij}\in\mathcal{L}(\mathbf{W}_{\mathcal{N}_{ij}})
\]
if and only if
\[
\mathbb{E}\!\left[\beta_{ij}\mid \mathbf{W}_{\mathcal{N}_{ij}}\right]
=
\alpha_{ij}(\phi)h.
\]
This completes the proof.
\end{proof}

\subsubsection{Proof of Theorem~\ref{thm:cluster_agnostic_weights}}

\begin{proof}[Proof of Theorem~\ref{thm:cluster_agnostic_weights}]
Fix \(\phi\in\{\phi_0,\phi_1\}\).

\medskip
\noindent\textbf{Part (i).}
We prove the ``if'' and ``only if'' directions in parallel with the proof of
Theorem~\ref{thm:all_weights}. By Assumption~\ref{ass:measure}(ii),
\[
P_{\phi}(\mathbf{W}_{\mathcal{N}_{ij}},\mathbf{C}_{\mathcal{N}_{ij}})
\ll
P(\mathbf{W}_{\mathcal{N}_{ij}},\mathbf{C}_{\mathcal{N}_{ij}}),
\]
so the complete Radon--Nikodym derivative
\[
\alpha_{ij}^{(\mathrm{comp})}(\phi)
=
\frac{
dP_{\phi}(\mathbf{W}_{\mathcal{N}_{ij}},\mathbf{C}_{\mathcal{N}_{ij}})
}{
dP(\mathbf{W}_{\mathcal{N}_{ij}},\mathbf{C}_{\mathcal{N}_{ij}})
}
\]
exists. By construction of the counterfactual law and Assumption \ref{ass:cluster_assignment},
\(\mathbf{W}_{\mathcal{N}_{ij}}\) is independent of
\(\mathbf{C}_{\mathcal{N}_{ij}}\) under \(P_{\phi}\).

\smallskip
\noindent\textbf{``If'' direction.}
For any
\(u\in\mathcal{L}(\mathbf{W}_{\mathcal{N}_{ij}})\) and
\(v\in\mathcal{L}(\mathbf{C}_{\mathcal{N}_{ij}})\),
\begin{align*}
& \mathbb{E}\!\left[
v(\mathbf{C}_{\mathcal{N}_{ij}})
u(\mathbf{W}_{\mathcal{N}_{ij}})
\alpha_{ij}^{(\mathrm{comp})}(\phi)
\right] \\
&\quad =
\mathbb{E}_{\phi}\!\left[
v(\mathbf{C}_{\mathcal{N}_{ij}})
u(\mathbf{W}_{\mathcal{N}_{ij}})
\right] \\
&\quad =
\mathbb{E}_{\phi}\!\left[
v(\mathbf{C}_{\mathcal{N}_{ij}})
\right]
\mathbb{E}_{\phi}\!\left[
u(\mathbf{W}_{\mathcal{N}_{ij}})
\right] \\
&\quad =
\mathbb{E}\!\left[
v(\mathbf{C}_{\mathcal{N}_{ij}})
\right]
\mathbb{E}_{\phi}\!\left[
u(\mathbf{W}_{\mathcal{N}_{ij}})
\right],
\end{align*}
where the second equality uses the \(P_{\phi}\)-independence of
\(\mathbf{W}_{\mathcal{N}_{ij}}\) and \(\mathbf{C}_{\mathcal{N}_{ij}}\), and the third
uses \(P_{\phi}(\mathbf{C})=P(\mathbf{C})\). Therefore,
\[
\mathbb{E}\!\left[
u(\mathbf{W}_{\mathcal{N}_{ij}})
\alpha_{ij}^{(\mathrm{comp})}(\phi)
\mid \mathbf{C}
\right]
=
\mathbb{E}\!\left[
u(\mathbf{W}_{\mathcal{N}_{ij}})
\alpha_{ij}^{(\mathrm{comp})}(\phi)
\mid \mathbf{C}_{\mathcal{N}_{ij}}
\right]
=
\mathbb{E}_{\phi}\!\left[
u(\mathbf{W}_{\mathcal{N}_{ij}})
\right].
\]
Taking \(u=Y_{ij}(\cdot)h(\cdot)\) establishes the ``if'' direction. Parts~(i)(a)
and~(i)(b) follow by setting \(h\equiv 1\) and
\(
h(\mathbf{w}_{\mathcal{N}_{ij}})
=
1(w_{ij}=w)/
P_{\phi}(W_{ij}=w)
\)
respectively.

\smallskip
\noindent\textbf{``Only if'' direction.}
Fix \(h\in\mathcal{L}(\mathbf{W}_{\mathcal{N}_{ij}})\) and
\(\beta_{ij}\in
\mathcal{L}(\mathbf{W}_{\mathcal{N}_{ij}},\mathbf{C}_{\mathcal{N}_{ij}})\).
Suppose that
\[
\mathbb{E}\!\left[
Y_{ij}\beta_{ij}
\mid \mathbf{C}
\right]
=
\mathbb{E}_{\phi}\!\left[
Y_{ij}h(\mathbf{W}_{\mathcal{N}_{ij}})
\right]
\]
for all
\(Y_{ij}\in\mathcal{L}(\mathbf{W}_{\mathcal{N}_{ij}})\). Then, for any
\(v\in\mathcal{L}(\mathbf{C}_{\mathcal{N}_{ij}})\),
\begin{align*}
\mathbb{E}\!\left[
Y_{ij}\beta_{ij}v(\mathbf{C}_{\mathcal{N}_{ij}})
\right]
&=
\mathbb{E}_{\phi}\!\left[
Y_{ij}h(\mathbf{W}_{\mathcal{N}_{ij}})
\right]
\mathbb{E}\!\left[
v(\mathbf{C}_{\mathcal{N}_{ij}})
\right] \\
&=
\mathbb{E}_{\phi}\!\left[
Y_{ij}
v(\mathbf{C}_{\mathcal{N}_{ij}})
h(\mathbf{W}_{\mathcal{N}_{ij}})
\right],
\end{align*}
where the equalities again use the \(P_{\phi}\)-independence of
\(\mathbf{W}_{\mathcal{N}_{ij}}\) and \(\mathbf{C}_{\mathcal{N}_{ij}}\) and
\(P_{\phi}(\mathbf{C})=P(\mathbf{C})\).

Since \(\mathbf{W}_{\mathcal{N}_{ij}}\) and
\(\mathbf{C}_{\mathcal{N}_{ij}}\) both have finite support, intersections of events in
\(\sigma(\mathbf{W}_{\mathcal{N}_{ij}})\) and events of
\(\sigma(\mathbf{C}_{\mathcal{N}_{ij}})\) span 
\(\sigma(\mathbf{W}_{\mathcal{N}_{ij}},\mathbf{C}_{\mathcal{N}_{ij}})\). Therefore, products of functions in
\(\mathcal{L}(\mathbf{W}_{\mathcal{N}_{ij}})\) and functions in
\(\mathcal{L}(\mathbf{C}_{\mathcal{N}_{ij}})\) span 
\(\mathcal{L}(\mathbf{W}_{\mathcal{N}_{ij}},\mathbf{C}_{\mathcal{N}_{ij}})\).
Hence,
\[
\mathbb{E}\!\left[
f(\mathbf{W}_{\mathcal{N}_{ij}},\mathbf{C}_{\mathcal{N}_{ij}})
\beta_{ij}
\right]
=
\mathbb{E}_{\phi}\!\left[
f(\mathbf{W}_{\mathcal{N}_{ij}},\mathbf{C}_{\mathcal{N}_{ij}})
h(\mathbf{W}_{\mathcal{N}_{ij}})
\right]
\]
holds for all
\(f\in
\mathcal{L}(\mathbf{W}_{\mathcal{N}_{ij}},\mathbf{C}_{\mathcal{N}_{ij}})\).

As in the proof of Theorem~\ref{thm:all_weights}, we view
\(\mathcal{L}(\mathbf{W}_{\mathcal{N}_{ij}},\mathbf{C}_{\mathcal{N}_{ij}})\)
as a Hilbert space equipped with the inner product induced by the product
expectation. The map
\[
f(\cdot)
\mapsto
\mathbb{E}_{\phi}\!\left[
f(\mathbf{W}_{\mathcal{N}_{ij}},\mathbf{C}_{\mathcal{N}_{ij}})
h(\mathbf{W}_{\mathcal{N}_{ij}})
\right]
\]
is a continuous linear functional on this Hilbert space. By the Riesz representation
theorem, its representer is unique. Therefore, the identification weight
\(\beta_{ij}\) can only be $\alpha_{ij}^{(comp)}(\phi)$.

\medskip
\noindent\textbf{Part (ii).}
We prove part~(ii)(a). Part~(ii)(b) follows as a corollary: if valid weights
\(\beta_{ij}(w,\phi)\) existed for all \(w\) with \(P_{\phi}(W_{ij}=w)>0\), then
\[
\beta_{ij}(\phi)
=
\sum_{w:\,P_{\phi}(W_{ij}=w)>0}
P_{\phi}(W_{ij}=w)\beta_{ij}(w,\phi)
\]
would be a valid weight for part~(ii)(a).

We prove part~(ii)(a) by contradiction. Suppose that there exists
\[
\beta_{ij}
\in
\mathcal{L}(\mathbf{W}_{\mathcal{N}_{ij}},\mathbf{C}_{\mathcal{N}_{ij}})
\]
such that
\[
\mathbb{E}\!\left[
Y_{ij}\beta_{ij}
\mid \mathbf{C}
\right]
=
\mathbb{E}_{\phi}\!\left[Y_{ij}\right]
\]
for all \(Y_{ij}\in\mathcal{L}(\mathbf{W}_{\mathcal{N}_{ij}})\). Repeating the argument
from part~(i), we obtain
\[
\mathbb{E}\!\left[
f(\mathbf{W}_{\mathcal{N}_{ij}},\mathbf{C}_{\mathcal{N}_{ij}})
\beta_{ij}
\right]
=
\mathbb{E}_{\phi}\!\left[
f(\mathbf{W}_{\mathcal{N}_{ij}},\mathbf{C}_{\mathcal{N}_{ij}})
\right]
\]
for all
\(f\in
\mathcal{L}(\mathbf{W}_{\mathcal{N}_{ij}},\mathbf{C}_{\mathcal{N}_{ij}})\).
Thus \(\beta_{ij}\) must be the Radon--Nikodym derivative of
\(P_{\phi}(\mathbf{W}_{\mathcal{N}_{ij}},\mathbf{C}_{\mathcal{N}_{ij}})\) with
respect to
\(P(\mathbf{W}_{\mathcal{N}_{ij}},\mathbf{C}_{\mathcal{N}_{ij}})\). However, when
Assumption~\ref{ass:measure}(ii) fails, this Radon--Nikodym derivative does not
exist. This contradiction proves part~(ii)(a). 
\end{proof}

\subsubsection{Proof of Proposition \ref{prop:examples_of_weights}}
\begin{proof}[Proof of Proposition \ref{prop:examples_of_weights}]
Fix $(i,j)$ and $\phi_c\in\{\phi_0,\phi_1\}$.

\medskip
\noindent\emph{(i) Marginal Radon--Nikodym derivative weights.}
Since $\alpha_{ij}(\phi_c)$ is $\sigma(\mathbf{W}_{\mathcal{N}_{ij}})$-measurable,
$\mathbb{E}[\alpha_{ij}(\phi_c)\mid\mathbf{W}_{\mathcal{N}_{ij}}]=\alpha_{ij}(\phi_c)$.
Hence the weights in~(i) satisfy the ``if'' condition of Theorem~\ref{thm:all_weights}
(with $h\equiv 1$ and $h=1(W_{ij}=w)/P_{\phi_c}(W_{ij}=w)$, respectively), and therefore achieve unbiased identification.

\medskip
\noindent\emph{(ii) Complete Radon--Nikodym derivative / cluster-agnostic weights.}
The weights in~(ii) are $\alpha_{ij}^{(\mathrm{comp})}(\phi_c)\cdot h$ with $h\equiv 1$ and
$h=1(W_{ij}=w)/P_{\phi_c}(W_{ij}=w)$, respectively.
They therefore satisfy the ``if'' condition of Theorem~\ref{thm:cluster_agnostic_weights}(i)
and achieve cluster-agnostic unbiased identification.

\medskip
\noindent\emph{(iii) Exposure-mapping-level Radon--Nikodym derivative weights.}
Under Assumption~\ref{ass:measure}(i), $P_{\phi_c}(\mathbf{W}_{\mathcal{N}_{ij}})\ll P(\mathbf{W}_{\mathcal{N}_{ij}})$,
which implies $P_{\phi_c}(\mathbf{e}(\mathbf{W}_{\mathcal{N}_{ij}}))\ll P(\mathbf{e}(\mathbf{W}_{\mathcal{N}_{ij}}))$.
Hence the Radon--Nikodym derivative
$\beta_{ij}(\phi_c)=dP_{\phi_c}(\mathbf{e}(\mathbf{W}_{\mathcal{N}_{ij}}))/dP(\mathbf{e}(\mathbf{W}_{\mathcal{N}_{ij}}))$
exists, is $\sigma(\mathbf{e}(\mathbf{W}_{\mathcal{N}_{ij}}))$-measurable, and satisfies
\begin{equation}\label{eq:change_measure_exposure}
\mathbb{E}\!\left[f\!\left(\mathbf{e}(\mathbf{W}_{\mathcal{N}_{ij}})\right)\beta_{ij}(\phi_c)\right]
=\mathbb{E}_{\phi_c}\!\left[f\!\left(\mathbf{e}(\mathbf{W}_{\mathcal{N}_{ij}})\right)\right]
\quad\text{for every }f\in\mathcal{L}(\mathbf{e}(\mathbf{W}_{\mathcal{N}_{ij}})).
\end{equation}
Applying~\eqref{eq:change_measure_exposure} with $f=Y_{ij}$ gives
$\mathbb{E}[Y_{ij}\beta_{ij}(\phi_c)]=\mathbb{E}_{\phi_c}[Y_{ij}]=\bar{Y}_{ij}(\phi_c)$, and applying~\eqref{eq:change_measure_exposure}
with $f(\mathbf{e})=Y_{ij}(\mathbf{e})\cdot 1(w_{ij}=w)/P_{\phi_c}(W_{ij}=w)$
yields $\mathbb{E}[Y_{ij}\beta_{ij}(w,\phi_c)]=\mathbb{E}_{\phi_c}[Y_{ij}\mid W_{ij}=w]=\bar{Y}_{ij}(w,\phi_c)$.

\medskip
\noindent\emph{(iv) IPT weights.}
Under $\phi_c$, $\mathbf W_{\mathcal N_{ij}}$ is drawn from the conditional
unit-level assignment distribution given that all clusters in
$\mathcal N_{ij}$ receive cluster-level treatment $c$. Hence, for any
feasible $\mathbf w_{\mathcal{N}_{ij}}$,
\[
P_{\phi_c}\!\left(\mathbf W_{\mathcal N_{ij}}=\mathbf w_{\mathcal{N}_{ij}}\right)
=
P\!\left(
\mathbf W_{\mathcal N_{ij}}=\mathbf w_{\mathcal{N}_{ij}}
\mid
\mathbf C_{\mathcal N_{ij}}=c\mathbf 1
\right).
\]
Then, for any $Y_{ij}\in\mathcal{L}(\mathbf{W}_{\mathcal{N}_{ij}})$, and any $h\in \mathcal{L}(\mathbf{W}_{\mathcal{N}_{ij}})$,
\begin{align*}
\mathbb{E}\!\left[Y_{ij}h(\mathbf{W}_{\mathcal{N}_{ij}})\,\frac{1(\mathbf{C}_{\mathcal{N}_{ij}}=c\mathbf{1})}{P(\mathbf{C}_{\mathcal{N}_{ij}}=c\mathbf{1})}\right]
&=\sum_{\mathbf{w}_{\mathcal{N}_{ij}}}P(\mathbf{W}_{\mathcal{N}_{ij}}=\mathbf{w}_{\mathcal{N}_{ij}},\mathbf{C}_{\mathcal{N}_{ij}}=c\mathbf{1})\frac{1}{P(\mathbf{C}_{\mathcal{N}_{ij}}=c\mathbf{1})}Y_{ij}(\mathbf{w}_{\mathcal{N}_{ij}})h(\mathbf{w}_{\mathcal{N}_{ij}})\\
&=\sum_{\mathbf{w}_{\mathcal{N}_{ij}}}P(\mathbf{W}_{\mathcal{N}_{ij}}=\mathbf{w}_{\mathcal{N}_{ij}}|\mathbf{C}_{\mathcal{N}_{ij}}=c\mathbf{1})Y_{ij}(\mathbf{w}_{\mathcal{N}_{ij}})h(\mathbf{w}_{\mathcal{N}_{ij}})\\
&=\sum_{\mathbf{w}_{\mathcal{N}_{ij}}}P_{\phi_c}(\mathbf{W}_{\mathcal{N}_{ij}}=\mathbf{w}_{\mathcal{N}_{ij}})Y_{ij}(\mathbf{w}_{\mathcal{N}_{ij}})h(\mathbf{w}_{\mathcal{N}_{ij}})\\
&=\mathbb{E}_{\phi_c}\!\left[Y_{ij}h(\mathbf{W}_{\mathcal{N}_{ij}})\right].
\end{align*}
By plugging in $h\equiv 1$ and
$h=1(W_{ij}=w)/P_{\phi_c}(W_{ij}=w)$ respectively, we obtain the unbiasedness of $\beta_{ij}(\phi_c)$ and of the product weight $\frac{1(\mathbf{C}_{\mathcal{N}_{ij}}=c\mathbf{1})}{P(\mathbf{C}_{\mathcal{N}_{ij}}=c\mathbf{1})}\cdot\frac{1(W_{ij}=w)}{P_{\phi_c}(W_{ij}=w)}$. Marginalizing the identity above to the single coordinate $W_{ij}$ gives $P_{\phi_c}(W_{ij}=w)=P(W_{ij}=w\mid\mathbf{C}_{\mathcal{N}_{ij}}=c\mathbf{1})$, hence $P(\mathbf{C}_{\mathcal{N}_{ij}}=c\mathbf{1},\,W_{ij}=w)=P(\mathbf{C}_{\mathcal{N}_{ij}}=c\mathbf{1})\,P_{\phi_c}(W_{ij}=w)$, so this product weight coincides with $\beta_{ij}(w,\phi_c)$ in the proposition statement, completing the proof.
\end{proof}


\subsection{Proof of asymptotic theory}
This subsection establishes the convergence rate and asymptotic normality of our LW estimators. In some proofs, we first state auxiliary lemmas that simplify the argument; these are either drawn directly from the literature and cited by their source, or are new results whose proofs are collected in Section~\ref{subsec:aux_lemmas} below.

\subsubsection{Proof of Theorem \ref{thm:variance_order}}
The proof relies on the following three auxiliary lemmas.

\begin{lemma}[Cross-cluster dependency set size]\label{lem:complete_neighborhood_size}
Under Assumptions~\ref{ass:interference} and~\ref{ass:CLT_regularity}(ii)--(iv), the dependency set
\[
\Lambda_{ij} := \bigl\{(i',j')\in\mathcal{P} : \mathcal{N}_{ij}^{(cl)}\cap\mathcal{N}_{i'j'}^{(cl)}\neq\varnothing\bigr\},
\]
which comprises all units $(i',j')$ whose cluster-level interference neighborhood overlaps with that of $(i,j)$, satisfies
\[
\sup_{(i,j)\in\mathcal{P}}|\Lambda_{ij}| = O\!\left(\frac{N}{n}\right).
\]
\end{lemma}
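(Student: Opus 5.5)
The plan is a double-counting argument over clusters. Fix a unit $(i,j)$. A unit $(i',j')$ belongs to $\Lambda_{ij}$ exactly when some cluster $k\in\mathcal N_{ij}^{(cl)}$ also lies in $\mathcal N_{i'j'}^{(cl)}$. This gives the union bound
\[
|\Lambda_{ij}|\le \sum_{k\in\mathcal N_{ij}^{(cl)}}\bigl|\{(i',j')\in\mathcal P: k\in\mathcal N_{i'j'}^{(cl)}\}\bigr|.
\]
By the first inequality of Assumption~\ref{ass:CLT_regularity}(ii), the outer sum has at most $C_{\mathcal N^{(cl)}}$ terms. It therefore suffices to show that, for each fixed cluster $k$, the set $R_k:=\{(i',j'): k\in\mathcal N_{i'j'}^{(cl)}\}$ has size $O(N/n)$ uniformly in $k$.

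To bound $|R_k|$, I use that the interference network is undirected, so $(i',j')\in\mathcal N_{k l}$ if and only if $(k,l)\in\mathcal N_{i'j'}$. Hence $k\in\mathcal N_{i'j'}^{(cl)}$ holds iff there is some $l$ with $(k,l)\in\mathcal N_{i'j'}$. That in turn holds iff $(i',j')\in\mathcal N_{kl}$ for some $l\le N_k$. This shows
\[
R_k=\bigcup_{l=1}^{N_k}\mathcal N_{kl},
\]
and the second inequality of Assumption~\ref{ass:CLT_regularity}(ii) gives $|R_k|\le C_{\mathcal N^{(cl)}}N_k$.

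Finally, Assumption~\ref{ass:CLT_regularity}(v) gives $N_k\le \max_i N_i\le C_N\min_i N_i\le C_N N/n$. Combining the steps,
\[
\sup_{(i,j)}|\Lambda_{ij}|\le C_{\mathcal N^{(cl)}}^2C_N\,N/n=O(N/n).
\]
The only step needing care is the identification of $R_k$ with the union of neighborhoods. This relies on the symmetry of the undirected network, which the paper stipulates when introducing $\mathcal N_{ij}$. Without symmetry, one would need a separate bound on in-degrees at the cluster level. Assumption~(iv) is not actually needed for this counting argument.
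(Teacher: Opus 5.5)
Your proof is correct and follows the paper's own argument essentially verbatim. Both take a union over the at most $C_{\mathcal N^{(cl)}}$ clusters of $\mathcal N_{ij}^{(cl)}$, with each cluster $k$ contributing $|\bigcup_{l}\mathcal N_{kl}|\le C_{\mathcal N^{(cl)}}N_k\le C_{\mathcal N^{(cl)}}C_N N/n$, which gives the same constant $C_{\mathcal N^{(cl)}}^2C_N N/n$. Your explicit appeal to the symmetry of the undirected network, to identify $\{(i',j'):k\in\mathcal N_{i'j'}^{(cl)}\}$ with $\bigcup_l\mathcal N_{kl}$, is exactly the step the paper uses only implicitly. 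Like the paper's proof, you correctly invoke Assumption~\ref{ass:CLT_regularity}(v), even though the lemma's stated hypotheses list only (ii)--(iv).
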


\begin{lemma}[Covariance bounds via the mixing coefficient and via the joint support]
\label{lem:mixing_coef}
Let $\mathcal{F}_1$ and $\mathcal{F}_2$ be two $\sigma$-fields on a probability
space with finitely many states, and measure their dependence by the
$\psi$-mixing coefficient
\[
\psi(\mathcal{F}_1,\mathcal{F}_2)
:=
\sup_{\substack{A\in\mathcal{F}_1,\;B\in\mathcal{F}_2\\ P(A)P(B)>0}}
\left|\frac{P(A\cap B)}{P(A)P(B)}-1\right| ,
\]
which compares the joint probability with the product of the marginal
probabilities multiplicatively rather than additively. Then, for any
$\mathcal{F}_1$-measurable $f$ and any $\mathcal{F}_2$-measurable $g$,
\begin{enumerate}
\item[(i)] if $f$ and $g$ are integrable,
\[
\bigl|\mathrm{Cov}(f,g)\bigr|
\leq
\psi(\mathcal{F}_1,\mathcal{F}_2)\,\mathbb{E}|f|\,\mathbb{E}|g| ;
\]
\item[(ii)] if $f$ and $g$ are bounded,
\[
\bigl|\mathrm{Cov}(f,g)\bigr|
\leq
\|f\|_{\infty}\|g\|_{\infty}
\Bigl(P(fg\neq 0)+P(f\neq 0)P(g\neq 0)\Bigr).
\]
\end{enumerate}
\end{lemma}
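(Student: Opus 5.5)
The statement to prove is the last one displayed, Lemma~\ref{lem:mixing_coef}. Both parts are elementary, and the plan is to exploit the finite state space throughout. On such a space every $\mathcal F_1$-measurable $f$ takes finitely many values. So $f=\sum_a a\,1_{A_a}$, where the atoms $A_a=\{f=a\}\in\mathcal F_1$ partition the space. Similarly $g=\sum_b b\,1_{B_b}$ with $B_b=\{g=b\}\in\mathcal F_2$.

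\emph{Part (i).} Bilinearity of the covariance gives
\[
\mathrm{Cov}(f,g)=\sum_{a,b} a\,b\,\bigl(P(A_a\cap B_b)-P(A_a)P(B_b)\bigr).
\]
Any term with $P(A_a)P(B_b)=0$ vanishes, because then $P(A_a\cap B_b)=0$ as well. For every remaining term, the definition of $\psi(\mathcal F_1,\mathcal F_2)$ gives
\[
|P(A_a\cap B_b)-P(A_a)P(B_b)|\le \psi(\mathcal F_1,\mathcal F_2)\,P(A_a)P(B_b).
\]
The triangle inequality then yields
\[
|\mathrm{Cov}(f,g)|\le\psi\sum_{a}|a|P(A_a)\sum_b|b|P(B_b)=\psi\,\mathbb E|f|\,\mathbb E|g|.
\]
Integrability is automatic on a finite space, so nothing more is needed.

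\emph{Part (ii).} Here we do not use $\psi$ at all. Write $\mathrm{Cov}(f,g)=\mathbb E[fg]-\mathbb E f\,\mathbb E g$ and bound the two pieces separately. Since $fg$ vanishes off $\{fg\neq0\}$,
\[
|\mathbb E[fg]|\le\|f\|_\infty\|g\|_\infty P(fg\ne0).
\]
Likewise $|\mathbb E f|\le\|f\|_\infty P(f\ne0)$ and $|\mathbb E g|\le\|g\|_\infty P(g\ne0)$. The triangle inequality combines these into the stated bound.

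I expect no real obstacle. The only points needing care are the null atoms in part (i), which are excluded from the supremum but contribute zero anyway, and noting that the partition into level sets makes the sum finite.
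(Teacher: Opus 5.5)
Your proof is correct and complete. In (i), the level-set expansion reduces the covariance to a finite sum of terms $ab\,\bigl(P(A_a\cap B_b)-P(A_a)P(B_b)\bigr)$, each bounded by $\psi\,|a||b|\,P(A_a)P(B_b)$; in (ii), the triangle-inequality bound is exactly what is needed. The paper states this lemma without proof; its auxiliary-lemma section proves only Lemmas~\ref{lem:complete_neighborhood_size}, \ref{lem:mixing_coef_complete_randomization} and~\ref{lem:scr_covariance}. Your atom-wise argument for (i) uses the same averaging-over-atoms idea the paper relies on when it passes from atoms to events in the proofs of Lemma~\ref{lem:mixing_coef_complete_randomization} and Proposition~\ref{prop:coupling}(iv).
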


\begin{lemma}[$\psi$-mixing coefficient for complete randomization]
\label{lem:mixing_coef_complete_randomization}
Let $\{W_i\}_{i=1}^{N}$ be treatment assignments generated by a stratified complete
randomization scheme: the index set $\{1,\ldots,N\}$ is partitioned into $K$ disjoint
strata $\mathcal{I}_1,\ldots,\mathcal{I}_K$ (where $K$ may grow with $N$), and within each stratum $k$
the subvector $\mathbf{W}_{\mathcal{I}_k}$ is drawn uniformly from all binary vectors
summing to $m_k$, independently across strata.
Suppose there exist constants $p>0$ and $C_S<\infty$, independent of $N$, and a
sequence $M=M(N)\to\infty$, such that
\[
  \min_{k}|\mathcal{I}_k|=\Omega(M),
  \qquad
  \frac{m_k}{|\mathcal{I}_k|}\in\{0,1\}\cup[p,\,1-p]
  \quad\text{for all }k.
\]
Let $S_1,S_2\subseteq\{1,\ldots,N\}$ be two disjoint index sets with
$|S_1|\vee|S_2|\leq C_S$, and let
$\mathcal{F}_t:=\sigma\bigl(\{W_i:i\in S_t\}\bigr)$ for $t=1,2$. Then, with
$\psi$ the $\psi$-mixing coefficient of Lemma~\ref{lem:mixing_coef},
\[
  \psi(\mathcal{F}_1,\mathcal{F}_2)
  \leq
  \frac{C}{M},
\]
where $C<\infty$ depends only on $C_S$, on $p$, and on the implied constant in
$\min_k|\mathcal I_k|=\Omega(M)$. In particular $C$ does not depend on $N$, on
the strata, or on the choice of $S_1$ and $S_2$, so both bounds hold uniformly
over all such pairs of index sets.
\end{lemma}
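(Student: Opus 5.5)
The plan is to reduce the supremum over events to a supremum over atoms, factor across strata, and then compute each within-stratum ratio explicitly with falling factorials.

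\textbf{Step 1: reduction to atoms.} Since the probability space has finitely many states, $\mathcal F_1$ and $\mathcal F_2$ are generated by the atoms $\{\mathbf W_{S_1}=\mathbf a\}$ and $\{\mathbf W_{S_2}=\mathbf b\}$, where $\mathbf a\in\{0,1\}^{|S_1|}$ and $\mathbf b\in\{0,1\}^{|S_2|}$. Suppose we can show
\[
\bigl|P(\mathbf a,\mathbf b)-P(\mathbf a)P(\mathbf b)\bigr|\le \tfrac{C}{M}\,P(\mathbf a)P(\mathbf b)
\]
for every pair of atoms. For any $A=\cup\mathbf a$ and $B=\cup\mathbf b$, summing over atoms then gives $|P(A\cap B)-P(A)P(B)|\le \frac{C}{M}P(A)P(B)$, which is the claim. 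Atoms of probability zero contribute nothing to either side, so we only need the bound for atoms with $P(\mathbf a)P(\mathbf b)>0$.

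\textbf{Step 2: factorization across strata.} Strata are drawn independently, so
\[
\frac{P(\mathbf a,\mathbf b)}{P(\mathbf a)P(\mathbf b)}=\prod_k \rho_k,
\]
where $\rho_k$ is the analogous ratio for the restrictions of $\mathbf a$ and $\mathbf b$ to $\mathcal I_k$. Two kinds of strata give $\rho_k=1$: strata not meeting both $S_1$ and $S_2$, and degenerate strata with $m_k/|\mathcal I_k|\in\{0,1\}$, where the assignment is deterministic. Hence at most $C_S$ factors are nontrivial. If each nontrivial factor satisfies $\rho_k=1+O(1/M)$ with a constant depending only on $C_S$ and $p$, the product of at most $C_S$ such factors is also $1+O(1/M)$, as required.

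\textbf{Step 3: within-stratum computation.} Fix a nondegenerate stratum with $n=|\mathcal I_k|\ge cM$ and $m=m_k\in[pn,(1-p)n]$. Let $\mathbf a$ place $s_1$ ones among $r_1$ coordinates in this stratum, and $\mathbf b$ place $s_2$ ones among $r_2$ coordinates. Exchangeability gives
\[
P(\mathbf a)=\frac{(m)_{s_1}(n-m)_{r_1-s_1}}{(n)_{r_1}},\qquad
P(\mathbf a,\mathbf b)=\frac{(m)_{s_1+s_2}(n-m)_{r_1+r_2-s_1-s_2}}{(n)_{r_1+r_2}},
\]
with $(x)_r$ the falling factorial, and similarly for $P(\mathbf b)$. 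The ratio $\rho_k$ is then a product of at most $4C_S$ factors, each of the form $(m-s_1-i)/(m-i)$, $(n-m-\cdot)/(n-m-\cdot)$, or its reciprocal analogue in $n$. Every numerator and denominator is at least $pn-2C_S$. Therefore each factor is $1+O(C_S/(pn))$, and $\rho_k=1+O(C_S^2/(pcM))$.

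\textbf{Main obstacle.} The delicate point is positivity and uniformity, not the algebra. One must ensure that $P(\mathbf a)P(\mathbf b)>0$ forces $P(\mathbf a,\mathbf b)>0$, and that the denominators above stay of order $n$. Both follow once $pcM>2C_S$, because the stratum then has enough treated and control units to realize any configuration on $2C_S$ coordinates. The remaining regime $M\le M_0:=2C_S/(pc)$ is bounded, but there is a gap to close there. The bound $\psi\le C/M$ in that regime needs $\psi$ itself to be bounded by a constant depending only on $C_S$, $p$ and $c$. I would first try to get this from a lower bound on atom probabilities, but such a bound is not automatic when strata are small. So I would either state the lemma for $M\ge M_0$, which is the only regime used asymptotically, or enlarge $C$ after verifying a uniform lower bound on $P(\mathbf a)P(\mathbf b)$ over that bounded range.
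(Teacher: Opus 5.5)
Your proposal is correct and is essentially the paper's proof: restrict to positive-probability atoms, factor the ratio over the independent strata, discard strata that miss $S_1$ or $S_2$ and degenerate strata, bound each remaining within-stratum ratio by an explicit falling-factorial product whose factors are each $1+O(1/M)$ (the paper's $T_1T_2T_3$), and then sum from atoms back to events. The bounded-$M$ regime you flag is not a real gap, since the paper's argument is equally asymptotic, and your second fix in fact works unconditionally: in a nondegenerate stratum of size below $2C_S/p$ a positive-probability pattern has probability at least $2^{-2C_S/p}$, while larger nondegenerate strata give at least $(p/2)^{r_k}$, so every positive-probability atom of $\mathcal F_2$ has probability at least $(p/2)^{C_S}2^{-2C_S^2/p}$, hence $\psi$ is uniformly bounded and enlarging $C$ covers $M\le M_0$.
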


\begin{proof}[Proof of Theorem \ref{thm:variance_order}]
Since the arguments for $\boldsymbol\Sigma_{\mathrm{marg}}$ and $\boldsymbol\Sigma$ are identical
(replacing $\mathbf{B}_{\mathrm{marg}}$ and $\boldsymbol\tau_{ij,\mathrm{marg}}$ by $\mathbf{B}$ and $\boldsymbol\tau_{ij}$),
we prove only the bound for $\boldsymbol\Sigma$.

Define the vector-valued summands
$\mathbf{V}_{ij}:=\frac{g_i}{N_i}\mathbf{B}_{(*,ij)}\circ(Y_{ij}\mathbf{1}-\boldsymbol\tau)$,
so $\boldsymbol\Sigma=\mathrm{Var}\!\bigl(\sum_{(i,j)}\mathbf{V}_{ij}\bigr)=\sum_{(i,j)}\sum_{(i',j')}\mathrm{Cov}(\mathbf{V}_{ij},\mathbf{V}_{i'j'})$.
Throughout, $\|\mathbf{V}_{ij}\|_\infty=\max_k\operatorname{ess\,sup}|(\mathbf{V}_{ij})_k|$ for random vectors and $\|\mathbf{M}\|_\infty=\max_{k,l}|M_{kl}|$ for deterministic matrices; the operator norm satisfies $\|\mathbf{M}\|\le d\|\mathbf{M}\|_\infty$ for $d\times d$ matrices.
Assumptions~\ref{ass:CLT_regularity}(iii)--(iv) give $\|\mathbf{V}_{ij}\|_\infty\le C\,g_i/N_i$ a.s.,
and together with $N_i\ge N/(C_N n)$ from Assumption~\ref{ass:CLT_regularity}(v) and $\max_i g_i=O(1/n)$ from Assumption~\ref{ass:CLT_regularity}(vi),
\[
\sup_{(i,j)}\|\mathbf{V}_{ij}\|_\infty=O\!\left(\frac{1}{N}\right),
\qquad
\sum_{(i,j)}\|\mathbf{V}_{ij}\|_\infty\le C\sum_i g_i=O(1).
\]

\medskip
\noindent\textbf{Part (i) under Assumption~\ref{ass:independent_rand}(i).}
Since $\mathbf{V}_{ij}$ depends on $(\mathbf{C}_{\mathcal{N}_{ij}^{(cl)}},\mathbf{W}_{\mathcal{N}_{ij}})$, the law of total covariance gives
\[
\mathrm{Cov}(\mathbf{V}_{ij},\mathbf{V}_{i'j'})
=\mathrm{Cov}\!\bigl(\mathbb{E}[\mathbf{V}_{ij}\mid\mathbf{C}],\,\mathbb{E}[\mathbf{V}_{i'j'}\mid\mathbf{C}]\bigr)
+\mathbb{E}\!\bigl[\mathrm{Cov}(\mathbf{V}_{ij},\mathbf{V}_{i'j'}\mid\mathbf{C})\bigr].
\]
For $(i',j')\notin\Lambda_{ij}$, i.e., $\mathcal{N}_{ij}^{(cl)}\cap\mathcal{N}_{i'j'}^{(cl)}=\varnothing$, both terms vanish.
The second term is zero because, given $\mathbf{C}$, Assumption~\ref{ass:cluster_assignment}(ii) makes $\mathbf{W}_i\indep\mathbf{W}_{i'}$ for all $i\neq i'$; since $\mathcal{N}_{ij}$ and $\mathcal{N}_{i'j'}$ lie in disjoint clusters, $\mathbf{W}_{\mathcal{N}_{ij}}\indep\mathbf{W}_{\mathcal{N}_{i'j'}}\mid\mathbf{C}$, so $\mathrm{Cov}(\mathbf{V}_{ij},\mathbf{V}_{i'j'}\mid\mathbf{C})=\mathbf{0}$ a.s.
The first term is zero because $\mathbb{E}[\mathbf{V}_{ij}\mid\mathbf{C}]$ is a function of $\mathbf{C}_{\mathcal{N}_{ij}^{(cl)}}$ alone, and by Assumption~\ref{ass:independent_rand}(i) the $C_i$ are mutually independent; disjointness of the two cluster index sets therefore implies $\mathbf{C}_{\mathcal{N}_{ij}^{(cl)}}\indep\mathbf{C}_{\mathcal{N}_{i'j'}^{(cl)}}$, hence $\mathrm{Cov}(\mathbb{E}[\mathbf{V}_{ij}\mid\mathbf{C}],\mathbb{E}[\mathbf{V}_{i'j'}\mid\mathbf{C}])=\mathbf{0}$.
Only pairs $(i',j')\in\Lambda_{ij}$ contribute to $\boldsymbol\Sigma$. Bounding each entry by $\|\mathbf{V}_{ij}\|_\infty\|\mathbf{V}_{i'j'}\|_\infty$ and applying Lemma~\ref{lem:complete_neighborhood_size},
\[
\|\boldsymbol\Sigma\|_\infty
\le\sup_{(i,j)}|\Lambda_{ij}|\cdot\sup_{(i',j')}\|\mathbf{V}_{i'j'}\|_\infty\cdot\sum_{(i,j)}\|\mathbf{V}_{ij}\|_\infty
=O\!\left(\frac{N}{n}\right)\cdot O\!\left(\frac{1}{N}\right)\cdot O(1)=O\!\left(\frac{1}{n}\right).
\]
Since $\boldsymbol\Sigma$ has fixed dimension $d$, $\|\boldsymbol\Sigma\|\le d\|\boldsymbol\Sigma\|_\infty=O(n^{-1})$.

\medskip
\noindent\textbf{Part (i) under Assumption~\ref{ass:complete_rand}(i).}
The cluster-level assignments $\{C_i\}_{i=1}^n$ satisfy stratified complete randomization with strata $\{\mathcal{I}_k\}$ meeting the conditions of Lemma~\ref{lem:mixing_coef_complete_randomization} with population size~$n$.
The same law-of-total-covariance decomposition applies:
\[
\mathrm{Cov}(\mathbf{V}_{ij},\mathbf{V}_{i'j'})
=\mathrm{Cov}\!\bigl(\mathbb{E}[\mathbf{V}_{ij}\mid\mathbf{C}],\,\mathbb{E}[\mathbf{V}_{i'j'}\mid\mathbf{C}]\bigr)
+\mathbb{E}\!\bigl[\mathrm{Cov}(\mathbf{V}_{ij},\mathbf{V}_{i'j'}\mid\mathbf{C})\bigr].
\]
For $(i',j')\notin\Lambda_{ij}$, the second term vanishes by the same reasoning as in the Bernoulli case: given $\mathbf{C}$, Assumption~\ref{ass:cluster_assignment}(ii) implies $\mathbf{W}_{\mathcal{N}_{ij}}\indep\mathbf{W}_{\mathcal{N}_{i'j'}}\mid\mathbf{C}$, so $\mathrm{Cov}(\mathbf{V}_{ij},\mathbf{V}_{i'j'}\mid\mathbf{C})=\mathbf{0}$ a.s.
The first term, however, need not vanish for pairs $(i',j')\notin\Lambda_{ij}$. This is because, under complete randomization, cluster-level assignments within the same stratum are not independent but only weakly dependent. Therefore, $\mathbb{E}[\mathbf{V}_{ij}\mid\mathbf{C}]$ and $\mathbb{E}[\mathbf{V}_{i'j'}\mid\mathbf{C}]$ may remain weakly dependent for pairs in the same stratum.
For such pairs, Lemma~\ref{lem:mixing_coef_complete_randomization} applies with $M\asymp n$ and $C_S=2C_{\mathcal{N}^{(cl)}}$, giving a $\psi$-mixing coefficient $\psi=O(1/n)$ that is uniform over the pairs, and then Lemma~\ref{lem:mixing_coef}(i) yields
\[
\bigl\|\mathrm{Cov}\!\bigl(\mathbb{E}[\mathbf{V}_{ij}\mid\mathbf{C}],\,\mathbb{E}[\mathbf{V}_{i'j'}\mid\mathbf{C}]\bigr)\bigr\|_\infty
\le \bigl\|\mathbb{E}[\mathbf{V}_{ij}\mid\mathbf{C}]\bigr\|_\infty\,
     \bigl\|\mathbb{E}[\mathbf{V}_{i'j'}\mid\mathbf{C}]\bigr\|_\infty\cdot O\!\left(\frac{1}{n}\right)
\le \|\mathbf{V}_{ij}\|_\infty\,\|\mathbf{V}_{i'j'}\|_\infty\cdot O\!\left(\frac{1}{n}\right),
\]
using $\mathbb{E}|\cdot|\le\|\cdot\|_\infty$ entrywise and the fact that conditional expectation is a contraction for the sup-norm.
Summing over all non-$\Lambda_{ij}$ pairs,
\[
\Bigl\|\sum_{(i,j)}\sum_{(i',j')\notin\Lambda_{ij}}\mathrm{Cov}(\mathbf{V}_{ij},\mathbf{V}_{i'j'})\Bigr\|_\infty
\le O\!\left(\frac{1}{n}\right)\cdot\Bigl(\sum_{(i,j)}\|\mathbf{V}_{ij}\|_\infty\Bigr)^2=O\!\left(\frac{1}{n}\right).
\]
For pairs $(i',j')\in\Lambda_{ij}$, Lemma~\ref{lem:complete_neighborhood_size} gives $|\Lambda_{ij}|=O(N/n)$, and bounding both terms of the variance decomposition directly by $\|\mathbf{V}_{ij}\|_\infty\|\mathbf{V}_{i'j'}\|_\infty$ yields $O(1/n)$ by the same calculation as in the Bernoulli case.
Combining both contributions and using fixed dimension, $\|\boldsymbol\Sigma\|=O(n^{-1})$.

\medskip
\noindent\textbf{Part (ii) under Assumption~\ref{ass:independent_rand}(ii).}
Under Assumption~\ref{ass:weight_cluster_agnostic} and Assumption~\ref{ass:CLT_regularity}(i), each $\mathbf{V}_{ij}$ is $\sigma(\mathbf{W}_{\mathcal{N}_{ij}})$-measurable with $|\mathcal{N}_{ij}|\le C_{\mathcal{N}}$.
Theorem~\ref{thm:cluster_agnostic_weights} with $h\equiv 1$ gives 
\[
\mathbb{E}[\mathbf{V}_{ij}\mid\mathbf{C}]
=
\frac{g_i}{N_i}\mathbb{E}[\mathbf{B}_{(*,ij)}\circ(Y_{ij}\mathbf{1})\mid\mathbf{C}]-\frac{g_i}{N_i}\mathbb{E}[\mathbf{B}_{(*,ij)}\mid\mathbf{C}]\circ \boldsymbol\tau
=
\frac{g_i}{N_i}(\boldsymbol\tau_{ij}-\boldsymbol\tau),
\]
which is a constant. Consequently, $\mathbb{E}\bigl[\sum_{(i,j)}\mathbf{V}_{ij}\mid\mathbf{C}\bigr]=\sum_{(i,j)}\frac{g_i}{N_i}(\boldsymbol\tau_{ij}-\boldsymbol\tau)$ is a deterministic vector, so its variance vanishes, and the law of total variance gives
\[
\boldsymbol\Sigma
=\mathbb{E}\Bigl[\mathrm{Var}\Bigl(\sum_{(i,j)}\mathbf{V}_{ij}\Bigm|\mathbf{C}\Bigr)\Bigr]
+\mathrm{Var}\Bigl(\mathbb{E}\Bigl[\sum_{(i,j)}\mathbf{V}_{ij}\Bigm|\mathbf{C}\Bigr]\Bigr)
=\mathbb{E}[\boldsymbol\Sigma(\mathbf{C})],
\]
where $\boldsymbol\Sigma(\cdot)$ is the conditional covariance matrix defined in the statement of the theorem. Since Theorem~\ref{thm:cluster_agnostic_weights} applies to $\mathbf{B}_{\mathrm{marg}}$ and $\boldsymbol\tau_{\mathrm{marg}}$ in exactly the same way, the identical argument yields $\boldsymbol\Sigma_{\mathrm{marg}}=\mathbb{E}[\boldsymbol\Sigma_{\mathrm{marg}}(\mathbf{C})]$. This proves the decomposition claimed in part (ii). Moreover, expanding the conditional variance entrywise, the decomposition reduces $\boldsymbol\Sigma$ to
\[
\boldsymbol\Sigma=\sum_{(i,j)}\sum_{(i',j')}\mathbb{E}\bigl[\mathrm{Cov}(\mathbf{V}_{ij},\mathbf{V}_{i'j'}\mid\mathbf{C})\bigr].
\]
By Assumptions~\ref{ass:independent_rand}(ii) and~\ref{ass:cluster_assignment}(ii), given $\mathbf{C}$ the assignments $\{W_{ij}\}$ are mutually independent, so $\mathrm{Cov}(\mathbf{V}_{ij},\mathbf{V}_{i'j'}\mid\mathbf{C})=\mathbf{0}$ a.s.\ whenever $\mathcal{N}_{ij}\cap\mathcal{N}_{i'j'}=\varnothing$.
Since $|\mathcal{N}_{ij}|\le C_{\mathcal{N}}$, at most $C_{\mathcal{N}}^2$ partners $(i',j')$ per $(i,j)$ can satisfy $\mathcal{N}_{ij}\cap\mathcal{N}_{i'j'}\neq\varnothing$. Bounding each nonzero entry by $\|\mathbf{V}_{ij}\|_\infty\|\mathbf{V}_{i'j'}\|_\infty$,
\[
\|\boldsymbol\Sigma\|_\infty
\le C_{\mathcal{N}}^2\cdot\sup_{(i',j')}\|\mathbf{V}_{i'j'}\|_\infty\cdot\sum_{(i,j)}\|\mathbf{V}_{ij}\|_\infty
=C_{\mathcal{N}}^2\cdot O\!\left(\frac{1}{N}\right)\cdot O(1)=O\!\left(\frac{1}{N}\right).
\]
Hence $\|\boldsymbol\Sigma\|=O(N^{-1})$.

\medskip
\noindent\textbf{Part (ii) under Assumption~\ref{ass:complete_rand}(ii).}
By definition of the cluster-agnostic weights, the computation above again gives the constant $\mathbb{E}[\mathbf{V}_{ij}\mid\mathbf{C}]=\frac{g_i}{N_i}(\boldsymbol\tau_{ij}-\boldsymbol\tau)$. Hence the decompositions $\boldsymbol\Sigma=\mathbb{E}[\boldsymbol\Sigma(\mathbf{C})]$ and $\boldsymbol\Sigma_{\mathrm{marg}}=\mathbb{E}[\boldsymbol\Sigma_{\mathrm{marg}}(\mathbf{C})]$ continue to hold in this case, and the same entrywise reduction applies: $\boldsymbol\Sigma=\sum_{(i,j),(i',j')}\mathbb{E}[\mathrm{Cov}(\mathbf{V}_{ij},\mathbf{V}_{i'j'}\mid\mathbf{C})]$.
By Assumption~\ref{ass:cluster_assignment}(ii), within-cluster assignments are independent across clusters conditional on $\mathbf{C}$, so $\mathrm{Cov}(\mathbf{V}_{ij},\mathbf{V}_{i'j'}\mid\mathbf{C})=\mathbf{0}$ a.s.\ whenever $\mathcal{N}_{ij}^{(cl)}\cap\mathcal{N}_{i'j'}^{(cl)}=\varnothing$.
For the remaining pairs $(i',j')\in\Lambda_{ij}$, we distinguish two sub-cases.

If $\mathcal{N}_{ij}\cap\mathcal{N}_{i'j'}\neq\varnothing$: there are at most $C_{\mathcal{N}}^2$ such pairs per $(i,j)$, and bounding each entry directly by $\|\mathbf{V}_{ij}\|_\infty\|\mathbf{V}_{i'j'}\|_\infty$ gives a contribution of $O(1/N)$, as in the Bernoulli case.

If $\mathcal{N}_{ij}\cap\mathcal{N}_{i'j'}=\varnothing$ but $(i',j')\in\Lambda_{ij}$:
conditional on $\mathbf{C}$, the full unit-level treatment vector $\mathbf{W}$ follows a stratified complete randomization on all $N$ units, with strata $\{\mathcal{J}_{ik}:1\le i\le n,\,1\le k\le K_i\}$.
These strata are independent of one another (both across groups within each cluster by Assumption~\ref{ass:complete_rand}(ii) and across clusters by Assumption~\ref{ass:cluster_assignment}(ii)), and every stratum satisfies $|\mathcal{J}_{ik}|=\Omega(N_i)=\Omega(N/n)$ with treatment fraction in $\{0,1\}\cup[p,1-p]$.
Applying Lemma~\ref{lem:mixing_coef_complete_randomization} to this full conditional scheme with $S_1=\mathcal{N}_{ij}$, $S_2=\mathcal{N}_{i'j'}$ (disjoint, each of size at most $C_{\mathcal{N}}$) and $M=\Theta(N/n)$ gives a $\psi$-mixing coefficient $\psi=O(M^{-1})=O(n/N)$, uniformly over these pairs.
Applying Lemma~\ref{lem:mixing_coef}(i) conditionally on $\mathbf{C}$ and taking expectations,
\[
\bigl\|\mathbb{E}\bigl[\mathrm{Cov}(\mathbf{V}_{ij},\mathbf{V}_{i'j'}\mid\mathbf{C})\bigr]\bigr\|_\infty
\le \|\mathbf{V}_{ij}\|_\infty\|\mathbf{V}_{i'j'}\|_\infty\cdot O\!\left(\frac{n}{N}\right)=O\!\left(\frac{n}{N^3}\right)
\]
uniformly for all units.
Using $|\Lambda_{ij}|=O(N/n)$ from Lemma~\ref{lem:complete_neighborhood_size} and summing over all such pairs,
\[
\Bigl\|\sum_{(i,j)}\!\!\sum_{\substack{(i',j')\in\Lambda_{ij}\\\mathcal{N}_{ij}\cap\mathcal{N}_{i'j'}=\varnothing}}\!\!
\mathbb{E}\bigl[\mathrm{Cov}(\mathbf{V}_{ij},\mathbf{V}_{i'j'}\mid\mathbf{C})\bigr]\Bigr\|_\infty
\le N\cdot O\!\left(\frac{N}{n}\right)\cdot O\!\left(\frac{n}{N^3}\right)
=O\!\left(\frac{1}{N}\right).
\]
Combining all contributions, $\|\boldsymbol\Sigma\|=O(N^{-1})$.
\end{proof}

\subsubsection{Proof of Proposition \ref{prop:scalar_clt}}

\begin{proof}[Proof of Proposition \ref{prop:scalar_clt}]

By Theorem~3.1 of \cite{Ross2011}, it suffices to bound
\[
\bigl|\mathbb{E}[f'(S)-Sf(S)]\bigr|
\]
for functions \(f\) satisfying
\[
\|f\|_{\infty},\ \|f''\|_{\infty}\le 2,
\qquad
\|f'\|_{\infty}\le \sqrt{2/\pi}.
\]
Now
\[
Sf(S)=\frac1\sigma\sum_{(u,v)}X_{uv}f(S),
\]
and the independence condition gives us
\[
\begin{aligned}
\bigl|\mathbb{E}[f'(S)-Sf(S)]\bigr|
&=
\left|
\mathbb{E}\left[
f'(S)-\frac1\sigma\sum_{(u,v)}X_{uv}f(S)
\right]
\right| \\
&\le
\left|
\mathbb{E}\left[
\frac1\sigma\sum_{(u,v)}
X_{uv}\Bigl(f(S)-f(\tilde{S}^{(uv)})-(S-\tilde{S}^{(uv)})f'(S)\Bigr)
\right]
\right| \\
&\quad+
\left|
\mathbb{E}\left[
f'(S)
\left(\frac1\sigma\sum_{(u,v)}X_{uv}(S-\tilde{S}^{(uv)})-1\right)
\right]
\right|,
\end{aligned}
\]
where 
\[\tilde{S}^{(uv)}:=\sigma^{-1} \sum_{(i,j)} \tilde{X}^{(uv)}_{ij}.\]
Thus
\[
\bigl|\mathbb{E}[f'(S)-Sf(S)]\bigr|
\le T_1+T_2,
\]
where
\[
T_1:=
\left|
\mathbb{E}\left[
\frac1\sigma\sum_{(u,v)}
X_{uv}\Bigl(f(S)-f(\tilde{S}^{(uv)})-(S-\tilde{S}^{(uv)})f'(S)\Bigr)
\right]
\right|
\]
and
\[
T_2:=
\left|
\mathbb{E}\left[
f'(S)
\left(\frac1\sigma\sum_{(u,v)}X_{uv}(S-\tilde{S}^{(uv)})-1\right)
\right]
\right|.
\]
We first bound \(T_1\). By Taylor's theorem,
\[
\bigl|f(S)-f(\tilde{S}^{(uv)})-(S-\tilde{S}^{(uv)})f'(S)\bigr|
\le
\frac12\|f''\|_{\infty}|S-\tilde{S}^{(uv)}|^2
\le
|S-\tilde{S}^{(uv)}|^2,
\]
since \(\|f''\|_{\infty}\le 2\). Therefore,
\begin{align*}
T_1
&=
O\left(\frac1\sigma\sum_{(u,v)}\mathbb{E}\Bigl[|X_{uv}|\,|S-\tilde{S}^{(uv)}|^2\Bigr]\right)
=
O\left(\frac{1}{\sigma N}\sum_{(u,v)}\mathbb{E}\Bigl[|S-\tilde{S}^{(uv)}|^2\Bigr]\right)\\
&=
O\left(\frac{1}{\sigma^3 N}\sum_{(u,v)}\left\|\sum_{(i,j)}(X_{ij}-\tilde{X}_{ij}^{(uv)})\right\|_{\infty}^2\right).
\end{align*}
Next we consider \(T_2\). Since \(\|f'\|_{\infty}\le \sqrt{2/\pi}\) and $\mathbb{E}\left[\frac1\sigma\sum_{(u,v)}X_{uv}(S-\tilde{S}_{uv})\right]=1$, we can apply Cauchy's inequality to get
\[
T_2
=
O\left(
\sqrt{
\mathrm{Var}\!\left(
\frac1\sigma\sum_{(u,v)}X_{uv}(S-\tilde{S}^{(uv)})
\right)}\right).
\]
Rather than treating each discrepancy \(S-\tilde S^{(uv)}\) as a single random
variable, we resolve it into its individual summands. By the definition of
\(\tilde S^{(uv)}\),
\[
\frac1\sigma\sum_{(u,v)}X_{uv}\bigl(S-\tilde{S}^{(uv)}\bigr)
=
\frac{1}{\sigma^{2}}\sum_{(u,v)\in\mathcal P}\sum_{(i,j)\in\mathcal P}
X_{uv}\bigl(X_{ij}-\tilde X_{ij}^{(uv)}\bigr),
\]
so that the variance decomposes over quadruples of indices,
\[
\mathrm{Var}\!\left(
\frac1\sigma\sum_{(u,v)}X_{uv}(S-\tilde{S}^{(uv)})
\right)
=
\frac{1}{\sigma^{4}}
\sum_{\substack{\bigl((u,v),(u',v')\bigr)\in\mathcal P^2\\ \bigl((i,j),(i',j')\bigr)\in\mathcal P^2}}
\mathrm{Cov}\Bigl(
X_{uv}\bigl(X_{ij}-\tilde X_{ij}^{(uv)}\bigr),\;
X_{u'v'}\bigl(X_{i'j'}-\tilde X_{i'j'}^{(u'v')}\bigr)
\Bigr).
\]
This finer decomposition is what the coefficients
\(\psi_{ij,i'j'}^{(uv),(u'v')}\) are built for: the first argument of each
covariance is measurable with respect to
\(\sigma\bigl(X_{uv},D_{ij}^{(uv)}\bigr)\) and the second with respect to
\(\sigma\bigl(X_{u'v'},D_{i'j'}^{(u'v')}\bigr)\), which are exactly the two
\(\sigma\)-fields entering the definition of
\(\psi_{ij,i'j'}^{(uv),(u'v')}\). Both parts of Lemma~\ref{lem:mixing_coef}
therefore apply to each term, with
\(f=X_{uv}D_{ij}^{(uv)}\) and \(g=X_{u'v'}D_{i'j'}^{(u'v')}\). Part~(i) gives
\[
\bigl|\mathrm{Cov}(f,g)\bigr|
\leq
\psi_{ij,i'j'}^{(uv),(u'v')}\,
\mathbb{E}\bigl|X_{uv}D_{ij}^{(uv)}\bigr|\,
\mathbb{E}\bigl|X_{u'v'}D_{i'j'}^{(u'v')}\bigr| ,
\]
and part~(ii), noting that \(fg\neq 0\) forces
\(D_{ij}^{(uv)}D_{i'j'}^{(u'v')}\neq 0\) and that \(f\neq 0\) forces
\(D_{ij}^{(uv)}\neq 0\), gives
\[
\bigl|\mathrm{Cov}(f,g)\bigr|
\leq
\bigl\|X_{uv}D_{ij}^{(uv)}\bigr\|_{\infty}
\bigl\|X_{u'v'}D_{i'j'}^{(u'v')}\bigr\|_{\infty}
\Bigl(
P\bigl(D_{ij}^{(uv)}D_{i'j'}^{(u'v')}\neq 0\bigr)
+
P\bigl(D_{ij}^{(uv)}\neq 0\bigr)P\bigl(D_{i'j'}^{(u'v')}\neq 0\bigr)
\Bigr).
\]
Condition~(ii) of the proposition gives \(|X_{uv}|\le C/N\) almost surely, so
the factor \(X_{uv}\) may be pulled out of each norm at the cost of \(C/N\), and
likewise for \(X_{u'v'}\). Both displays therefore reduce to
\(C^{2}N^{-2}\) times \(T^{\mathrm{mix}}_{ij,i'j'}\) and
\(T^{\mathrm{sup}}_{ij,i'j'}\) respectively, and taking whichever is smaller for
each quadruple,
\[
\mathrm{Var}\!\left(
\frac1\sigma\sum_{(u,v)}X_{uv}(S-\tilde{S}^{(uv)})
\right)
=
O\!\left(
\frac{1}{\sigma^{4}N^{2}}
\sum_{\substack{\bigl((u,v),(u',v')\bigr)\in\mathcal P^2\\ \bigl((i,j),(i',j')\bigr)\in\mathcal P^2}}
T_{ij,i'j'}^{(uv),(u'v')}
\right)
=
O\bigl(\mathcal E_2\bigr),
\]
whence \(T_2=O\bigl(\sqrt{\mathcal E_2}\bigr)\).

Combining the bounds for \(T_1\) and \(T_2\) and recalling the error functionals
\(\mathcal E_1\) and \(\mathcal E_2\) of the statement, we conclude that
\[
\bigl|\mathbb{E}[f'(S)-Sf(S)]\bigr|
=
O\Bigl(\mathcal E_1+\sqrt{\mathcal E_2}\Bigr).
\]
Applying Theorem~3.1 of \cite{Ross2011} finishes the proof.
\end{proof}

\subsubsection{Proof of Proposition \ref{prop:coupling}}

\begin{proof}[Proof of Proposition~\ref{prop:coupling}]
We prove the four parts separately. By construction (the coupling of Section~\ref{sec:technical}), the coupling $(\tilde{\mathbf W}^{(uv)},\tilde{\mathbf C}^{(uv)})$ agrees with $(\mathbf W,\mathbf C)$ outside $\mathcal M_{uv}^{(cl)}$.

\vspace{0.5em}
\noindent\textbf{Part (i).}
Fix $(u,v)\in\mathcal P$. For each stratum $k$, write
\[
A_k := \mathcal N_{uv}^{(cl)}\cap\mathcal I_k,\quad
B_k := \mathcal I_k\setminus\mathcal N_{uv}^{(cl)},\quad
a_k:=|A_k|,\quad b_k:=|B_k|.
\]
We begin with a calculation that will be used to prove both claims. Fix any
$\mathbf a\in\{0,1\}^{A_k}$ with $s:=\sum_{i}a_i$,
$\mathbf b\in\{0,1\}^{B_k}$ with $t:=\sum_{i}b_i=I_k-s$, and
$\mathbf a_0\in\{0,1\}^{A_k}$ with $s_0:=\sum_{i}a_{0,i}$.
Set $\Delta:=s-s_0$ and $t_0:=t+\Delta$. Without loss of generality, we assume $\Delta\geq 0$ in the subsequent analysis; the case $\Delta<0$ is symmetric. In each coupling, we first generate $\tilde{\mathbf C}_{A_k}^{(uv)}$ from
the marginal distribution of $\mathbf C_{A_k}$ independently of
$(\mathbf C,\mathbf W)$, and then generate $\tilde{\mathbf C}_{B_k}^{(uv)}$
by zeroing out exactly $\Delta\ge 0$ uniformly chosen treated entries of
$\mathbf C_{B_k}$. Consider all possible indicator vectors $\mathbf d$ such
that $\mathbf C_{B_k}=\mathbf d$ is compatible with
$\tilde{\mathbf C}_{B_k}^{(uv)}=\mathbf b$,
$\tilde{\mathbf C}_{A_k}^{(uv)}=\mathbf a$, and
$\mathbf C_{A_k}=\mathbf a_0$. Such a vector must take the form
$\mathbf d=\mathbf b+\boldsymbol\rho$, where
$\boldsymbol\rho$ is an indicator vector satisfying
$\operatorname{supp}(\boldsymbol\rho)\subseteq\{i:b_i=0\}$ and
$\sum_{i}\rho_i=\Delta$. Hence there are
$\binom{b_k-t}{\Delta}$ possible values of $\mathbf d$ in total. For each
such $\mathbf d$, the conditional probability of obtaining
$\tilde{\mathbf C}_{B_k}^{(uv)}=\mathbf b$ from
$\mathbf C_{B_k}=\mathbf d$ is $1/\binom{t_0}{\Delta}$. Therefore,
\begin{align}\label{eq:key_coupling}
&P\!\bigl(\tilde{\mathbf C}_{B_k}^{(uv)}=\mathbf b
\mid\tilde{\mathbf C}_{A_k}^{(uv)}=\mathbf a,\mathbf C_{A_k}=\mathbf a_0\bigr)
\notag\\
=&
\sum_{\mathbf d}
P\!\bigl(\mathbf C_{B_k}=\mathbf d
\mid\mathbf C_{A_k}=\mathbf a_0\bigr)
P\!\bigl(\tilde{\mathbf C}_{B_k}^{(uv)}=\mathbf b
\mid\mathbf C_{B_k}=\mathbf d,\tilde{\mathbf C}_{A_k}^{(uv)}=\mathbf a,
\mathbf C_{A_k}=\mathbf a_0\bigr)
\notag\\
=&
\binom{b_k-t}{\Delta}\cdot\frac{1}{\binom{b_k}{t_0}}\cdot
\frac{1}{\binom{t_0}{\Delta}}
\notag\\
=&
\frac{(b_k-t)!}{\Delta!\,(b_k-t-\Delta)!}\cdot
\frac{t_0!\,(b_k-t_0)!}{b_k!}\cdot
\frac{\Delta!\,t!}{t_0!}
=
\frac{1}{\binom{b_k}{t}},
\end{align}
where the last equality uses $t_0-\Delta=t$ and
$b_k-t_0=b_k-t-\Delta$.

We first establish distributional equality. For the cluster-level
assignments,
\[
P(\tilde{\mathbf C}_{A_k}^{(uv)}=\mathbf a)
=
P(\mathbf C_{A_k}=\mathbf a)
=
\frac{\binom{b_k}{t}}{\binom{|\mathcal I_k|}{I_k}}.
\]
Together with~\eqref{eq:key_coupling}, this implies
\[
P\!\bigl(\tilde{\mathbf C}_{\mathcal I_k}^{(uv)}=(\mathbf a,\mathbf b)\bigr)
=
\frac{\binom{b_k}{t}}{\binom{|\mathcal I_k|}{I_k}}\cdot
\frac{1}{\binom{b_k}{t}}
=
\frac{1}{\binom{|\mathcal I_k|}{I_k}}
=
P\!\bigl(\mathbf C_{\mathcal I_k}=(\mathbf a,\mathbf b)\bigr).
\]
By cross-strata independence in Assumption~\ref{ass:complete_rand}(i), we
obtain
\[
\tilde{\mathbf C}^{(uv)}\overset{d}{=}\mathbf C.
\]
It remains to verify that $\tilde{\mathbf W}^{(uv)}\mid\tilde{\mathbf C}^{(uv)}$ and $\mathbf W\mid\mathbf C$ have the same conditional law. Fix realizations $\mathbf c_0$ and $\mathbf c$ of $\mathbf C$ and $\tilde{\mathbf C}^{(uv)}$, respectively, so that $\mathcal M_{uv}^{(cl)}$ is determined. For any $\mathbf w=(\mathbf w_1,\ldots,\mathbf w_m)$,
\begin{align*}
&P\!\left(\tilde{\mathbf W}^{(uv)}=\mathbf w
\mid \tilde{\mathbf C}^{(uv)}=\mathbf c,\mathbf C=\mathbf c_0\right)\\
=&
P\!\left(\tilde{\mathbf W}_{i}^{(uv)}=\mathbf w_i,\ \forall i\in \mathcal M_{uv}^{(cl)}
\mid \tilde{\mathbf C}^{(uv)}=\mathbf c,\mathbf C=\mathbf c_0\right)
P\!\left(\tilde{\mathbf W}_{i}^{(uv)}=\mathbf w_i,\ \forall i\notin \mathcal M_{uv}^{(cl)}
\mid \tilde{\mathbf C}^{(uv)}=\mathbf c,\mathbf C=\mathbf c_0\right)\\
=&
\prod_{i\in \mathcal M_{uv}^{(cl)}}
P\!\left(\mathbf W_i=\mathbf w_i\mid C_i=c_i\right)
\prod_{i\notin \mathcal M_{uv}^{(cl)}}
P\!\left(\mathbf W_i=\mathbf w_i\mid C_i=c_{0,i}\right)\\
=&
\prod_{i=1}^m
P\!\left(\mathbf W_i=\mathbf w_i\mid C_i=c_i\right)\\
=&
P\!\left(\mathbf W=\mathbf w\mid \mathbf C=\mathbf c\right).
\end{align*}
where the third equality uses $c_{0,i}=c_i$ for all $i\notin\mathcal M_{uv}^{(cl)}$ (clusters outside $\mathcal M_{uv}^{(cl)}$ are unchanged by the coupling). Since the right-hand side does not depend on $\mathbf c_0$, marginalizing over $\mathbf C=\mathbf c_0$ gives, for every $(\mathbf w,\mathbf c)$,
\[
P\!\left(\tilde{\mathbf W}^{(uv)}=\mathbf w
\mid \tilde{\mathbf C}^{(uv)}=\mathbf c\right)
=
P\!\left(\mathbf W=\mathbf w\mid \mathbf C=\mathbf c\right).
\]
Together with $\tilde{\mathbf C}^{(uv)}\overset{d}{=}\mathbf C$, this
completes the proof of distributional equality.

We next establish independence. Since~\eqref{eq:key_coupling} does not
depend on $\mathbf a_0$, we have
\[
\tilde{\mathbf C}_{B_k}^{(uv)}
\indep
\mathbf C_{A_k}
\mid
\tilde{\mathbf C}_{A_k}^{(uv)}.
\]
Together with
$\tilde{\mathbf C}_{A_k}^{(uv)}\indep(\mathbf C,\mathbf W)$, this yields
\[
(\tilde{\mathbf C}_{A_k}^{(uv)},\tilde{\mathbf C}_{B_k}^{(uv)})
\indep
\mathbf C_{A_k}
\]
for each $k$. By cross-strata independence in
Assumption~\ref{ass:complete_rand}(i) and the coupling construction in Section~\ref{sec:technical}, we therefore obtain
\[
\tilde{\mathbf C}^{(uv)}
\indep
\mathbf C_{\mathcal N_{uv}}.
\]
Since $\tilde{\mathbf C}^{(uv)}$ depends on $\mathbf W_{\mathcal N_{uv}}$ only through $\mathbf C_{\mathcal N_{uv}}$, the conditional independence $\tilde{\mathbf C}^{(uv)}\indep\mathbf W_{\mathcal N_{uv}}\mid\mathbf C_{\mathcal N_{uv}}$ holds by construction. Together with $\tilde{\mathbf C}^{(uv)}\indep\mathbf C_{\mathcal N_{uv}}$, this gives
\[
\tilde{\mathbf C}^{(uv)}
\indep
(\mathbf W_{\mathcal N_{uv}},\mathbf C_{\mathcal N_{uv}}).
\]
For the unit-level component, the same argument used for distributional equality gives, for any realizations $\mathbf c$, $\mathbf c_0$, and $\mathbf w$,
\begin{align*}
&P\!\left(\tilde{\mathbf W}^{(uv)}=\mathbf w
\mid \tilde{\mathbf C}^{(uv)}=\mathbf c,\mathbf C=\mathbf c_0,
\mathbf W_{i_0}=\mathbf w_{0,i_0},\ \forall i_0\in\mathcal N_{uv}^{(cl)}
\right)\\
=&
\prod_{i\in \mathcal M_{uv}^{(cl)}}
P\!\left(\mathbf W_i=\mathbf w_i\mid C_i=c_i\right)
\prod_{i\notin \mathcal M_{uv}^{(cl)}}
P\!\left(\mathbf W_i=\mathbf w_i
\mid C_i=c_{0,i},
\mathbf W_{i_0}=\mathbf w_{0,i_0},\ \forall i_0\in\mathcal N_{uv}^{(cl)}
\right)\\
=&
\prod_{i=1}^m
P\!\left(\mathbf W_i=\mathbf w_i\mid C_i=c_i\right)\\
=&
P\!\left(\mathbf W=\mathbf w\mid \mathbf C=\mathbf c\right).
\end{align*}
The first product drops the neighborhood conditioning because the coupling construction in Section~\ref{sec:technical} draws $\tilde{\mathbf W}_i^{(uv)}$ independently of $(\mathbf C,\mathbf W)$ for $i\in\mathcal M_{uv}^{(cl)}$; the second does so by the cross-cluster independence in Assumption~\ref{ass:cluster_assignment}(ii). Hence,
\[
\tilde{\mathbf W}^{(uv)}
\indep
\bigl(\mathbf C,(\mathbf W_i)_{i\in\mathcal N_{uv}^{(cl)}}\bigr)
\mid
\tilde{\mathbf C}^{(uv)}.
\]
It follows that
\[
\tilde{\mathbf W}^{(uv)}
\indep
(\mathbf W_{\mathcal N_{uv}},\mathbf C_{\mathcal N_{uv}})
\mid
\tilde{\mathbf C}^{(uv)}.
\]
Combining this conditional independence with
\[
\tilde{\mathbf C}^{(uv)}
\indep
(\mathbf W_{\mathcal N_{uv}},\mathbf C_{\mathcal N_{uv}}),
\]
we conclude that
\[
(\tilde{\mathbf W}^{(uv)},\tilde{\mathbf C}^{(uv)})
\indep
(\mathbf W_{\mathcal N_{uv}},\mathbf C_{\mathcal N_{uv}}).
\]
\vspace{0.5em}
\noindent\textbf{Part (ii).}
Fix $(u,v)\in\mathcal P$. By construction (the coupling of Section~\ref{sec:technical}), the coupling $(\tilde{\mathbf W}^{(uv)},\tilde{\mathbf C}^{(uv)})$ differs from $(\mathbf W,\mathbf C)$ only on clusters in $\mathcal M_{uv}^{(cl)}$. Hence $(\mathbf W_{\mathcal N_{ij}},\mathbf C_{\mathcal N_{ij}})\neq(\tilde{\mathbf W}^{(uv)}_{\mathcal N_{ij}},\tilde{\mathbf C}^{(uv)}_{\mathcal N_{ij}})$ only if $\mathcal N_{ij}^{(cl)}\cap\mathcal M_{uv}^{(cl)}\neq\varnothing$, so it suffices to bound
\[
\bigl|\bigl\{(i,j)\in\mathcal P:\mathcal N_{ij}^{(cl)}\cap\mathcal M_{uv}^{(cl)}\neq\varnothing\bigr\}\bigr|.
\]

We first bound $|\mathcal M_{uv}^{(cl)}|$. By Assumption~\ref{ass:CLT_regularity}(ii), $|\mathcal N_{uv}^{(cl)}|\leq C_{\mathcal N^{(cl)}}$. Within each stratum $k$, the balance adjustment flips at most $|\Delta_{uv,k}^{(1)}|\leq|\mathcal N_{uv}^{(cl)}\cap\mathcal I_k|$ clusters outside the neighborhood, so the total number of adjusted clusters satisfies $\sum_k|\Delta_{uv,k}^{(1)}|\leq|\mathcal N_{uv}^{(cl)}|\leq C_{\mathcal N^{(cl)}}$. Therefore
\[
|\mathcal M_{uv}^{(cl)}|\leq 2|\mathcal N_{uv}^{(cl)}|\leq 2C_{\mathcal N^{(cl)}}=O(1),
\]
uniformly over $(u,v)\in\mathcal P$.

Next, for each cluster $m\in\mathcal M_{uv}^{(cl)}$, the interference network is undirected, so $(i,j)\in\mathcal N_{ml}$ if and only if $(m,l)\in\mathcal N_{ij}$. Hence $m\in\mathcal N_{ij}^{(cl)}$ if and only if $(i,j)\in\bigcup_{l=1}^{N_m}\mathcal N_{ml}$, and
\[
\bigl|\bigl\{(i,j)\in\mathcal P:m\in\mathcal N_{ij}^{(cl)}\bigr\}\bigr|
=
\Bigl|\bigcup_{l=1}^{N_m}\mathcal N_{ml}\Bigr|
\leq
C_{\mathcal N^{(cl)}}N_m,
\]
where the last inequality is Assumption~\ref{ass:CLT_regularity}(ii) applied to cluster $m$. By the balanced cluster-size condition (Assumption~\ref{ass:CLT_regularity}(v)), $N_m\leq C_N(N/n)$, so the right-hand side is $O(N/n)$.

Applying a union bound over the $O(1)$ clusters in $\mathcal M_{uv}^{(cl)}$:
\[
\bigl|\bigl\{(i,j)\in\mathcal P:\mathcal N_{ij}^{(cl)}\cap\mathcal M_{uv}^{(cl)}\neq\varnothing\bigr\}\bigr|
\leq
|\mathcal M_{uv}^{(cl)}|\cdot C_{\mathcal N^{(cl)}}C_N\frac{N}{n}
=
O\!\left(\frac{N}{n}\right).
\]
Since all constants are uniform over $(u,v)\in\mathcal P$, taking the maximum over $(u,v)$ gives the stated bound.

\vspace{0.5em}
\noindent\textbf{Part (iii).}
Write $\mathcal J_{uv}$ for the (random) set of clusters whose treatments are flipped by the balance
adjustment of the coupling $(u,v)$, so that
$\mathcal M_{uv}^{(cl)}=\mathcal N_{uv}^{(cl)}\cup\mathcal J_{uv}$ and, as shown
above, $\mathcal B_{ij}^{(uv)}$ can occur only if $\mathcal N_{ij}^{(cl)}$ meets
$\mathcal N_{uv}^{(cl)}\cup\mathcal J_{uv}$. Both assertions follow from a single
description of the law of $\mathcal J_{uv}$, which we record first.

By the construction of Section~\ref{sec:technical}, conditionally on
$(\mathbf W,\mathbf C)$ and on the independently drawn $\tilde{\mathbf C}_{\mathcal{N}_{uv}}^{(uv)}$, the set
$\mathcal J_{uv}\cap\mathcal I_k$ is a uniformly chosen subset of size
$|\Delta_{uv,k}^{(1)}|$ of the eligible clusters of
$\mathcal I_k\setminus\mathcal N_{uv}^{(cl)}$, namely those whose current
assignment can absorb the imbalance. Two features of this description are all we
use. The number of flipped clusters is uniformly bounded,
$\sum_k|\Delta_{uv,k}^{(1)}|\leq|\mathcal N_{uv}^{(cl)}|=O(1)$, while the pool
they are drawn from is large: since the number $I_k$ of treated clusters per
stratum is deterministic under Assumption~\ref{ass:complete_rand}(i), the
eligible pool in stratum $k$ is $I_k$ or $|\mathcal I_k|-I_k$ minus at most
$|\mathcal N_{uv}^{(cl)}|=O(1)$ clusters, hence of size $\Omega(n)$ for large $n$
because $I_k/|\mathcal I_k|\in[p,1-p]$ and $\min_k|\mathcal I_k|=\Omega(n)$. As
the flipped clusters are exchangeable within each pool, it follows that
\begin{equation}\label{eq:hit_prob}
P\bigl(m\in\mathcal J_{uv}\bigm|\mathbf C,\tilde{\mathbf C}_{\mathcal{N}_{uv}}^{(uv)}\bigr)=O\!\left(\frac1n\right),
\qquad
P\bigl(\{m,m'\}\subseteq\mathcal J_{uv}\bigm|\mathbf C,\tilde{\mathbf C}_{\mathcal{N}_{uv}}^{(uv)}\bigr)=O\!\left(\frac1{n^{2}}\right),
\end{equation}
uniformly over clusters $m\neq m'$ and over the conditioning variables.

The first assertion is now a dichotomy. If
$\mathcal N_{ij}^{(cl)}\cap\mathcal N_{uv}^{(cl)}=\varnothing$, then
$\mathcal B_{ij}^{(uv)}$ forces
$\mathcal N_{ij}^{(cl)}\cap\mathcal J_{uv}\neq\varnothing$, so
a union bound over the $O(1)$ clusters of $\mathcal N_{ij}^{(cl)}$, which is
$O(1)$ by Assumption~\ref{ass:CLT_regularity}(ii), combined with the first bound
in \eqref{eq:hit_prob}, gives $P(\mathcal B_{ij}^{(uv)})=O(1/n)$ after averaging
over $(\mathbf C,\mathbf C_{\mathcal{N}_{uv}}^{(uv)})$. Choosing $C_{P}^{(0)}$ larger than the constant
implied here therefore makes
$P(\mathcal B_{ij}^{(uv)})\geq C_{P}^{(0)}/n$ possible only when
$\mathcal N_{ij}^{(cl)}\cap\mathcal N_{uv}^{(cl)}\neq\varnothing$, which is the
first assertion.

For the second assertion, suppose that none of the three listed configurations
holds. Then $\mathcal N_{ij}^{(cl)}\cap\mathcal N_{uv}^{(cl)}=\varnothing$ and
$\mathcal N_{i'j'}^{(cl)}\cap\mathcal N_{u'v'}^{(cl)}=\varnothing$, so
$\mathcal B_{ij}^{(uv)}$ requires
$\mathcal J_{uv}\cap\mathcal N_{ij}^{(cl)}\neq\varnothing$ and
$\mathcal B_{i'j'}^{(u'v')}$ requires
$\mathcal J_{u'v'}\cap\mathcal N_{i'j'}^{(cl)}\neq\varnothing$. If
$(u,v)\neq(u',v')$, leveraging some conditional independence in the construction of couplings in Section \ref{sec:technical} yields
\begin{align*}
P(\mathcal B_{ij}^{(uv)}\cap\mathcal B_{i'j'}^{(u'v')})=&E\left[P(\mathcal B_{ij}^{(uv)}|\mathbf W,\mathbf C,\tilde{\mathbf C}_{\mathcal{N}_{uv}}^{(uv)},\tilde{\mathbf C}_{\mathcal{N}_{u'v'}}^{(u'v')})P(\mathcal B_{i'j'}^{(u'v')}|\mathbf W,\mathbf C,\tilde{\mathbf C}_{\mathcal{N}_{uv}}^{(uv)},\tilde{\mathbf C}_{\mathcal{N}_{u'v'}}^{(u'v')})\right]\\
=&E\left[P(\mathcal B_{ij}^{(uv)}|\mathbf W,\mathbf C,\tilde{\mathbf C}_{\mathcal{N}_{uv}}^{(uv)})P(\mathcal B_{i'j'}^{(u'v')}|\mathbf W,\mathbf C,\tilde{\mathbf C}_{\mathcal{N}_{u'v'}}^{(u'v')})\right]\\
\leq&E\left[P(\cup_{m\in \mathcal{N}_{ij}}m\in\mathcal{J}_{uv}|\mathbf W,\mathbf C,\tilde{\mathbf C}_{\mathcal{N}_{uv}}^{(uv)})P(\cup_{m'\in \mathcal{N}_{i'j'}}m'\in\mathcal{J}_{u'v'}|\mathbf W,\mathbf C,\tilde{\mathbf C}_{\mathcal{N}_{u'v'}}^{(u'v')})\right]\\
=&E\left[P(\cup_{m\in \mathcal{N}_{ij}}m\in\mathcal{J}_{uv}|\mathbf C,\tilde{\mathbf C}_{\mathcal{N}_{uv}}^{(uv)})P(\cup_{m'\in \mathcal{N}_{i'j'}}m'\in\mathcal{J}_{u'v'}|\mathbf C,\tilde{\mathbf C}_{\mathcal{N}_{u'v'}}^{(u'v')})\right].
\end{align*}
Since each conditional probability in the final formula is uniformly $O(1/n)$ by the first bound in
\eqref{eq:hit_prob}, multiplying and averaging over the possible values of $\bigl(\mathbf C,\mathbf C_{\mathcal{N}_{uv}}^{(uv)},\mathbf C_{\mathcal{N}_{u'v'}}^{(u'v')}\bigr)$ gives
$P(\mathcal B_{ij}^{(uv)}\cap\mathcal B_{i'j'}^{(u'v')})=O(1/n^{2})$. If
$(u,v)=(u',v')$, then
$\mathcal N_{ij}^{(cl)}$ and $\mathcal N_{i'j'}^{(cl)}$ are disjoint, so the two
required hits must be realized by two distinct clusters of the same set
$\mathcal J_{uv}$, and the second bound in \eqref{eq:hit_prob} together with a
union bound over the $O(1)$ pairs of clusters within $\mathcal{N}_{ij}\times \mathcal{N}_{i'j'}$ again gives $O(1/n^{2})$. Choosing $C_{P}^{(1)}$ larger than the constant implied in these
two cases yields the second assertion.

\vspace{0.5em}
\noindent\textbf{Part (iv).}
Write $\mathcal S_{uv,ij}:=\mathcal N_{uv}^{(cl)}\cup\mathcal N_{ij}^{(cl)}$ for
the clusters visible to a pair, so that
$|\mathcal S_{uv,ij}|\leq 2C_{\mathcal N^{(cl)}}=O(1)$ by
Assumption~\ref{ass:CLT_regularity}(ii), and fix a quadruple
$\bigl((u,v),(u',v'),(i,j),(i',j')\bigr)\in\mathcal P^4$ with
$\mathcal S_{uv,ij}\cap\mathcal S_{u'v',i'j'}=\varnothing$; we show
$\psi_{ij,i'j'}^{(0),(uv),(u'v')}\leq C_{\psi}^{(0)}/n$ for such quadruples,
which is the assertion.

Collect the generators of $\mathcal G_{uv,ij}$ into the
local configuration
\[
Z:=\bigl(
\underbrace{\mathbf C_{\mathcal N_{uv}},\,\mathbf C_{\mathcal N_{ij}},\,
\tilde{\mathbf C}^{(uv)}_{\mathcal N_{ij}}}_{=:\;\zeta},\;
\underbrace{\mathbf W_{\mathcal N_{uv}},\,\mathbf W_{\mathcal N_{ij}},\,
\tilde{\mathbf W}^{(uv)}_{\mathcal N_{ij}}}_{=:\;\xi}
\bigr),
\]
and let $Z'$, with cluster-level part $\zeta'$ and unit-level part $\xi'$, be its counterpart for
$\mathcal G_{u'v',i'j'}$. Since $\mathcal G_{uv,ij}=\sigma(Z)$ and
$\mathcal G_{u'v',i'j'}=\sigma(Z')$, every $A\in\mathcal G_{uv,ij}$ is of the
form $A=\{Z\in\Gamma\}$ and every $B\in\mathcal G_{u'v',i'j'}$ of the form
$B=\{Z'\in\Gamma'\}$, for sets $\Gamma,\Gamma'$ of configurations; the atoms are
the events $\{Z=z\}$ and $\{Z'=z'\}$. For each realization $z$ of $Z$, let $\zeta(z)$ and $\xi(z)$ be the realization values of $\zeta$ and $\xi$, and similarly define $\zeta(z')$ and $\xi(z')$ for $z'$. We also write
$\mathcal G^{C}_{uv,ij}:=\sigma(\zeta)$ and
$\mathcal G^{C}_{u'v',i'j'}:=\sigma(\zeta')$ as the $\sigma$-fields generated by the cluster-level parts.

The proof simplifies the $\sigma$-fields compared in the mixing coefficients in two steps, first passing from
$\mathcal G_{uv,ij}$ to its cluster-level part $\mathcal G^{C}_{uv,ij}$ and then
from that to $\sigma(\mathbf C_{\mathcal S_{uv,ij}})$, and similarly on the primed
side. Each step is justified in the same way: the generators being discarded have
a conditional law which, given what remains, is a fixed kernel of the retained
variables on their own side, and the two sides' discarded variables are
conditionally independent. Such generators cancel from the ratio defining $\psi$,
so the coefficient is unchanged.

In the first step we discarded unit-level treatment variables to simplify the $\sigma$-fields. The original treatments
$\mathbf W_{\mathcal N_{uv}},\mathbf W_{\mathcal N_{ij}}$ are determined by the
blocks $(\mathbf W_{m})_{m\in\mathcal S_{uv,ij}}$, whose conditional law given
$\mathbf C$ depends on $\mathbf C$ only through
$\mathbf C_{\mathcal S_{uv,ij}}$ by Assumption~\ref{ass:cluster_assignment}(i)--(ii);
and by the construction of Section~\ref{sec:technical} the coupled treatments
$\tilde{\mathbf W}^{(uv)}_{\mathcal N_{ij}}$ are, conditionally on
$\tilde{\mathbf C}^{(uv)}$, drawn cluster by cluster from those same
within-cluster laws, independently across clusters and independently of
$(\mathbf C,\mathbf W)$. Hence, with the two cluster
sets $\mathcal S_{uv,ij}$ and $\mathcal S_{u'v',i'j'}$ being disjoint, we have the conditional independence
\[
\xi\indep \bigl(\mathbf{C},\mathbf{C}^{(uv)},\mathbf{C}^{(u'v')}\bigr)\;\Bigm|\;\zeta, \qquad 
\xi'\indep \bigl(\mathbf{C},\mathbf{C}^{(uv)},\mathbf{C}^{(u'v')}\bigr)\;\Bigm|\;\zeta'
\]
 Writing $h(z)$ and $h'(z')$ for
these conditional probabilities $P(\xi=\xi(z)|\zeta=\zeta(z))$ and $P(\xi'=\xi(z')|\zeta'=\zeta(z'))$, 
\begin{equation}\label{eq:atom_split}
P(Z=z)=P\bigl(\zeta=\zeta(z)\bigr)h(z),
\quad
P(Z'=z')=P\bigl(\zeta'=\zeta'(z')\bigr)h'(z'),
\end{equation}
\[
P(Z=z,\,Z'=z')
=
P\bigl(\zeta=\zeta(z),\,\zeta'=\zeta'(z')\bigr)h(z)h'(z') .
\]
Consequently, for any $A=\{Z=z\}$ and $B=\{Z'=z'\}$ with
$P(A)P(B)>0$,
\[
\frac{P(A\cap B)}{P(A)P(B)}
=
\frac{
P\bigl(\zeta=\zeta(z),\,\zeta'=\zeta'(z')\bigr)h(z)h'(z')}
{
P\bigl(\zeta=\zeta(z)\bigr)P\bigl(\zeta'=\zeta'(z')\bigr)h(z)h'(z')}
=
\frac{P\bigl(\zeta=\zeta(z),\,\zeta'=\zeta'(z')\bigr)}{P\bigl(\zeta=\zeta(z)\bigr)P\bigl(\zeta'=\zeta'(z')\bigr)},
\]
It's easy to see that for any event $A=\{Z\in\Gamma\}$ and $B=\{Z\in\Gamma'\}$, their corresponding $\frac{P(A\cap B)}{P(A)P(B)}$ could be written as the weighted mean of the atom ratio above, and its maximum distance to 1 is exactly the maximum distance between the atom ratios and 1. Therefore,
\begin{equation}\label{eq:strip_W}
\psi_{ij,i'j'}^{(0),(uv),(u'v')}
=
\psi\bigl(\mathcal G_{uv,ij},\mathcal G_{u'v',i'j'}\bigr)
=
\psi\bigl(\mathcal G^{C}_{uv,ij},\mathcal G^{C}_{u'v',i'j'}\bigr).
\end{equation}

In the second step the discarded generator is the coupled cluster-level block
$\tilde{\mathbf C}^{(uv)}_{\mathcal N_{ij}}$, and the argument is the same, so we
do not repeat the computation. All that is needed is the analogue of the two
properties used above, namely
\[
\tilde{\mathbf C}^{(uv)}_{\mathcal N_{ij}}
\indep
\mathbf C
\;\Bigm|\;
\mathbf C_{\mathcal S_{uv,ij}},
\qquad
\tilde{\mathbf C}^{(u'v')}_{\mathcal N_{i'j'}}
\indep
\mathbf C
\;\Bigm|\;
\mathbf C_{\mathcal S_{u'v',i'j'}}.
\]
For the first, on $\mathcal N_{ij}^{(cl)}\cap\mathcal N_{uv}^{(cl)}$ the coupled
assignment is the redraw of Section~\ref{sec:technical}, drawn from the marginal
law of $\mathbf C_{\mathcal N_{uv}}$ independently of $\mathbf C$,
while on the remaining clusters of $\mathcal N_{ij}^{(cl)}$ it equals $\mathbf C$
except at the clusters selected by the balance adjustment, whose conditional law
given $\mathbf C$ and the redraw is a uniform draw from an eligible pool whose
size and whose intersection with $\mathcal N_{ij}^{(cl)}$ are determined by
$\mathbf C_{\mathcal S_{uv,ij}}$ alone, because the number $I_k$ of treated
clusters per stratum is deterministic under
Assumption~\ref{ass:complete_rand}(i). The second holds similarly. Replacing
$\zeta$ by $\mathbf C_{\mathcal S_{uv,ij}}$ and $h$ by a similar conditional probability then yield
\[
\psi_{ij,i'j'}^{(0),(uv),(u'v')}
=
\psi\Bigl(\sigma\bigl(\mathbf C_{\mathcal S_{uv,ij}}\bigr),\,
\sigma\bigl(\mathbf C_{\mathcal S_{u'v',i'j'}}\bigr)\Bigr),
\]
a mixing coefficient between the original cluster-level treatment assignments on two
disjoint blocks alone.

These two blocks have size $O(1)$, so under
Assumption~\ref{ass:complete_rand}(i) they satisfy the hypotheses of
Lemma~\ref{lem:mixing_coef_complete_randomization} with population size $n$,
$M\asymp n$ and $C_S=2C_{\mathcal N^{(cl)}}$. That lemma bounds the right-hand
side by $C/n$ for a constant $C$ depending only on $p$, on
$C_{\mathcal N^{(cl)}}$ and on the implied constant in
$\min_k|\mathcal I_k|=\Omega(n)$. Taking $C_{\psi}^{(0)}:=C$ completes the proof
of part~(iv).

\end{proof}

\subsubsection{Proof of Corollary \ref{cor:coupling}}
\begin{proof}[Proof of Corollary \ref{cor:coupling}]
We prove the four parts separately. Throughout, recall that
$\mathbf X_{ij}^{(\mathrm{marg})}$ and $\mathbf X_{ij}$ depend on
$(\mathbf W,\mathbf C)$ only through $(\mathbf W_{\mathcal N_{ij}},\mathbf C_{\mathcal N_{ij}})$,
and the coupled statistics $(\tilde{\mathbf X}_{ij}^{(\mathrm{marg})})^{(uv)}$ and
$\tilde{\mathbf X}_{ij}^{(uv)}$ are obtained by evaluating the same functions at
$(\tilde{\mathbf W}^{(uv)},\tilde{\mathbf C}^{(uv)})$ in place of $(\mathbf W,\mathbf C)$.

\vspace{0.5em}
\noindent\textbf{Part (i).}
Fix $(u,v)\in\mathcal P$. We address the marginal and non-marginal statistics in parallel, as the arguments are identical.

We first establish distributional equality. Each $\mathbf X_{ij}^{(\mathrm{marg})}$ is a measurable function of
$(\mathbf W_{\mathcal N_{ij}},\mathbf C_{\mathcal N_{ij}})
\subseteq \sigma(\mathbf W,\mathbf C)$,
and $(\tilde{\mathbf X}_{ij}^{(\mathrm{marg})})^{(uv)}$ is the same function evaluated at
$(\tilde{\mathbf W}^{(uv)},\tilde{\mathbf C}^{(uv)})$.
Since Proposition~\ref{prop:coupling}(i) gives
$(\tilde{\mathbf W}^{(uv)},\tilde{\mathbf C}^{(uv)})\overset{d}{=}(\mathbf W,\mathbf C)$,
the entire array of coupled statistics has the same joint distribution as the original array:
\[
\bigl(\mathbf a^\top(\tilde{\mathbf X}_{ij}^{(\mathrm{marg})})^{(uv)}\bigr)_{(i,j)\in\mathcal P}
\overset{d}{=}
\bigl(\mathbf a^\top\mathbf X_{ij}^{(\mathrm{marg})}\bigr)_{(i,j)\in\mathcal P},
\]
and the same holds for $\mathbf b^\top\mathbf X_{ij}$.

We next turn to the zero conditional mean. The statistic $\mathbf X_{uv}^{(\mathrm{marg})}$ is
$\sigma(\mathbf W_{\mathcal N_{uv}},\mathbf C_{\mathcal N_{uv}})$-measurable,
since both $Y_{uv}(\mathbf w)$ and $\beta_{uv}(\phi)$ depend on the realized
treatment vector only through $(\mathbf W_{\mathcal N_{uv}},\mathbf C_{\mathcal N_{uv}})$
by Assumptions~\ref{ass:interference} and~\ref{ass:weight}.
On the other hand, the entire coupled array
$(\mathbf a^\top(\tilde{\mathbf X}_{ij}^{(\mathrm{marg})})^{(uv)})_{(i,j)\in\mathcal P}$
is $\sigma(\tilde{\mathbf W}^{(uv)},\tilde{\mathbf C}^{(uv)})$-measurable.
By Proposition~\ref{prop:coupling}(i),
\[
(\tilde{\mathbf W}^{(uv)},\tilde{\mathbf C}^{(uv)})
\indep
(\mathbf W_{\mathcal N_{uv}},\mathbf C_{\mathcal N_{uv}}),
\]
so $\mathbf X_{uv}^{(\mathrm{marg})}$ is independent of the coupled array.
Hence, conditioning on the coupled array does not change the conditional mean of
$\mathbf a^\top\mathbf X_{uv}^{(\mathrm{marg})}$:
\[
\mathbb{E}\!\left[
\mathbf a^\top\mathbf X_{uv}^{(\mathrm{marg})}
\,\middle|\,
\bigl(\mathbf a^\top(\tilde{\mathbf X}_{ij}^{(\mathrm{marg})})^{(uv)}\bigr)_{(i,j)\in\mathcal P}
\right]
=
\mathbb{E}\!\left[\mathbf a^\top\mathbf X_{uv}^{(\mathrm{marg})}\right].
\]
Fix any $\phi\in\{\phi_0,\phi_1\}$. By assumption \ref{ass:weight}(i), 
$\mathbb{E}[\beta_{uv}(\phi)Y_{uv}]=\bar Y_{uv}(\phi)$ and $\mathbb{E}[\beta_{uv}(\phi)]=1$ hold. Consequently, the
$\phi$-component of $\mathbb{E}[\mathbf X_{uv}^{(\mathrm{marg})}]$ is
\[
\frac{g_u}{N_u}
\bigl(\mathbb{E}[\beta_{uv}(\phi)Y_{uv}]-\bar Y_{uv}(\phi)\,\mathbb{E}[\beta_{uv}(\phi)]\bigr)+\frac{g_u}{N_u}
\bigl(\mathbb{E}[\beta_{uv}(\phi)]-1\bigr)\bigl(\bar Y_{uv}(\phi)-\bar Y(\phi)\bigr)
=
\frac{g_u}{N_u}\bigl(\bar Y_{uv}(\phi)-\bar Y_{uv}(\phi)\bigr)
=0.
\]
Therefore, 
\[\mathbb{E}\!\left[
\mathbf a^\top\mathbf X_{uv}^{(\mathrm{marg})}
\,\middle|\,
\bigl(\mathbf a^\top(\tilde{\mathbf X}_{ij}^{(\mathrm{marg})})^{(uv)}\bigr)_{(i,j)\in\mathcal P}
\right]=0.
\]
The identical reasoning applies to $\mathbf b^\top\mathbf X_{uv}$ using
Assumption~\ref{ass:weight}(ii) and $\boldsymbol\tau_{ij}$.

\vspace{0.5em}
\noindent\textbf{Part (ii).}
Fix $(u,v)\in\mathcal P$.
By construction, $(\tilde{\mathbf W}^{(uv)},\tilde{\mathbf C}^{(uv)})$ agrees with
$(\mathbf W,\mathbf C)$ outside the affected cluster set $\mathcal M_{uv}^{(cl)}$,
so
\[
(\mathbf W_{\mathcal N_{ij}},\mathbf C_{\mathcal N_{ij}})
=
(\tilde{\mathbf W}^{(uv)}_{\mathcal N_{ij}},\tilde{\mathbf C}^{(uv)}_{\mathcal N_{ij}})
\quad\text{whenever}\quad
\mathcal N_{ij}^{(cl)}\cap\mathcal M_{uv}^{(cl)}=\varnothing.
\]
For such $(i,j)$, the coupled statistic equals the original:
$\mathbf a^\top(\tilde{\mathbf X}_{ij}^{(\mathrm{marg})})^{(uv)}
=\mathbf a^\top\mathbf X_{ij}^{(\mathrm{marg})}$, and similarly for
$\mathbf b^\top\mathbf X_{ij}$.
Hence only units $(i,j)$ with $\mathcal N_{ij}^{(cl)}\cap\mathcal M_{uv}^{(cl)}\neq\varnothing$
contribute to the discrepancy sums.
By Proposition~\ref{prop:coupling}(ii), the number of such units is $O(N/n)$, uniformly in $(u,v)$.
For each contributing unit, Assumptions~\ref{ass:CLT_regularity}(iii)--(iv) bound
the individual discrepancy:
\begin{align*}
\bigl|\mathbf a^\top\mathbf X_{ij}^{(\mathrm{marg})}
-\mathbf a^\top(\tilde{\mathbf X}_{ij}^{(\mathrm{marg})})^{(uv)}\bigr|
&\le
\|\mathbf a\|
\bigl\|\mathbf X_{ij}^{(\mathrm{marg})}\bigr\|_\infty
+\|\mathbf a\|
\bigl\|(\tilde{\mathbf X}_{ij}^{(\mathrm{marg})})^{(uv)}\bigr\|_\infty
\le
\frac{4C_\beta C_Y g_i}{N_i},
\end{align*}
where we used $\|\mathbf a\|\le 1$,
$|Y_{ij}(\mathbf w)|\le C_Y$ and $|\bar Y_{ij}(\phi)|\le C_Y$ so that
$|Y_{ij}-\bar Y_{ij}(\phi)|\le 2C_Y$, and
$|\beta_{ij}(\phi)|\le C_\beta$ almost surely.
Combining with the cluster-weight balance
$g_i/N_i\le C_N g_i\cdot(n/N)$ and
$\max_i g_i=O(1/n)$ from
Assumption~\ref{ass:CLT_regularity}(v)--(vi), each individual discrepancy is $O(1/N)$.
Summing over the $O(N/n)$ discrepant units and taking the supremum over $(u,v)$:
\[
\sup_{(u,v)\in\mathcal P}
\left\|
\sum_{(i,j)\in\mathcal P}
\Bigl[
\mathbf a^\top\mathbf X_{ij}^{(\mathrm{marg})}
-
\mathbf a^\top(\tilde{\mathbf X}_{ij}^{(\mathrm{marg})})^{(uv)}
\Bigr]
\right\|_{\infty}
=
O\!\left(\frac{N}{n}\right)\cdot O\!\left(\frac{1}{N}\right)
=
O\!\left(\frac{1}{n}\right),
\]
and the same bound holds for the $\mathbf b^\top\mathbf X_{ij}$ discrepancies.

\vspace{0.5em}
\noindent\textbf{Part (iii).}
For the unit-level claims, the individual discrepancy vanishes whenever the
coupled neighborhood assignments agree, since both statistics are then the same
function of the same arguments; this is the inclusion
$\{D_{ij}^{(uv)}\neq0\}\subseteq\mathcal B_{ij}^{(uv)}$. The display above bounds
$|D_{ij}^{(uv)}|$ by $C'/N$ almost surely, which is the sup-norm bound, and
therefore
\[
\bigl\|D_{ij}^{(uv)}\bigr\|_{1}
=
\mathbb E\bigl[|D_{ij}^{(uv)}|\,\mathbf 1_{\mathcal B_{ij}^{(uv)}}\bigr]
\leq
\frac{C'}{N}\,P\bigl(\mathcal B_{ij}^{(uv)}\bigr).
\]
The inclusion for products follows by intersecting the two one-index inclusions.
The same argument applies to $\mathbf b^\top\mathbf X_{ij}$.

\vspace{0.5em}
\noindent\textbf{Part (iv).}
The claim follows from a $\sigma$-field inclusion into the localized field
$\mathcal G_{uv,ij}$ of Proposition~\ref{prop:coupling}(iv). Fix a pair
$\bigl((u,v),(i,j)\bigr)\in\mathcal P^2$; we show that
\[
\sigma\!\left(
\mathbf a^\top\mathbf X_{uv}^{(\mathrm{marg})},\;
\mathbf a^\top\mathbf X_{ij}^{(\mathrm{marg})}
-\mathbf a^\top(\tilde{\mathbf X}_{ij}^{(\mathrm{marg})})^{(uv)}
\right)
\subseteq
\mathcal G_{uv,ij},
\]
and likewise for the non-marginal statistic.

Each of the three ingredients is a generator of $\mathcal G_{uv,ij}$, or a
function of one. The statistic $\mathbf a^\top\mathbf X_{uv}^{(\mathrm{marg})}$
is a function of $(\mathbf W_{\mathcal N_{uv}},\mathbf C_{\mathcal N_{uv}})$ by
Assumptions~\ref{ass:interference} and~\ref{ass:weight}; the statistic
$\mathbf a^\top\mathbf X_{ij}^{(\mathrm{marg})}$ is the corresponding function of
$(\mathbf W_{\mathcal N_{ij}},\mathbf C_{\mathcal N_{ij}})$; and
$\mathbf a^\top(\tilde{\mathbf X}_{ij}^{(\mathrm{marg})})^{(uv)}$ is that same
function evaluated at
$(\tilde{\mathbf W}^{(uv)}_{\mathcal N_{ij}},\tilde{\mathbf C}^{(uv)}_{\mathcal N_{ij}})$.
All of $\mathbf W_{\mathcal N_{uv}},\mathbf C_{\mathcal N_{uv}},
\mathbf W_{\mathcal N_{ij}},\mathbf C_{\mathcal N_{ij}},
\tilde{\mathbf W}^{(uv)}_{\mathcal N_{ij}},\tilde{\mathbf C}^{(uv)}_{\mathcal N_{ij}}$
generate $\mathcal G_{uv,ij}$, which proves the inclusion. Because the
coefficient of Proposition~\ref{prop:coupling}(iv) is attached to a single unit
rather than to the whole discrepancy sum, no enlargement beyond the two
interference neighborhoods is needed here. The identical argument applies to
$\mathbf b^\top\mathbf X_{ij}$.

The mixing coefficient is monotone with respect to $\sigma$-field inclusion,
since shrinking either $\sigma$-field shrinks the class over which its defining
supremum is taken. Applying this to both sides of a quadruple,
\[
\psi_{ij,i'j'}^{(uv),(u'v'),(\mathrm{marg})}
\leq
\psi_{ij,i'j'}^{(0),(uv),(u'v')},
\qquad
\psi_{ij,i'j'}^{(uv),(u'v')}
\leq
\psi_{ij,i'j'}^{(0),(uv),(u'v')},
\]
for every $\bigl((u,v),(u',v'),(i,j),(i',j')\bigr)\in\mathcal P^4$. Consequently, with $C_{\psi}^{(1)}:=C_{\psi}^{(0)}$, either summand coefficient
can reach $C_{\psi}^{(1)}/n$ only where
$\psi_{ij,i'j'}^{(0),(uv),(u'v')}$ does, and
Proposition~\ref{prop:coupling}(iv) confines the latter to the quadruples whose
localized cluster sets meet. Since the domination holds quadruple by quadruple,
the structural characterization transfers verbatim.
\end{proof}

\subsubsection{Proof of Theorem \ref{thm:clt}}
The proof under complete randomization is based on Proposition \ref{prop:scalar_clt} and Corollary~\ref{cor:coupling}, while the proof under independent Bernoulli randomization uses the following Stein-type bound from~\cite{Ross2011}.

\begin{lemma}[Stein bound for summands under independent Bernoulli randomization (Theorem 3.5 in \cite{Ross2011})]
\label{lem:stein_bernoulli}
Let $\mathcal{P}$ be an index set for $(i,j)$ with $|\mathcal{P}|=N$.
Assume that $\{X_{ij}\}_{(i,j)\in\mathcal{P}}$ satisfies:
\begin{enumerate}
    \item[(i)] $\mathbb{E}[X_{ij}]=0$ for all $(i,j)\in\mathcal{P}$;
    \item[(ii)] $\mathbb{E}[X_{ij}^4]<\infty$ for all $(i,j)\in\mathcal{P}$.
\end{enumerate}
Suppose further that $\{X_{ij}\}_{(i,j)\in\mathcal{P}}$ admits dependency neighborhoods $\{\bar{\mathcal{N}}_{ij}\}_{(i,j)\in\mathcal{P}}$, in the sense that
\[
X_{ij}\indep\bigl(X_{i'j'}\bigr)_{(i',j')\notin\bar{\mathcal{N}}_{ij}}
\quad\text{for all }(i,j)\in\mathcal{P},
\]
and let $D:=\sup_{(i,j)\in\mathcal{P}}|\bar{\mathcal{N}}_{ij}|$.
Let
\[
\sigma^2 := \mathrm{Var}\!\Bigl(\sum_{(i,j)\in\mathcal{P}} X_{ij}\Bigr),
\qquad
S := \sigma^{-1}\sum_{(i,j)\in\mathcal{P}} X_{ij},
\qquad
Z\sim\mathcal{N}(0,1).
\]
Then the Wasserstein distance between $S$ and $Z$ satisfies
\[
d_W(S,Z)
\;\le\;
\frac{D^2}{\sigma^3}\sum_{(i,j)\in\mathcal{P}} \mathbb{E}|X_{ij}|^3
\;+\;
\frac{\sqrt{26}\,D^{3/2}}{\sqrt{\pi}\,\sigma^2}
\sqrt{\sum_{(i,j)\in\mathcal{P}} \mathbb{E}[X_{ij}^4]}.
\]
\end{lemma}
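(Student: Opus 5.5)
The plan is the standard dependency-neighborhood Stein argument. Since this lemma is Theorem~3.5 of \cite{Ross2011}, the proposal just reconstructs that proof in the present notation. Throughout we read the dependency neighborhoods in the dependency-graph sense: collections of summands indexed by sets that are not linked in the graph are mutually independent. This strengthens the displayed condition, but it holds for the local statistics to which the lemma is applied, because disjoint clusters of Bernoulli draws are independent.

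First, as in the proof of Proposition~\ref{prop:scalar_clt}, we invoke Theorem~3.1 of \cite{Ross2011}. This reduces the claim to bounding $|\mathbb E[f'(S)-Sf(S)]|$ uniformly over $f$ with $\|f\|_\infty,\|f''\|_\infty\le 2$ and $\|f'\|_\infty\le\sqrt{2/\pi}$. For each $(i,j)$ we set
\[
S_{ij}:=\sigma^{-1}\sum_{(i',j')\notin\bar{\mathcal N}_{ij}}X_{i'j'}.
\]
The dependency-neighborhood property gives $X_{ij}\indep S_{ij}$, and hence $\mathbb E[X_{ij}f(S_{ij})]=0$. Writing $S-S_{ij}=\sigma^{-1}\sum_{(k,l)\in\bar{\mathcal N}_{ij}}X_{kl}$, we obtain
\[
\bigl|\mathbb E[f'(S)-Sf(S)]\bigr|\le T_1+T_2,
\]
where
\[
T_1=\Bigl|\mathbb E\Bigl[\sigma^{-1}\sum_{(i,j)}X_{ij}\bigl(f(S)-f(S_{ij})-(S-S_{ij})f'(S)\bigr)\Bigr]\Bigr|
\]
is the Taylor remainder and
\[
T_2=\Bigl|\mathbb E\Bigl[f'(S)\Bigl(1-\sigma^{-2}\sum_{(i,j)}\sum_{(k,l)\in\bar{\mathcal N}_{ij}}X_{ij}X_{kl}\Bigr)\Bigr]\Bigr|
\]
is the variance term. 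This is the same split used for $T_1,T_2$ in the proof of Proposition~\ref{prop:scalar_clt}, with $S_{ij}$ in place of the coupling sum.

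For $T_1$, Taylor's theorem with $\|f''\|_\infty\le 2$ gives
\[
T_1\le\sigma^{-3}\sum_{(i,j)}\ \sum_{(k,l),(k',l')\in\bar{\mathcal N}_{ij}}\mathbb E|X_{ij}X_{kl}X_{k'l'}|.
\]
Each triple product is bounded by $\tfrac13(|X_{ij}|^3+|X_{kl}|^3+|X_{k'l'}|^3)$. We then count, for each fixed index, how many triples it appears in. Each index appears in at most $D^2$ triples as the leading term, and the symmetric positions are handled the same way using $|\bar{\mathcal N}|\le D$. This yields $T_1\le D^2\sigma^{-3}\sum\mathbb E|X_{ij}|^3$.

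For $T_2$, we first note that $\mathbb E\bigl[\sum_{ij}\sum_{kl\in\bar{\mathcal N}_{ij}}X_{ij}X_{kl}\bigr]=\sigma^2$, because pairs outside the neighborhoods are uncorrelated. Cauchy--Schwarz then gives
\[
T_2\le\sqrt{2/\pi}\,\sigma^{-2}\sqrt{\mathrm{Var}\Bigl(\sum_{ij}\sum_{kl\in\bar{\mathcal N}_{ij}}X_{ij}X_{kl}\Bigr)}.
\]
We expand this variance as a sum of $\mathrm{Cov}(X_{ij}X_{kl},X_{i'j'}X_{k'l'})$. The dependency-graph independence kills every term in which $\{(i',j'),(k',l')\}$ lies outside $\bar{\mathcal N}_{ij}\cup\bar{\mathcal N}_{kl}$ and its neighbors. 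For the surviving terms we use $|\mathrm{Cov}(U,V)|\le \mathbb E|UV|+\mathbb E|U|\,\mathbb E|V|$ together with $|abcd|\le\tfrac14(a^4+b^4+c^4+d^4)$, which bounds everything by fourth moments. The surviving quadruples are counted per fixed index at $O(D^3)$. Tracking the constant carefully gives the factor $13D^3$, so the variance is at most $13D^3\sum\mathbb E X_{ij}^4$. Combining with $\sqrt{2/\pi}$ produces $\sqrt{26}D^{3/2}/(\sqrt\pi\,\sigma^2)$.

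The main obstacle is this last combinatorial step. One has to verify that the vanishing covariances are exactly those with disjoint second-order neighborhoods, which is where the stronger dependency-graph reading of $\bar{\mathcal N}_{ij}$ is genuinely needed. One also has to do the bookkeeping that yields the explicit constant $26$ rather than a generic $O(D^3)$. If only the order matters for the applications in Theorem~\ref{thm:clt}, a cruder count suffices.
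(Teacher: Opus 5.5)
Your proposal is correct and follows the paper's route: the paper gives no argument of its own and simply invokes Ross's Theorem~3.5, whose proof is exactly this $S_{ij}$/Taylor/Cauchy--Schwarz argument. Your switch to the dependency-graph reading is in fact necessary, because under the literal hypothesis the inequality fails. Take i.i.d.\ Rademacher $\eta,\epsilon_1,\dots,\epsilon_m$, $X_{2t-1}=\epsilon_t$, $X_{2t}=\eta\epsilon_t$, with neighborhoods $\{2t-1,2t\}$: every summand is independent of all summands outside its neighborhood, $D=2$ and $\sigma^2=2m$, so the right-hand side is $O(m^{-1/2})$, yet $S=(1+\eta)\sum_t\epsilon_t/\sqrt{2m}$ equals $0$ with probability $1/2$ and $d_W(S,Z)$ stays bounded away from zero.

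The count you defer is short. Each surviving covariance is at most half the sum of the four fourth moments, and the set of surviving quadruples is invariant under swapping the two pairs and the entries within a pair. A fixed first index has at most $D\cdot 4D^2$ completions, so the variance is at most $8D^3\sum\mathbb E X_{ij}^4$ and $\sqrt{26}$ holds with room to spare. Note that Ross's own cruder count gives $14D^3$, hence the $\sqrt{28}$ printed in his theorem, so copying his bookkeeping would not by itself produce your $13$.
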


\begin{proof}[Proof of Theorem \ref{thm:clt}]
Here, we prove only the non-marginal case. The marginal case follows by the
same argument, with $\mathbf B_{\mathrm{marg}}$,
$\boldsymbol\tau_{\mathrm{marg}}$, and
$\boldsymbol\tau_{ij,\mathrm{marg}}$ replacing $\mathbf B$,
$\boldsymbol\tau$, and $\boldsymbol\tau_{ij}$, respectively.
The proof covers three settings: general estimators under cluster-level
independent Bernoulli randomization, general estimators under
cluster-level complete randomization, and cluster-agnostic estimators under
unit-level independent Bernoulli randomization. The first two settings are covered by part (i) of the theorem, and the last by part (ii). We first establish several preliminary
arguments that apply to all three settings. We then prove part (i), analyzing its two randomization schemes in turn, and finally prove part (ii), whose random normalization matrix $\boldsymbol\Sigma(\mathbf C)$ requires both a different reduction from the estimator to the sum $\sum_{(i,j)}\mathbf X_{ij}$ and a conditional application of Stein's method.

Recall the individual vector-valued summands from Corollary~\ref{cor:coupling}:
\[
\mathbf X_{ij}
:=
\frac{g_i}{N_i}\mathbf B_{(*,ij)}\circ\bigl(Y_{ij}\mathbf 1-\boldsymbol\tau_{ij}\bigr)+\frac{g_i}{N_i}\bigl(\mathbf B_{(*,ij)}-\mathbf{1}\bigr)\circ\bigl(\boldsymbol\tau_{ij}-\boldsymbol\tau\bigr).
\]
Under Assumption~\ref{ass:weight}(ii), the same argument as in
Corollary~\ref{cor:coupling} implies that
$\mathbb{E}[\mathbf X_{ij}]=\mathbf 0$ for each $(i,j)$. Under Assumption~\ref{ass:weight_cluster_agnostic}(ii), we furthermore have $\mathbb{E}[\mathbf X_{ij}\mid \mathbf{C}]=\mathbf 0$ for each $(i,j)$.
Assumptions~\ref{ass:CLT_regularity}(iii)--(vi) give
$\|\mathbf X_{ij}\|_\infty\le C/N$ a.s.\ for a global constant $C$.

Using the vectorized representation of the Hájek estimator, together with the fact that 
\(\sum_{(i,j)}\frac{g_i}{N_i}\bigl(\boldsymbol\tau_{ij}-\boldsymbol\tau\bigr)=0,\)
we have
\begin{align*}
\hat{\boldsymbol\tau}-\boldsymbol\tau
=&
\left(\sum_{(i,j)}\frac{g_i}{N_i}\mathbf B_{(*,ij)}\right)^{\!\circ-1}
\!\circ\!
\left(
\sum_{(i,j)}\frac{g_i}{N_i}\mathbf B_{(*,ij)}\circ\bigl(Y_{ij}\mathbf 1-\boldsymbol\tau\bigr)
\right)\\
=&\left(\sum_{(i,j)}\frac{g_i}{N_i}\mathbf B_{(*,ij)}\right)^{\!\circ-1}
\!\circ\!
\left(
\sum_{(i,j)}\frac{g_i}{N_i}\mathbf B_{(*,ij)}\circ\bigl(Y_{ij}\mathbf 1-\boldsymbol\tau\bigr)-\frac{g_i}{N_i}\bigl(\boldsymbol\tau_{ij}-\boldsymbol\tau\bigr)
\right)\\
=&\left(\sum_{(i,j)}\frac{g_i}{N_i}\mathbf B_{(*,ij)}\right)^{\!\circ-1}
\!\circ\!
\left(
\sum_{(i,j)}\mathbf X_{ij}
\right).
\end{align*}
By the unbiased identification property (Assumption~\ref{ass:weight}),
$\mathbb{E}\!\bigl[\sum_{(i,j)}\frac{g_i}{N_i}B_{(*,ij)}\bigr]=\mathbf 1$, and a variance-order argument analogous to Theorem~\ref{thm:variance_order} (substituting $Y_{ij}\equiv 1$) gives
\[
\Bigl\|\mathrm{Var}\!\Bigl(\sum_{(i,j)}\tfrac{g_i}{N_i}\mathbf B_{(*,ij)}\Bigr)\Bigr\|
=
\begin{cases}
O(n^{-1}) & \text{in the two settings of part (i)},\\
O(N^{-1}) & \text{in the setting of part (ii)}.
\end{cases}
\]
Writing
\[
\boldsymbol\delta:=\Bigl(\sum_{(i,j)}\tfrac{g_i}{N_i}\mathbf B_{(*,ij)}\Bigr)^{\circ-1}-\mathbf 1,
\]
Chebyshev's inequality therefore gives $\boldsymbol\delta=O_p(n^{-1/2})$ in the part-(i) settings and $\boldsymbol\delta=O_p(N^{-1/2})$ in the part-(ii) setting, and in every case the H\'ajek identity above reads $\hat{\boldsymbol\tau}-\boldsymbol\tau=(\mathbf 1+\boldsymbol\delta)\circ\sum_{(i,j)}\mathbf X_{ij}$.

The passage from the normalized estimator to the normalized sum $\sum_{(i,j)}\mathbf X_{ij}$ takes a single unified form in every setting. Let $\mathbf A$ denote the symmetric positive-definite matrix used to standardize the estimator: $\mathbf A=\boldsymbol\Sigma$, which is deterministic, in part (i), and $\mathbf A=\boldsymbol\Sigma(\mathbf C)$, which is $\sigma(\mathbf C)$-measurable and hence random, in part (ii). The H\'ajek identity gives the exact decomposition
\[
\mathbf A^{-1/2}(\hat{\boldsymbol\tau}-\boldsymbol\tau)
=
\mathbf A^{-1/2}\sum_{(i,j)}\mathbf X_{ij}
+
\mathbf R,
\qquad
\mathbf R:=\mathbf A^{-1/2}\Bigl(\boldsymbol\delta\circ\sum_{(i,j)}\mathbf X_{ij}\Bigr),
\]
and, using $\|\boldsymbol\delta\circ\mathbf v\|\le\|\boldsymbol\delta\|_\infty\|\mathbf v\|$ and $\|\sum_{(i,j)}\mathbf X_{ij}\|\le\|\mathbf A^{1/2}\|\,\|\mathbf A^{-1/2}\sum_{(i,j)}\mathbf X_{ij}\|$, the remainder obeys the bound
\begin{equation}
\|\mathbf R\|
\le
\underbrace{\|\mathbf A^{1/2}\|\,\|\mathbf A^{-1/2}\|}_{\text{square-root condition number of }\mathbf A}
\cdot\|\boldsymbol\delta\|_\infty\cdot
\Bigl\|\mathbf A^{-1/2}\sum_{(i,j)}\mathbf X_{ij}\Bigr\|.
\tag{$\star$}
\end{equation}
Hence, in either part, it suffices to (a) prove the central limit theorem $\mathbf A^{-1/2}\sum_{(i,j)}\mathbf X_{ij}\Rightarrow\mathcal N(\mathbf 0,I)$ for the normalized sum and (b) show that the right-hand side of $(\star)$ is $o_p(1)$, so that $\mathbf R\xrightarrow{p}\mathbf 0$. The two parts differ precisely in how (a) and (b) are carried out, and this difference stems entirely from whether $\mathbf A$ is deterministic or random, as we now explain.

\medskip
\noindent\textbf{Proof of part (i).}
Here $\mathbf A=\boldsymbol\Sigma$ is deterministic, which simplifies both (a) and (b).

For (b), since $\mathbb E[\sum_{(i,j)}\mathbf X_{ij}]=\mathbf 0$ and $\boldsymbol\Sigma$ is the variance matrix of $\sum_{(i,j)}\mathbf X_{ij}$,
\[
\mathbb E\Bigl\|\boldsymbol\Sigma^{-1/2}\sum_{(i,j)}\mathbf X_{ij}\Bigr\|^2
=\mathrm{tr}\bigl(\boldsymbol\Sigma^{-1/2}\boldsymbol\Sigma\,\boldsymbol\Sigma^{-1/2}\bigr)=d,
\]
so $\|\boldsymbol\Sigma^{-1/2}\sum_{(i,j)}\mathbf X_{ij}\|=O_p(1)$. Theorem~\ref{thm:variance_order}(i) gives $\|\boldsymbol\Sigma\|=O(n^{-1})$, which with the hypothesis $\lambda_{\min}(\boldsymbol\Sigma)=\omega(n^{-4/3})$ makes the square-root condition number in $(\star)$ equal to $(\|\boldsymbol\Sigma\|/\lambda_{\min}(\boldsymbol\Sigma))^{1/2}=o(n^{1/6})$. With $\boldsymbol\delta=O_p(n^{-1/2})$, the bound $(\star)$ gives $\|\mathbf R\|\le o(n^{1/6})\cdot O_p(n^{-1/2})\cdot O_p(1)=o_p(1)$.

For (a), because the normalizer $\boldsymbol\Sigma^{-1/2}$ is a fixed matrix, $\boldsymbol\Sigma^{-1/2}\sum_{(i,j)}\mathbf X_{ij}$ is a fixed linear image of $\sum_{(i,j)}\mathbf X_{ij}$, and the Cram\'er--Wold theorem reduces its asymptotic normality to a scalar central limit theorem with a \emph{deterministic} normalizing variance. Fix any deterministic unit vector $\mathbf a$ with $\|\mathbf a\|=1$, let $X_{ij}:=\mathbf a^\top\mathbf X_{ij}$, and set $\sigma_n^2:=\mathrm{Var}(\sum_{(i,j)}X_{ij})=\mathbf a^\top\boldsymbol\Sigma\mathbf a\ge\lambda_{\min}(\boldsymbol\Sigma)$; it then suffices to show
\[
S_n:=\sigma_n^{-1}\sum_{(i,j)}X_{ij}\;\xrightarrow{d}\;\mathcal N(0,1).
\]
We prove this scalar central limit theorem separately for the two randomization schemes covered by part (i), using Stein's method in a scheme-specific way.

\medskip
\noindent\textbf{Cluster-level independent Bernoulli randomization
(Assumption~\ref{ass:independent_rand}(i)).}
Each $X_{ij}$ is a function of the cluster blocks $(\mathbf W_{i'},C_{i'})_{i'\in\mathcal N_{ij}^{(cl)}}$, and by Assumptions~\ref{ass:cluster_assignment}(ii) and~\ref{ass:independent_rand}(i) the blocks $\{(\mathbf W_i,C_i)\}_{i=1}^n$ are mutually independent. Fix $(i,j)$: every summand $X_{i'j'}$ with $(i',j')\notin\Lambda_{ij}$ has $\mathcal N_{i'j'}^{(cl)}\cap\mathcal N_{ij}^{(cl)}=\varnothing$ and is therefore measurable with respect to the blocks on clusters outside $\mathcal N_{ij}^{(cl)}$, so the entire family $\{X_{i'j'}:(i',j')\notin\Lambda_{ij}\}$ is \emph{jointly} independent of $X_{ij}$. Hence $X_{ij}$ admits, in the sense of Lemma~\ref{lem:stein_bernoulli}, the dependency neighborhood
\[
\bar{\mathcal N}_{ij}
:=
\Lambda_{ij}
=
\bigl\{(i',j'):\mathcal N_{ij}^{(cl)}\cap\mathcal N_{i'j'}^{(cl)}\neq\varnothing\bigr\},
\]
and by Lemma~\ref{lem:complete_neighborhood_size},
$D:=\sup_{(i,j)\in\mathcal P}|\bar{\mathcal N}_{ij}|=O(N/n)$.
By Lemma~\ref{lem:stein_bernoulli}, together with the fact that $\|\mathbf X_{ij}\|_\infty\le C/N$, we have
\begin{align*}
d_W(S_n,Z)
&\le
\frac{D^2}{\sigma_n^3}\sum_{(i,j)}\mathbb{E}|X_{ij}|^3
+\frac{\sqrt{26}\,D^{3/2}}{\sqrt\pi\,\sigma_n^2}
\sqrt{\sum_{(i,j)}\mathbb{E}[X_{ij}^4]}\\
&=O\!\left(\frac{N^2}{\sigma_n^3 n^2}\cdot \frac{N}{N^3}\right)+O\!\left(\frac{N^{3/2}}{\sigma_n^2 n^{3/2}}\cdot \sqrt{\frac{N}{N^4}}\right)\\
&=
O\!\left(\frac{1}{\sigma_n^3 n^2}\right)
+
O\!\left(\frac{1}{\sigma_n^2 n^{3/2}}\right).
\end{align*}
Under $\sigma_n^2\ge\lambda_{\min}(\boldsymbol\Sigma)=\omega(n^{-4/3})$ we have
$\sigma_n^3 n^2=\omega(1)$ and $\sigma_n^2 n^{3/2}=\omega(n^{1/6})\to\infty$,
so $d_W(S_n,Z)\to 0$.

\medskip
\noindent\textbf{Cluster-level complete randomization
(Assumption~\ref{ass:complete_rand}(i)).}
We apply Corollary~\ref{cor:coupling} to the scalar summands
$X_{ij}=\mathbf a^\top\mathbf X_{ij}$,
with coupled counterparts
$\tilde X_{ij}^{(uv)}:=\mathbf a^\top\tilde{\mathbf X}_{ij}^{(uv)}$
constructed via the coupling construction in Section~\ref{sec:technical} and Corollary~\ref{cor:coupling}.
Condition~(ii) of Proposition~\ref{prop:scalar_clt} holds for these summands
because Assumptions~\ref{ass:CLT_regularity}(iii)--(vi) give
$|X_{ij}|\le\|\mathbf a\|\,\|\mathbf X_{ij}\|\le C/N$ almost surely, as recorded
in the Bernoulli case above, and condition~(i) is
$\mathbb E[\mathbf X_{ij}]=\mathbf 0$.
Corollary~\ref{cor:coupling}(i) verifies the remaining conditions of
Proposition~\ref{prop:scalar_clt}: the coupled array
$(\tilde X_{ij}^{(uv)})_{(i,j)\in\mathcal P}$ is distributionally equal to
$(X_{ij})_{(i,j)\in\mathcal P}$, and
$\mathbb{E}\bigl[X_{uv}\mid(\tilde X_{ij}^{(uv)})_{(i,j)\in\mathcal P}\bigr]=0$.
Corollary~\ref{cor:coupling}(ii) gives the discrepancy bound
\[
\sup_{(u,v)\in\mathcal P}
\left\|\sum_{(i,j)\in\mathcal P}
\bigl(X_{ij}-\tilde X_{ij}^{(uv)}\bigr)\right\|_\infty
=O\!\left(\frac{1}{n}\right),
\]
which makes the first error term of Proposition~\ref{prop:scalar_clt} equal to
$\sigma_n^{-3}N^{-1}\cdot N\cdot O(n^{-2})=O(\sigma_n^{-3}n^{-2})$.

It remains to bound $\mathcal E_2$. Abbreviate
$D_{ij}^{(uv)}:=X_{ij}-\tilde X_{ij}^{(uv)}$,
\[
q_{ij}^{(uv)}:=P\bigl(\mathcal B_{ij}^{(uv)}\bigr),
\qquad
Q_{ij,i'j'}^{(uv),(u'v')}:=P\bigl(\mathcal B_{ij}^{(uv)}\cap\mathcal B_{i'j'}^{(u'v')}\bigr),
\]
with $\mathcal B_{ij}^{(uv)}$ the disturbance event of
Proposition~\ref{prop:coupling}(iii), and recall from
Corollary~\ref{cor:coupling}(iii) that $\|D_{ij}^{(uv)}\|_\infty\le C'/N$,
$\|D_{ij}^{(uv)}\|_1\le (C'/N)q_{ij}^{(uv)}$ and
$P\bigl(D_{ij}^{(uv)}D_{i'j'}^{(u'v')}\neq0\bigr)\le Q_{ij,i'j'}^{(uv),(u'v')}$. Hence, up to a global constant,
\[
T^{\mathrm{mix}}_{ij,i'j'}
=
O\!\left(\frac{1}{N^2}\,\psi_{ij,i'j'}^{(uv),(u'v')}\,q_{ij}^{(uv)}q_{i'j'}^{(u'v')}\right),
\qquad
T^{\mathrm{sup}}_{ij,i'j'}
=
O\!\left(\frac{1}{N^2}\Bigl(Q_{ij,i'j'}^{(uv),(u'v')}+q_{ij}^{(uv)}q_{i'j'}^{(u'v')}\Bigr)\right),
\]
so that $\mathcal E_2=O\bigl(\sigma_n^{-4}N^{-4}\Sigma\bigr)$ with
$\Sigma:=\sum_{\mathcal P^4}\min\{\,\psi qq',\;Q+qq'\,\}$, the sum running over
quadruples. We show $\Sigma=O(N^4/n^3)$.

All the counting rests on two index sets, both controlled by
Lemma~\ref{lem:complete_neighborhood_size}. Call an index pair
$\bigl((u,v),(i,j)\bigr)\in\mathcal P^2$ \emph{resonant} if
$\mathcal N_{ij}^{(cl)}\cap\mathcal N_{uv}^{(cl)}\neq\varnothing$, and write
$R$ for the set of resonant pairs. Since $|\mathcal N_{uv}^{(cl)}|=O(1)$ and each
cluster lies in the cluster-level neighborhood of only $O(N/n)$ units, every
slice of $R$ in either index has size $O(N/n)$, whence
\begin{equation}\label{eq:R_count}
\sup_{(u,v)}\bigl|\{(i,j):((u,v),(i,j))\in R\}\bigr|
=
\sup_{(i,j)}\bigl|\{(u,v):((u,v),(i,j))\in R\}\bigr|
=O\!\left(\frac Nn\right),
\qquad
|R|=O\!\left(\frac{N^2}{n}\right).
\end{equation}
Call a quadruple \emph{exceptional} if
$\mathcal S_{uv,ij}\cap\mathcal S_{u'v',i'j'}\neq\varnothing$ with
$\mathcal S_{uv,ij}:=\mathcal N_{uv}^{(cl)}\cup\mathcal N_{ij}^{(cl)}$; by
Corollary~\ref{cor:coupling}(iv) this is the only way
$\psi_{ij,i'j'}^{(uv),(u'v')}$ can reach $C_{\psi}^{(1)}/n$. For a fixed
unprimed pair, an exceptional partner must have $\mathcal N_{u'v'}^{(cl)}$ or
$\mathcal N_{i'j'}^{(cl)}$ meeting the $O(1)$ set $\mathcal S_{uv,ij}$, which
confines one primed index to $O(N/n)$ values and leaves the other free; hence
\begin{equation}\label{eq:exc_count}
\#\{\text{exceptional partners}\}=O\!\left(\frac{N^2}{n}\right),
\qquad
\#\{\text{exceptional partners lying in }R\}=O\!\left(\frac{N^2}{n^{2}}\right),
\end{equation}
the second count because once one primed index is confined to $O(N/n)$ values,
membership in $R$ confines the other to $O(N/n)$ values as well by
\eqref{eq:R_count}. Finally, Proposition~\ref{prop:coupling}(iii) says that
$q_{ij}^{(uv)}\geq C_P^{(0)}/n$ forces the pair into $R$, so
\begin{equation}\label{eq:sum_q}
\sup_{(u,v)}\sum_{(i,j)}q_{ij}^{(uv)}=O\!\left(\frac Nn\right),
\qquad
\sup_{(i,j)}\sum_{(u,v)}q_{ij}^{(uv)}=O\!\left(\frac Nn\right),
\qquad
\sum_{(u,v),(i,j)}q_{ij}^{(uv)}=O\!\left(\frac{N^2}{n}\right),
\end{equation}
since in each sum the $O(N/n)$ resonant terms contribute at most $1$ each and the
remaining terms less than $C_P^{(0)}/n$ each.

For the regular quadruples, where $\psi_{ij,i'j'}^{(uv),(u'v')}<C_{\psi}^{(1)}/n$, we use
$T^{\mathrm{mix}}$. Summing over all of $\mathcal P^4$ and factoring,
\[
\sum_{\text{regular}}\psi\,q\,q'
\leq
\frac{C_{\psi}^{(1)}}{n}
\Bigl(\sum_{(u,v),(i,j)}q_{ij}^{(uv)}\Bigr)
\Bigl(\sum_{(u',v'),(i',j')}q_{i'j'}^{(u'v')}\Bigr)
=
\frac{1}{n}\cdot O\!\left(\frac{N^2}{n}\right)^{2}
=
O\!\left(\frac{N^4}{n^3}\right),
\]
by \eqref{eq:sum_q}. 

For the exceptional quadruples we use $T^{\mathrm{sup}}$ and treat its two pieces
separately. For the $qq'$ piece, the analysis before \eqref{eq:exc_count} confines an exceptional
partner of a fixed unprimed pair to $O(N/n)$ values of one primed index with the
other free, so summing $q'$ over those partners and using \eqref{eq:sum_q} in the
matching direction gives $O(N/n)\cdot O(N/n)=O(N^2/n^2)$; multiplying by
$\sum_{(u,v),(i,j)}q=O(N^2/n)$ yields
$\sum_{\text{exceptional}}qq'=O(N^4/n^3)$. Both directions of \eqref{eq:sum_q}
are used here, since the conflict may originate in $(i',j')$ rather than in
$(u',v')$.

For the $Q$ piece we use Proposition~\ref{prop:coupling}(iii): $Q$ can reach
$C_P^{(1)}/n^2$ only if the unprimed pair lies in $R$, or the primed pair lies in
$R$, or the two couplings coincide and $\mathcal N_{ij}^{(cl)}$ meets
$\mathcal N_{i'j'}^{(cl)}$. Splitting the exceptional quadruples accordingly, and
bounding $Q$ by $C_P^{(1)}/n^2$ when none of the three holds and by
$\min\{q,q'\}$ otherwise:
\[
\renewcommand{\arraystretch}{2.1}
\begin{array}{l|c|c|c}
\text{case} & \#\{\text{quadruples}\} & Q\leq & \text{contribution to }\Sigma\\\hline
\text{none of the three}
& O\!\left(\frac{N^4}{n}\right)
& \frac{C_P^{(1)}}{n^2}
& O\!\left(\frac{N^4}{n^3}\right)\\
\text{unprimed}\in R,\ \text{primed}\notin R
& O\!\left(\frac{N^2}{n}\right)O\!\left(\frac{N^2}{n}\right)
& \frac{C_P^{(0)}}{n}
& O\!\left(\frac{N^4}{n^3}\right)\\
\text{primed}\in R,\ \text{unprimed}\notin R
& N^2\cdot O\!\left(\frac{N^2}{n^{2}}\right)
& \frac{C_P^{(0)}}{n}
& O\!\left(\frac{N^4}{n^3}\right)\\
\text{both}\in R
& O\!\left(\frac{N^2}{n}\right)O\!\left(\frac{N^2}{n^{2}}\right)
& 1
& O\!\left(\frac{N^4}{n^3}\right)\\
(u,v)=(u',v'),\ \text{neither}\in R
& N^2\cdot O\!\left(\frac Nn\right)
& \frac{C_P^{(0)}}{n}
& O\!\left(\frac{N^3}{n^{2}}\right)
\end{array}
\]
Each count follows from \eqref{eq:R_count} and \eqref{eq:exc_count}. In the first
row only exceptionality is used, giving $N^2$ unprimed pairs times
$O(N^2/n)$ partners, and the case hypothesis bounds $Q$ directly. In the second
row the unprimed pair ranges over $R$, of size $O(N^2/n)$, and the partner over
the $O(N^2/n)$ exceptional ones in \eqref{eq:exc_count}; since the primed pair is not resonant,
$Q\leq q'<C_P^{(0)}/n$. In the third row the unprimed pair is free and the
partner must be both exceptional and resonant, so there would uniformly be $O(N^2/n^2)$ partners by
\eqref{eq:exc_count}; since the unprimed pair is not resonant,
$Q\leq q<C_P^{(0)}/n$. In the fourth row both restrictions apply at once, and the
trivial bound $Q\le1$ already suffices. In the last row the coupling indices
coincide, so $(u,v)$ and $(i,j)$ are free while $(i',j')$ must meet
$\mathcal N_{ij}^{(cl)}$, which by
Lemma~\ref{lem:complete_neighborhood_size} leaves $O(N/n)$ choices; the unprimed
pair is not resonant, so again $Q\leq q<C_P^{(0)}/n$. The resulting
$O(N^3/n^2)$ is $O(N^4/n^3)$ because $n\leq N$.

Adding the regular contribution and the six exceptional ones,
$\Sigma=O(N^4/n^3)$, whence
\[
\mathcal E_2=O\!\left(\frac{1}{\sigma_n^{4}n^{3}}\right),
\qquad
\sqrt{\mathcal E_2}=O\!\left(\frac{1}{\sigma_n^{2}n^{3/2}}\right).
\]
Combining with $\mathcal E_1=O(\sigma_n^{-3}n^{-2})$,
Proposition~\ref{prop:scalar_clt} gives
\[
d_W(S_n,Z)
=
O\!\left(
\frac{1}{\sigma_n^3 n^2}
+
\frac{1}{\sigma_n^{2}n^{3/2}}
\right).
\]
Under $\sigma_n^2\ge\lambda_{\min}(\boldsymbol\Sigma)=\omega(n^{-4/3})$ we have
$\sigma_n^3n^2=\omega(1)$ and $\sigma_n^{2}n^{3/2}=\omega(n^{1/6})\to\infty$, so
both terms are $o(1)$ and $d_W(S_n,Z)\to 0$.

\medskip
In both randomization schemes $d_W(S_n,Z)\to 0$, hence $S_n\xrightarrow{d}\mathcal N(0,1)$, which completes the proof of part (i).

\medskip
\noindent\textbf{Proof of part (ii).}
Here $\mathbf A=\boldsymbol\Sigma(\mathbf C)$ is random. Write
\[
\lambda_N:=\inf_{\mathbf c\in\mathrm{supp}(\mathbf C)}
\min\{\lambda_{\min}(\boldsymbol\Sigma_{\mathrm{marg}}(\mathbf c)),
\lambda_{\min}(\boldsymbol\Sigma(\mathbf c))\},
\]
so that the eigenvalue condition of part (ii) reads $\lambda_N=\omega(N^{-4/3})$. Because $\mathbf X_{ij}$ and $\frac{g_i}{N_i}\mathbf B_{(*,ij)}\circ(Y_{ij}\mathbf 1-\boldsymbol\tau)$ differ by the deterministic vector $\frac{g_i}{N_i}(\boldsymbol\tau_{ij}-\boldsymbol\tau)$, the matrix $\boldsymbol\Sigma(\mathbf C)$ from Theorem~\ref{thm:variance_order}(ii) is the conditional covariance matrix of $\sum_{(i,j)}\mathbf X_{ij}$ given $\mathbf C$.

For (b), since $\mathbb E[\sum_{(i,j)}\mathbf X_{ij}\mid\mathbf C]=\mathbf 0$ and $\boldsymbol\Sigma(\mathbf C)$ is the conditional variance matrix of $\sum_{(i,j)}\mathbf X_{ij}$,
\[
\mathbb E\biggl[\Bigl\|\boldsymbol\Sigma(\mathbf C)^{-1/2}\sum_{(i,j)}\mathbf X_{ij}\Bigr\|^2\biggr]
=\mathbb E\biggl[\mathrm{tr}\Bigl(\boldsymbol\Sigma(\mathbf C)^{-1/2}\,
\mathrm{Var}\Bigl(\sum_{(i,j)}\mathbf X_{ij}\Bigm|\mathbf C\Bigr)\,
\boldsymbol\Sigma(\mathbf C)^{-1/2}\Bigr)\biggr]
=\mathrm{tr}(I)=d,
\]
so $\|\boldsymbol\Sigma(\mathbf C)^{-1/2}\sum_{(i,j)}\mathbf X_{ij}\|=O_p(1)$ by Markov's inequality. The entrywise bounds in the proof of Theorem~\ref{thm:variance_order}(ii) hold uniformly for every realization of $\mathbf C$, so $\|\boldsymbol\Sigma(\mathbf c)\|=O(N^{-1})$ uniformly over $\mathbf c\in\mathrm{supp}(\mathbf C)$, and the square-root condition number in $(\star)$ satisfies $\|\boldsymbol\Sigma(\mathbf C)^{1/2}\|\,\|\boldsymbol\Sigma(\mathbf C)^{-1/2}\|\le(O(N^{-1})/\lambda_N)^{1/2}=o(N^{1/6})$. With $\boldsymbol\delta=O_p(N^{-1/2})$, the bound $(\star)$ gives $\|\mathbf R\|=o(N^{1/6})\cdot O_p(N^{-1/2})\cdot O_p(1)=o_p(1)$.

For (a), the randomness of $\boldsymbol\Sigma(\mathbf C)$ changes the argument. Unlike part (i), the normalizer $\boldsymbol\Sigma(\mathbf C)^{-1/2}$ is itself random and does not converge to a constant matrix, so $\boldsymbol\Sigma(\mathbf C)^{-1/2}\sum_{(i,j)}\mathbf X_{ij}$ is not a fixed linear image of $\sum_{(i,j)}\mathbf X_{ij}$ and cannot be standardized by a single deterministic variance; correspondingly, its unconditional law is a mixture of Gaussians indexed by the realization of $\mathbf C$ rather than a single Gaussian, and the Cram\'er--Wold reduction of part (i) does not apply directly. We therefore condition on $\mathbf C=\mathbf c$, under which the normalizer becomes deterministic, establish the central limit theorem for each realization with a Berry--Esseen bound that is uniform over $\mathbf c$, and finally integrate over $\mathbf C$.

Fix a realization $\mathbf c\in\mathrm{supp}(\mathbf C)$ and work under the conditional law $\mathbb P(\,\cdot\mid\mathbf C=\mathbf c)$. Each $\mathbf X_{ij}$ is a function of $(\mathbf W_{\mathcal N_{ij}},\mathbf C_{\mathcal N_{ij}})$, hence of $\mathbf W_{\mathcal N_{ij}}$ alone once $\mathbf C=\mathbf c$ is fixed. By Assumptions~\ref{ass:independent_rand}(ii) and~\ref{ass:cluster_assignment}(ii), the unit-level assignments $\{W_{ij}\}$ are mutually independent given $\mathbf C$. Fix $(i,j)$: every summand $\mathbf X_{i'j'}$ with $\mathcal N_{i'j'}\cap\mathcal N_{ij}=\varnothing$ is a function of the assignments on $\mathcal N_{i'j'}$, an index set disjoint from $\mathcal N_{ij}$, so under $\mathbb P(\,\cdot\mid\mathbf C=\mathbf c)$ the whole family $\{\mathbf X_{i'j'}:\mathcal N_{i'j'}\cap\mathcal N_{ij}=\varnothing\}$ is \emph{jointly} independent of $\mathbf X_{ij}$. Hence the array $\{\mathbf X_{ij}\}$ admits the dependency neighborhoods $\bar{\mathcal N}_{ij}:=\{(i',j'):\mathcal N_{ij}\cap\mathcal N_{i'j'}\neq\varnothing\}$ in the sense of Lemma~\ref{lem:stein_bernoulli}. Since every unit has at most $C_{\mathcal N}$ interference neighbors by Assumption~\ref{ass:CLT_regularity}(i), $D:=\sup_{(i,j)\in\mathcal P}|\bar{\mathcal N}_{ij}|\le C_{\mathcal N}^2$ for every realization $\mathbf c$.

By the Cram\'er--Wold theorem, (a) holds once we show $\mathbf a^\top\boldsymbol\Sigma(\mathbf C)^{-1/2}\sum_{(i,j)}\mathbf X_{ij}\xrightarrow{d}\mathcal N(0,1)$ for every deterministic unit vector $\mathbf a$. Conditionally on $\mathbf C=\mathbf c$, this statistic equals
\[
S_n(\mathbf c):=\sum_{(i,j)}\tilde X_{ij}(\mathbf c),
\qquad
\tilde X_{ij}(\mathbf c):=\mathbf b_{\mathbf c}^\top\mathbf X_{ij},
\qquad
\mathbf b_{\mathbf c}:=\boldsymbol\Sigma(\mathbf c)^{-1/2}\mathbf a,
\]
where $\mathbf b_{\mathbf c}$ is nonrandom once $\mathbf c$ is fixed. Using $\mathbb E[\mathbf X_{ij}\mid\mathbf C=\mathbf c]=\mathbf 0$ from Assumption~\ref{ass:weight_cluster_agnostic}(ii) and the fact that $\boldsymbol\Sigma(\mathbf c)$ is the conditional covariance matrix of $\sum_{(i,j)}\mathbf X_{ij}$ given $\mathbf C=\mathbf c$,
\[
\mathbb E[S_n(\mathbf c)\mid\mathbf C=\mathbf c]=0,
\qquad
\mathrm{Var}\!\left(S_n(\mathbf c)\mid\mathbf C=\mathbf c\right)
=\mathbf b_{\mathbf c}^\top\boldsymbol\Sigma(\mathbf c)\,\mathbf b_{\mathbf c}
=\mathbf a^\top\mathbf a=1.
\]
Applying Lemma~\ref{lem:stein_bernoulli} under $\mathbb P(\,\cdot\mid\mathbf C=\mathbf c)$ with $\sigma^2=1$,
\begin{align*}
d_W\bigl(\mathcal L(S_n(\mathbf c)\mid\mathbf C=\mathbf c),\,\mathcal N(0,1)\bigr)
\le{}&
D^2\sum_{(i,j)}\mathbb E\bigl[|\tilde X_{ij}(\mathbf c)|^3\mid\mathbf C=\mathbf c\bigr]\\
&+\frac{\sqrt{26}\,D^{3/2}}{\sqrt\pi}
\sqrt{\sum_{(i,j)}\mathbb E\bigl[\tilde X_{ij}(\mathbf c)^4\mid\mathbf C=\mathbf c\bigr]}.
\end{align*}
Assumptions~\ref{ass:CLT_regularity}(iii)--(vi) give $\|\mathbf X_{ij}\|\le\sqrt d\,\|\mathbf X_{ij}\|_\infty\le C/N$ almost surely for a global constant $C$ (using $g_i/N_i\le C_N g_i\cdot n/N=O(1/N)$), and $\|\mathbf b_{\mathbf c}\|\le\|\boldsymbol\Sigma(\mathbf c)^{-1/2}\|=\lambda_{\min}(\boldsymbol\Sigma(\mathbf c))^{-1/2}\le\lambda_N^{-1/2}$, whence $|\tilde X_{ij}(\mathbf c)|\le C\lambda_N^{-1/2}/N$ almost surely. Therefore, uniformly over $\mathbf c\in\mathrm{supp}(\mathbf C)$,
\begin{align*}
\sum_{(i,j)}\mathbb E\bigl[|\tilde X_{ij}(\mathbf c)|^3\mid\mathbf C=\mathbf c\bigr]
&=O\!\left(\frac{\lambda_N^{-3/2}}{N^{2}}\right),\\
\sqrt{\sum_{(i,j)}\mathbb E\bigl[\tilde X_{ij}(\mathbf c)^4\mid\mathbf C=\mathbf c\bigr]}
&=O\!\left(\frac{\lambda_N^{-1}}{N^{3/2}}\right),
\end{align*}
and since $D\le C_{\mathcal N}^2=O(1)$ and $\lambda_N=\omega(N^{-4/3})$,
\begin{align*}
\sup_{\mathbf c\in\mathrm{supp}(\mathbf C)}
d_W\bigl(\mathcal L(S_n(\mathbf c)\mid\mathbf C=\mathbf c),\,\mathcal N(0,1)\bigr)
&=O\!\left(\frac{\lambda_N^{-3/2}}{N^{2}}\right)+O\!\left(\frac{\lambda_N^{-1}}{N^{3/2}}\right)\\
&=o(1)+o\!\left(N^{-1/6}\right)
=o(1).
\end{align*}
Thus the conditional central limit theorem holds for every realization of $\mathbf C$, with the Wasserstein distance to $\mathcal N(0,1)$ vanishing uniformly over the support of $\mathbf C$.

Integrating over $\mathbf C$ then yields the unconditional statement. For any $1$-Lipschitz test function $f$, writing $S_n:=\mathbf a^\top\boldsymbol\Sigma(\mathbf C)^{-1/2}\sum_{(i,j)}\mathbf X_{ij}$,
\begin{align*}
\bigl|\mathbb E[f(S_n)]-\mathbb E[f(Z)]\bigr|
&\le
\mathbb E\Bigl[\bigl|\mathbb E[f(S_n)\mid\mathbf C]-\mathbb E[f(Z)]\bigr|\Bigr]\\
&\le
\sup_{\mathbf c\in\mathrm{supp}(\mathbf C)}
d_W\bigl(\mathcal L(S_n(\mathbf c)\mid\mathbf C=\mathbf c),\,\mathcal N(0,1)\bigr)
\to 0,
\end{align*}
so $S_n\xrightarrow{d}\mathcal N(0,1)$ for every unit vector $\mathbf a$, and the Cram\'er--Wold theorem establishes (a). Combined with the reduction (b) through $(\star)$, this gives $\boldsymbol\Sigma(\mathbf C)^{-1/2}(\hat{\boldsymbol\tau}-\boldsymbol\tau)\Rightarrow\mathcal N(\mathbf 0,I)$; the marginal counterpart $\boldsymbol\Sigma_{\mathrm{marg}}(\mathbf C)^{-1/2}(\hat{\boldsymbol\tau}_{\mathrm{marg}}-\boldsymbol\tau_{\mathrm{marg}})\Rightarrow\mathcal N(\mathbf 0,I)$ follows by the identical argument with $\boldsymbol\Sigma_{\mathrm{marg}}(\mathbf C)$ in place of $\boldsymbol\Sigma(\mathbf C)$, completing the proof of part (ii).
\end{proof}

\subsection{Proof of variance conservativeness}

Throughout this subsection, $\|\mathbf{V}_{ij}\|_\infty=\max_k\operatorname{ess\,sup}|(\mathbf{V}_{ij})_k|$ for random vectors and $\|\mathbf{M}\|_\infty=\max_{k,l}|M_{kl}|$ for deterministic matrices; the operator norm satisfies $\|\mathbf{M}\|\le d\|\mathbf{M}\|_\infty$ for $d\times d$ matrices.

\subsubsection{Proof of Theorem \ref{thm:var_1}}
\begin{proof}[Proof of Theorem \ref{thm:var_1}]
  We prove the result for the HAC estimators
  $\hat{\mathbf\Sigma}^{\mathbf K_1}$ and $\hat{\mathbf\Sigma}^{\mathbf K_2}$.
  The marginal counterparts
  $\hat{\mathbf\Sigma}_{\mathrm{marg}}^{\mathbf K_1}$ and
  $\hat{\mathbf\Sigma}_{\mathrm{marg}}^{\mathbf K_2}$ follow from an identical
  argument upon replacing $\mathbf B$ with
  $\mathbf B_{\mathrm{marg}}$ and $\boldsymbol\tau_{ij}$ with
  $\boldsymbol\tau_{ij,\mathrm{marg}}$.

We begin by introducing the notation used throughout the proof. Define
\[
\mathbf V_{ij}
:=
\frac{g_i}{N_i}\mathbf B_{(*,ij)}\circ(Y_{ij}\mathbf 1-\boldsymbol\tau),
\qquad
\hat{\mathbf V}_{ij}
:=
\frac{g_i}{N_i}\mathbf B_{(*,ij)}\circ(Y_{ij}\mathbf 1-\hat{\boldsymbol\tau}),
\]
together with
\[
\mathbf D_{ij}:=\frac{g_i}{N_i}\mathbf B_{(*,ij)},
\qquad
\boldsymbol\delta:=\hat{\boldsymbol\tau}-\boldsymbol\tau,
\]
so that $\hat{\mathbf V}_{ij}=\mathbf V_{ij}-\mathbf D_{ij}\circ\boldsymbol\delta$. In preparation for the bias analysis, we further decompose $\mathbf V_{ij}=\mathbf U_{ij,1}+\mathbf U_{ij,2}$, where
\[
\mathbf U_{ij,1}
:=
\frac{g_i}{N_i}\mathbf B_{(*,ij)}\circ(Y_{ij}\mathbf 1-\boldsymbol\tau_{ij})+\frac{g_i}{N_i}\left(\mathbf B_{(*,ij)}-\mathbf{1}\right)\circ(\boldsymbol\tau_{ij}-\boldsymbol\tau),
\]
and
\[
\mathbf U_{ij,2}
:=
\frac{g_i}{N_i}\circ(\boldsymbol\tau_{ij}-\boldsymbol\tau).
\]
By the same argument as in the proof of Theorem~\ref{thm:clt}, we have
\(
\mathbb E[\mathbf U_{ij,1}]=\mathbf 0.
\)
The bounds in Assumptions~\ref{ass:CLT_regularity}(iii)--(vi) give
\[
\sup_{(i,j)}\|\mathbf V_{ij}\|_\infty=O\!\left(\frac{1}{N}\right),
\qquad
\sum_{(i,j)}\|\mathbf V_{ij}\|_\infty=O(1),
\]
and the analogous bounds hold for $\mathbf U_{ij,1}$, $\mathbf U_{ij,2}$, and $\mathbf D_{ij}$.
From Theorem~\ref{thm:variance_order}(i), we also have $\|\boldsymbol\delta\|=O_P(n^{-1/2})$. Moreover, it's worth noting that $\mathbf{V}_{ij}$ is defined here in the same way as in Theorem~\ref{thm:variance_order}.

Substituting $\hat{\mathbf V}_{ij}=\mathbf V_{ij}-\mathbf D_{ij}\circ\boldsymbol\delta$ into the definition of $\hat{\mathbf\Sigma}^{\mathbf K_1}$ and expanding the outer product, we obtain
\begin{align*}
\hat{\mathbf\Sigma}^{\mathbf K_1}
&=
\sum_{(i,j),(i',j')}(\mathbf K_1)_{ij,i'j'}\mathbf V_{ij}\mathbf V_{i'j'}^\top
-\sum_{(i,j),(i',j')}(\mathbf K_1)_{ij,i'j'}\mathbf V_{ij}(\mathbf D_{i'j'}\circ\boldsymbol\delta)^\top \\
&\quad
-\sum_{(i,j),(i',j')}(\mathbf K_1)_{ij,i'j'}(\mathbf D_{ij}\circ\boldsymbol\delta)\mathbf V_{i'j'}^\top
+\sum_{(i,j),(i',j')}(\mathbf K_1)_{ij,i'j'}(\mathbf D_{ij}\circ\boldsymbol\delta)(\mathbf D_{i'j'}\circ\boldsymbol\delta)^\top \\
&=: \mathbf T_{1n}+\mathbf T_{2n}+\mathbf T_{2n}^\top+\mathbf T_{3n}.
\end{align*}

Using the uniform bounds on the summands established above, we obtain
\[
  \|\mathbf T_{2n}\|\leq \|\mathbf{K}_1\|_{F}^2\sup_{(i,j)}\|\mathbf V_{ij}\mathbf D_{i'j'}^\top\|_{\infty}\|\boldsymbol\delta\|
  = O\left(\frac{N^2}{n}\right)O\left(N^{-2}\right)O_P\left(n^{-1/2}\right)=o_P(n^{-1})
\]
and
\[
  \|\mathbf T_{3n}\|\leq \|\mathbf{K}_1\|_{F}^2\sup_{(i,j)}\|\mathbf D_{i'j'}\mathbf D_{i'j'}^\top\|_{\infty}\|\boldsymbol\delta\|^2
  = O\left(\frac{N^2}{n}\right)O\left(N^{-2}\right)O_P\left(n^{-1}\right)=o(n^{-1})
\]

We now analyze the leading term $\mathbf T_{1n}$, beginning with its expectation. Since $\mathbf V_{ij}=\mathbf U_{ij,1}+\mathbf U_{ij,2}$, we have
\begin{align*}
  \mathbb{E}[\mathbf T_{1n}]
  &=
  \sum_{(i,j),(i',j')}(\mathbf K_1)_{ij,i'j'}\mathbb{E}[\mathbf U_{ij,1}\mathbf U_{i'j',1}^\top]
  -\sum_{(i,j),(i',j')}(\mathbf K_1)_{ij,i'j'}\mathbb{E}[\mathbf U_{ij,1}\mathbf U_{i'j',2}^\top] \\
  &\quad
  -\sum_{(i,j),(i',j')}(\mathbf K_1)_{ij,i'j'}\mathbb{E}[\mathbf U_{ij,2}\mathbf U_{i'j',1}^\top]
  +\sum_{(i,j),(i',j')}(\mathbf K_1)_{ij,i'j'}\mathbb{E}[\mathbf U_{ij,2}\mathbf U_{i'j',2}^\top] \\
  &=\sum_{(i,j),(i',j')}(\mathbf K_1)_{ij,i'j'}\mathbb{E}[\mathbf U_{ij,1}\mathbf U_{i'j',1}^\top]
  +\sum_{(i,j),(i',j')}(\mathbf K_1)_{ij,i'j'}\mathbf U_{ij,2}\mathbf U_{i'j',2}^\top,
  \end{align*}
where the last equality uses $\mathbb{E}[\mathbf U_{i'j',1}]=\mathbf 0$ together with the fact that $\mathbf U_{i'j',2}$ is nonrandom.
Recall from the proof of Theorem~\ref{thm:variance_order} that
\[
\boldsymbol\Sigma
  =
  \sum_{(i,j),(i',j')}(\mathbf K_2)_{ij,i'j'}\mathrm{Cov}(\mathbf V_{ij},\mathbf V_{i'j'})
  =
  \sum_{(i,j),(i',j')}(\mathbf K_2)_{ij,i'j'}\mathbb{E}[\mathbf U_{ij,1}\mathbf U_{i'j',1}^\top],
  \]
so
\[
  \mathbb{E}[\mathbf T_{1n}]-\boldsymbol\Sigma=-\sum_{(i,j),(i',j')}(\mathbf K_2-\mathbf K_1)_{ij,i'j'}\mathbb{E}[\mathbf U_{ij,1}\mathbf U_{i'j',1}^\top]
  +\sum_{(i,j),(i',j')}(\mathbf K_1)_{ij,i'j'}\mathbf U_{ij,2}\mathbf U_{i'j',2}^\top.
\]
By Assumption~\ref{ass:sparsity}, the difference $\mathbf K_2-\mathbf K_1$ has at most $o(|\{((i,j),(i',j'))\in\mathcal{P}\times\mathcal{P}:\mathbf (K_1)_{ij,i'j'\neq 0}\}|)=o(\frac{N^2}{n})$ nonzero entries.
Combining this with $\sup_{(i,j)}\|\mathbf U_{ij,1}\|_{\infty}=O(N^{-1})$ yields
\[
\sum_{(i,j),(i',j')}(\mathbf K_2-\mathbf K_1)_{ij,i'j'}\mathbb{E}[\mathbf U_{ij,1}\mathbf U_{i'j',1}^\top]=o\left(\frac{N^2}{n}\right)\cdot O(N^{-2})=o_P(n^{-1}).
\]
Therefore,
\[
  \mathbb{E}[\mathbf T_{1n}]
  =
  \boldsymbol\Sigma+\mathbf R_1+o_P(n^{-1}),
  \]
where
\[
  \mathbf R_1:=\sum_{(i,j),(i',j')}(\mathbf K_1)_{ij,i'j'}\mathbf U_{ij,2}\mathbf U_{i'j',2}^\top
\]
is positive semidefinite. To see this, take any vector $\mathbf a$ with dimension compatible with $\mathbf U_{ij,2}$, and define
\[
  \mathbf u
  =
  \big(\mathbf a^\top \mathbf U_{ij,2}\big)_{(i,j)\in\mathcal P}.
\]
Then
\[
  \mathbf a^\top \mathbf R_1 \mathbf a
  =
  \mathbf u^\top \mathbf K_1 \mathbf u
  \geq 0,
\]
where the inequality follows from the positive semidefiniteness of $\mathbf K_1$. Moreover, the bias matrix is of the same order as the target variance: since $\mathbf K_1$ has $\|\mathbf K_1\|_F^2=\sum_i N_i^2=O(N^2/n)$ nonzero entries and $\sup_{(i,j)}\|\mathbf U_{ij,2}\|_\infty=O(N^{-1})$,
\[
  \|\mathbf R_1\|\leq\|\mathbf K_1\|_F^2\sup_{(i,j)}\|\mathbf U_{ij,2}\mathbf U_{i'j',2}^\top\|_\infty=O(N^2/n)\,O(N^{-2})=O(n^{-1}).
\]

We can also show that
\[
\mathbf T_{1n}-\mathbb E[\mathbf T_{1n}]=o_P(n^{-1})
\]
via a variance bound. For any fixed coordinate pair $(l,l')$, the
$(l,l')$-entry of $\mathbf T_{1n}$ is
\[
  (\mathbf T_{1n})_{l,l'}
  =
  \sum_{(i,j),(i',j')}
  (\mathbf K_1)_{ij,i'j'} V_{ij,l} V_{i'j',l'}.
\]
We claim that
\[
  \mathrm{Var}\left((\mathbf T_{1n})_{l,l'}\right)=o(n^{-2}).
\]
Expanding the variance gives
\begin{align*}
  \mathrm{Var}\left((\mathbf T_{1n})_{l,l'}\right)&=\mathrm{Var}\left(\sum_{(i,j),(i',j')}
  (\mathbf K_1)_{ij,i'j'} V_{ij,l} V_{i'j',l'}\right)\\
  &=\sum_{1\leq i,i'\leq n,1\leq j_1,j_2\leq N_i, 1\leq j'_1,j'_2\leq N_{i'}}\mathrm{Cov}\left(V_{ij_1,l} V_{ij_2,l'},V_{i'j'_1,l} V_{i'j'_2,l'}\right).
\end{align*}
By Assumptions \ref{ass:cluster_assignment} and \ref{ass:independent_rand}(i), 
\[
(V_{ij_1,l},V_{ij_2,l'})\indep (V_{i'j'_1,l},V_{i'j'_2,l'}) \quad\text{if}\quad 
(\mathcal{N}_{ij_1}^{(cl)}\cup \mathcal{N}_{ij_2}^{(cl)})\cap (\mathcal{N}_{i'j'_1}^{(cl)}\cup \mathcal{N}_{i'j'_2}^{(cl)})=\varnothing.
\]
When this independence condition fails, at least one of $(i',j'_1)$ and $(i',j'_2)$ lies in the joint dependency neighborhood of $(i,j_1)$ and $(i,j_2)$, defined as
\[
\cup_{(i'',j''):i''\in\mathcal{N}_{ij_1}^{(cl)}\cup \mathcal{N}_{ij_2}^{(cl)}}\mathcal{N}_{i''j''}
\]
whose cardinality satisfies
\[
|\cup_{(i'',j''):i''\in\mathcal{N}_{ij_1}^{(cl)}\cup \mathcal{N}_{ij_2}^{(cl)}}\mathcal{N}_{i''j''}|\leq |\mathcal{N}_{ij_1}^{(cl)}\cup \mathcal{N}_{ij_2}^{(cl)}|\max_{i''}\{\cup_{j''=1}^{N_{i''}}\mathcal{N}_{i''j''}\}\leq 2C_{\mathcal{N}^{(cl)}}^2\max_{i''}N_{i''}=O\left(\frac{N}{n}\right).
\]
For a fixed triplet $(i,j_1,j_2)$, the number of triplets $(i',j'_1,j'_2)$ violating the independence condition is at most
$O\left(\frac{N}{n}\right)\cdot O\left(\frac{N}{n}\right)+O\left(\frac{N}{n}\right)\cdot O\left(\frac{N}{n}\right)=O\left(\frac{N^2}{n^2}\right)$.
This count is obtained by enumerating all units $(i',j'_1)$ in the joint dependency neighborhood together with all $(i',j'_2)$ in the same cluster, and vice versa.
Hence the total number of sextuples $(i,j_1,j_2,i',j'_1,j'_2)$ violating the independence condition is $N\cdot O\left(\frac{N}{n}\right)\cdot O\left(\frac{N^2}{n^2}\right)=O\left(\frac{N^4}{n^3}\right)$.
Combining this with the almost-sure bound
\[
\sup_{(i,j,l)}|V_{ij,l}|\le C/N\quad\text{a.s.}
\]
for a constant $C$, we obtain
\begin{align*}
  \mathrm{Var}\left((\mathbf T_{1n})_{l,l'}\right)=O\left(\frac{N^4}{n^3}\right)\cdot O\left(\frac{1}{N^4}\right)=O\left(\frac{1}{n^3}\right)=o\left(n^{-2}\right).
\end{align*}
By Chebyshev's inequality,
\[
\mathbb{E}[(\mathbf T_{1n})_{l,l'}]-(\mathbf T_{1n})_{l,l'}=O_P\left(\mathrm{Var}\left((\mathbf T_{1n})_{l,l'}\right)^{1/2}\right)=o_P(n^{-1})
\]
for every coordinate pair $(l,l')$. It follows that
\[
\mathbf T_{1n}-\mathbb E[\mathbf T_{1n}]=o_P(n^{-1}).
\]

Combining the bounds on $\mathbf T_{2n}$, $\mathbf T_{3n}$, the bias of $\mathbf T_{1n}$, and the concentration of $\mathbf T_{1n}$ around its mean, we conclude that
\[
  \hat{\mathbf\Sigma}^{\mathbf K_1}=\boldsymbol\Sigma+\mathbf R_1+o_P(n^{-1})
\]
with $\mathbf R_1\succeq\mathbf 0$.

Turning to $\hat{\mathbf\Sigma}^{\mathbf K_2}$, we write
\begin{align*}
  \hat{\mathbf\Sigma}^{\mathbf K_2}-\hat{\mathbf\Sigma}^{\mathbf K_1}
  &=
  \sum_{(i,j),(i',j')}(\mathbf K_2-\mathbf K_1)_{ij,i'j'}\mathbf V_{ij}\mathbf V_{i'j'}^\top
  -\sum_{(i,j),(i',j')}(\mathbf K_2-\mathbf K_1)_{ij,i'j'}\mathbf V_{ij}(\mathbf D_{i'j'}\circ\boldsymbol\delta)^\top \\
  &\quad
  -\sum_{(i,j),(i',j')}(\mathbf K_2-\mathbf K_1)_{ij,i'j'}(\mathbf D_{ij}\circ\boldsymbol\delta)\mathbf V_{i'j'}^\top
  +\sum_{(i,j),(i',j')}(\mathbf K_2-\mathbf K_1)_{ij,i'j'}(\mathbf D_{ij}\circ\boldsymbol\delta)(\mathbf D_{i'j'}\circ\boldsymbol\delta)^\top \\
  &=: \tilde{\mathbf T}_{1n}+\tilde{\mathbf T}_{2n}+\tilde{\mathbf T}_{2n}^\top+\tilde{\mathbf T}_{3n}.
  \end{align*}
By the same reasoning as above, $\|\tilde{\mathbf T}_{2n}\|,\|\tilde{\mathbf T}_{3n}\|=o_P(n^{-1})$.
For $\tilde{\mathbf T}_{1n}$, the sparsity bound gives
\[
  \|\tilde{\mathbf T}_{1n}\|\leq\|\mathbf K_2-\mathbf K_1\|_{F}^2\sup_{(i,j)}\|\mathbf V_{ij}\mathbf V_{i'j'}^\top\|_{\infty}=o\left(\frac{N^2}{n}\right)O\left(\frac{1}{N^2}\right)=o(n^{-1}).
\]
Therefore $\hat{\mathbf\Sigma}^{\mathbf K_2}-\hat{\mathbf\Sigma}^{\mathbf K_1}=o_P(n^{-1})$, and hence
\[
  \hat{\mathbf\Sigma}^{\mathbf K_2}=\boldsymbol\Sigma+\mathbf R_1+o_P(n^{-1}).
\]
This completes the proof.
\end{proof}

\subsubsection{Proof of Theorem \ref{thm:var_2}}
\begin{proof}[Proof of Theorem \ref{thm:var_2}]
We prove the result for the HAC estimator $\hat{\mathbf\Sigma}_2=\hat{\mathbf\Sigma}^{\mathbf K_3^+}$.
The marginal counterpart $\hat{\mathbf\Sigma}_{2,\mathrm{marg}}=\hat{\mathbf\Sigma}_{\mathrm{marg}}^{\mathbf K_3^+}$ follows from an identical argument upon replacing $\mathbf B$ with $\mathbf B_{\mathrm{marg}}$ and $\boldsymbol\tau_{ij}$ with $\boldsymbol\tau_{ij,\mathrm{marg}}$. The structure of the proof parallels that of Theorem~\ref{thm:var_1}.

We begin by introducing the notation used throughout the proof, following the conventions in the proof of Theorem~\ref{thm:var_1}. Define
\[
\mathbf V_{ij}
:=
\frac{g_i}{N_i}\mathbf B_{(*,ij)}\circ(Y_{ij}\mathbf 1-\boldsymbol\tau),
\qquad
\hat{\mathbf V}_{ij}
:=
\frac{g_i}{N_i}\mathbf B_{(*,ij)}\circ(Y_{ij}\mathbf 1-\hat{\boldsymbol\tau}),
\]
together with
\[
\mathbf D_{ij}:=\frac{g_i}{N_i}\mathbf B_{(*,ij)},
\qquad
\boldsymbol\delta:=\hat{\boldsymbol\tau}-\boldsymbol\tau,
\]
so that $\hat{\mathbf V}_{ij}=\mathbf V_{ij}-\mathbf D_{ij}\circ\boldsymbol\delta$. In preparation for the bias analysis, we further decompose $\mathbf V_{ij}=\mathbf U_{ij,1}+\mathbf U_{ij,2}$, where
\[
\mathbf U_{ij,1}
:=
\frac{g_i}{N_i}\mathbf B_{(*,ij)}\circ(Y_{ij}\mathbf 1-\boldsymbol\tau_{ij})+\frac{g_i}{N_i}\left(\mathbf B_{(*,ij)}-\mathbf 1\right)\circ(\boldsymbol\tau_{ij}-\boldsymbol\tau),
\]
and
\[
\mathbf U_{ij,2}
:=
\frac{g_i}{N_i}(\boldsymbol\tau_{ij}-\boldsymbol\tau).
\]
The split is identical to the one used in the proof of Theorem~\ref{thm:var_1}: $\mathbf U_{ij,2}$ is a deterministic constant, 
and the same argument as in the proof of Theorem~\ref{thm:clt} (applied here under Assumption~\ref{ass:weight_cluster_agnostic}, which gives $\mathbb E[\mathbf B_{(*,ij)}\circ(Y_{ij}\mathbf 1)\mid\mathbf C]=\boldsymbol\tau_{ij}$ and $\mathbb E[\mathbf B_{(*,ij)}\mid\mathbf C]=\mathbf 1$) 
yields $\mathbb E[\mathbf U_{ij,1}\mid \mathbf{C}]=\mathbf 0$.
The bounds in Assumptions~\ref{ass:CLT_regularity}(iii)--(vi) give
\[
\sup_{(i,j)}\|\mathbf V_{ij}\|_\infty=O\!\left(\frac{1}{N}\right),
\qquad
\sum_{(i,j)}\|\mathbf V_{ij}\|_\infty=O(1),
\]
and the analogous bounds hold for $\mathbf U_{ij,1}$, $\mathbf U_{ij,2}$, and $\mathbf D_{ij}$.
From Theorem~\ref{thm:variance_order}(ii), we also have $\|\boldsymbol\delta\|=O_P(N^{-1/2})$. Moreover, it's worth noting that $\mathbf{V}_{ij}$ is defined here in the same way as in Theorem~\ref{thm:variance_order}.

Substituting $\hat{\mathbf V}_{ij}=\mathbf V_{ij}-\mathbf D_{ij}\circ\boldsymbol\delta$ into the definition of $\hat{\mathbf\Sigma}_2$ and expanding the outer product, we obtain
\begin{align*}
\hat{\mathbf\Sigma}_2
&=
\sum_{(i,j),(i',j')}(\mathbf K_3^+)_{ij,i'j'}\mathbf V_{ij}\mathbf V_{i'j'}^\top
-\sum_{(i,j),(i',j')}(\mathbf K_3^+)_{ij,i'j'}\mathbf V_{ij}(\mathbf D_{i'j'}\circ\boldsymbol\delta)^\top \\
&\quad
-\sum_{(i,j),(i',j')}(\mathbf K_3^+)_{ij,i'j'}(\mathbf D_{ij}\circ\boldsymbol\delta)\mathbf V_{i'j'}^\top
+\sum_{(i,j),(i',j')}(\mathbf K_3^+)_{ij,i'j'}(\mathbf D_{ij}\circ\boldsymbol\delta)(\mathbf D_{i'j'}\circ\boldsymbol\delta)^\top \\
&=: \mathbf T_{1n}+\mathbf T_{2n}+\mathbf T_{2n}^\top+\mathbf T_{3n}.
\end{align*}

We first dispatch the plug-in error terms $\mathbf T_{2n}$ and $\mathbf T_{3n}$. 
The kernel $\mathbf K_3^+$ is not $\{0,1\}$-valued, so we bound the operator norm $\|\mathbf K_3^+\|$ rather than the Frobenius norm $\|\mathbf K_3^+\|_{F}$ to bound the quadratic forms. Since the row sums of $\mathbf K_3$ are uniformly bounded, 
\[
  \|\mathbf K_3^+\|\leq\|\mathbf K_3\|=O(1).
\]
So, for any fixed coordinate pair $(l,l')$, 
\begin{align*}
  \sum_{(i,j),(i',j')}(\mathbf K_3^+)_{ij,i'j'}\mathbf V_{ij,l}\mathbf D_{i'j',l'}\leq&
   \|\mathbf K_3^+\|N^{1/2}\sup_{(i,j,l)}\|\mathbf V_{ij,l}\|_{\infty}N^{1/2}\sup_{(i,j,l)}\|\mathbf D_{ij,l}\|_{\infty}\\
   =&O(1)N\sup_{(i,j,l)}\|\mathbf V_{ij}\|_{\infty}\sup_{(i,j,l)}\|\mathbf D_{ij}\|_{\infty}\\
   =&O(N)\cdot O\bigl(N^{-2}\bigr)=O(N^{-1}),
\end{align*}
and similarly,
\begin{align*}
  \sum_{(i,j),(i',j')}(\mathbf K_3^+)_{ij,i'j'}\mathbf D_{ij,l}\mathbf D_{i'j',l'}=O(N^{-1}).
\end{align*}
It follows that
\[
  \|\mathbf T_{2n}\|
  \le
  \|\sum_{(i,j),(i',j')}(\mathbf K_3^+)_{ij,i'j'}\mathbf V_{ij}(\mathbf D_{i'j'})^\top\| \|\boldsymbol\delta\| 
  =O(N^{-1})O_P(N^{-1/2})
  =o(N^{-1})
\]
and
\[
  \|\mathbf T_{3n}\|
  \le
  \|\sum_{(i,j),(i',j')}(\mathbf K_3^+)_{ij,i'j'}\mathbf D_{ij}(\mathbf D_{i'j'})^\top\| \|\boldsymbol\delta\|^2 
  =O(N^{-1})O_P(N^{-1})
  =o(N^{-1}).
\]

Having reduced $\hat{\mathbf\Sigma}_2$ to the leading term $\mathbf T_{1n}$ up to the $o_P(N^{-1})$ plug-in errors, we split $\mathbf T_{1n}$ into its conditional mean given $\mathbf C$ and the fluctuation around it,
\[
\mathbf T_{1n}
=
\mathbb E[\mathbf T_{1n}\mid\mathbf C]
+
\bigl(\mathbf T_{1n}-\mathbb E[\mathbf T_{1n}\mid\mathbf C]\bigr),
\]
and analyze the two pieces separately.

Consider first the conditional mean. Since $\mathbf V_{ij}=\mathbf U_{ij,1}+\mathbf U_{ij,2}$ with $\mathbb E[\mathbf U_{ij,1}\mid\mathbf C]=\mathbf 0$ and $\mathbf U_{i'j',2}$ nonrandom, the cross terms involving exactly one factor of $\mathbf U_{\cdot,1}$ vanish conditionally, and
\begin{align*}
  \mathbb E[\mathbf T_{1n}\mid\mathbf C]
  &=
  \sum_{(i,j),(i',j')}(\mathbf K_3^+)_{ij,i'j'}\mathbb E[\mathbf U_{ij,1}\mathbf U_{i'j',1}^\top\mid\mathbf C]
  +\sum_{(i,j),(i',j')}(\mathbf K_3^+)_{ij,i'j'}\mathbf U_{ij,2}\mathbf U_{i'j',2}^\top.
\end{align*}
Decomposing $\mathbf K_3^+=\mathbf K_3+\mathbf K_3^-$, where $\mathbf K_3^-:=(-\mathbf K_3)_+\succeq\mathbf 0$ is the lifting that makes $\mathbf K_3^+$ positive semidefinite, we further split the first sum as
\begin{align*}
\sum_{(i,j),(i',j')}(\mathbf K_3^+)_{ij,i'j'}\mathbb E[\mathbf U_{ij,1}\mathbf U_{i'j',1}^\top\mid\mathbf C]
={}&
\sum_{(i,j),(i',j')}(\mathbf K_3)_{ij,i'j'}\mathbb E[\mathbf U_{ij,1}\mathbf U_{i'j',1}^\top\mid\mathbf C]\\
&+
\sum_{(i,j),(i',j')}(\mathbf K_3^-)_{ij,i'j'}\mathbb E[\mathbf U_{ij,1}\mathbf U_{i'j',1}^\top\mid\mathbf C].
\end{align*}
Under Assumption~\ref{ass:weight_cluster_agnostic} we have $\mathbb E[\mathbf V_{ij}\mid\mathbf C]=\mathbf U_{ij,2}$, so $\mathbf U_{ij,1}=\mathbf V_{ij}-\mathbb E[\mathbf V_{ij}\mid\mathbf C]$ and hence $\mathbb E[\mathbf U_{ij,1}\mathbf U_{i'j',1}^\top\mid\mathbf C]=\mathrm{Cov}(\mathbf V_{ij},\mathbf V_{i'j'}\mid\mathbf C)$ for every pair. Given $\mathbf C$, Assumptions~\ref{ass:cluster_assignment} and~\ref{ass:independent_rand}(ii) make the unit-level assignments independent, so this conditional covariance vanishes whenever $\mathcal N_{ij}\cap\mathcal N_{i'j'}=\varnothing$. Since the support of $\mathbf K_3$ is exactly $\{(ij,i'j'):\mathcal N_{ij}\cap\mathcal N_{i'j'}\neq\varnothing\}$,
\[
\sum_{(i,j),(i',j')}(\mathbf K_3)_{ij,i'j'}\mathbb E[\mathbf U_{ij,1}\mathbf U_{i'j',1}^\top\mid\mathbf C]
=
\sum_{(i,j),(i',j')}\mathrm{Cov}(\mathbf V_{ij},\mathbf V_{i'j'}\mid\mathbf C)
=
\boldsymbol\Sigma(\mathbf C),
\]
the last equality being the definition of $\boldsymbol\Sigma(\mathbf C)$ in Theorem~\ref{thm:variance_order}(ii).
Therefore, similarly to the conservativeness decomposition in \cite{Gao2025},
\[
\mathbb E[\mathbf T_{1n}\mid\mathbf C]
=
\boldsymbol\Sigma(\mathbf C)+\mathbf R_2(\mathbf C),
\qquad
\mathbf R_2(\mathbf C)
=
\mathbf R_2^{(\boldsymbol\tau)}+\mathbf R_2^{(\mathrm{lift})}(\mathbf C),
\]
where
\begin{align*}
\mathbf R_2^{(\boldsymbol\tau)}
&:=
\sum_{(i,j),(i',j')}(\mathbf K_3^+)_{ij,i'j'}\mathbf U_{ij,2}\mathbf U_{i'j',2}^\top,\\
\mathbf R_2^{(\mathrm{lift})}(\mathbf C)
&:=
\sum_{(i,j),(i',j')}(\mathbf K_3^-)_{ij,i'j'}\mathbb E[\mathbf U_{ij,1}\mathbf U_{i'j',1}^\top\mid\mathbf C].
\end{align*}
The first remainder $\mathbf R_2^{(\boldsymbol\tau)}$ is deterministic and is exactly the analog of $\mathbf R_1$ in the proof of Theorem~\ref{thm:var_1}, positive semidefinite because $\mathbf K_3^+\succeq\mathbf 0$. The second remainder $\mathbf R_2^{(\mathrm{lift})}(\mathbf C)$ is positive semidefinite for every realization of $\mathbf C$: for any unit vector $\mathbf a$, writing $\mathbf w:=(\mathbf a^\top\mathbf U_{ij,1})_{(i,j)}$, we have $\mathbf a^\top\mathbf R_2^{(\mathrm{lift})}(\mathbf C)\mathbf a=\mathbb E[\mathbf w^\top\mathbf K_3^-\mathbf w\mid\mathbf C]\ge 0$ because $\mathbf K_3^-\succeq\mathbf 0$. Hence $\mathbf R_2(\mathbf C)\succeq\mathbf 0$ for every realization of $\mathbf C$, as claimed.

It remains to verify that $\mathbf R_2(\mathbf C)$ is, uniformly over the support of $\mathbf C$, of the same order $O(N^{-1})$ as the conditional target variance. Fix a realization $\mathbf c\in\mathrm{supp}(\mathbf C)$ and an arbitrary unit vector $\mathbf a$, and write $\mathbf u:=(\mathbf a^\top\mathbf U_{ij,2})_{(i,j)}$ and $\mathbf w:=(\mathbf a^\top\mathbf U_{ij,1})_{(i,j)}$, so that
\[
\mathbf a^\top\mathbf R_2^{(\boldsymbol\tau)}\mathbf a=\mathbf u^\top\mathbf K_3^+\mathbf u,
\qquad
\mathbf a^\top\mathbf R_2^{(\mathrm{lift})}(\mathbf c)\mathbf a=\mathbb E\bigl[\mathbf w^\top\mathbf K_3^-\mathbf w\mid\mathbf C=\mathbf c\bigr].
\]
Both kernels have operator norm controlled by that of $\mathbf K_3$: we already noted $\|\mathbf K_3^+\|\le\|\mathbf K_3\|=O(1)$, and since $\mathbf K_3^-=(-\mathbf K_3)_+$ collects the negative eigenvalues of $\mathbf K_3$, also $\|\mathbf K_3^-\|\le\|\mathbf K_3\|=O(1)$. The per-unit and summed sup-norm bounds on $\mathbf U_{ij,2}$ give $\|\mathbf u\|^2\le d\sup_{(i,j)}\|\mathbf U_{ij,2}\|_\infty\sum_{(i,j)}\|\mathbf U_{ij,2}\|_\infty=O(N^{-1})$, where $d$ is the matrix dimension; a similar bound holds for $\|\mathbf w\|^2$ almost surely, so $\mathbb E[\|\mathbf w\|^2\mid\mathbf C=\mathbf c]=O(N^{-1})$ hold uniformly for all $\mathbf c\in \text{supp}(\mathbf{C})$. Since $\mathbf K_3^+$ and $\mathbf K_3^-$ are positive semidefinite,
\[
0\le\mathbf a^\top\mathbf R_2(\mathbf c)\mathbf a
\le\|\mathbf K_3^+\|\,\|\mathbf u\|^2+\|\mathbf K_3^-\|\,\mathbb E\bigl[\|\mathbf w\|^2\mid\mathbf C=\mathbf c\bigr]
=O(N^{-1}),
\]
uniformly over unit vectors $\mathbf a$ and over $\mathbf c\in\mathrm{supp}(\mathbf C)$. Consequently $\sup_{\mathbf c\in\mathrm{supp}(\mathbf C)}\|\mathbf R_2(\mathbf c)\|=O(N^{-1})$.

Consider next the fluctuation $\mathbf T_{1n}-\mathbb E[\mathbf T_{1n}\mid\mathbf C]$, which we show is $o_P(N^{-1})$ through a conditional variance bound. For any fixed coordinate pair $(l,l')$, the $(l,l')$-entry of $\mathbf T_{1n}$ is
\[
(\mathbf T_{1n})_{l,l'}
=
\sum_{(i_1,j_1),(i_2,j_2)}(\mathbf K_3^+)_{i_1j_1,i_2j_2}V_{i_1j_1,l}V_{i_2j_2,l'},
\]
so that
\begin{multline*}
\mathrm{Var}\bigl((\mathbf T_{1n})_{l,l'}\mid\mathbf C\bigr)
=
\sum_{(i_1,j_1),(i_2,j_2),(i_3,j_3),(i_4,j_4)}
(\mathbf K_3^+)_{i_1j_1,i_2j_2}(\mathbf K_3^+)_{i_3j_3,i_4j_4}\\
\cdot\,\mathrm{Cov}\bigl(V_{i_1j_1,l}V_{i_2j_2,l'},\,V_{i_3j_3,l}V_{i_4j_4,l'}\mid\mathbf C\bigr).
\end{multline*}
Given $\mathbf C$, each $\mathbf V_{ij}$ is $\sigma(\mathbf W_{\mathcal N_{ij}})$-measurable, and by Assumptions~\ref{ass:cluster_assignment} and~\ref{ass:independent_rand}(ii) the unit-level assignments are conditionally independent, so
\[
(\mathbf V_{i_1j_1},\mathbf V_{i_2j_2})\indep (\mathbf V_{i_3j_3},\mathbf V_{i_4j_4})\mid\mathbf C
\quad\text{if}\quad
(\mathcal N_{i_1j_1}\cup\mathcal N_{i_2j_2})\cap(\mathcal N_{i_3j_3}\cup\mathcal N_{i_4j_4})=\varnothing.
\]
With the almost-sure bound $\sup_{(i,j,l)}|V_{ij,l}|\le C/N$ from Assumption~\ref{ass:CLT_regularity}(iii)--(vi), and noting that any overlap of unit-level neighborhoods forces an overlap of the corresponding cluster-level neighborhoods,
\begin{multline*}
\mathrm{Var}\bigl((\mathbf T_{1n})_{l,l'}\mid\mathbf C\bigr)
\le
\frac{4C^4}{N^4}\!\!\!\!\sum_{\substack{(i_1,j_1),(i_2,j_2),(i_3,j_3),(i_4,j_4):\\(\mathcal N_{i_1j_1}^{(cl)}\cup\mathcal N_{i_2j_2}^{(cl)})\cap(\mathcal N_{i_3j_3}^{(cl)}\cup\mathcal N_{i_4j_4}^{(cl)})\neq\varnothing}}\!\!\!\!
\bigl|(\mathbf K_3^+)_{i_1j_1,i_2j_2}\bigr|\,\bigl|(\mathbf K_3^+)_{i_3j_3,i_4j_4}\bigr|\\
=
O(N^{-4})\cdot o\!\left(N^2\right)
=
o\!\left(N^{-2}\right),
\end{multline*}
where the next-to-last equality invokes Assumption~\ref{ass:var_cluster_agnostic}. The bound is deterministic, hence uniform over $\mathbf C$; taking expectations gives $\mathbb E[\mathrm{Var}((\mathbf T_{1n})_{l,l'}\mid\mathbf C)]=o(N^{-2})$, and an application of Markov’s inequality and Chebyshev’s inequality yields
\[
(\mathbf T_{1n})_{l,l'}-\mathbb E[(\mathbf T_{1n})_{l,l'}\mid\mathbf C]=o_P(N^{-1})
\]
for every coordinate pair $(l,l')$. Since the matrix dimension is fixed, this gives
\[
\mathbf T_{1n}-\mathbb E[\mathbf T_{1n}\mid\mathbf C]=o_P(N^{-1}).
\]

Combining the bounds on $\mathbf T_{2n}$ and $\mathbf T_{3n}$, the conditional mean identity $\mathbb E[\mathbf T_{1n}\mid\mathbf C]=\boldsymbol\Sigma(\mathbf C)+\mathbf R_2(\mathbf C)$, and the concentration of $\mathbf T_{1n}$ around its conditional mean, we conclude that
\[
\hat{\mathbf\Sigma}_2
=
\boldsymbol\Sigma(\mathbf C)+\mathbf R_2(\mathbf C)+o_P(N^{-1}),
\]
with $\mathbf R_2(\mathbf C)\succeq\mathbf 0$ and $\sup_{\mathbf c\in\mathrm{supp}(\mathbf C)}\|\mathbf R_2(\mathbf c)\|=O(N^{-1})$.

The proof of the marginal statement is identical, with $\mathbf B$ replaced by $\mathbf B_{\mathrm{marg}}$ and $\boldsymbol\tau_{ij}$ by $\boldsymbol\tau_{ij,\mathrm{marg}}$.
\end{proof}

\subsubsection{Proof of Theorem \ref{thm:var_3}}
The proof follows the structure of the proof of Theorem~\ref{thm:var_1}, and relies additionally on the following property of stratified complete randomization, concerning the covariance between functionals of disjoint blocks of the cluster-level assignment vector. While Lemma~\ref{lem:mixing_coef_complete_randomization} bounds the \emph{order} of the covariance induced by complete randomization, the next lemma derives the exact expression for the leading term in this covariance.

\begin{lemma}[Cross-block covariance under stratified complete randomization]
\label{lem:scr_covariance}
Suppose the cluster-level assignments $\mathbf C=(C_1,\ldots,C_n)$ satisfy Assumption~\ref{ass:complete_rand}(i), with strata $\{\mathcal I_k\}_{k=1}^K$, treated counts $I_k=\sum_{i\in\mathcal I_k}C_i$, and treatment fractions $p_k:=I_k/|\mathcal I_k|$. Let $S_1,S_2\subseteq\{1,\ldots,n\}$ be disjoint index sets with $|S_1|\vee|S_2|\leq L$ for a fixed constant $L$, and let $f\in\mathcal{L}(\mathbf C_{S_1})$ and $g\in\mathcal{L}(\mathbf C_{S_2})$ be functions of the corresponding subvectors. For each stratum $k$ define the within-stratum treated counts
\[
T_{1k}=\sum_{i\in S_1\cap\mathcal I_k}C_i,
\qquad
T_{2k}=\sum_{j\in S_2\cap\mathcal I_k}C_j .
\]
All covariances below are taken under the assignment law of Assumption~\ref{ass:complete_rand}(i). Assume that $f$ and $g$ are bounded by $1$ in absolute value, and write $\mathcal K^{\ast}=\{k:p_k\in[p,1-p]\}$ for the set of nondegenerate strata. Then
\[
\mathrm{Cov}(f,g)
=-\sum_{k\in\mathcal K^{\ast}}
\frac{1}{|\mathcal I_k|\,p_k(1-p_k)}\,
\mathrm{Cov}(f,T_{1k})\,\mathrm{Cov}(g,T_{2k})
+o(n^{-1}),
\]
where the $o(n^{-1})$ remainder is uniform over all $f,g$ bounded by $1$ and all disjoint $S_1,S_2$ with $|S_1|\vee|S_2|\leq L$.
\end{lemma}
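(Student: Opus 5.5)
The plan is to reduce everything to the conditional law of $\mathbf C_{S_2}$ given $\mathbf C_{S_1}$, and then linearize that law in the within-stratum treated counts $T_{1k}$.

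\emph{Step 1: reduce to a function of the counts $T_{1k}$.} Because strata are independent and, within stratum $k$, $\mathbf C_{\mathcal I_k}$ is uniform on vectors summing to $I_k$, the conditional law of $\mathbf C_{S_2}$ given $\mathbf C_{S_1}$ factorizes over strata. On stratum $k$, $\mathbf C_{S_2\cap\mathcal I_k}$ is a uniform draw without replacement from $|\mathcal I_k|-s_{1k}$ clusters, of which $I_k-T_{1k}$ are treated. Here $s_{tk}:=|S_t\cap\mathcal I_k|\le L$. Hence
\[
\mathbb E[g\mid\mathbf C_{S_1}]=G(T_{11},\ldots,T_{1K})
\quad\text{and}\quad
\mathrm{Cov}(f,g)=\mathrm{Cov}\bigl(f,G(\mathbf T_1)\bigr).
\]
For $k\notin\mathcal K^\ast$, i.e.\ $p_k\in\{0,1\}$, both $T_{1k}$ and $T_{2k}$ are deterministic. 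Such strata contribute nothing to $G$ or to the claimed sum, so they can be dropped. Only the at most $L$ strata meeting both $S_1$ and $S_2$ affect $G$.

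\emph{Step 2: linearize the conditional probabilities.} For a configuration $\mathbf c$ on $S_2\cap\mathcal I_k$ with $t$ treated entries, the conditional probability is a ratio of falling factorials,
\[
\frac{(I_k-T_{1k})_{t}\,\bigl(|\mathcal I_k|-I_k-s_{1k}+T_{1k}\bigr)_{s_{2k}-t}}{\bigl(|\mathcal I_k|-s_{1k}\bigr)_{s_{2k}}}.
\]
Replacing $T_{1k}$ by its mean $s_{1k}p_k$ changes the logarithm of each numerator factor by
\[
-\frac{t}{I_k}(T_{1k}-s_{1k}p_k)+\frac{s_{2k}-t}{|\mathcal I_k|-I_k}(T_{1k}-s_{1k}p_k)+O(n^{-2})
=-\frac{(t-s_{2k}p_k)(T_{1k}-s_{1k}p_k)}{|\mathcal I_k|p_k(1-p_k)}+O(n^{-2}).
\]
The error is uniform because $t,s_{tk},T_{1k}\le L$, $p_k\in[p,1-p]$ and $|\mathcal I_k|=\Omega(n)$. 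Exponentiating, and multiplying across the at most $L$ relevant strata, gives
\[
G(\mathbf T_1)=\bar G-\sum_{k\in\mathcal K^\ast}\frac{\mathbb E^\circ\bigl[g\,(T_{2k}-s_{2k}p_k)\bigr]}{|\mathcal I_k|p_k(1-p_k)}\,(T_{1k}-s_{1k}p_k)+O(n^{-2}).
\]
Here $\bar G$ and $\mathbb E^\circ$ are taken under the law with $T_{1k}$ frozen at its mean. That law is a product of hypergeometric laws and differs from the marginal law of $\mathbf C_{S_2}$ by $O(1/n)$ in total variation, by the same ratio computation. So $\mathbb E^\circ[g(T_{2k}-s_{2k}p_k)]=\mathrm{Cov}(g,T_{2k})+O(1/n)$, using $\mathbb E T_{2k}=s_{2k}p_k$ and $|g|\le1$.

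\emph{Step 3: plug back in.} Since $\mathrm{Cov}(f,\bar G)=0$, $|f|\le1$ and $|T_{1k}-s_{1k}p_k|\le L$, substituting into $\mathrm{Cov}(f,G(\mathbf T_1))$ gives the main term
\[
-\sum_{k\in\mathcal K^\ast}\frac{\mathrm{Cov}(f,T_{1k})\,\mathrm{Cov}(g,T_{2k})}{|\mathcal I_k|p_k(1-p_k)}.
\]
The remainder is $O(n^{-1})\cdot O(n^{-1})$ from the $\mathbb E^\circ$ correction (it sits inside a coefficient already of order $1/n$), plus $O(n^{-2})$ from the Taylor step. Both are $o(n^{-1})$, with constants depending only on $L$, $p$ and the implied constant in $\min_k|\mathcal I_k|=\Omega(n)$.

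The main obstacle is Step 2. One must check that the second-order terms in the expansion of the falling-factorial ratios are genuinely $O(n^{-2})$ uniformly in $f$, $g$, $S_1$, $S_2$ and the strata. The product over strata must also not accumulate error, which holds because only the at most $L$ strata meeting both index sets matter. If the direct expansion gets messy, I would first compute the exact single-stratum case $s_{1k}=s_{2k}=1$, where the covariance is exactly $-p_k(1-p_k)/(|\mathcal I_k|-1)$. I would then obtain the general case by expanding $f$ and $g$ in their finite Walsh/indicator bases and using the exchangeability of hypergeometric cumulants.
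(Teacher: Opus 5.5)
Your proposal is correct, and it reaches the lemma by a genuinely different route from the paper's. The paper works algebraically. It expands $f$ and $g$ in the multilinear monomial basis $\prod_{i\in A}C_i$ and computes each monomial covariance exactly through the falling-factorial ratios $I_k^{\underline r}/|\mathcal I_k|^{\underline r}$, expanding these to first order. It then recognizes the resulting coefficient sums as $\partial\mathbb E_{\mathbf p}[f]/\partial p_k=\mathrm{Cov}_{\mathbf p}(f,T_{1k})/\{p_k(1-p_k)\}$, via the score of an auxiliary product-Bernoulli law. Finally it replaces $\mathrm{Cov}_{\mathbf p}$ by $\mathrm{Cov}$ on both the $f$ and $g$ sides using an $O(n^{-1})$ total-variation bound.

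You instead condition on $\mathbf C_{S_1}$ and note that $\mathbb E[g\mid\mathbf C_{S_1}]$ depends only on the counts $T_{1k}$. You then log-linearize the conditional hypergeometric probabilities in $T_{1k}$; your first-order coefficient $-(t-s_{2k}p_k)(T_{1k}-s_{1k}p_k)/\{|\mathcal I_k|p_k(1-p_k)\}$ checks out. As a result, $\mathrm{Cov}(f,T_{1k})$ appears exactly under the true law, and only the $g$-side factor needs an $O(1/n)$ comparison. The remainder comes out explicitly as $O(n^{-2})$, uniform in the required sense.

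Your route is more transparent: the covariance is the linear response of the conditional law of $\mathbf C_{S_2}$ to the treated counts in $S_1$. It also needs neither the multilinear coefficients nor the score identity. The paper's route is symmetric in $f$ and $g$ and gives exact monomial covariances, which would be the natural starting point for higher-order corrections. The fallback you sketch at the end (expanding in the indicator basis) is essentially the paper's proof, so you do not need it.

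One point is worth stating explicitly. Since $s_{1k}p_k$ need not be an integer, your frozen law is the falling-factorial formula evaluated at a real argument. You do not need it to be a genuine probability law (Vandermonde's identity shows it is one, for large $n$). You only need its pointwise ratio to the true marginal of $\mathbf C_{S_2}$ to be $1+O(1/n)$ uniformly over the at most $2^{L}$ configurations, and that is exactly what your argument uses.
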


\begin{proof}[Proof of Theorem \ref{thm:var_3}]
We prove the claims for the HAC estimator $\hat{\mathbf\Sigma}_1$ in part~(i) and for the bias-corrected estimator $\hat{\boldsymbol\Sigma}_3$ in part~(ii); the marginal claims for $\hat{\mathbf\Sigma}_{1,\mathrm{marg}}$ and $\hat{\boldsymbol\Sigma}_{3,\mathrm{marg}}$ follow from identical arguments upon replacing $\mathbf B$ with $\mathbf B_{\mathrm{marg}}$, $\boldsymbol\tau_{ij}$ with $\boldsymbol\tau_{ij,\mathrm{marg}}$, and $\mathbf V_{ij}$ with $\mathbf V_{ij}^{\mathrm{marg}}$.

\medskip
\noindent\textbf{Part (i).}
We adopt the notation of the proof of Theorem~\ref{thm:var_1}: $\mathbf V_{ij}$, $\hat{\mathbf V}_{ij}$, $\mathbf D_{ij}$, $\boldsymbol\delta=\hat{\boldsymbol\tau}-\boldsymbol\tau$, and the decomposition $\mathbf V_{ij}=\mathbf U_{ij,1}+\mathbf U_{ij,2}$, where $\mathbf U_{ij,2}$ is deterministic. Since the unbiasedness in Assumption~\ref{ass:weight} is defined with respect to the actual randomization distribution, the same argument as in the proof of Theorem~\ref{thm:clt} again gives $\mathbb E[\mathbf U_{ij,1}]=\mathbf 0$ under Assumption~\ref{ass:complete_rand}(i). 
The summands $\mathbf V_{ij}$ obey the two order bounds
\[
\sup_{(i,j)}\|\mathbf V_{ij}\|_\infty=O\!\left(\frac{1}{N}\right),
\qquad
\sum_{(i,j)}\|\mathbf V_{ij}\|_\infty=O(1),
\]
namely a uniform per-unit bound and a summable aggregate bound, and the same two bounds hold verbatim with $\mathbf V_{ij}$ replaced by any of $\mathbf U_{ij,1}$, $\mathbf U_{ij,2}$, or $\mathbf D_{ij}$. They follow from Assumptions~\ref{ass:CLT_regularity}(iii)--(vi) exactly as in the proof of Theorem~\ref{thm:var_1}, as they do not involve the randomization scheme; here $\|\cdot\|_\infty$ denotes the essential supremum, following the conventions in the proof of Theorem~\ref{thm:variance_order}. For brevity, we refer to these bounds---for $\mathbf V_{ij}$ and its analogues $\mathbf U_{ij,1}$, $\mathbf U_{ij,2}$, and $\mathbf D_{ij}$---collectively as the \emph{envelope bounds}.
Moreover, Theorem~\ref{thm:variance_order}(i) and Chebyshev's inequality gives $\|\boldsymbol\delta\|=O_P(n^{-1/2})$.

Working directly at the matrix level, the argument below departs from the proof of Theorem~\ref{thm:var_1} in exactly two places. First, the expectation of the leading term no longer matches the target variance up to the sparsity error alone: complete randomization generates additional covariance terms across disjoint cluster-level neighborhoods, which both kernels omit, and which we show form a negative-semidefinite contribution at leading order through the treated-count covariance identity of Lemma~\ref{lem:scr_covariance}. Second, the variance of the leading term involves additional weakly dependent tuples that are linked only through complete randomization, which we control through Lemmas~\ref{lem:mixing_coef} and~\ref{lem:mixing_coef_complete_randomization}.

Substituting $\hat{\mathbf V}_{ij}=\mathbf V_{ij}-\mathbf D_{ij}\circ\boldsymbol\delta$ into the definition of $\hat{\mathbf\Sigma}^{\mathbf K_1}$ and expanding the outer product, the symmetry of $\mathbf K_1$ gives
\[
\hat{\mathbf\Sigma}^{\mathbf K_1}
=
\mathbf T_{1n}-\mathbf T_{2n}-\mathbf T_{2n}^\top+\mathbf T_{3n},
\]
where
\[
\mathbf T_{1n}:=\sum_{(i,j),(i',j')}(\mathbf K_1)_{ij,i'j'}\mathbf V_{ij}\mathbf V_{i'j'}^\top,
\quad
\mathbf T_{2n}:=\sum_{(i,j),(i',j')}(\mathbf K_1)_{ij,i'j'}\mathbf V_{ij}(\mathbf D_{i'j'}\circ\boldsymbol\delta)^\top,
\]
\[
\mathbf T_{3n}:=\sum_{(i,j),(i',j')}(\mathbf K_1)_{ij,i'j'}(\mathbf D_{ij}\circ\boldsymbol\delta)(\mathbf D_{i'j'}\circ\boldsymbol\delta)^\top.
\]
Since $\mathbf K_1$ has $\|\mathbf K_1\|_F^2=\sum_iN_i^2=O(N^2/n)$ nonzero entries,
\[
\|\mathbf T_{2n}\|
\leq
\|\mathbf K_1\|_F^2\,\sup_{(i,j)}\|\mathbf V_{ij}\mathbf D_{i'j'}^\top\|_\infty\,\|\boldsymbol\delta\|
=O\!\left(\frac{N^2}{n}\right)O\!\left(N^{-2}\right)O_P\!\left(n^{-1/2}\right)=o_P(n^{-1}),
\]
and similarly $\|\mathbf T_{3n}\|=O(N^2/n)\,O(N^{-2})\,O_P(n^{-1})=o_P(n^{-1})$. These bounds use only the envelope bounds and $\|\boldsymbol\delta\|=O_P(n^{-1/2})$, exactly as in the proof of Theorem~\ref{thm:var_1}, and do not involve the randomization scheme.

We now analyze the leading term $\mathbf T_{1n}$, beginning with its expectation. Since $\mathbf V_{ij}=\mathbf U_{ij,1}+\mathbf U_{ij,2}$ with $\mathbb E[\mathbf U_{ij,1}]=\mathbf 0$ and $\mathbf U_{ij,2}$ deterministic, the cross terms vanish in expectation and
\[
\mathbb E[\mathbf T_{1n}]
=\sum_{(i,j),(i',j')}(\mathbf K_1)_{ij,i'j'}\mathbb E[\mathbf U_{ij,1}\mathbf U_{i'j',1}^\top]
+\mathbf R_1,
\qquad
\mathbf R_1:=\sum_{(i,j),(i',j')}(\mathbf K_1)_{ij,i'j'}\mathbf U_{ij,2}\mathbf U_{i'j',2}^\top,
\]
where $\mathbf R_1$ is positive semidefinite, being the same matrix as in the proof of Theorem~\ref{thm:var_1}. For the same reason, $\mathrm{Cov}(\mathbf V_{ij},\mathbf V_{i'j'})=\mathbb E[\mathbf U_{ij,1}\mathbf U_{i'j',1}^\top]$ for every pair of units, so the target variance decomposes as
\[
\boldsymbol\Sigma
=\sum_{(i,j),(i',j')}\mathrm{Cov}(\mathbf V_{ij},\mathbf V_{i'j'})
=\sum_{(i,j),(i',j')}(\mathbf K_2)_{ij,i'j'}\mathbb E[\mathbf U_{ij,1}\mathbf U_{i'j',1}^\top]
+\boldsymbol{\mathcal E},
\]
where
\[
\boldsymbol{\mathcal E}
:=\sum_{(i,j)}\sum_{(i',j')\notin\Lambda_{ij}}\mathrm{Cov}\bigl(\mathbf V_{ij},\mathbf V_{i'j'}\bigr)
\]
collects the cross-covariance matrices of the pairs whose cluster-level neighborhoods are disjoint. This is the first point of departure from the proof of Theorem~\ref{thm:var_1}: under Assumption~\ref{ass:independent_rand}(i) these covariances vanish, whereas under complete randomization they are nonzero in general, and they are omitted by both kernels. Combining the two displays,
\[
\mathbb E[\mathbf T_{1n}]-\boldsymbol\Sigma
=-\sum_{(i,j),(i',j')}(\mathbf K_2-\mathbf K_1)_{ij,i'j'}\mathbb E[\mathbf U_{ij,1}\mathbf U_{i'j',1}^\top]
+\mathbf R_1
-\boldsymbol{\mathcal E}.
\]
By Assumption~\ref{ass:sparsity}, $\mathbf K_2-\mathbf K_1$ has at most $o(N^2/n)$ nonzero entries, and $\sup_{(i,j)}\|\mathbf U_{ij,1}\|_\infty=O(N^{-1})$, so the first sum is $o(N^2/n)\cdot O(N^{-2})=o(n^{-1})$ entrywise, as in the proof of Theorem~\ref{thm:var_1}. The matrix $\mathbf R_1$ is positive semidefinite.

We now show that the leading part of $\boldsymbol{\mathcal E}$ is negative semidefinite, so that omitting these cross-covariances can only make the estimator more conservative. Fix a pair with $\mathcal N_{ij}^{(cl)}\cap\mathcal N_{i'j'}^{(cl)}=\varnothing$. By the law of total covariance,
\[
\mathrm{Cov}(\mathbf V_{ij},\mathbf V_{i'j'})
=\mathrm{Cov}\bigl(\mathbb E[\mathbf V_{ij}\mid\mathbf C],\,\mathbb E[\mathbf V_{i'j'}\mid\mathbf C]\bigr)
+\mathbb E\bigl[\mathrm{Cov}(\mathbf V_{ij},\mathbf V_{i'j'}\mid\mathbf C)\bigr].
\]
As in the proof of Theorem~\ref{thm:variance_order}, the second term vanishes: given $\mathbf C$, Assumption~\ref{ass:cluster_assignment}(ii) makes the unit-level treatments independent across clusters, and $\mathcal N_{ij}$ and $\mathcal N_{i'j'}$ lie in disjoint sets of clusters, so $\mathrm{Cov}(\mathbf V_{ij},\mathbf V_{i'j'}\mid\mathbf C)=\mathbf 0$ a.s. By the same conditional independence, $\mathbf h_{ij}:=\mathbb E[\mathbf V_{ij}\mid\mathbf C]=\mathbb E[\mathbf V_{ij}\mid\mathbf C_{\mathcal N_{ij}}]$ is a function of $\mathbf C_{\mathcal N_{ij}^{(cl)}}$ alone, and analogously $\mathbf h_{i'j'}:=\mathbb E[\mathbf V_{i'j'}\mid\mathbf C]=\mathbb E[\mathbf V_{i'j'}\mid\mathbf C_{\mathcal N_{i'j'}}]$ is a function of $\mathbf C_{\mathcal N_{i'j'}^{(cl)}}$ alone.

The identity of Lemma~\ref{lem:scr_covariance} now applies entrywise to the matrix $\mathrm{Cov}(\mathbf V_{ij},\mathbf V_{i'j'})=\mathrm{Cov}(\mathbf h_{ij},\mathbf h_{i'j'})$, whose two arguments are functions of the disjoint blocks $\mathbf C_{\mathcal N_{ij}^{(cl)}}$ and $\mathbf C_{\mathcal N_{i'j'}^{(cl)}}$, each meeting at most $C_{\mathcal N^{(cl)}}=O(1)$ clusters by Assumption~\ref{ass:CLT_regularity}(ii). Applying the lemma to each coordinate pair $(h_{ij,l},h_{i'j',l'})$, which have sup-norm $\sup_{(i,j)}\|\mathbf V_{ij}\|_\infty=O(N^{-1})$ so that the identity is rescaled by bilinearity, and assembling the $d\times d$ entries into a single matrix identity gives
\[
\mathrm{Cov}(\mathbf V_{ij},\mathbf V_{i'j'})
=-\sum_{k\in\mathcal K^{\ast}}\frac{1}{|\mathcal I_k|\,p_k(1-p_k)}\,
\mathbf v_{ij,k}\,\mathbf v_{i'j',k}^\top+\mathbf r_{ij,i'j'},
\qquad
\mathbf v_{ij,k}:=\mathrm{Cov}\bigl(\mathbf V_{ij},T_{1k}^{(ij)}\bigr)\in\mathbb R^d,
\]
where $T_{1k}^{(ij)}=\sum_{i''\in\mathcal N_{ij}^{(cl)}\cap\mathcal I_k}C_{i''}$ is the treated count of unit $(i,j)$'s block within stratum $k$, and the matrix remainder is $\mathbf r_{ij,i'j'}=O(N^{-2})\,o(n^{-1})$ entrywise, uniformly over all disjoint pairs. Here we also used $\mathrm{Cov}(\mathbf h_{ij},T_{1k}^{(ij)})=\mathrm{Cov}(\mathbf V_{ij},T_{1k}^{(ij)})$, which holds because $\mathbf h_{ij}=\mathbb E[\mathbf V_{ij}\mid\mathbf C]$ and $T_{1k}^{(ij)}$ is $\sigma(\mathbf C)$-measurable.

Summing this matrix identity over the disjoint pairs and extending it to all ordered pairs of units, where the all-pairs sum factorizes through $\sum_{(i,j),(i',j')}\mathbf v_{ij,k}\mathbf v_{i'j',k}^\top=\mathbf s_k\mathbf s_k^\top$ with $\mathbf s_k=\sum_{(i,j)}\mathbf v_{ij,k}$, gives
\[
\boldsymbol{\mathcal E}
=\sum_{(i,j)}\sum_{(i',j')\notin\Lambda_{ij}}\mathrm{Cov}(\mathbf V_{ij},\mathbf V_{i'j'})
=-\sum_{k\in\mathcal K^{\ast}}\frac{1}{|\mathcal I_k|\,p_k(1-p_k)}\,\mathbf s_k\mathbf s_k^\top+o(n^{-1}).
\]
The extension to all ordered pairs costs only $o(n^{-1})$: each summand $\mathbf v_{ij,k}\mathbf v_{i'j',k}^\top/[|\mathcal I_k|p_k(1-p_k)]$ has entries of order $O(N^{-1})\cdot O(N^{-1})\cdot O(n^{-1})=O(n^{-1}N^{-2})$ because $\|\mathbf v_{ij,k}\|=O(N^{-1})$, $|\mathcal I_k|=\Omega(n)$, and $K=O(1)$; by the neighborhood-counting argument in the proof of Theorem~\ref{thm:var_1} (Assumptions~\ref{ass:CLT_regularity}(ii) and (vi)--(vii)) at most $O(N^2/n)$ ordered pairs have overlapping cluster-level neighborhoods, so the added pairs contribute $O(N^2/n)\cdot O(n^{-1}N^{-2})=o(n^{-1})$ entrywise, and the accumulated remainders $\sum\mathbf r_{ij,i'j'}$ are likewise $o(N^{-2}n^{-1})O(N^{2})=o(n^{-1})$ entrywise over the at most $O(N^2)$ pairs. The leading term in $\boldsymbol{\mathcal E}$ is therefore exactly $-\mathbf M$, where $\mathbf M$ is the dense-dependence term of the theorem statement. Since $\mathbf M$ is a nonnegative combination of the rank-one matrices $\mathbf s_k\mathbf s_k^\top\succeq\mathbf 0$, it is positive semidefinite, so this leading term $-\mathbf M$ is negative semidefinite and omitting these cross-covariances can only enlarge the variance estimator. Hence
\[
\mathbb E[\mathbf T_{1n}]
=\boldsymbol\Sigma+\mathbf R_1-\boldsymbol{\mathcal E}+o(n^{-1})
=\boldsymbol\Sigma+\mathbf R_1+\mathbf M+o(n^{-1})
=\boldsymbol\Sigma+\mathbf R_3+o(n^{-1}),
\qquad
\mathbf R_3\succeq\mathbf R_1\succeq\mathbf 0 .
\]
It remains to check that both $\mathbf R_1$ and $\mathbf M$ are positive semidefinite and of order $O(n^{-1})$, as asserted in part~(i). Positive-semidefiniteness was already established above: $\mathbf R_1\succeq\mathbf 0$ as the bias matrix of Theorem~\ref{thm:var_1}, and $\mathbf M=\sum_{k\in\mathcal K^{\ast}}\mathbf s_k\mathbf s_k^\top/[|\mathcal I_k|p_k(1-p_k)]\succeq\mathbf 0$ as a nonnegative combination of rank-one matrices. For the orders, $\mathbf R_1$ satisfies
\[
\|\mathbf R_1\|\leq\|\mathbf K_1\|_F^2\sup_{(i,j)}\|\mathbf U_{ij,2}\mathbf U_{i'j',2}^\top\|_\infty=O(N^2/n)\,O(N^{-2})=O(n^{-1}),
\]
exactly as in the proof of Theorem~\ref{thm:var_1}, while $\mathbf M$ inherits its order from $\boldsymbol{\mathcal E}$: the mixing bound established in the proof of Theorem~\ref{thm:variance_order} (the case of Assumption~\ref{ass:complete_rand}(i)) gives
\[
\|\boldsymbol{\mathcal E}\|
\leq
4\cdot O\!\left(\frac1n\right)\Bigl(\sum_{(i,j)}\|\mathbf V_{ij}\|_\infty\Bigr)^2
=O\!\left(\frac1n\right),
\]
so $\|\mathbf M\|\leq\|\boldsymbol{\mathcal E}\|+o(n^{-1})=O(n^{-1})$.

We can also show that
\[
\mathbf T_{1n}-\mathbb E[\mathbf T_{1n}]=o_P(n^{-1})
\]
via a variance bound. For any fixed coordinate pair $(l,l')$, the $(l,l')$-entry of $\mathbf T_{1n}$ is $(\mathbf T_{1n})_{l,l'}=\sum_{(i,j),(i',j')}(\mathbf K_1)_{ij,i'j'}V_{ij,l}V_{i'j',l'}$, and we claim that $\mathrm{Var}((\mathbf T_{1n})_{l,l'})=o(n^{-2})$. Expanding the variance over the within-cluster pairs selected by $\mathbf K_1$,
\[
\mathrm{Var}\bigl((\mathbf T_{1n})_{l,l'}\bigr)
=\sum_{\substack{1\leq i,i'\leq n\\ 1\leq j_1,j_2\leq N_i,\ 1\leq j_1',j_2'\leq N_{i'}}}
\mathrm{Cov}\bigl(V_{ij_1,l}V_{ij_2,l'},\,V_{i'j_1',l}V_{i'j_2',l'}\bigr).
\]
We split the sextuples $(i,j_1,j_2,i',j_1',j_2')$ according to whether the joint cluster-level neighborhoods overlap, i.e., whether
\[
\bigl(\mathcal N_{ij_1}^{(cl)}\cup\mathcal N_{ij_2}^{(cl)}\bigr)\cap\bigl(\mathcal N_{i'j_1'}^{(cl)}\cup\mathcal N_{i'j_2'}^{(cl)}\bigr)\neq\varnothing.
\]

For the overlapping sextuples, we use the crude bound
\[
\bigl|\mathrm{Cov}\bigl(V_{ij_1,l}V_{ij_2,l'},\,V_{i'j_1',l}V_{i'j_2',l'}\bigr)\bigr|
\leq2\sup_{(i,j,l)}|V_{ij,l}|^4=O(N^{-4}),
\]
which requires no independence between the two products. The number of overlapping sextuples is $O(N^4/n^3)$ by exactly the counting argument in the proof of Theorem~\ref{thm:var_1}, which uses only Assumptions~\ref{ass:CLT_regularity}(ii) and (vi)--(vii) and not the randomization scheme. The total contribution of the overlapping sextuples is therefore $O(N^4/n^3)\cdot O(N^{-4})=O(n^{-3})$.

For the non-overlapping sextuples, the second point of departure from the proof of Theorem~\ref{thm:var_1} arises: under Assumption~\ref{ass:independent_rand}(i) the two products would be independent and these covariances would vanish, whereas under complete randomization the two products remain weakly dependent through the cluster-level assignment, even though their cluster-level neighborhoods are disjoint. Write
\[
\Pi:=V_{ij_1,l}V_{ij_2,l'},
\qquad
\Pi':=V_{i'j_1',l}V_{i'j_2',l'},
\qquad
G:=\mathcal N_{ij_1}^{(cl)}\cup\mathcal N_{ij_2}^{(cl)},
\qquad
G':=\mathcal N_{i'j_1'}^{(cl)}\cup\mathcal N_{i'j_2'}^{(cl)},
\]
so that $G\cap G'=\varnothing$ and $|G|\vee|G'|\leq2C_{\mathcal N^{(cl)}}$ by Assumption~\ref{ass:CLT_regularity}(ii). By the law of total covariance and the conditional independence of unit-level treatments across clusters given $\mathbf C$ (Assumption~\ref{ass:cluster_assignment}(ii)),
\[
\mathrm{Cov}(\Pi,\Pi')
=\mathrm{Cov}\bigl(\mathbb E[\Pi\mid\mathbf C],\,\mathbb E[\Pi'\mid\mathbf C]\bigr),
\]
where $\mathbb E[\Pi\mid\mathbf C]$ is a function of $\mathbf C_G$ with $\|\mathbb E[\Pi\mid\mathbf C]\|_\infty\leq\sup_{(i,j,l)}|V_{ij,l}|^2=O(N^{-2})$, and analogously for $\mathbb E[\Pi'\mid\mathbf C]$. Under Assumption~\ref{ass:complete_rand}(i), the cluster-level assignments satisfy the conditions of Lemma~\ref{lem:mixing_coef_complete_randomization} with population size $n$, $M\asymp n$, and $C_S=2C_{\mathcal N^{(cl)}}$, so
\[
\psi\bigl(\sigma(\mathbf C_G),\,\sigma(\mathbf C_{G'})\bigr)=O\!\left(\frac1n\right)
\]
uniformly over the non-overlapping sextuples. Lemma~\ref{lem:mixing_coef} then yields
\[
\bigl|\mathrm{Cov}(\Pi,\Pi')\bigr|
\leq4\,\bigl\|\mathbb E[\Pi\mid\mathbf C]\bigr\|_\infty\bigl\|\mathbb E[\Pi'\mid\mathbf C]\bigr\|_\infty\cdot O\!\left(\frac1n\right)
=O\!\left(\frac{1}{N^4n}\right).
\]
The number of non-overlapping sextuples is at most the total number of sextuples,
\(
O\left(\bigl(\sum_iN_i^2\bigr)^2\right)=O(N^4/n^2),
\)
so their total contribution is $O(N^4/n^2)\cdot O(N^{-4}n^{-1})=O(n^{-3})$. Combining the two parts,
\[
\mathrm{Var}\bigl((\mathbf T_{1n})_{l,l'}\bigr)=O(n^{-3})=o(n^{-2}),
\]
and Chebyshev's inequality gives $(\mathbf T_{1n})_{l,l'}-\mathbb E[(\mathbf T_{1n})_{l,l'}]=o_P(n^{-1})$ for every coordinate pair $(l,l')$, hence $\mathbf T_{1n}-\mathbb E[\mathbf T_{1n}]=o_P(n^{-1})$. 

Combining the bounds on $\mathbf T_{2n}$ and $\mathbf T_{3n}$, the bias of $\mathbf T_{1n}$, and the concentration of $\mathbf T_{1n}$ around its mean, we conclude that
\[
\hat{\mathbf\Sigma}^{\mathbf K_1}
=\boldsymbol\Sigma+\mathbf R_3+o_P(n^{-1}).
\]

Turning to $\hat{\mathbf\Sigma}^{\mathbf K_2}$, arguments analogous to those used in the proof of Theorem~\ref{thm:var_1} yield
\[
\hat{\mathbf\Sigma}^{\mathbf K_2}
=
\hat{\mathbf\Sigma}^{\mathbf K_1}
+o_P(n^{-1})
=
\boldsymbol\Sigma+\mathbf R_3+o_P(n^{-1}).
\]
Part~(i) follows by combining the results for $\hat{\mathbf\Sigma}^{\mathbf K_1}$ and $\hat{\mathbf\Sigma}^{\mathbf K_2}$ with the facts that $\mathbf R_1$ and $\mathbf M$ are positive semidefinite of order $O(n^{-1})$.

\medskip
\noindent\textbf{Part (ii).}
By part~(i), the HAC estimator satisfies $\hat{\mathbf\Sigma}_1=\boldsymbol\Sigma+\mathbf R_1+\mathbf M+o_P(n^{-1})$, so the bias-corrected estimator obeys
\[
\hat{\boldsymbol\Sigma}_3
=\hat{\mathbf\Sigma}_1-\hat{\mathbf M}
=\boldsymbol\Sigma+\mathbf R_1+(\mathbf M-\hat{\mathbf M})+o_P(n^{-1}).
\]
The only new ingredient relative to part (i) is therefore the behavior of the
plug-in $\hat{\mathbf M}$, and it suffices to prove that $\hat{\mathbf M}$ is
consistent for $\mathbf M$ at the relevant scale, i.e.\ $\hat{\mathbf M}=\mathbf M+o_P(n^{-1})$.
Recall the plug-in estimates
\[
\hat{\mathbf M}=\sum_{k\in\mathcal K^{\ast}}\frac{\hat{\mathbf s}_k\hat{\mathbf s}_k^\top}{|\mathcal I_k|\,p_k(1-p_k)},
\qquad
\hat{\mathbf s}_k=\sum_{(i,j)}\bigl(T_{ij,k}-m_{ij,k}p_k\bigr)\hat{\mathbf V}_{ij},
\]
together with their population counterparts
\[
\mathbf M=\sum_{k\in\mathcal K^{\ast}}\frac{\mathbf s_k\mathbf s_k^\top}{|\mathcal I_k|\,p_k(1-p_k)},
\qquad
\mathbf s_k=\sum_{(i,j)}\mathbb E\bigl[(T_{ij,k}-m_{ij,k}p_k)\mathbf V_{ij}\bigr],
\]
where we used $\mathbf s_k=\sum_{(i,j)}\mathrm{Cov}(\mathbf V_{ij},T_{ij,k})$ and the known centering $\mathbb E\,T_{ij,k}=m_{ij,k}p_k$. 
The envelope bounds together with $|T_{ij,k}-m_{ij,k}p_k|=O(1)$ give $\|\mathbf s_k\|=O(1)$. To prove $\hat{\mathbf M}=\mathbf M+o_P(n^{-1})$, it therefore suffices to establish
\[
\hat{\mathbf s}_k=\mathbf s_k+o_P(1)
\qquad\text{uniformly over }k\in\mathcal K^{\ast}.
\]
Combined with $\|\mathbf s_k\|=O(1)$ this gives $\hat{\mathbf s}_k\hat{\mathbf s}_k^\top=\mathbf s_k\mathbf s_k^\top+o_P(1)$, and since $p_k\in[p,1-p]$ and $|\mathcal I_k|=\Omega(n)$, dividing by $|\mathcal I_k|p_k(1-p_k)=\Omega(n)$ and summing over the $K=O(1)$ strata yields $\hat{\mathbf M}=\mathbf M+o_P(n^{-1})$.

Decompose $\hat{\mathbf s}_k=\tilde{\mathbf s}_k+(\hat{\mathbf s}_k-\tilde{\mathbf s}_k)$ with $\tilde{\mathbf s}_k:=\sum_{(i,j)}(T_{ij,k}-m_{ij,k}p_k)\mathbf V_{ij}$, the infeasible version that uses $\boldsymbol\tau$ in place of $\hat{\boldsymbol\tau}$. The plug-in difference is
\[
\hat{\mathbf s}_k-\tilde{\mathbf s}_k
=-\sum_{(i,j)}\bigl(T_{ij,k}-m_{ij,k}p_k\bigr)\,(\mathbf D_{ij}\circ\boldsymbol\delta),
\]
and since $|T_{ij,k}-m_{ij,k}p_k|=O(1)$ over the $O(1)$ neighborhood clusters, $\sup_{(i,j)}\|\mathbf D_{ij}\|_\infty=O(N^{-1})$, and $\|\boldsymbol\delta\|=O_P(n^{-1/2})$, the same accounting as for $\mathbf T_{2n}$ in part (i) gives $\hat{\mathbf s}_k-\tilde{\mathbf s}_k=O_P(n^{-1/2})=o_P(1)$. For the infeasible version, $\mathbb E[\tilde{\mathbf s}_k]=\mathbf s_k$ by construction, so it remains to bound its variance. By an argument analogous to the variance calculation used in the proof of Theorem~\ref{thm:variance_order} under Assumption~\ref{ass:complete_rand}(i)---splitting the pairs of units according to whether their cluster-level neighborhoods overlap, bounding the $O(N^2/n)$ overlapping pairs by the crude envelope bound and the non-overlapping pairs through the complete-randomization mixing coefficient of Lemma~\ref{lem:mixing_coef_complete_randomization} via Lemma~\ref{lem:mixing_coef}---we obtain $\mathrm{Var}(\tilde{\mathbf s}_k)=O(n^{-1})=o(1)$. Chebyshev's inequality then yields $\tilde{\mathbf s}_k=\mathbf s_k+o_P(1)$, and combining the two pieces gives $\hat{\mathbf s}_k=\mathbf s_k+o_P(1)$ uniformly over the $O(1)$ strata, as required.

Consequently $\hat{\mathbf M}=\mathbf M+o_P(n^{-1})$, and therefore
\[
\hat{\boldsymbol\Sigma}_3
=\boldsymbol\Sigma+\mathbf R_1+o_P(n^{-1}),
\]
with $\mathbf R_1\succeq\mathbf 0$ the same intrinsic bias matrix as in Theorem~\ref{thm:var_1}. By the same marginal substitution noted at the outset, $\hat{\boldsymbol\Sigma}_{3,\mathrm{marg}}=\boldsymbol\Sigma_{\mathrm{marg}}+\mathbf R_{1,\mathrm{marg}}+o_P(n^{-1})$.
\end{proof}

\subsection{Proof of auxiliary lemmas}\label{subsec:aux_lemmas}
\begin{proof}[Proof of Lemma \ref{lem:complete_neighborhood_size}]
By Assumption \ref{ass:CLT_regularity}(ii), we have
\[
|\mathcal{N}_{ij}^{(cl)}|\leq C_{\mathcal{N}^{(cl)}}
\]
for any unit $(i,j)$ and
    \[
    |\cup_{j=1}^{N_i}\mathcal{N}_{ij}|\leq C_{\mathcal{N}^{(cl)}}N_i\leq C_{\mathcal{N}^{(cl)}}C_N\frac{N}{n}
    \]
for any cluster $i$, where the last inequality uses the cluster-size balance in Assumption \ref{ass:CLT_regularity}(v).
    Then, for any unit $(i,j)$, since
    \[
    \Lambda_{ij}= \{(i'',j''):\exists (i',j')\in\mathcal{N}_{ij}\ s.t.\ i'\in\mathcal{N}_{ij}^{(cl)},\ (i'',j'')\in\mathcal{N}_{i'j'}\},
    \]
    we have
    \[
    |\Lambda_{ij}|\leq \sum_{i': i'\in\mathcal{N}_{ij}^{(cl)}}|\cup_{j'=1}^{N_{i'}}\mathcal{N}_{i'j'}|\leq |\mathcal{N}_{ij}^{(cl)}|\max_{i'}\{|\cup_{j'=1}^{N_{i'}}\mathcal{N}_{i'j'}|\}\leq C_{\mathcal{N}^{(cl)}}^2C_{N}\frac{N}{n}.
    \]
Therefore,
\[
\max_{(i,j)\in\mathcal{P}}|\Lambda_{ij}|=O\left(\frac{N}{n}\right).
\]
\end{proof}

\begin{proof}[Proof of Lemma \ref{lem:mixing_coef_complete_randomization}]
Consider $\mathbf{w}_t\in\{0,1\}^{S_t}$ with $P(\mathbf{W}_{S_t}=\mathbf{w}_t)>0$,
$t=1,2$. For $t=1,2$ and each stratum $k$, let $\mathbf{w}_{t,k}:=\mathbf{w}_t|_{S_t\cap\mathcal{I}_k}
\in\{0,1\}^{S_t\cap\mathcal{I}_k}$ denote the restriction of $\mathbf{w}_t$ to the index
set $S_t\cap\mathcal{I}_k$.
Let $R(\mathbf{w}_1,\mathbf{w}_2)$ denote the ratio between the conditional and marginal probability of $\mathbf{W}_{S_2}$, defined as
\[
R(\mathbf{w}_1,\mathbf{w}_2):=\frac{P(\mathbf{W}_{S_2}=\mathbf{w}_2\mid\mathbf{W}_{S_1}=\mathbf{w}_1)}
          {P(\mathbf{W}_{S_2}=\mathbf{w}_2)}.
\]
Since treatments across strata are
independent, $R$ admits the stratum-by-stratum factorization
\begin{equation}\label{eq:ratio_factored}
\begin{aligned}
  R&=\prod_{k=1}^{K}R_k(\mathbf{w}_{1,k},\mathbf{w}_{2,k}), \qquad R_k&:=\frac{P(\mathbf{W}_{S_2\cap\mathcal{I}_k}=\mathbf{w}_{2,k}
               \mid\mathbf{W}_{S_1\cap\mathcal{I}_k}=\mathbf{w}_{1,k})}
            {P(\mathbf{W}_{S_2\cap\mathcal{I}_k}=\mathbf{w}_{2,k})}.
\end{aligned}
\end{equation}
Three classes of strata contribute $R_k=1$ exactly and may be ignored.
First, if $S_1\cap\mathcal{I}_k=\varnothing$, the conditioning on
$\mathbf{W}_{S_1\cap\mathcal{I}_k}$ is vacuous and $R_k=1$.
Second, if $S_2\cap\mathcal{I}_k=\varnothing$, both the numerator and denominator of
$R_k$ equal $1$ and again $R_k=1$.
Third, if $m_k/|\mathcal{I}_k|\in\{0,1\}$, the within-stratum treatment is
deterministic ($\mathbf{W}_{\mathcal{I}_k}$ is a.s.\ constant), so $R_k=1$.
Since $|S_1|\leq C_S$, at most $C_S$ strata have $S_1\cap\mathcal{I}_k\neq\varnothing$,
and we restrict attention to those strata for which additionally
$m_k/|\mathcal{I}_k|\in[p,1-p]$, i.e., the treatment proportion is bounded away from
$0$ and $1$.
It therefore suffices to show $|R_k-1|=O(M^{-1})$ for each such stratum.

Fix one such stratum $k$.  Write $n_k:=|\mathcal{I}_k|$, and set
$s_t:=|S_t\cap\mathcal{I}_k|$ and $a_t:=\sum\mathbf{w}_{t,k}$ for $t=1,2$.
Under simple complete randomization over $n_k$ units with $m_k$ treated, both
conditional and marginal probabilities depend on $\mathbf{w}_{t,k}$ only through $a_t$:
\[
  P(\mathbf{W}_{S_2\cap\mathcal{I}_k}=\mathbf{w}_{2,k}
    \mid\mathbf{W}_{S_1\cap\mathcal{I}_k}=\mathbf{w}_{1,k})
  =\frac{\dbinom{n_k-s_1-s_2}{m_k-a_1-a_2}}{\dbinom{n_k-s_1}{m_k-a_1}},
  \qquad
  P(\mathbf{W}_{S_2\cap\mathcal{I}_k}=\mathbf{w}_{2,k})
  =\frac{\dbinom{n_k-s_2}{m_k-a_2}}{\dbinom{n_k}{m_k}}.
\]
To compute $R_k$, write $n^{\underline{j}}:=\prod_{l=0}^{j-1}(n-l)$ for the falling
factorial (with $n^{\underline{0}}:=1$) and use $\binom{n}{j}=n^{\underline{j}}/j!$.
Direct cancellation of the factorials gives
\[
\frac{\dbinom{n_k-s_1-s_2}{m_k-a_1-a_2}}{\dbinom{n_k-s_1}{m_k-a_1}}
  =
\frac{(m_k-a_1)^{\underline{a_2}}
       \cdot (n_k-s_1-s_2)^{\underline{m_k-a_1-a_2}}}
      {(n_k-s_1)^{\underline{m_k-a_1}}}
  =
\frac{(m_k-a_1)^{\underline{a_2}}
       \cdot (n_k-s_1-m_k+a_1)^{\underline{s_2-a_2}}}
      {(n_k-s_1)^{\underline{s_2}}}
\]
and
\[
\frac{\dbinom{n_k}{m_k}}{\dbinom{n_k-s_2}{m_k-a_2}}
  =
\frac{n_k^{\underline{m_k}}}
     {m_k^{\underline{a_2}}
      \cdot (n_k-s_2)^{\underline{m_k-a_2}}}
  =
\frac{n_k^{\underline{s_2}}}
     {m_k^{\underline{a_2}}
      \cdot (n_k-m_k)^{\underline{s_2-a_2}}}.
\]
Multiplying these two expressions and grouping the terms of each falling-factorial
type yields the three-factor decomposition
\[
  R_k
  =\underbrace{\frac{(m_k-a_1)^{\underline{a_2}}}{m_k^{\underline{a_2}}}}_{T_1}
   \cdot
   \underbrace{\frac{(n_k-s_1-m_k+a_1)^{\underline{s_2-a_2}}}{(n_k-m_k)^{\underline{s_2-a_2}}}}_{T_2}
   \cdot
   \underbrace{\frac{n_k^{\underline{s_2}}}{(n_k-s_1)^{\underline{s_2}}}}_{T_3},
\]
which in product form reads
\[
  T_1=\prod_{l=0}^{a_2-1}\frac{m_k-a_1-l}{m_k-l},
  \quad
  T_2=\prod_{l=0}^{s_2-a_2-1}\frac{n_k-s_1-m_k+a_1-l}{n_k-m_k-l},
  \quad
  T_3=\prod_{l=0}^{s_2-1}\frac{n_k-l}{n_k-s_1-l}.
\]
For $T_1$, each factor equals $1-a_1/(m_k-l)$.  Since $m_k\geq p\,n_k=\Omega(M)$
and $a_1\leq s_1\leq C_S$, we have $|a_1/(m_k-l)|\leq C_S/(p\,n_k-C_S)=O(M^{-1})$,
so $|T_1-1|=O(M^{-1})$ (the product has at most $C_S$ terms).
For $T_2$, each factor equals $1-(s_1-a_1)/(n_k-m_k-l)$; since
$n_k-m_k\geq(1-p)\,n_k=\Omega(M)$ and $|s_1-a_1|\leq C_S$, the same reasoning gives
$|T_2-1|=O(M^{-1})$.
For $T_3$, each factor equals $1/\bigl(1-s_1/(n_k-l)\bigr)$; since
$n_k-l\geq n_k-C_S=\Omega(M)$, we get $|T_3-1|=O(M^{-1})$.
Multiplying them together yields $R_k=\bigl(1+O(M^{-1})\bigr)^{O(1)}$, and there exists a constant
$C<\infty$ (depending only on $p$ and $C_S$) such that
\begin{equation}\label{eq:Rk_bound}
  \bigl|R_k(\mathbf{w}_{1,k},\mathbf{w}_{2,k})-1\bigr|\leq \frac{C}{M}
  \quad\text{uniformly over all }\mathbf{w}_{1,k},\mathbf{w}_{2,k}.
\end{equation}

Returning to the global ratio, since at most $C_S$ factors in~\eqref{eq:ratio_factored}
differ from $1$, applying~\eqref{eq:Rk_bound} and the bound $|R_k|\leq 1+C/M$ gives
\[
  \bigl|R(\mathbf{w}_1,\mathbf{w}_2)-1\bigr|
  \leq \sum_{k:\,S_1\cap\mathcal{I}_k\neq\varnothing}\bigl|R_k-1\bigr|
       \cdot\prod_{k'\neq k}|R_{k'}|
  =O(M^{-1})
\]
uniformly over $(\mathbf{w}_1,\mathbf{w}_2)$.  Hence, for every $\mathbf{w}_1$ in the
support of $\mathbf{W}_{S_1}$,
\[
  \bigl|P(\mathbf{W}_{S_2}=\mathbf{w}_2\mid\mathbf{W}_{S_1}=\mathbf{w}_1)
        -P(\mathbf{W}_{S_2}=\mathbf{w}_2)\bigr|
  \leq\frac{C'}{M}\,P(\mathbf{W}_{S_2}=\mathbf{w}_2)
\]
for some constant $C'$, where $C'$ is determined by $C_S$, by $p$, and by the
implied constant in $\min_k|\mathcal I_k|=\Omega(M)$ alone, since these are the
only quantities entering the bounds on $T_1$, $T_2$ and $T_3$ and the number of
nontrivial factors in \eqref{eq:ratio_factored}. In particular $C'$ is uniform
over all admissible pairs $(S_1,S_2)$.

We now pass from atoms to events. With a slight abuse of notation, for
$B\in\mathcal{F}_2$ write $\mathbf{w}_2:B$ for those realizations
$\mathbf{w}_2\in\{0,1\}^{S_2}$ with
$\{\mathbf{W}_{S_2}=\mathbf{w}_2\}\subseteq B$ and
$P(\mathbf{W}_{S_2}=\mathbf{w}_2)>0$, and similarly $\mathbf{w}_1:A$ for
$A\in\mathcal{F}_1$. Summing the atom-level estimate over $\mathbf{w}_2:B$ gives,
for every $\mathbf{w}_1$ in the support of $\mathbf{W}_{S_1}$,
\[
  \bigl|P(B\mid\mathbf{W}_{S_1}=\mathbf{w}_1)-P(B)\bigr|
  \leq
  \sum_{\mathbf{w}_2:\,B}\frac{C'}{M}P(\mathbf{W}_{S_2}=\mathbf{w}_2)
  =\frac{C'}{M}P(B).
\]
For a general $A\in\mathcal{F}_1$ with $P(A)>0$, conditioning on which atom of
$\mathcal{F}_1$ occurs and averaging,
\[
  \bigl|P(B\mid A)-P(B)\bigr|
  =\Bigl|\sum_{\mathbf{w}_1:\,A}P(\mathbf{W}_{S_1}=\mathbf{w}_1\mid A)
   \bigl[P(B\mid\mathbf{W}_{S_1}=\mathbf{w}_1)-P(B)\bigr]\Bigr|
  \leq\frac{C'}{M}P(B) .
\]
Dividing by $P(B)$ for any pair with $P(A)P(B)>0$ and taking the supremum over
$A$ and $B$ yields $\psi(\mathcal{F}_1,\mathcal{F}_2)\leq C'/M$, so the lemma
follows with $C:=C'$.
\end{proof}

\begin{proof}[Proof of Lemma~\ref{lem:scr_covariance}]
Because each $C_i$ is binary, every function of a finite block of coordinates has a unique multilinear expansion, so we may write $f(\mathbf C_{S_1})=\sum_{A\subseteq S_1}\Delta_A f\prod_{i\in A}C_i$ and $g(\mathbf C_{S_2})=\sum_{B\subseteq S_2}\Delta_B g\prod_{j\in B}C_j$ for real coefficients $\Delta_A f$ and $\Delta_B g$; the empty-set terms are constants and affect no covariance, so we discard them and retain only $A\neq\varnothing$ and $B\neq\varnothing$. By the bilinearity of the covariance,
\[
\mathrm{Cov}(f,g)=\sum_{A\neq\varnothing}\sum_{B\neq\varnothing}\Delta_A f\,\Delta_B g\,\mathrm{Cov}\Bigl(\prod_{i\in A}C_i,\prod_{j\in B}C_j\Bigr),
\]
so the whole computation reduces to evaluating the monomial covariance $\mathrm{Cov}\bigl(\prod_{i\in A}C_i,\prod_{j\in B}C_j\bigr)$ for a single pair of nonempty sets $A\subseteq S_1$ and $B\subseteq S_2$. The next several paragraphs therefore fix one such pair $(A,B)$ and analyze this monomial covariance; we reassemble the full sum over $(A,B)$ only afterward. Throughout we write $m^{\underline r}:=\prod_{\ell=0}^{r-1}(m-\ell)$ for the falling factorial, with $m^{\underline 0}:=1$, so that $\binom{m}{r}=m^{\underline r}/r!$. Having fixed $(A,B)$, for each stratum $k$ we abbreviate $a_k:=|A\cap\mathcal I_k|$ and $b_k:=|B\cap\mathcal I_k|$, so that $|A|=\sum_k a_k$ and $|B|=\sum_k b_k$; since $|S_1|\vee|S_2|\leq L$, all of these counts are at most $L=O(1)$, and there are only finitely many such pairs $(A,B)$.

Within a single stratum $\mathcal I_k$, the subvector $\mathbf C_{\mathcal I_k}$ is uniform over the assignments with $\sum_{i\in\mathcal I_k}C_i=I_k$, so for any $A\cap\mathcal I_k$ the probability that all of its coordinates are treated is the ratio of the number of ways to place the remaining $I_k-a_k$ treated units among the remaining $|\mathcal I_k|-a_k$ coordinates to the total number of assignments, namely $\mathbb E[\prod_{i\in A\cap\mathcal I_k}C_i]=\binom{|\mathcal I_k|-a_k}{I_k-a_k}/\binom{|\mathcal I_k|}{I_k}=I_k^{\underline{a_k}}/|\mathcal I_k|^{\underline{a_k}}$. Because the strata are assigned independently under Assumption~\ref{ass:complete_rand}(i), the monomial expectation factorizes across strata, and the same computation applied to $A$, to $B$, and to $A\cup B$ (the last using $A\cap B=\varnothing$, which holds because $S_1\cap S_2=\varnothing$) gives
\[
\mathbb E\Bigl[\prod_{i\in A}C_i\Bigr]=\prod_{k=1}^K\frac{I_k^{\underline{a_k}}}{|\mathcal I_k|^{\underline{a_k}}},
\quad
\mathbb E\Bigl[\prod_{j\in B}C_j\Bigr]=\prod_{k=1}^K\frac{I_k^{\underline{b_k}}}{|\mathcal I_k|^{\underline{b_k}}},
\quad
\mathbb E\Bigl[\prod_{i\in A}C_i\prod_{j\in B}C_j\Bigr]=\prod_{k=1}^K\frac{I_k^{\underline{a_k+b_k}}}{|\mathcal I_k|^{\underline{a_k+b_k}}} .
\]
Consequently the monomial covariance is exactly
\[
\mathrm{Cov}\Bigl(\prod_{i\in A}C_i,\prod_{j\in B}C_j\Bigr)
=\prod_{k=1}^K\frac{I_k^{\underline{a_k+b_k}}}{|\mathcal I_k|^{\underline{a_k+b_k}}}
-\Bigl(\prod_{k=1}^K\frac{I_k^{\underline{a_k}}}{|\mathcal I_k|^{\underline{a_k}}}\Bigr)
\Bigl(\prod_{k=1}^K\frac{I_k^{\underline{b_k}}}{|\mathcal I_k|^{\underline{b_k}}}\Bigr).
\]

We now expand the exact weight to leading order. For any fixed $r=O(1)$ and any stratum with $p_k\in[p,1-p]$, writing $p_k=I_k/|\mathcal I_k|$ and using $|\mathcal I_k|=\Omega(n)$, each factor satisfies $(I_k-\ell)/(|\mathcal I_k|-\ell)=p_k-\ell(1-p_k)/|\mathcal I_k|+O(|\mathcal I_k|^{-2})=p_k-\ell(1-p_k)/|\mathcal I_k|+o(n^{-1})$, and multiplying the $r$ factors out gives
\[
\frac{I_k^{\underline r}}{|\mathcal I_k|^{\underline r}}
= p_k^{\,r}\Bigl[1-\frac{r(r-1)}{2|\mathcal I_k|}\,\frac{1-p_k}{p_k}\Bigr]+o(n^{-1}) .
\]
Applying this expansion to the three falling-factorial ratios appearing in the monomial covariance and using the elementary identity $a_k(a_k-1)+b_k(b_k-1)-(a_k+b_k)(a_k+b_k-1)=-2a_kb_k$, the difference between the product of the two marginal monomial expectations and the joint expectation, contributed by stratum $k$ with all other strata carrying their leading factors, equals
\[
\Bigl(\prod_{k=1}^K\frac{I_k^{\underline{a_k}}}{|\mathcal I_k|^{\underline{a_k}}}\frac{I_k^{\underline{b_k}}}{|\mathcal I_k|^{\underline{b_k}}}\Bigr)
-\prod_{k=1}^K\frac{I_k^{\underline{a_k+b_k}}}{|\mathcal I_k|^{\underline{a_k+b_k}}}
=\Bigl[\prod_{\ell=1}^K p_\ell^{a_\ell+b_\ell}\Bigr]\sum_{k=1}^K\frac{a_kb_k}{|\mathcal I_k|}\frac{1-p_k}{p_k}+o(n^{-1}),
\]
where the $o(n^{-1})$ term is uniform over all subsets $A,B$ and distribution compatible with Assumption \ref{ass:complete_rand}(i). 
This uniformity follows because there are only finitely many choices of $A$ and $B$, 
because $K=O(1)$ by $\min_k |\mathcal I_k|=\Omega(n)$, and because every nondegenerate 
$p_k$ is bounded away from both zero and one, namely $p_k \in [p,1-p]$.
Degenerate strata with $p_k\in\{0,1\}$ contribute nothing, since there $\mathbf C_{\mathcal I_k}$ is almost surely constant, so any monomial involving a coordinate in $\mathcal I_k$ is almost surely constant and drops out of every covariance; this is why the surviving sum runs over $k\in\mathcal K^{\ast}$. The displayed difference is precisely $-\mathrm{Cov}(\prod_{i\in A}C_i,\prod_{j\in B}C_j)$, so substituting it into the covariance decomposition, exchanging the finite order of summation, and using the factorization $a_kb_k\,p_k^{a_k+b_k-1}\prod_{\ell\neq k}p_\ell^{a_\ell+b_\ell}=p_k\,[a_kp_k^{a_k-1}\prod_{\ell\neq k}p_\ell^{a_\ell}]\,[b_kp_k^{b_k-1}\prod_{\ell\neq k}p_\ell^{b_\ell}]$, we obtain
\[
\mathrm{Cov}(f,g)
=-\sum_{k\in\mathcal K^{\ast}}\frac{p_k(1-p_k)}{|\mathcal I_k|}
\Bigl[\sum_{A\neq\varnothing}a_k\,p_k^{a_k-1}\!\!\prod_{\ell\neq k}p_\ell^{a_\ell}\,\Delta_A f\Bigr]
\Bigl[\sum_{B\neq\varnothing}b_k\,p_k^{b_k-1}\!\!\prod_{\ell\neq k}p_\ell^{b_\ell}\,\Delta_B g\Bigr]
+o(n^{-1}) .
\]

It remains to identify the two bracketed sums as treated-count covariances. Introduce the auxiliary product law $\mathbb P_{\mathbf p}$ under which the coordinates $C_i$ are mutually independent with $C_i\sim\mathrm{Bernoulli}(p_k)$ for $i\in\mathcal I_k$, and let $\mathbb E_{\mathbf p}$ and $\mathrm{Cov}_{\mathbf p}$ denote expectation and covariance under it. Under this law $\mathbb E_{\mathbf p}[\prod_{i\in A}C_i]=\prod_k p_k^{a_k}$, so $\mathbb E_{\mathbf p}[f]=\sum_{A}\Delta_A f\prod_k p_k^{a_k}$ is a polynomial in $\mathbf p=(p_1,\ldots,p_K)$, and differentiating term by term, the $A=\varnothing$ term being constant, gives $\partial \mathbb E_{\mathbf p}[f]/\partial p_k=\sum_{A\neq\varnothing}a_k\,p_k^{a_k-1}\prod_{\ell\neq k}p_\ell^{a_\ell}\,\Delta_A f$, which is exactly the bracket attached to $f$ above; the analogous identity holds for $g$. The full score of $\mathbb P_{\mathbf p}$ with respect to $p_k$ is $\sum_{i\in\mathcal I_k}(C_i-p_k)/[p_k(1-p_k)]$, and this yields $\partial\mathbb E_{\mathbf p}[f]/\partial p_k=\mathbb E_{\mathbf p}[f\sum_{i\in\mathcal I_k}(C_i-p_k)]/[p_k(1-p_k)]$; since $f$ depends only on $\mathbf C_{S_1}$ and the coordinates are independent under $\mathbb P_{\mathbf p}$, the terms with $i\notin S_1$ vanish and those with $i\in S_1\cap\mathcal I_k$ sum to $T_{1k}$ up to a constant, so
\[
\frac{\partial \mathbb E_{\mathbf p}[f]}{\partial p_k}=\frac{1}{p_k(1-p_k)}\,\mathrm{Cov}_{\mathbf p}(f,T_{1k}),
\qquad
\frac{\partial \mathbb E_{\mathbf p}[g]}{\partial p_k}=\frac{1}{p_k(1-p_k)}\,\mathrm{Cov}_{\mathbf p}(g,T_{2k}) .
\]
Substituting these into the previous display, the two factors $p_k(1-p_k)$ in the denominators combine with the prefactor $p_k(1-p_k)/|\mathcal I_k|$ to leave $1/[|\mathcal I_k|p_k(1-p_k)]$, so that
\[
\mathrm{Cov}(f,g)
=-\sum_{k\in\mathcal K^{\ast}}\frac{1}{|\mathcal I_k|\,p_k(1-p_k)}\,
\mathrm{Cov}_{\mathbf p}(f,T_{1k})\,\mathrm{Cov}_{\mathbf p}(g,T_{2k})+o(n^{-1}) .
\]

Finally we replace the auxiliary covariances $\mathrm{Cov}_{\mathbf p}$ by the actual ones $\mathrm{Cov}$, and we now make the local comparison between the two laws explicit. Write $\mathbb P$ for the law of Assumption~\ref{ass:complete_rand}(i). Both $\mathbb P$ and $\mathbb P_{\mathbf p}$ make the strata mutually independent, so the restriction of either law to the coordinates in $S_1$ factorizes as a product over $k$ of the within-stratum law of $\mathbf C_{S_1\cap\mathcal I_k}$, and likewise for $S_2$; the two laws differ only in these within-stratum factors. Fix a stratum $k$ and a coordinate set $F\subseteq\mathcal I_k$ with $|F|=s\leq L=O(1)$. For a pattern $\mathbf c_F\in\{0,1\}^F$ with $m:=\sum_{i\in F}c_i$ treated coordinates, counting the assignments that realize it gives the complete-randomization marginal
\[
\mathbb P(\mathbf C_F=\mathbf c_F)
=\frac{\binom{|\mathcal I_k|-s}{I_k-m}}{\binom{|\mathcal I_k|}{I_k}}
=\frac{I_k^{\underline m}\,(|\mathcal I_k|-I_k)^{\underline{\,s-m}}}{|\mathcal I_k|^{\underline s}},
\]
whereas the auxiliary marginal is $\mathbb P_{\mathbf p}(\mathbf C_F=\mathbf c_F)=p_k^{m}(1-p_k)^{s-m}$. For a nondegenerate stratum, $p_k\in[p,1-p]$, and the falling-factorial expansion already used above gives $I_k^{\underline m}=(p_k|\mathcal I_k|)^m(1+O(|\mathcal I_k|^{-1}))$, $(|\mathcal I_k|-I_k)^{\underline{\,s-m}}=((1-p_k)|\mathcal I_k|)^{s-m}(1+O(|\mathcal I_k|^{-1}))$, and $|\mathcal I_k|^{\underline s}=|\mathcal I_k|^{s}(1+O(|\mathcal I_k|^{-1}))$, so the ratio $\mathbb P(\mathbf C_F=\mathbf c_F)/\mathbb P_{\mathbf p}(\mathbf C_F=\mathbf c_F)=1+O(|\mathcal I_k|^{-1})$ uniformly over the $2^{s}=O(1)$ patterns; hence the two within-stratum marginals differ in total-variation distance by $O(|\mathcal I_k|^{-1})$. (A degenerate stratum, $p_k\in\{0,1\}$, makes $\mathbf C_{\mathcal I_k}$ the same point mass under both laws, contributing zero.) Since total-variation distance is subadditive under products of independent coordinates,
\[
d_{\mathrm{TV}}\bigl(\mathbb P|_{\mathbf C_{S_1}},\,\mathbb P_{\mathbf p}|_{\mathbf C_{S_1}}\bigr)
\leq\sum_{k=1}^{K}d_{\mathrm{TV}}\bigl(\mathbb P|_{\mathbf C_{S_1\cap\mathcal I_k}},\,\mathbb P_{\mathbf p}|_{\mathbf C_{S_1\cap\mathcal I_k}}\bigr)
=K\cdot O\bigl(\min_k|\mathcal I_k|^{-1}\bigr)=O(n^{-1}),
\]
using $K=O(1)$ and $\min_k|\mathcal I_k|=\Omega(n)$, and the same bound holds for $\mathbf C_{S_2}$. Consequently, for any function $\Phi$ of $\mathbf C_{S_1}$ with $\|\Phi\|_\infty\leq M$, the coupling characterization of total variation gives $|\mathbb E_{\mathbf p}[\Phi]-\mathbb E[\Phi]|\leq 2M\,d_{\mathrm{TV}}=O(Mn^{-1})$. Applying this to $\Phi\in\{f,\,T_{1k},\,fT_{1k}\}$, which depend only on $\mathbf C_{S_1}$ and are bounded by $1$, $L$, and $L$ respectively, yields $\mathbb E_{\mathbf p}[fT_{1k}]=\mathbb E[fT_{1k}]+O(n^{-1})$, $\mathbb E_{\mathbf p}[f]=\mathbb E[f]+O(n^{-1})$, and $\mathbb E_{\mathbf p}[T_{1k}]=\mathbb E[T_{1k}]+O(n^{-1})$; since these moments are all $O(1)$, expanding $\mathrm{Cov}_{\mathbf p}(f,T_{1k})=\mathbb E_{\mathbf p}[fT_{1k}]-\mathbb E_{\mathbf p}[f]\mathbb E_{\mathbf p}[T_{1k}]$ gives $\mathrm{Cov}_{\mathbf p}(f,T_{1k})=\mathrm{Cov}(f,T_{1k})+O(n^{-1})$, and likewise $\mathrm{Cov}_{\mathbf p}(g,T_{2k})=\mathrm{Cov}(g,T_{2k})+O(n^{-1})$. For arbitrary $f$ and $g$ satisfying the boundedness condition, both covariance terms under the original complete-randomization distribution are bounded by $2L=O(1)$, and hence the remainder terms are uniformly $O(n^{-1})$. 
Combining this with $K=O(1)$ and
\(
\frac{1}{|\mathcal I_k|\,p_k(1-p_k)}=O(n^{-1}),
\)
we obtain
\[
\mathrm{Cov}(f,g)
=-\sum_{k\in\mathcal K^{\ast}}\frac{1}{|\mathcal I_k|\,p_k(1-p_k)}\,
\mathrm{Cov}(f,T_{1k})\,\mathrm{Cov}(g,T_{2k})+o(n^{-1}),
\]
with all covariances taken under Assumption~\ref{ass:complete_rand}(i), which is the claim of the lemma.
\end{proof}

\end{document}